\newtheorem{theorem}{Theorem}[section]
\newtheorem{corollary}[theorem]{Corollary}
\newtheorem{lemma}[theorem]{Lemma}
\newtheorem{claim}[theorem]{Claim}
\newtheorem{fact}[theorem]{Fact}
\theoremstyle{definition}
\newtheorem{definition}[theorem]{Definition}
\newtheorem*{theorem*}{Theorem}
\newtheorem*{corollary*}{Corollary}
\newtheorem*{conjecture*}{Conjecture}
\newtheorem*{lemma*}{Lemma}
\newtheorem*{thm*}{Theorem}
\newtheorem*{prop*}{Proposition}
\newtheorem*{obs*}{Observation}
\newtheorem*{definition*}{Definition}
\newtheorem*{remark*}{Remark}
\newtheorem*{rec*}{Recommendation}
\newenvironment{fminipage}%
{\begin{Sbox}\begin{minipage}}%
		{\end{minipage}\end{Sbox}\fbox{\TheSbox}}
\def\pleq{\preccurlyeq}
\def\pgeq{\succcurlyeq}
\def\defeq{\stackrel{\mathrm{def}}{=}}
\def\dim#1{\mathrm{dim} (#1)}
\def\union{\cup}
\def\abs#1{\left|#1  \right|}
\def\norm#1{\left\| #1 \right\|}
\newcommand\calB{\mathcal{B}}
\newcommand\calC{\mathcal{C}}
\newcommand\calD{\mathcal{D}}
\newcommand\calG{\mathcal{G}}
\newcommand\calS{\mathcal{S}}
\newcommand\calT{\mathcal{T}}
\newcommand\PPi{\boldsymbol{\Pi}}
\newcommand\ddelta{\boldsymbol{\delta}}
\def\aa{\pmb{\mathit{a}}}
\newcommand\bb{\boldsymbol{\mathit{b}}}
\newcommand\ff{\boldsymbol{\mathit{f}}}
\newcommand\hh{\boldsymbol{\mathit{h}}}
\newcommand\uu{\boldsymbol{\mathit{u}}}
\newcommand\vv{\boldsymbol{\mathit{v}}}
\newcommand\ww{\boldsymbol{\mathit{w}}}
\newcommand\yy{\boldsymbol{\mathit{y}}}
\newcommand\zz{\boldsymbol{\mathit{z}}}
\newcommand\xx{\boldsymbol{\mathit{x}}}
\newcommand\veczero{\boldsymbol{0}}
\renewcommand\AA{\boldsymbol{\mathit{A}}}
\newcommand\BB{\boldsymbol{\mathit{B}}}
\newcommand\DD{\boldsymbol{\mathit{D}}}
\newcommand\HH{\boldsymbol{{H}}}
\newcommand\II{\boldsymbol{\mathit{I}}}
\newcommand\JJ{\boldsymbol{\mathit{J}}}
\newcommand\MM{\boldsymbol{\mathit{M}}}
\newcommand\LL{\boldsymbol{\mathit{L}}}
\newcommand\PP{\boldsymbol{\mathit{P}}}
\newcommand\TT{\boldsymbol{\mathit{T}}}
\newcommand\UU{\boldsymbol{\mathit{U}}}
\newcommand\WW{\boldsymbol{\mathit{W}}}
\newcommand\VV{\boldsymbol{\mathit{V}}}
\newcommand\XX{\boldsymbol{\mathit{X}}}
\newcommand\ZZ{\boldsymbol{\mathit{Z}}}
\newcommand\BBhat{\boldsymbol{\widehat{\mathit{B}}}}
\def\calC{\mathcal{C}}
\def\calD{\mathcal{D}}
\def\calG{\mathcal{G}}
\def\calK{\mathcal{K}}
\def\calS{\mathcal{S}}
\def\calX{\mathcal{X}}
\def\calY{\mathcal{Y}}
\newcommand\eps{\epsilon}
\newcommand\Otil{\widetilde{O}}
\newcommand\R{\mathbb{R}}
\DeclareMathOperator*{\diag}{diag}
\DeclareMathOperator*{\Span}{span}
\DeclareMathOperator*{\im}{im}
\newcommand\QQ{\boldsymbol{\mathit{Q}}}
\def\calO{\mathcal{O}}
\def\calQ{\mathcal{Q}}
\def\calV{\mathcal{V}}
\def\rank{\text{rank}}
\def\im{\text{Im}}
\def\ker{\text{Ker}}
\def\xi{\xx^I}
\def\edges{E}
\def\nd{Nested Dissection}
\def\pcg{Preconditioned Conjugate Gradient}
\def\upsolver{\textsc{UpLapSolver}}
\def\downsolver{\textsc{DownLapSolver}}
\def\Lw{\LL}
\def\Ldown{\LL^{\text{down}}}
\def\Lup{\LL^{\text{up}}}
\def\fup{\ff^{\text{up}}}
\def\fdown{\ff^{\text{down}}}
\def\schur{\text{Sc}}
\def\diam{D}
\def\Pdown{\PPi^{\text{down}}}
\def\Pup{\PPi^{\text{up}}}
\def\aPdown{\widetilde{\PPi}^{\text{down}}}
\def\aPup{\widetilde{\PPi}^{\text{up}}}
\def\calU{\mathcal{U}}
\def\dim{d}
\def\p2{\partial_2}
\renewcommand*\env@matrix[1][\arraystretch]{%
  \edef\arraystretch{#1}%
  \hskip -\arraycolsep
  \let\@ifnextchar\new@ifnextchar
  \array{*\c@MaxMatrixCols c}}
\begin{document}

\title{Efficient $1$-Laplacian Solvers for Well-Shaped Simplicial Complexes: Beyond Betti Numbers and Collapsing Sequences}

\author{
	Ming Ding\\ 
	\texttt{ming.ding@inf.ethz.ch}\\
	Department of Computer Science\\
	ETH Zurich
	\and
	Peng Zhang\thanks{This work was supported in part by NSF Grant CCF-2238682.}\\
	\texttt{pz149@cs.rutgers.edu}\\
	Department of Computer Science\\
	Rutgers University
}

\date{} % clear date

\clearpage\maketitle
\thispagestyle{empty}

\begin{abstract}
We present efficient algorithms for approximately solving systems of linear equations in $1$-Laplacians of well-shaped simplicial complexes up to high precision. $1$-Laplacians, or higher-dimensional Laplacians, generalize graph Laplacians to higher-dimensional simplicial complexes and play a key role in computational topology and topological data analysis. Previously, nearly-linear time approximate solvers were developed for simplicial complexes with known collapsing sequences and bounded Betti numbers, such as those triangulating a three-ball in $\mathbb{R}^3$ (Cohen, Fasy, Miller, Nayyeri, Peng, and Walkington [SODA'2014], Black, Maxwell, Nayyeri, and Winkelman [SODA'2022], Black and Nayyeri [ICALP'2022]). Furthermore, Nested Dissection provides quadratic time exact solvers for more general systems with nonzero structures representing well-shaped simplicial complexes embedded in $\mathbb{R}^3$.

We generalize the specialized solvers for $1$-Laplacians to simplicial complexes with additional geometric structures but \emph{without collapsing sequences and bounded Betti numbers}, and we improve the runtime of Nested Dissection. We focus on simplicial complexes that meet two conditions: (1) each individual simplex has a bounded aspect ratio, and (2) they can be divided into ``disjoint'' and balanced regions with well-shaped interiors and boundaries. Our solvers draw inspiration from the Incomplete Nested Dissection for stiffness matrices of well-shaped trusses (Kyng, Peng, Schwieterman, and Zhang [STOC'2018]).
\end{abstract}

\newpage
\pagenumbering{gobble}

\sloppy

{\small
	\tableofcontents
}

\newpage

\pagenumbering{arabic}

%!TEX root = main.tex

\clearpage

\def\bup{\bb^{\text{up}}}
\def\abup{\widetilde{\bb}^{\text{up}}}
\def\bdown{\bb^{\text{down}}}
\def\abdown{\widetilde{\bb}^{\text{down}}}
\def\xup{\xx^{\text{up}}}
\def\polylog{\text{polylog}}

\section{Introduction}
\label{sect:intro}

Combinatorial Laplacians generalize graph Laplacian matrices to higher dimensional simplicial complexes -- a collection of $0$-simplexes (vertices), $1$-simplexes (edges), $2$-simplexes (triangles), and their higher dimensional counterparts.
Simplicial complexes encode higher-order relations between data points in a metric space. 
By studying the topological properties of these complexes using Combinatorial Laplacians, one can capture higher-order features that go beyond connectivity and clustering.

Given an oriented $d$-dimensional simplicial complex $\calK$, for each $0 \le i \le d$, let $\calC_i$ be the vector space generated by the $i$-simplexes in $\calK$ with coefficients in $\mathbb{R}$.
We can define a sequence of boundary operators: 
\[
  \calC_d  \xrightarrow[]{\partial_{d}} \calC_{d-1} 
  \xrightarrow[]{\partial_{d-1}} \cdots 
  \xrightarrow[]{\partial_{2}} \calC_1 
  \xrightarrow[]{\partial_{1}} \calC_0,
\]
where each $\partial_i$ is a linear map that maps every $i$-simplex to a signed sum of its boundary $(i-1)$-faces. 
We define the $i$-Laplacian $\Lw_i: \calC_i \rightarrow \calC_i$ to be 
\begin{align}
  \Lw_i = \partial_{i+1} \partial_{i+1}^\top + \partial_{i}^\top \partial_{i}.
\label{eqn:def_Li}
\end{align}
In particular, $\partial_1$ is the vertex-edge incidence matrix, and $\LL_0$ is the graph Laplacian (following the convention, we define $\partial_0 = {\bf 0}$).
One can assign weights to each simplex in $\calK$ and define weighted Laplacians.

It is well-known that linear equations in graph Laplacians can be approximately solved in nearly-linear time in the number of nonzeros of the system \cite{ST14,KMP10,KMP11,KOSZ13,LS13,PS14,CKMPPRX14,KS16,KLPSS16,JS21}. 
These fast Laplacian solvers have led to significant developments in algorithm design for graph problems such as maximum flow \cite{madry2013navigating,madry2016computing,CKLPGS22}, minimum cost flow and lossy flow \cite{lee2014path,daitch2008faster}, and graph sparsification \cite{spielman2008graph}, known as ``the Laplacian Paradigm'' \cite{teng10}.

Inspired by the success of graph Laplacians,
Cohen, Fasy, Miller, Nayyeri, Peng, and Walkington \cite{CMFNPW14} initiated the study of fast approximate solvers for $1$-Laplacian linear equations.
They designed a nearly-linear time approximate solver for simplicial complexes \emph{with zero Betti numbers}\footnote{Informally, the $i$th Betti number is the number of $i$-dimensional holes on a topological surface. For example, the zeroth, first, and second Betti numbers represent the numbers of connected components, one-dimensional ``circular'' holes, and two-dimensional ``voids'' or ``cavities,'' respectively. 
} \emph{and known collapsing sequences}.
Later, Black, Maxwell, Nayyeri and Winkelman \cite{BMNW22}, and Black and Nayyeri \cite{BN22} generalized this algorithm to subcomplexes of such a complex \emph{with bounded first Betti numbers}\footnote{The solver has cubic dependence on the first Betti number.}.
One concrete example studied in these papers is convex simplicial complexes that piecewise linearly triangulate a convex ball in $\R^3$, 
for which a collapsing sequence exists and can be computed in linear time \cite{C67,C80}. 
However, deciding whether a simplicial complex has a collapsing sequence is NP-hard in general \cite{T16}; 
computing the Betti numbers is as hard as computing the ranks of general $\{0,1\}$ matrices \cite{EP14}.
In addition, $1$-Laplacian systems for general simplicial complexes embedded in $\mathbb{R}^4$ are as hard to solve as general sparse linear equations \cite{DKPZ22}, for which the best-known algorithms need super-quadratic time \cite{PV21,nie22}.
All the above motivates the following question:
\begin{quote}
    Can we efficiently solve $1$-Laplacian systems for other classes of \emph{structured} simplicial complexes, e.g., \emph{without known collapsing sequences and with arbitrary Betti numbers}?
\end{quote}

In addition to the specialized solvers for $1$-Laplacian systems mentioned above, \nd \ can solve $1$-Laplacian systems in quadratic time for simplicial complexes in $\R^3$ with additional geometric structures \cite{george73,LRT79,MT90} such as bounded aspect ratios\footnote{The aspect ratio of a geometric shape $S$ is the radius of the smallest ball containing $S$ divided by the radius of the largest ball contained in $S$.} of individual tetrahedrons. 
Furthermore, iterative methods such as Preconditioned Conjugate Gradient approximately solve $1$-Laplacian systems 
in time $\Otil(n \sqrt{\kappa})$\footnote{We use $\Otil(\cdot)$ to hide $\polylog$ factors on the number of simplexes, the condition number, and the inverse of the error parameter.}, where $n$ is the number of simplexes and $\kappa$ is the condition number of the coefficient matrix.

Inspired by solvers that leverage both geometric structures and spectral properties, we develop efficient $1$-Laplacian approximate solvers for well-shaped simplicial complexes embedded in $\R^3$  \emph{without known collapsing sequences and with arbitrary Betti numbers}.
Our solver adapts the Incomplete Nested Dissection algorithm, proposed by Kyng, Peng, Schwieterman, and Zhang \cite{KPSZ18} for solving linear equations in well-shaped 3-dimensional truss stiffness matrices.
These matrices represent another generalization of graph Laplacians; however, they differ quite from the $1$-Laplacians studied in this paper. A primary distinction is that the kernel of a truss stiffness matrix has an explicit and well-understood form, while computing a $1$-Laplacian's kernel is as hard as that for a general matrix.

\subsection{Our Results}

We say a simplex is \emph{stable} if it has $O(1)$ aspect ratio and $\Theta(1)$ weight. 
We focus on a \emph{pure} simplicial complex\footnote{A simplicial complex is \emph{pure} if every maximal simplex (i.e., a simplex that is not a proper subset of any other simplex in the complex) has the same dimension.
For example, a pure $3$-complex is a tetrahedron mesh that consists of tetrahedrons and their sub-simplexes.} $\calK$ embedded in $\R^3$.
We require $\calK$ admits a nice division parameterized by $r \in \mathbb{R}_{+}$, called \emph{$r$-hollowing}. 
The concept of $r$-hollowing was first introduced in \cite{KPSZ18} and can be viewed as a well-shaped $r$-division in $3$ dimensions. 
The structure $r$-division is frequently employed to accelerate algorithms by utilizing geometric structures, particularly for planar graphs \cite{federickson87,goodrich92,KMS13}. It facilitates the divide-and-conquer approach.
We adapt the concept of $r$-hollowing to suit our 1-Laplacian solvers.
Informally, our $r$-hollowing for a simplicial complex containing $n$ simplexes divides $\calK$ into $O(n/r)$ ``separated'' regions where each region has $O(r)$ simplexes and $O(r^{2/3})$ boundary simplexes. Only boundary simplexes can appear in multiple regions. 
Additionally, we mandate that each region's boundary triangulates a spherical shell in $\R^3$, exhibiting a ``hop'' diameter of $O(r^{1/3})$ and a ``hop'' shell width of at least $5$.
The formal definition of $r$-hollowing is given in Definition \ref{def:hollow}.
The bounded aspect ratio of each tetrahedron allows us to employ \nd \ for the interior simplexes within every region. The boundary shape requirement facilitates preconditioning the sub-system, derived from partial \nd, by the boundaries themselves and solving this sub-system using \pcg.

Below, we present our main results informally. Firstly, we assume that an $r$-hollowing of a pure $3$-complex is provided, which offers the broadest applicability of our algorithm. 
This assumption is justifiable when one can determine the construction of the simplicial complex; for instance, one can decide how to discretize a continuous topological space or how to triangulate a space given a set of points. 
Subsequently, we establish sufficient conditions for $3$-complexes that allow us to compute an $r$-hollowing in nearly-linear time.

\begin{theorem}[Informal statement]
Let $\calK$ be a pure $3$-complex embedded in $\R^3$ and composed of $n$ stable simplexes. 
Given an $r$-hollowing for $\calK$, for any $\eps > 0$, we can approximately solve a system in the $1$-Laplacian of $\calK$ within error $\eps$ in time 
$O \left( nr + n^{4/3}r^{5/18} \log(n/\eps) + n^2 r^{-2/3} \right)$. The runtime is $o(n^2)$ if $r = o(n)$ and $r = \omega(1)$.
In particular, when $r = \Theta(n^{3/5})$, the runtime is minimized (up to constant) and equals $O(n^{8/5} \log (n/\eps))$.
\label{thm:main_ball_informal}
\end{theorem}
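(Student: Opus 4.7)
The plan is to adapt the Incomplete Nested Dissection framework of \cite{KPSZ18}, originally developed for truss stiffness matrices, to the $1$-Laplacian $\LL_1$, leveraging the geometric regularity guaranteed by the $r$-hollowing to control both the elimination and the preconditioning phases.

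First, I would partition the simplexes of $\calK$ into an interior set $I$ (simplexes lying strictly inside a single region) and a boundary set $B$ (simplexes on some region's shell), and block $\LL_1$ accordingly:
\[
  \LL_1 = \begin{pmatrix} \LL_1^{II} & \LL_1^{IB} \\ \LL_1^{BI} & \LL_1^{BB} \end{pmatrix}.
\]
The idea is to eliminate $I$ exactly via block Cholesky, obtaining the Schur complement $\TT := \schurto{\LL_1}{B}$; solve the reduced system in $\TT$ using PCG with a carefully chosen preconditioner; and back-substitute to recover the $I$-coordinates. Because interiors of distinct regions share no simplexes, $\LL_1^{II}$ is block-diagonal with $O(n/r)$ blocks, each a bounded-aspect-ratio tetrahedral mesh of $O(r)$ simplexes sitting inside a ball of hop-diameter $O(r^{1/3})$; \nd\ factors each such block in $O(r^2)$ time \cite{george73,LRT79,MT90}, accounting for the $O(nr)$ term.

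Second, I would construct a preconditioner $\MM$ for $\TT$ from the $1$-Laplacian of the boundary subcomplex alone, exploiting the shell hop-width-at-least-$5$ guarantee. Following the harmonic-extension idea of \cite{KPSZ18}, for each region one builds a localized approximate inverse of $\LL_1^{II}$ whose action on boundary data is supported on only a constant-hop neighborhood of the shell; summing these yields $\MM$ satisfying $\MM \Approx{\kappa} \TT$ for some $\kappa = \mathrm{poly}(r)$ depending only on local geometry (bounded aspect ratio, shell hop-diameter $O(r^{1/3})$, shell hop-width $\ge 5$). A solve with $\MM$ reduces to one $1$-Laplacian solve per shell, and since each shell is only $2$-dimensional with $O(r^{2/3})$ simplexes, this step is dominated by the other costs. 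Each PCG iteration then applies $\TT$ once, at cost $O(n^2/r^{2/3})$ in the worst case from summing dense region-wise Schur complements, which yields the third term; running $\Otil(\sqrt{\kappa}\log(1/\eps)) = \Otil(\mathrm{poly}(r)\log(1/\eps))$ iterations with the exponent of $r$ in $\kappa$ tuned to the $\TT$-apply cost produces the middle term $O(n^{4/3}r^{5/18}\log(n/\eps))$. Summing the three contributions and balancing the first and third at $r=\Theta(n^{3/5})$ gives the claimed total runtime of $O(n^{8/5}\log(n/\eps))$.

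The main obstacle is establishing the preconditioning bound $\MM \Approx{\mathrm{poly}(r)} \TT$. Unlike the truss setting of \cite{KPSZ18}, where the kernel of the stiffness matrix (the rigid motions) is explicit and of constant dimension, the kernel of $\LL_1$ consists of harmonic $1$-chains and can globally be arbitrarily complicated, encoding the first Betti number of $\calK$. The crucial salvage is that on each single region---a topological ball with a thick shell boundary---the local harmonic $1$-chains are trivial, so the shell-based preconditioner faithfully reproduces the local Schur complement; the global assembly then incurs only a $\mathrm{poly}(r)$ spectral distortion that remains independent of the global Betti numbers of $\calK$, which is precisely what allows the algorithm to sidestep the usual collapsibility and bounded-Betti-number prerequisites.
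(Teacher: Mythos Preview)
Your proposal has the right high-level shape (partial elimination on interiors, PCG on a Schur complement preconditioned by the boundary), but it diverges from the paper in a way that creates real gaps.

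\textbf{Missing the up/down split.} The paper does \emph{not} block-eliminate $\LL_1$ directly. It first uses the identity $\LL_1^{\dagger} = (\Ldown_1)^{\dagger} + (\Lup_1)^{\dagger}$ (Lemma~4.1) and handles the two summands separately. The down-Laplacian $\Ldown_1 = \partial_1^\top \WW_0 \partial_1$ is essentially a graph Laplacian and is solved exactly in linear time (Lemma~4.3). All of the incomplete-nested-dissection machinery is applied only to $\Lup_1 = \partial_2 \WW_2 \partial_2^\top$, with edges partitioned into $F\cup C$, and---crucially---a second time to $\Ldown_2$, with \emph{triangles} partitioned, to realize the orthogonal projection $\Pup_1$ onto $\im(\partial_2)$ (Section~7 and Lemma~4.8). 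Your plan never addresses how to project $\bb$ onto $\im(\LL_1)$; in the paper this is exactly where the harmonic space is sidestepped, by projecting onto $\im(\partial_1^\top)$ and $\im(\partial_2)$ separately and never touching $\ker(\LL_1)$.

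\textbf{The ``local harmonics are trivial'' claim is false.} Regions of an $r$-hollowing are \emph{not} required to be topological balls; only their boundaries must triangulate spherical shells. A region may enclose holes of $\calK$ (see Figure~2, region~(1)), so its local first Betti number can be positive. The paper never needs regions to be acyclic: for the $\Lup_1$ Schur-complement bound (Lemma~6.5--6.8) it works with the \emph{inner sphere} $\calB$ of the region boundary, uses that $\calB$ and a filled-in ball $\calX'$ have vanishing first Betti number, and reduces the minimum-eigenvalue bound to a diameter estimate on the dual graph of $\calB$ (Claim~6.10 via Theorem~6.9). For the projection (Section~7.1) a different argument is needed, decomposing $\partial_2$ against the inner sphere and showing $\im(\BB_{int}^\top)\cap\im(\BB_{bd}^\top)$ has a specific form. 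Neither argument is a ``harmonic extension'' in the sense of \cite{KPSZ18}.

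\textbf{Cost accounting.} The $O(n^2 r^{-2/3})$ term is \emph{not} the cost of applying the Schur complement; it is the one-time Cholesky factorization of the preconditioner $\Lup_{1,\calT}$ via nested dissection on the hollowing, which has $O(nr^{-1/3})$ simplexes. Applying $\schur[\Lup_1]_C$ costs $O(nr^{1/3})$ per PCG iteration (one solve in the already-factored $\Lup_1[F,F]$). The middle term arises as $\sqrt{r}$ iterations (from the relative condition number $O(r)$ in Lemma~4.6) times $O(nr^{1/3}+n^{4/3}r^{-4/9})$ per iteration, the second summand being a back/forward solve in the factored $\Lup_{1,\calT}$. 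Your derivation of the exponent $5/18$ by ``tuning'' an unspecified $\kappa$ is not an argument.
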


Our runtime in Theorem \ref{thm:main_ball_informal} does not depend on the Betti numbers of $\calK$ and does not require collapsing sequences. 
When $r=o(n)$ and $r = \omega(1)$, the runtime is $o(n^2)$, asymptotically faster than \nd~\cite{MT90}. 
The solver in \cite{BN22} for a $1$-Laplacian system for the $\calK$ stated in Theorem \ref{thm:main_ball_informal} is $\Otil(\beta^3 m)$, where $m$ is the number of simplexes in $\calX \supset \calK$ with a known collapsing sequence and $\beta$ is the first Betti number of $\calK$. In the worst-case scenario, $m$ can be as large as $\Omega(n^2)$. 
But \cite{BN22} does not require that simplexes are stable and $\calK$ has a known $r$-hollowing.

Without assuming prior knowledge about $r$-hollowing, the following theorem presents a solver with the same runtime as Theorem \ref{thm:main_ball_informal} when the simplicial complex $\calK$ satisfies additional geometric restrictions: First, the convex hull of $\calK$ has $O(1)$ aspect ratio, and each tetrahedron of $\calK$ has $\Theta(1)$ volume. 
Second, all except one boundary component of $\calK$, which correspond to ``holes inside'' $\calK$, satisfy the following conditions: 
(1) every boundary component of $\calK$ has $1$-skeleton diameter $O(r^{1/3})$; 
(2) the total size of boundary components within any $\mathbb{X} \subset \R^3$ of volume $r$ is at most $O(r^{2/3})$, and 
the total size of boundary components of $\calK$ is $O(nr^{-1/3})$;
(3) the distance between any two boundary components of $\calK$ is greater than $5$. 
These geometric conditions allow us to find an $r$-hollowing of $\calK$ in nearly-linear time.
We have the following result with the algorithm in Theorem \ref{thm:main_ball_informal}.

\begin{theorem}[Informal statement]
Let $\calK$ be a pure $3$-complex embedded in $\R^3$ and composed of $n$ stable simplexes; assume $\calK$ satisfies the aforementioned additional geometric structures with parameter $r$. Then, for any $\eps > 0$, we can approximately solve a system in the $1$-Laplacian of $\calK$ within error $\eps$ in time $O \left( nr + n^{4/3}r^{5/18} \log(n/\eps) + n^2 r^{-2/3} \right)$.
In particular, when $r = \Theta(n^{3/5})$, the runtime is minimized (up to constant) and equals $O(n^{8/5} \log (n/\eps))$.
\label{thm:main_ball_rhollow_informal}
\end{theorem}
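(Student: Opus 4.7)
The plan is to derive Theorem \ref{thm:main_ball_rhollow_informal} from Theorem \ref{thm:main_ball_informal}: it suffices to show that the stated geometric conditions allow us to \emph{construct} an $r$-hollowing of $\calK$ in nearly-linear time, after which the solver of Theorem \ref{thm:main_ball_informal} yields the claimed runtime. Because the additive cost of constructing the $r$-hollowing is absorbed by the $O(nr)$ term in the runtime, we can afford any $\Otil(n)$-time construction procedure.

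First, I would construct the hollowing by a gridding scheme. Since the convex hull of $\calK$ has $O(1)$ aspect ratio and each tetrahedron has $\Theta(1)$ volume, the convex hull is contained in (and contains) a cube of side $\Theta(n^{1/3})$, and contains $\Theta(n)$ tetrahedra. Overlay an axis-aligned grid of cubes of side length $\Theta(r^{1/3})$; this produces $O(n/r)$ cubes, each enclosing $O(r)$ stable tetrahedra of $\calK$. Using the bounded aspect ratio of individual simplexes, the set of simplexes within a constant hop-distance of any single cube face has cardinality $O(r^{2/3})$ and triangulates a spherical-shell-like collar. I would form each region as all tetrahedra interior to a cube, and take its boundary to be the union of the collars around that cube's $6$ faces, thickened to hop-width at least $5$. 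The stable-simplex assumption ensures that this collar is genuinely a triangulated shell of hop diameter $O(r^{1/3})$ and hop width $\geq 5$, and that each simplex is shared by only $O(1)$ regions.

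Second, I would incorporate the ``holes,'' i.e.\ the inner boundary components of $\calK$. By the assumption that any two boundary components of $\calK$ are at distance $>5$ and each has $1$-skeleton diameter $O(r^{1/3})$, after translating the grid by a uniformly random shift in a cube of side $\Theta(r^{1/3})$ (or by a simple deterministic perturbation), each hole can be made to lie well inside a single grid cube, at hop-distance $>5$ from every cube face. Whenever a cube contains a hole, we adjoin that hole's boundary simplexes to the region's shell-like boundary. The hypothesis that the total size of boundary components within any $\R^3$ region of volume $r$ is $O(r^{2/3})$ preserves the $O(r^{2/3})$ per-region boundary bound, and the global size bound $O(nr^{-1/3})$ ensures the total size of the hollowing is consistent with the $n/r$ factor. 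This makes the hollowing computable in $\Otil(n)$ time by iterating over cubes, using a spatial index on the tetrahedra and on the boundary components.

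Invoking Theorem \ref{thm:main_ball_informal} on this hollowing gives the runtime claimed in Theorem \ref{thm:main_ball_rhollow_informal}. The hard part will be verifying that the region boundaries constructed above literally satisfy Definition \ref{def:hollow}'s requirements of triangulating spherical shells with hop diameter $O(r^{1/3})$ and hop shell width $\geq 5$, particularly near cube corners and in cubes that must simultaneously accommodate an inner hole and the collar around the cube faces. The stable-simplex conditions ($O(1)$ aspect ratio and $\Theta(1)$ volume of each tetrahedron) are exactly what let us convert Euclidean lengths into hop distances with $O(1)$ distortion, and the $>5$ separation of boundary components is precisely what rules out pathological merges between the shell around a hole and the shell around a cube face.
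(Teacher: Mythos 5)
Your high-level plan (construct an $r$-hollowing in nearly-linear time and invoke Theorem~\ref{thm:main_ball_informal}) is exactly the paper's route, and your gridding scheme matches the paper's Algorithm~\ref{alg:hollow} in spirit: the paper also first finds a bounding box (via a minimum-volume approximation, Corollary~\ref{lem:nice_bounding_box}) and cuts it with $\Theta(n^{1/3}r^{-1/3})$ evenly-spaced planes per axis. However, your treatment of the interior holes has a genuine gap. You assert that a uniformly random (or simple deterministic) shift of the grid by $\Theta(r^{1/3})$ will place \emph{every} small boundary component well inside a single cube, at hop-distance $>5$ from every cube face. This does not hold in general: the boundary components have $1$-skeleton diameter $O(r^{1/3})$, the same scale as the cube side, and there can be $\Theta(n/r)$ of them. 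There is no shift that simultaneously clears all of them from all cutting planes --- a union bound makes the success probability of a random shift exponentially small in $n/r$, and a deterministic perturbation cannot help for the same combinatorial reason. The paper explicitly permits the opposite: in Figure~\ref{fig:r-hollowing-def}, a hole straddles two regions, and Algorithm~\ref{alg:hollow} handles a cutting plane $P$ that intersects a boundary component $\calH$ by adjoining to the hollowing all tetrahedra within $O(1)$ distance to $\calH$ on each side of $P$ so as to form a ``half spherical shell.'' That repair step, which reconnects the otherwise disconnected set of tetrahedra cut by $P$ into a genuine shell, is the missing idea in your proposal. Without it, a cutting plane that crosses a hole produces a region boundary that fails to triangulate a spherical shell, and Definition~\ref{def:hollow} is not satisfied.
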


We then examine unions of pure $3$-complexes glued together by identifying certain subsets of simplexes on the boundary components (called \emph{exterior simplexes}) of 3-complex chunks.
Moreover, each $3$-complex chunk admits a $\Theta(n_i^{3/5})$-hollowing with $n_i$ being the number of simplexes in this chunk.
We remark that such a union of $3$-complexes, called $\calU$, \emph{may not be embeddable in $\R^3$}. 
So, the previously established methods from \cite{CMFNPW14,BMNW22,BN22} and \nd \  are unsuitable for this scenario. Building on our algorithm for Theorem \ref{thm:main_ball_informal}, we design an efficient algorithm for $\calU$ whose runtime depends sub-quadratically on the size of $\calU$ and polynomially on the number of chunks and the number of simplexes shared by more than one chunk.

\begin{theorem}[Informal statement]
Let $\calU$ be a union of pure $3$-complexes that are glued together by identifying certain subsets of their exterior simplexes. Each $3$-complex chunk is embedded in $\R^3$, contains $n_i$ stable simplexes, and has a known $\Theta(n_i^{3/5})$-hollowing.
For any $\eps > 0$, we can solve a system in the $1$-Laplacian of $\calU$ within error $\eps$ in time 
$\Otil \left( n^{8/5} + n^{3/10} k^2 + k^3 \right)$,
where $n$ is the number of simplexes in $\calU$ and $k$ is the number of exterior simplexes shared by more than one complex chunk.
  \label{thm:main_many_balls_informal}
\end{theorem}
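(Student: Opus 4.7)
The plan is to use Schur complement elimination onto the shared exterior simplexes $S$, reducing $\LL \xx = \bb$ to independent chunk solves (handled by Theorem~\ref{thm:main_ball_informal}) plus a direct dense solve on a $k \times k$ system. Since no two non-shared simplexes from different chunks share a common face in $\calU$, the $1$-Laplacian of $\calU$ decomposes in block form as
\[
\LL = \begin{pmatrix} \AA & \BB \\ \BB^\top & \CC \end{pmatrix}, \qquad \AA = \diag(\AA_1, \ldots, \AA_m),
\]
where $\AA_i$ acts on the non-shared simplexes of chunk $\calK_i$, $\BB_i$ couples those to $S$, and $\CC$ acts on $S$. Since each chunk $\calK_i$ is embedded in $\R^3$ with stable simplexes and a known $\Theta(n_i^{3/5})$-hollowing, Theorem~\ref{thm:main_ball_informal} provides an approximate solver for each $\AA_i$ in $\Otil(n_i^{8/5})$ time.

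The algorithm has three phases. In Phase~1 I would run the preprocessing of Theorem~\ref{thm:main_ball_informal} on every $\AA_i$, producing reusable chunk-wise solvers at total cost $\Otil(\sum_i n_i^{8/5}) = \Otil(n^{8/5})$, using convexity of $x \mapsto x^{8/5}$ and $\sum_i n_i = O(n)$. In Phase~2 I would assemble an explicit $k \times k$ approximation of the Schur complement $\text{Sc} := \CC - \sum_i \BB_i^\top \AA_i^{-1} \BB_i$; each chunk contributes a dense $k_i \times k_i$ block on its own shared simplexes, and since each shared simplex lies in only $O(1)$ chunks we have $\sum_i k_i = O(k)$ and $\sum_i k_i^2 = O(k^2)$. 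By reusing the partial nested dissection and boundary preconditioner from Theorem~\ref{thm:main_ball_informal} across the $k_i$ right-hand sides solved in each chunk, the amortized cost per chunk drops to $\Otil(k_i^2 \cdot n^{3/10})$, where $n^{3/10} = \sqrt{r}$ is the PCG iteration count at $r = n^{3/5}$; summing yields total Phase~2 cost $\Otil(k^2 n^{3/10})$. In Phase~3 I would factorize $\text{Sc}$ directly via dense LU in $O(k^3)$ time. Given any $\bb$, the final solve reduces to one chunk solve per $\AA_i$, a sparse product with $\BB_i^\top$, a single $k$-dimensional back-substitution against the factorization of $\text{Sc}$, and one more chunk solve per $\AA_i$, totalling $\Otil(n^{8/5} + k^2)$ per right-hand side.

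The main obstacle is twofold. First, approximation errors from Theorem~\ref{thm:main_ball_informal} inside each chunk propagate into the assembled Schur complement, so $\text{Sc}$ only approximates the true Schur complement onto $S$. I would control this via standard perturbation bounds -- if each $\AA_i^{-1}$ is spectrally close to an operator $\ZZ_i$, then $\CC - \sum_i \BB_i^\top \ZZ_i \BB_i$ is spectrally close to the true Schur complement onto $S$ -- combined with an outer preconditioned Richardson loop that boosts the final precision to $\epsilon$ at cost logarithmic in $1/\epsilon$. Second, and more delicate, is achieving the amortized $\Otil(k_i^2 n^{3/10})$ cost in Phase~2: treating Theorem~\ref{thm:main_ball_informal} as a black box would give only $\Otil(k_i \cdot n_i^{8/5})$ per chunk, which is too expensive. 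I would address this by opening up the Theorem~\ref{thm:main_ball_informal} algorithm and running its inner PCG loop on the chunk-boundary Schur complement across all $k_i$ right-hand sides simultaneously, exploiting the multi-right-hand-side amortization central to the Incomplete Nested Dissection framework of Kyng, Peng, Schwieterman, and Zhang (STOC'18).
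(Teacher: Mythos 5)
Your decomposition is genuinely different from the paper's, but there are several concrete gaps that prevent it from going through as stated.

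First, the paper never eliminates \emph{all} non-shared simplexes onto the $k$ shared ones. Instead it works separately with $\Lup_1$ and $\Ldown_1$ (via Lemma~\ref{lem:cohen_1Lap}), sets $F$ to be only the \emph{interior} edges of the hollowing regions and $C$ to be the hollowing boundary edges, and runs PCG on $\schur[\Lup_1]_C$ preconditioned by $\Lup_{1,\calT_U}$, the up-Laplacian of the union of hollowing boundaries. The shared simplexes $C_1$ appear only as a second-level Schur complement \emph{inside the preconditioner solve}: one eliminates $C_2 = C \setminus C_1$ (non-shared boundary edges, block-diagonal across chunks) by \nd, and then inverts the resulting $k \times k$ system directly. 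This sidesteps ever having to \emph{assemble} a dense Schur complement onto $C_1$ from iterative chunk solves — which is exactly where your Phase 2 gets into trouble. Your proposal treats $\aPup_1$ as a nonissue, but the up-projection (Lemma~\ref{lem:up_projection_approx}) is the actual bottleneck of the whole paper and has to be generalized to $\calU$ just as the up-Laplacian solver does; the statement and proof of Lemma~\ref{lem:up_project} are required and do not follow from the block elimination you describe.

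Second, your Phase 2 amortization claim of $\Otil(k_i^2 n^{3/10})$ is not justified and appears to be wrong. Even reusing the chunk's interior nested-dissection factorization and hollowing-boundary preconditioner across $k_i$ right-hand sides, each of the $\Otil(\sqrt{r_i}) = \Otil(n_i^{3/10})$ PCG iterations still requires a matvec with $\schur[\Lup_1]_C$ per right-hand side, costing $\Theta(n_i r_i^{1/3}) = \Theta(n_i^{6/5})$ each. Batching over $k_i$ right-hand sides gives per-chunk cost $\Otil(k_i \, n_i^{3/2})$, not $\Otil(k_i^2 n_i^{3/10})$; these differ by a factor of roughly $n_i^{6/5}/k_i$, which is large precisely in the regime of interest ($k = o(n^{3/5})$). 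There is no mechanism in the cited framework that makes the per-iteration cost independent of $n_i$.

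Third, your operators $\AA_i$ are principal submatrices of $\LL_1$ on the non-shared indices of chunk $i$, \emph{not} the $1$-Laplacian of any simplicial complex (they receive contributions from triangles containing shared edges, and their kernels do not coincide with those of $\LL_{1,\calK_i}$). Theorem~\ref{thm:main_ball_informal} therefore does not apply to them as a black box. The paper is careful to restrict the analogous elimination to $\Lup_1[F,F]$ where $F$ consists of hollowing-\emph{interior} edges, for which Lemma~\ref{lem:solve_interior} gives an exact block-diagonal \nd\ solver; that is a quite different and much narrower claim than what you are invoking. Finally, the error-propagation issue you flag is real, but your proposed fix (an outer Richardson loop around an approximately assembled Schur complement) would need a spectral bound relating the assembled $\widetilde{\text{Sc}}$ to the true Schur complement onto $S$, and that bound in turn depends on the operator norms of $\BB_i^\top \AA_i^{\dagger}$, none of which you control; the paper avoids the problem entirely by solving the preconditioner $\Lup_{1,\calT_U}$ \emph{exactly} inside the PCG loop.
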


When $h = \Otil(1)$ and $k = \Otil(n^{1/2})$, the solver in Theorem \ref{thm:main_many_balls_informal} has the same runtime as Theorem \ref{thm:main_ball_informal}.
When $h = o(n^{2/5}), k = o(n^{3/5})$, the runtime is $o(n^{2})$, asymptotically faster than \nd.

\subsection{Motivations and Applications} 

In the past decades, Combinatorial Laplacians have played a crucial role in the development of computational topology and topological data analysis in various domains, such as statistics \cite{JLYT11,osting2013statistical}, graphics and imaging \cite{ma20111,tong2003discrete}, brain networks \cite{lee2019harmonic}, deep learning \cite{bronstein2017geometric}, signal processing \cite{barbarossa2020topological}, and cryo-electron microscope \cite{ye2017cohomology}.
We recommend readers consult accessible surveys \cite{ghrist08,C09,EH10,lim20} for more information.

Combinatorial Laplacians have their roots in the study of discrete Hodge decomposition \cite{eckmann44}, which states that the kernel of the $i$-Laplacian $\LL_i$ is isomorphic to the $i$th homology group of the simplicial complex. 
Among the many applications of combinatorial Laplacians, a central problem is determining the Betti numbers -- the ranks of the homology groups -- which are important topological invariants. 
% uniquely characterize the topology of a shape.
Additionally, discrete Hodge decomposition allows for the extraction of meaningful information from data by decomposing them into three mutually orthogonal components: gradient (in the image of $\partial_i^\top$), curl (in the image of $\partial_{i+1}$), and harmonic (in the kernel of $\LL_i$) components.
For instance, the three components of edge flows in a graph capture the global trends, local circulations, and ``noise''.

The computation of both Betti numbers and discrete Hodge decomposition of higher-order flows can be achieved by solving systems of linear equations in Combinatorial Laplacians \cite{friedman1996computing,lim20}. 
The rank of a matrix $\LL_i$ can be determined by solving a logarithmic number of linear equation systems in $\LL_i$ \cite{BV21}. 
The discrete Hodge decomposition can be calculated by solving least square problems involving boundary operators or Combinatorial Laplacians, which in turn reduces to solving linear equations in these matrices.

Furthermore, an important question in numerical linear algebra concerns whether the nearly-linear time solvers for graph Laplacian linear equations can be generalized to larger classes of linear equations. 
Researchers have achieved success with elliptic finite element systems \cite{boman2008solving}, Connection Laplacians \cite{KLPSS16}, directed Laplacians \cite{cohen2017almost,cohen2018solving}, 
well-shaped truss stiffness matrices  \cite{DS07,ST08,KPSZ18}.
It would be intriguing to determine what structures of linear equations facilitate faster solvers.
Another theoretically compelling reason for developing efficient solvers for $1$-Laplacians stems from the ``equivalence'' of time complexity between solving $1$-Laplacian systems and general sparse systems of linear equations \cite{DKPZ22}. 
If one can solve all $1$-Laplacian systems in time $\Otil((\text{\# of simplexes})^{c})$ where $c \ge 1$ is a constant, then one can solve all general systems of linear equations in time $\Otil((\text{\# of nonzero coefficients})^c)$.

\paragraph*{Organization of the Remaining Paper}
In the paper, we have organized the content into several sections. 
In Section \ref{sect:preliminaries}, we present the necessary background knowledge in linear algebra and topology. 
Section \ref{sect:algo_overview} formally states our main theorems, while Section \ref{sect:alg_overview_new} provides an overview of our algorithm ideas.
The subsequent sections focus on the detailed proofs of various technique ingredients of the algorithm. 
% \peng{I removed the rest since we haven't defined these concepts such as down- and up-Laplacian.}

% The subsequent sections focus on the detailed proofs of various technique ingredients of the algorithm. 
% Section \ref{sect:down_solver} includes the proof of the down-Laplacian solver. In Section \ref{sec:up_solver_merge}, we provide the proof of the up-Laplacian solver.
% % , where the former uses the idea of \nd~and the latter uses \pcg. 
% Section \ref{sect:up_proj} gives an algorithm for approximating the up-projection operator.
% In Section \ref{sect:together}, we combine all the components and provide a proof of Theorem \ref{thm:main_ball_informal}.
% % Additionally, we include supplementary ingredients in the paper. 
% In Section \ref{sect:hollowing}, we present a linear time algorithm for $r$-hollowing, assuming additional geometric structures of the complex, which proves Theorem \ref{thm:main_ball_rhollow_informal}.
% Section \ref{sect:solver_bdry} introduces a faster up-Laplacian solver using a slightly different approach to $r$-hollowing, which may be of independent interest. 
% Finally, in Section \ref{sect:many_balls}, we extend the 1-Laplacian solver for a 3-complex embedded in $\mathbb{R}^3$ to a union of 3-complexes and prove Theorem \ref{thm:main_many_balls_informal}.

% For the missing proofs, please refer to the Appendix.
%!TEX root = main.tex

\section{Preliminaries}
\label{sect:preliminaries}

\subsection{Background of Linear Algebra}
\label{sect:linear_algebra}

Given a vector $\xx\in\R^n$, 
for $1\le i \le n$, we let $\xx[i]$ be the $i$th entry of $\xx$;
for $1 \le i < j \le n$, let $\xx[i:j]$ be $(\xx[i], \xx[i+1], \ldots, \xx[j])^\top$. 
The Euclidean norm of $\xx$ is $\norm{\xx}_2 \defeq \sqrt{\sum_{i=1}^n \xx[i]^2}$. 
Given a matrix $\AA \in \R^{m\times n}$, for $1 \le i \le m, 1 \le j \le n$,
we let $\AA[i,j]$ be the $(i,j)$th entry of $\AA$;
for $S_1 \subseteq \{1,\ldots,m \}, S_2 \subseteq \{1,\ldots,n\}$,
let $\AA[S_1, S_2]$ be the submatrix with row indices in $S_1$ and column indices in $S_2$.
Furthermore, we let $\AA[S_1, :] = \AA[S_1, \{1,\ldots,n\}]$ and $\AA[:,S_2] = \AA[\{1,\ldots,m\}, S_2]$.
The operator norm of $\AA$ (induced by the Euclidean norm) is 
$\norm{\AA}_2 \defeq \max_{\vv \in \R^n} \frac{\norm{\AA \xx}_2}{\norm{\xx}_2}$.
The image of $\AA$ is the linear span of the columns of $\AA$, denoted by $\im(\AA)$, and the kernel of $\AA$ to be $\{\xx \in \mathbb{R}^n: \AA \xx = {\bf 0}\}$, denoted by $\ker(\AA)$.
A fundamental theorem of Linear Algebra states $\R^m=\im(\AA)\oplus \ker(\AA^\top)$.

\begin{fact}\footnote{All the facts in this section are well-known. For completeness, we include their proofs in Appendix \ref{sect:proof_linear_algebra}.}
For any matrix $\AA \in \mathbb{R}^{m \times n}$, 
$\im(\AA) = \im(\AA \AA^\top)$.
\label{fact:image}
\end{fact}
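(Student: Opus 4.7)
The plan is to prove the two inclusions $\im(\AA\AA^\top) \subseteq \im(\AA)$ and $\im(\AA) \subseteq \im(\AA\AA^\top)$ separately. The first inclusion is immediate: any vector in $\im(\AA\AA^\top)$ has the form $\AA\AA^\top \xx = \AA(\AA^\top \xx)$, which is a linear combination of the columns of $\AA$ and hence lies in $\im(\AA)$.

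For the reverse inclusion, I would go through kernels rather than trying to construct a preimage directly. The key intermediate claim is that $\ker(\AA^\top) = \ker(\AA\AA^\top)$. The inclusion $\ker(\AA^\top) \subseteq \ker(\AA\AA^\top)$ is obvious. For the other direction, if $\AA\AA^\top \xx = \veczero$, then $\xx^\top \AA \AA^\top \xx = 0$, which rewrites as $\norm{\AA^\top \xx}_2^2 = 0$, forcing $\AA^\top \xx = \veczero$.

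Having established $\ker(\AA^\top) = \ker(\AA\AA^\top)$, I would invoke the fundamental theorem of linear algebra $\R^m = \im(\BB) \oplus \ker(\BB^\top)$ quoted just above the fact. Applying this to $\BB = \AA\AA^\top$ (which is symmetric, so $\BB^\top = \BB$) gives $\im(\AA\AA^\top) = \ker(\AA\AA^\top)^\perp$, and applying it to $\BB = \AA$ gives $\im(\AA) = \ker(\AA^\top)^\perp$. Combining these with the claim above yields $\im(\AA\AA^\top) = \ker(\AA^\top)^\perp = \im(\AA)$, finishing the proof.

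There is no real obstacle here; the only subtlety is to resist the temptation to construct a direct preimage (which would require pseudoinverses or an SVD), and instead route through the symmetry of $\AA\AA^\top$ and the orthogonal decomposition already cited in the text. The whole argument is two short lines plus the kernel identity.
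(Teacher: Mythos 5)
Your proof is correct, but it takes a different route from the paper's. The paper proves $\im(\AA) \subseteq \im(\AA\AA^\top)$ by directly constructing a preimage: given $\xx = \AA\yy$, it decomposes $\yy = \yy_1 + \yy_2$ with $\yy_1 \in \im(\AA^\top)$ and $\yy_2 \in \ker(\AA)$ (the orthogonal decomposition of the \emph{domain} $\R^n$), picks $\zz$ with $\AA^\top\zz = \yy_1$, and checks $\AA\AA^\top\zz = \AA\yy_1 = \AA\yy = \xx$. You instead work in the \emph{codomain} $\R^m$: you prove the kernel identity $\ker(\AA^\top) = \ker(\AA\AA^\top)$ via the norm argument $\xx^\top\AA\AA^\top\xx = \norm{\AA^\top\xx}_2^2$, and then pass to orthogonal complements using the symmetry of $\AA\AA^\top$. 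Both proofs ultimately rest on the fundamental decomposition $\R^k = \im(\BB) \oplus \ker(\BB^\top)$, just applied on opposite sides of $\AA$. Your route is slightly cleaner in that the containment $\ker(\AA^\top) \supseteq \ker(\AA\AA^\top)$ carries all the content and everything else is formal; the paper's route is a touch more explicit in exhibiting a witness. One small inaccuracy in your commentary: constructing a direct preimage does \emph{not} require pseudoinverses or an SVD --- the paper's proof does exactly that using only the orthogonal decomposition of the domain, which is the same tool you invoke for the codomain.
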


\paragraph*{Pseudo-inverse and Projection Matrix}
The pseudo-inverse of $\AA$ is defined to be a matrix $\AA^\dagger$ that satisfies all the following four criteria: (1) $\AA\AA^\dagger\AA=\AA$, (2) $\AA^\dagger\AA\AA^\dagger=\AA^\dagger$, (3) $(\AA\AA^\dagger)^\top=\AA\AA^\dagger$, (4) $(\AA^\dagger\AA)^\top=\AA^\dagger\AA$.
The orthogonal projection matrix onto $\im(\AA)$ is $\PPi_{\im(\AA)}=\AA(\AA^\top\AA)^\dagger\AA^\top$.

\paragraph*{Eigenvalues and Condition Numbers}
Given a square matrix $\AA\in\R^{n\times n}$, let $\lambda_{\max}(\AA)$ be the maximum eigenvalue of $\AA$ and $\lambda_{\min}(\AA)$ the minimum \textit{nonzero} eigenvalue of $\AA$. 
The condition number of $\AA$, denoted by $\kappa(\AA)$, is the ratio between $\lambda_{\max}(\AA)$ and $\lambda_{\min}(\AA)$.
A symmetric matrix $\AA$ is \emph{positive semi-definite (PSD)} if all eigenvalues of $\AA$ are non-negative. 
Let $\BB \in \R^{n \times n}$ be another square matrix.
We say $\AA \pgeq \BB$ if $\AA-\BB$ is PSD. The \emph{condition number of $\AA$ relative to $\BB$} is
\[
    \kappa(\AA,\BB)
    \defeq \min\left\{ \frac{\alpha}{\beta}: \beta \PPi_{\im(\AA)} \BB \PPi_{\im(\AA)} \pleq \AA\pleq  
    \alpha \BB
    \right\}.
\]
\begin{fact}
    Let $\AA, \BB \in \mathbb{R}^{n \times n}$ be symmetric matrices such that $\AA \pleq \BB$. 
    Then, for any $\VV \in \mathbb{R}^{m \times n}$, $\VV \AA \VV^\top \pleq \VV \BB \VV^\top$.
    \label{fact:psd}
\end{fact}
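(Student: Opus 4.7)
The plan is to unpack the definition of $\pleq$ and show that the PSD property is preserved under conjugation by an arbitrary rectangular matrix $\VV$. Since the hypothesis $\AA \pleq \BB$ means, by definition, that $\BB - \AA$ is positive semi-definite, the conclusion $\VV \AA \VV^\top \pleq \VV \BB \VV^\top$ is equivalent to showing that $\VV \BB \VV^\top - \VV \AA \VV^\top = \VV(\BB - \AA)\VV^\top$ is PSD.

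To verify this, I would use the standard characterization that a symmetric matrix $\MM$ is PSD if and only if $\yy^\top \MM \yy \ge 0$ for every vector $\yy$. Taking any $\yy \in \mathbb{R}^m$ and setting $\xx \defeq \VV^\top \yy \in \mathbb{R}^n$, I would compute
\[
\yy^\top \VV(\BB - \AA) \VV^\top \yy \;=\; \xx^\top (\BB - \AA) \xx \;\ge\; 0,
\]
where the inequality follows from the assumption that $\BB - \AA$ is PSD. Since this holds for every $\yy$, the matrix $\VV(\BB - \AA)\VV^\top$ is PSD, yielding the desired inequality.

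The statement is essentially a routine linear-algebra fact, so there is no substantive obstacle. The only minor point worth noting in the write-up is that $\VV \AA \VV^\top$ and $\VV \BB \VV^\top$ are themselves symmetric (being conjugations of symmetric matrices), so the relation $\pleq$ is well-defined on them, and the argument carries through without any additional assumption on the rank, image, or invertibility of $\VV$.
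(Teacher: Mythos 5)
Your proof is correct. The paper treats this as a well-known fact and in fact does not supply a proof: while the footnote on Fact~\ref{fact:image} promises that all facts in that subsection are proved in Appendix~\ref{sect:proof_linear_algebra}, only Facts~\ref{fact:image}, \ref{fact:identity}, \ref{fact:schur_project_kernel}, and \ref{fact:solvers} actually appear there, and Fact~\ref{fact:psd} is omitted. Your argument (write the claim as PSD-ness of $\VV(\BB-\AA)\VV^\top$, then check the quadratic form $\yy^\top \VV(\BB-\AA)\VV^\top\yy = (\VV^\top\yy)^\top(\BB-\AA)(\VV^\top\yy) \ge 0$) is the standard one and is precisely what such a completeness proof would say; your remark that no rank or invertibility hypothesis on $\VV$ is needed is also correct.
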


\paragraph*{Schur Complement}
Let $\AA\in\R^{n\times n}$, and 
let $F \cup C$ be a partition of $\{1,\ldots, n\}$. We write $\AA$ as a block matrix:
\begin{align}
\AA = \begin{pmatrix}
    \AA[F,F] & \AA[F,C]\\
    \AA[C,F] & \AA[C,C]
\end{pmatrix}.
\label{eqn:blockA}
\end{align}
We define the (generalized) Schur complement of $\AA$ onto $C$ to be
\[
    \schur[\AA]_C = \AA[C,C] - \AA[C,F] \AA[F,F]^{\dagger} \AA[F,C].
\]
The Schur complement appears in performing a block Gaussian elimination on matrix $\AA$ to eliminate the indices in $F$.

\begin{fact}\label{clm:matrix_identity}
Let $\AA$ be a PSD matrix defined in Equation \eqref{eqn:blockA}. 
Then, 
\begin{align*}
  \AA = \begin{pmatrix}
    \II & \\
    \AA[C,F] \AA[F, F]^{\dagger} & \II 
  \end{pmatrix} 
  \begin{pmatrix}
    \AA[F, F] & \\
    & \schur[\AA]_C
  \end{pmatrix}
  \begin{pmatrix}
    \II & \AA[F, F]^{\dagger} \AA[F, C] \\
    & \II
  \end{pmatrix}.
\end{align*}
\label{fact:identity}
\end{fact}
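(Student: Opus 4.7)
The plan is to reduce the identity to the standard LDU-style block factorization, the only subtlety being that $\AA[F,F]$ is in general singular, so we must replace its inverse with a pseudo-inverse and verify that the entries remaining in the off-diagonal positions are unchanged. Everything boils down to verifying
\[
    \AA[F,F]\,\AA[F,F]^{\dagger}\,\AA[F,C] = \AA[F,C]
    \qquad\text{and}\qquad
    \AA[C,F]\,\AA[F,F]^{\dagger}\,\AA[F,F] = \AA[C,F],
\]
i.e.\ that $\im(\AA[F,C]) \subseteq \im(\AA[F,F])$ (the second identity is the transpose of the first, since $\AA$ and $\AA[F,F]$ are symmetric and therefore so is $\AA[F,F]^{\dagger}$).

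The key step is this image containment, and it is precisely where PSD-ness of $\AA$ is used. Since $\AA \pgeq \matzero$, I would factor $\AA = \BB \BB^\top$ for some $\BB$, then split $\BB$ row-wise according to the partition $F \cup C$ into blocks $\BB_F$ and $\BB_C$. This gives $\AA[F,F] = \BB_F \BB_F^\top$ and $\AA[F,C] = \BB_F \BB_C^\top$, so
\[
    \im(\AA[F,C]) = \im(\BB_F \BB_C^\top) \subseteq \im(\BB_F) = \im(\BB_F \BB_F^\top) = \im(\AA[F,F]),
\]
where the penultimate equality is exactly Fact \ref{fact:image}. Since $\AA[F,F]\,\AA[F,F]^{\dagger}$ is the orthogonal projection onto $\im(\AA[F,F])$ (which follows directly from the four defining properties of the pseudo-inverse together with $\AA[F,F]$ being symmetric), it acts as the identity on the columns of $\AA[F,C]$, giving the first identity above; taking transposes gives the second.

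With the two identities in hand, the remainder is a direct multiplication. Expanding the right-hand side block by block yields $\AA[F,F]$ in the top-left and, in the off-diagonal positions, $\AA[F,F]\,\AA[F,F]^{\dagger}\,\AA[F,C] = \AA[F,C]$ and its transpose. The bottom-right block becomes
\[
    \AA[C,F]\,\AA[F,F]^{\dagger}\,\AA[F,F]\,\AA[F,F]^{\dagger}\,\AA[F,C] + \schur[\AA]_C
    = \AA[C,F]\,\AA[F,F]^{\dagger}\,\AA[F,C] + \schur[\AA]_C = \AA[C,C],
\]
where the first step uses property (2) of the pseudo-inverse ($\AA[F,F]^{\dagger}\AA[F,F]\AA[F,F]^{\dagger} = \AA[F,F]^{\dagger}$) and the second is the definition of $\schur[\AA]_C$. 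The main obstacle, such as it is, is purely the image-containment argument in the singular case; once that is settled, the factorization is a bookkeeping exercise.
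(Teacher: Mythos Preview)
Your proposal is correct and follows essentially the same approach as the paper: expand the right-hand side block by block and check that each block matches $\AA$. The paper's proof is in fact less detailed than yours---it simply writes $\AA[F,F]\,\AA[F,F]^{\dagger}\,\AA[F,C] = \AA[F,C]$ in the course of the multiplication without justification, whereas you correctly isolate this as the one nontrivial step and supply the image-containment argument via the factorization $\AA = \BB\BB^\top$ and Fact~\ref{fact:image}.
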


\begin{fact}
Let $\AA$ be a PSD matrix defined in Equation \eqref{eqn:blockA}. Let $\AA = \BB \BB^\top$, and we decompose $\BB = \begin{pmatrix}
\BB_F \\
\BB_C
\end{pmatrix}$ accordingly. Then, 
$\schur[\AA]_C = \BB_C \PPi_{\ker(\BB_F)} \BB_C^\top$,
where $\PPi_{\ker(\BB_F)}$ is the projection onto the kernel of $\BB_F$.
\label{fact:schur_project_kernel}
\end{fact}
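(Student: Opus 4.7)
The plan is to expand both sides of the claimed identity explicitly using $\AA = \BB\BB^\top$, and then reduce it to a standard relationship between pseudo-inverses and orthogonal projectors. First I would write the four blocks directly from the factorization:
\[
\AA[F,F] = \BB_F \BB_F^\top, \quad \AA[F,C] = \BB_F \BB_C^\top, \quad \AA[C,F] = \BB_C \BB_F^\top, \quad \AA[C,C] = \BB_C \BB_C^\top.
\]
In particular, $\AA[F,F]^{\dagger} = (\BB_F \BB_F^\top)^{\dagger}$ since the two matrices are literally equal and the Moore--Penrose pseudo-inverse is unique; this tiny observation is the only step that one might gloss over and is worth recording explicitly.

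Next I would substitute these blocks into the definition of $\schur[\AA]_C$ and factor $\BB_C$ out on the left and $\BB_C^\top$ on the right, obtaining
\[
\schur[\AA]_C = \BB_C \BB_C^\top - \BB_C \BB_F^\top (\BB_F \BB_F^\top)^{\dagger} \BB_F \BB_C^\top = \BB_C \bigl( \II - \BB_F^\top (\BB_F \BB_F^\top)^{\dagger} \BB_F \bigr) \BB_C^\top.
\]
The remaining task is to identify the middle factor as $\PPi_{\ker(\BB_F)}$.

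For this, I would apply the projection formula from the preliminaries, $\PPi_{\im(\MM)} = \MM(\MM^\top \MM)^{\dagger} \MM^\top$, with $\MM = \BB_F^\top$. This gives $\BB_F^\top (\BB_F \BB_F^\top)^{\dagger} \BB_F = \PPi_{\im(\BB_F^\top)}$. Combined with the fundamental theorem of linear algebra, which yields the orthogonal decomposition of the ambient space into $\im(\BB_F^\top) \oplus \ker(\BB_F)$, this means $\II - \PPi_{\im(\BB_F^\top)} = \PPi_{\ker(\BB_F)}$. Substituting back gives the desired identity.

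There is no real obstacle here — the proof is essentially a two-line manipulation once the projection formula and the $\AA[F,F]^{\dagger} = (\BB_F \BB_F^\top)^{\dagger}$ substitution are in place. The value of the statement is conceptual: it reframes the Schur complement, which is defined through an abstract pseudo-inverse, as the Gram matrix of $\BB_C$ with its columns projected onto the kernel of $\BB_F$, a form that will be more convenient when analyzing $1$-Laplacians built from boundary operators in later sections.
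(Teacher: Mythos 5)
Your proof is correct and follows the same route as the paper: substitute the block factorizations $\AA[\cdot,\cdot] = \BB_\cdot \BB_\cdot^\top$ into the Schur complement definition, factor out $\BB_C$ and $\BB_C^\top$, and identify the middle factor $\II - \BB_F^\top(\BB_F\BB_F^\top)^{\dagger}\BB_F$ as $\PPi_{\ker(\BB_F)}$. You spell out the last identification a bit more explicitly (via the projection formula and $\im(\BB_F^\top)\oplus\ker(\BB_F)$), but the argument is the same.
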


\paragraph*{Solving Linear Equations} 
We will need Fact \ref{fact:solvers} for relations between different error notations for linear equations and Theorem \ref{thm:PCG} for \pcg.

\begin{fact}
Let $\AA, \ZZ \in \R^{n \times n} $ be two symmetric PSD matrices, and let $\PPi_{\im(\AA)}$ be the orthogonal projection onto $\im(\AA)$.
\begin{enumerate}
    \item If $(1-\eps) \AA^{\dagger} \pleq \ZZ \pleq (1+\eps) \AA^{\dagger} $, then $\norm{\AA \ZZ \bb - \bb}_2 \le \eps \sqrt{\kappa(\AA)} \norm{\bb}_2 $ for any $\bb \in \im(\AA)$.
    \item If $\norm{\AA \ZZ \bb - \bb}_2 \le \eps \norm{\bb}_2$ for any $\bb \in \im(\AA)$, then $(1-\eps) \AA^{\dagger} \pleq \PPi_{\im(\AA)} \ZZ \PPi_{\im(\AA)} \pleq (1+\eps) \AA^{\dagger} $.
\end{enumerate}
\label{fact:solvers}
\end{fact}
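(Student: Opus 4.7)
My plan is to prove both parts by conjugating $\ZZ$ with the positive semidefinite square root $\AA^{1/2}$, which reduces the estimates to the ``normalized'' operator $\MM \defeq \AA^{1/2} \ZZ \AA^{1/2}$ and the projection $\PPi_{\im(\AA)}$. Two facts underpin this: (i) $\AA^{1/2} \AA^\dagger \AA^{1/2} = \PPi_{\im(\AA)}$ and $(\AA^\dagger)^{1/2} \AA^{1/2} = \PPi_{\im(\AA)}$, so by Fact \ref{fact:psd} the PSD sandwich $(1\pm\eps)\AA^\dagger$ on $\ZZ$ is equivalent to the sandwich $(1\pm\eps)\PPi_{\im(\AA)}$ on $\MM$, up to projecting $\ZZ$ onto $\im(\AA)$; and (ii) for any $\cc \in \im(\AA)$, setting $\bb \defeq \AA^{1/2} \cc \in \im(\AA)$, a direct expansion yields the identity
\begin{equation*}
\AA \ZZ \bb - \bb \;=\; \AA^{1/2} \bigl( \MM - \PPi_{\im(\AA)} \bigr) \cc.
\end{equation*}

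For Part 1, I would start from the operator-norm form $\norm{\MM - \PPi_{\im(\AA)}}_2 \le \eps$ of the equivalent sandwich on $\MM$ and plug it into the identity above, obtaining $\norm{\AA \ZZ \bb - \bb}_2 \le \sqrt{\lambda_{\max}(\AA)} \cdot \eps \cdot \norm{\cc}_2$. Since $\cc \in \im(\AA)$, one has $\norm{\bb}_2^2 = \cc^\top \AA \cc \ge \lambda_{\min}(\AA) \norm{\cc}_2^2$, which converts the bound into $\eps \sqrt{\kappa(\AA)} \norm{\bb}_2$; because the map $\cc \mapsto \AA^{1/2}\cc$ is a bijection of $\im(\AA)$ to itself, every $\bb \in \im(\AA)$ is covered.

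For Part 2, I would apply the identity at an eigenvector $\cc \in \im(\AA)$ of the symmetric operator $\MM - \PPi_{\im(\AA)}$ with eigenvalue $\mu$. The hypothesis then reads $\mu^2 \, \cc^\top \AA \cc \le \eps^2 \, \cc^\top \AA \cc$, forcing $\abs{\mu} \le \eps$. Ranging over all eigenvectors yields $(1-\eps)\PPi_{\im(\AA)} \pleq \MM \pleq (1+\eps)\PPi_{\im(\AA)}$, and conjugating with $(\AA^\dagger)^{1/2}$ (via Fact \ref{fact:psd}) converts this into the desired bound on $\PPi_{\im(\AA)} \ZZ \PPi_{\im(\AA)}$.

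The only subtlety, which is the main ``obstacle'' but is really just careful bookkeeping, lies in handling $\ker(\AA)$: I need to check that $\AA^{1/2}$ restricts to a bijection of $\im(\AA)$ (so the substitution $\bb = \AA^{1/2}\cc$ is surjective onto $\im(\AA)$), and that both $\MM$ and $\PPi_{\im(\AA)}$ vanish on $\ker(\AA)$ (so the PSD bound on $\MM$ extends trivially from $\im(\AA)$ to all of $\R^n$, and eigenvectors with nonzero eigenvalue may be chosen in $\im(\AA)$). Once these are in place, the entire argument reduces to elementary linear algebra on $\im(\AA)$, where $\AA$ is invertible.
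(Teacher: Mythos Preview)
Your proposal is correct and takes essentially the same approach as the paper: both reduce the question to the normalized operator $\MM = \AA^{1/2}\ZZ\AA^{1/2}$ via the identity $\AA\ZZ\bb - \bb = \AA^{1/2}(\MM - \PPi_{\im(\AA)})\AA^{\dagger/2}\bb$ for $\bb\in\im(\AA)$ (your $\cc$ is the paper's $\AA^{\dagger/2}\bb$), and then pass between PSD bounds on $\ZZ$ and on $\MM$ by conjugation. The only cosmetic difference is in Part~2, where the paper routes through the non-symmetric matrix $\AA\,\PPi_{\im(\AA)}\ZZ\,\PPi_{\im(\AA)}$ and the fact that it shares eigenvalues with $\AA^{1/2}\PPi_{\im(\AA)}\ZZ\,\PPi_{\im(\AA)}\AA^{1/2}=\MM$, whereas you test eigenvectors of the symmetric $\MM-\PPi_{\im(\AA)}$ directly; your version is arguably cleaner but the content is the same.
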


\begin{theorem}[Preconditioned Conjugate Gradient \cite{axe85}]
\label{thm:PCG}
    Let $\AA,\BB\in\R^{n\times n}$ be two symmetric PSD matrices, and let $\bb\in\R^n$. Each iteration of Preconditioned Conjugate Gradient multiplies one vector with $\AA$, solves one system of linear equations in $\BB$, and performs a constant number of vector operations. For any $\epsilon>0$, the algorithm outputs an $\xx$ satisfying $\norm{\AA\xx-\PPi_{\im(\AA)} \bb}_2\leq \epsilon\norm{ \PPi_{\im(\AA)} \bb}_2$ in $O(\sqrt{\kappa}\log(\kappa/\epsilon))$ such iterations, where $\PPi_{\im(\AA)}$ is the orthogonal projection matrix onto the image of $\AA$ and $\kappa = \kappa(\AA, \BB)$.
\end{theorem}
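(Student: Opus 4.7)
The plan is to reduce PCG to standard conjugate gradient (CG) on a symmetrically preconditioned operator, invoke the classical Chebyshev-polynomial convergence bound, and then carefully handle the fact that $\AA$ and $\BB$ are only PSD rather than positive definite.

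First, I would observe that one PCG iteration for $\AA \xx = \bb$ with preconditioner $\BB$ corresponds, under the change of variables $\xx = \BB^{\dagger/2} \yy$, to one CG iteration on the symmetric operator $\MM = \BB^{\dagger/2} \AA \BB^{\dagger/2}$. The relative spectral bound $\beta \PPi_{\im(\AA)} \BB \PPi_{\im(\AA)} \pleq \AA \pleq \alpha \BB$ built into the definition of $\kappa(\AA, \BB)$ translates, via Fact \ref{fact:psd}, into the statement that all nonzero eigenvalues of $\MM$ lie in an interval of ratio at most $\kappa$. The per-iteration cost then decomposes exactly as claimed: one multiplication by $\AA$, one linear solve in $\BB$ (which implements multiplication by $\BB^{\dagger}$), plus a constant number of vector operations for the three-term CG recurrences.

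Second, the classical CG convergence bound, obtained by exhibiting a shifted Chebyshev polynomial on $[\lambda_{\min}(\MM), \lambda_{\max}(\MM)]$ and invoking the optimality of CG in the $\MM$-norm over the Krylov subspace, yields
\[
  \norm{\yy_k - \MM^{\dagger}\BB^{\dagger/2}\bb}_{\MM} \le 2 \left( \frac{\sqrt{\kappa} - 1}{\sqrt{\kappa} + 1} \right)^k \norm{\MM^{\dagger}\BB^{\dagger/2}\bb}_{\MM}.
\]
Choosing $k = O(\sqrt{\kappa} \log(1/\delta))$ drives the right-hand side below $\delta$ times the initial error. I would then convert this $\MM$-norm bound on $\yy$ into the required residual bound $\norm{\AA \xx - \PPi_{\im(\AA)} \bb}_2 \le \epsilon \norm{\PPi_{\im(\AA)} \bb}_2$. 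Using the inequalities $\norm{\AA \vv}_2 \le \sqrt{\lambda_{\max}(\AA)} \norm{\vv}_{\AA}$ and $\norm{\vv}_{\AA} \ge \sqrt{\lambda_{\min}(\AA)} \norm{\PPi_{\im(\AA)}\vv}_2$ on $\im(\AA)$, analogous in spirit to Fact \ref{fact:solvers}, this conversion contributes at most a $\sqrt{\kappa(\AA)}$ factor inside the norm, absorbed by taking $\delta = \epsilon/\mathrm{poly}(\kappa)$; the total iteration count is then $O(\sqrt{\kappa} \log(\kappa/\epsilon))$.

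The principal obstacle is the singularity of $\AA$ and $\BB$: textbook CG proofs assume strict positive definiteness, and here one must verify two points. First, the PCG iterates stay in $\im(\AA)$, which holds because the effective right-hand side $\BB^{\dagger} \PPi_{\im(\AA)} \bb$ and all update directions are of the form $\BB^{\dagger}\AA\vv$, and by Fact \ref{fact:image} these lie in $\im(\BB^{\dagger}\AA)$. Second, the Chebyshev argument must be applied only to the restricted subspace spanned by nonzero eigenvectors of $\MM$, using precisely the projected two-sided bound in the definition of $\kappa(\AA, \BB)$. Cleanly executing this restriction-to-image argument, rather than invoking the nonsingular theorem as a black box, is the main subtlety; once it is handled, the rest of the proof is a routine assembly of the Krylov optimality, the Chebyshev estimate, and the norm-conversion inequalities above.
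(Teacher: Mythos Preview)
The paper does not prove this theorem at all: it is stated as Theorem~\ref{thm:PCG} with a citation to Axelsson~\cite{axe85} and is used as a black box throughout, with no proof given in the body or the appendix. So there is no ``paper's own proof'' to compare against.

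That said, your outline is the standard route to this result and is essentially sound. A few remarks on places where your sketch would need care if you actually wrote it out. First, the paper's definition of $\kappa(\AA,\BB)$ is asymmetric: the upper bound $\AA \pleq \alpha \BB$ forces $\im(\AA) \subseteq \im(\BB)$, while the lower bound is only required after projecting $\BB$ onto $\im(\AA)$. Your change of variables $\xx = \BB^{\dagger/2}\yy$ is therefore fine on $\im(\AA)$, but you should be explicit that the nonzero spectrum of $\MM = \BB^{\dagger/2}\AA\BB^{\dagger/2}$ lies in $[\beta,\alpha]$ precisely because of this containment, and that components of $\yy$ in $\im(\BB)\setminus\im(\AA)$ are annihilated by $\MM$ and never enter the Krylov space. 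Second, your claim that PCG iterates stay in $\im(\AA)$ relies on the initial residual already lying there; since the theorem's guarantee is stated against $\PPi_{\im(\AA)}\bb$, you implicitly start from the projected right-hand side, which should be made explicit. Third, the norm conversion you describe introduces a factor of $\sqrt{\kappa(\AA)}$, but the theorem only allows $\log(\kappa/\epsilon)$ with $\kappa = \kappa(\AA,\BB)$; you would need to argue that $\kappa(\AA)$ is controlled by a polynomial in $\kappa(\AA,\BB)$ and $\kappa(\BB)$, or more directly bound the residual via $\norm{\AA\xx - \PPi_{\im(\AA)}\bb}_2 = \norm{\MM\yy - \BB^{\dagger/2}\PPi_{\im(\AA)}\bb}_{\BB}$ and work entirely in the $\BB$-inner product, which avoids the extraneous $\kappa(\AA)$ altogether. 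None of these are fatal, but they are exactly the places where a fully rigorous write-up for the singular case diverges from the textbook positive-definite argument.
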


\subsection{Background of Topology}

\paragraph*{Simplex and Simplicial Complexes}

We consider a \emph{$\dim$-simplex} (or $\dim$-dimensional simplex) $\sigma$ as 
an ordered set of $\dim+1$ vertices, 
denoted by $\sigma = [v_0,\ldots, v_\dim]$. 
A \emph{face} of $\sigma$ is a simplex obtained by removing a subset of vertices from $\sigma$.
A \emph{simplicial complex} $\calK$ is a finite collection of simplexes such that 
(1) for every $\sigma\in\calK$ if $\tau\subset \sigma$ then $\tau\in \calK$, 
and (2) for every $\sigma_1, \sigma_2\in\calK$, $\sigma_1\cap\sigma_2$ is either empty or a face of both $\sigma_1, \sigma_2$. 
The \emph{dimension} of $\calK$ is the maximum dimension of any simplex in $\calK$.
A \emph{$\dim$-complex} is a $\dim$-dimensional simplicial complex. 
For $1 \le i \le \dim$, the \emph{$i$-skeleton} of a $\dim$-complex $\calK$ is the subcomplex consisting of all the simplexes 
of $\calK$ of dimensions at most $i$. In particular, the $1$-skeleton of $\calK$ is a graph.

A \emph{piecewise linear embedding} of a $3$-complex in $\R^3$ maps a $0$-simplex to a point, a $1$-simplex to a line segment, a $2$-simplex to a triangle, and a $3$-simplex to a tetrahedron. In addition, the interior of the images of simplices are disjoint and the boundary of each simplex is mapped to the appropriate simplices.
Such an embedding of a simplicial complex $\calK$ defines an \emph{underlying topological space} $\mathbb{K}$ -- the union of the images of all the simplexes of $\calK$.
We say $\calK$ is \emph{convex} if $\mathbb{K}$ is convex.
We say $\calK$ \emph{triangulates} a topological space $\mathbb{X}$ if $\mathbb{K}$ is homeomorphic to $\mathbb{X}$.
A simplex $\sigma$ of $\calK$ is a \emph{exterior} simplex if $\sigma$ is contained in the boundary of $\mathbb{K}$. 
A connected component of exterior simplexes is called a \emph{boundary component} of $\calK$.

The \emph{aspect ratio} of a set $S \subset \R^3$ is the radius of the smallest ball containing $S$ divided by the radius of the largest ball contained in $S$. The aspect ratio of $S$ is always greater than or equal to $1$.
We say a simplex $\sigma$  is \emph{stable} if it has $O(1)$ aspect ratio and $\Theta(1)$ weight. 
Miller and Thurston proved the following lemma. As a corollary, the numbers of the vertices, the edges, the triangles, and the tetrahedrons of a $3$-complex $\calK$ that is composed of stable tetrahedrons are all equal up to a constant factor.

\begin{lemma}[Lemma 4.1 of \cite{MT90}]
Let $\calK$ be a $3$-complex in $\mathbb{R}^3$ in which each tetrahedron has $O(1)$ aspect ratio. Then, each vertex of $\calK$ is contained in at most $O(1)$ tetrahedrons.
\label{lem:max_deg}
\end{lemma}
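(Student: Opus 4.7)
The plan is to bound the number of tetrahedra containing a fixed vertex $v$ by a solid-angle argument on the unit sphere centered at $v$. Let $T_1,\dots,T_k$ be the tetrahedra of $\calK$ that contain $v$, and for each $T_i$ let $\Omega_i\subset S^2$ denote the set of unit directions $\omega$ such that $v+t\omega$ lies in the interior of $T_i$ for all sufficiently small $t>0$; that is, $\Omega_i$ is the (open) solid angle subtended by $T_i$ at $v$. Because $\calK$ is a simplicial complex, the interiors of distinct $T_i$'s are pairwise disjoint in $\R^3$, so the $\Omega_i$'s are pairwise disjoint open subsets of $S^2$. Hence
\[
\sum_{i=1}^k |\Omega_i| \;\le\; |S^2| \;=\; 4\pi.
\]

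The main step is to show that each $|\Omega_i|$ is bounded below by an absolute constant that depends only on the aspect-ratio bound. Fix $T=T_i$ with circumradius $R$ and inradius $r$, so $R/r=O(1)$ and therefore $\mathrm{vol}(T)=\Omega(r^3)=\Omega(R^3)$. Writing $T$ in spherical coordinates at $v$, for each direction $\omega\in\Omega_i$ let $t(\omega)$ be the distance from $v$ to the boundary of $T$ along $\omega$. Since every edge of $T$ from $v$ has length at most $2R$ and $T$ is convex, we get $t(\omega)\le 2R$, and
\[
\mathrm{vol}(T) \;=\; \int_{\Omega_i} \frac{t(\omega)^3}{3}\,d\omega \;\le\; \frac{(2R)^3}{3}\,|\Omega_i|.
\]
Combining with $\mathrm{vol}(T)=\Omega(R^3)$ yields $|\Omega_i|\ge c$ for some constant $c>0$ depending only on the aspect ratio bound. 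Substituting into the disjointness inequality gives $k\le 4\pi/c=O(1)$.

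The main obstacle, and the only place the aspect-ratio hypothesis really enters, is the lower bound on the solid angle: one must rule out that a tetrahedron is ``long and thin at $v$'' while still having bounded global aspect ratio. The volume/diameter comparison above handles this cleanly because bounded aspect ratio simultaneously controls the diameter from above (via $R$) and the volume from below (via $r^3$). The rest of the argument — the disjointness of the solid angles — is immediate from the definition of a simplicial complex, since if $T_i\ne T_j$ then $T_i\cap T_j$ is a proper common face and therefore has empty interior in $\R^3$, which forces the corresponding spherical regions $\Omega_i,\Omega_j$ to have disjoint interiors. This completes the proof sketch.
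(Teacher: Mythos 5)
Your solid-angle packing argument is correct, and it is essentially the classical argument behind Lemma~4.1 of Miller--Thurston. Note, however, that the paper does not prove this lemma at all: it is imported verbatim as a citation, so there is no in-paper proof to compare against.

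One small point worth tightening. The paper's definition of aspect ratio is ``radius of the smallest ball \emph{containing} $S$ divided by radius of the largest ball \emph{contained in} $S$.'' For a tetrahedron the largest contained ball is the inscribed ball, so ``inradius'' matches; but the smallest enclosing ball is not the circumscribed sphere in general (for an obtuse tetrahedron the circumcenter can lie outside, and the circumradius can exceed the enclosing radius). Your argument is cleaner if you simply let $R$ denote the radius of the smallest enclosing ball: then $R/r=O(1)$ is exactly the hypothesis, $\mathrm{vol}(T)\ge\tfrac{4}{3}\pi r^3=\Omega(R^3)$, and $t(\omega)\le 2R$ since $T$ lies in a ball of radius $R$ (hence has diameter at most $2R$). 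Everything else you wrote — the disjointness of the spherical sectors because the simplicial-complex axioms force distinct tetrahedra to meet only along lower-dimensional faces, the $\int_{\Omega_i}\tfrac{t(\omega)^3}{3}\,d\omega$ identity for the cone volume in spherical coordinates, and the resulting lower bound $|\Omega_i|\ge c$ — goes through unchanged and yields $k\le 4\pi/c=O(1)$.
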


\paragraph*{Boundary Operators}
An \emph{$i$-chain} is a weighted sum of the oriented $i$-simplexes in $\calK$ with the coefficients in $\mathbb{R}$.
Let $\calC_i$ denote the $i$th chain space. 
The \textit{boundary operator} is a linear map $\partial_i: \calC_i \rightarrow \calC_{i-1}$ such that for an oriented $i$-simplex $\sigma = [v_0, v_1, \ldots, v_i]$,
\[
	\partial_i(\sigma)=\sum_{j=0}^{i}(-1)^j[v_0,\ldots,\hat{v}_j,\ldots, v_i],
\]
where $[v_0,\ldots,\hat{v}_j,\ldots, v_i]$ is the oriented $(i-1)$-simplex obtained by removing $v_j$ from $\sigma$.

The operator $\partial_i$ can be written as a matrix in $\abs{\calC_{i-1}} \times \abs{\calC_{i}}$ dimensions, 
where the $(r,l)$th entry of $\partial_i$ is $\pm 1$ if the $r$th $(i-1)$-simplex is a face of the $l$th $i$-simplex and $0$ otherwise. See Figure \ref{fig:partial_example} for an example.
\begin{figure}[!ht] 
    \centering 
    \includegraphics[width=0.9\textwidth]{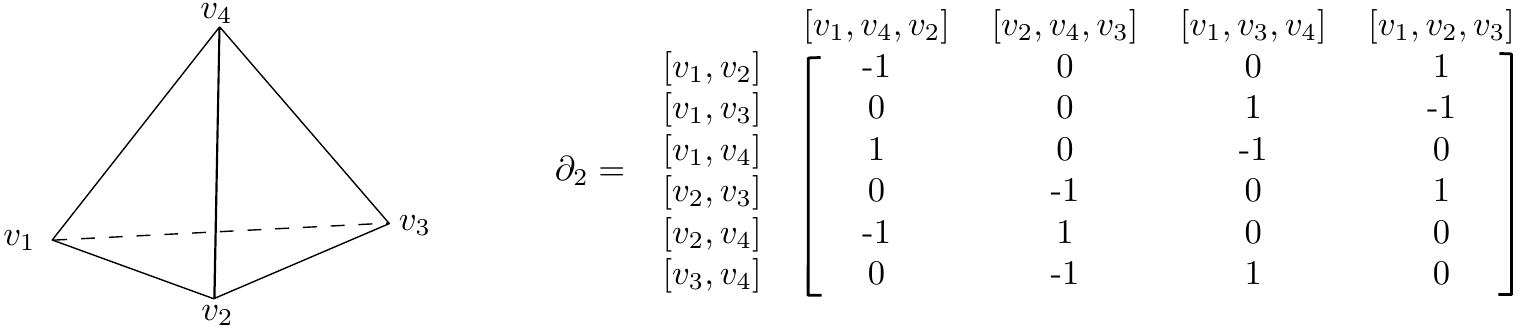} 
    \caption{An example of boundary operator. The left side is a 3-simplex (a tetrahedron) with vertices $v_1, v_2, v_3, v_4$. The right side is the corresponding second boundary operator $\partial_2$, where each column corresponds to an oriented $2$-simplex (a triangle) and each row corresponds to an oriented $1$-simplex (an edge).}
    \label{fig:partial_example} 
\end{figure}

An important property of boundary operators is 
$\partial_{i} \partial_{i+1} = {\bf 0}$, 
which implies $\im(\partial_{i+1}) \subseteq \ker(\partial_{i})$.
So, we can define the quotient space $H_{i} = \ker(\partial_{i}) \setminus \im(\partial_{i+1})$, called the \emph{$i$th homology space} of $\calK$. 
The rank of $H_i$ is called the $i$th Betti number of $\calK$. 
If the $i$th Betti number of $\calK$ is $0$, then 
$\im(\partial_i^\top) \oplus \im(\partial_{i+1}) = \mathbb{R}^{\abs{\calC_i}}$.
The first and second Betti numbers of a triangulation of a three-ball are both $0$.

\paragraph*{Hodge Decomposition and Combinatorial Laplacians}
Combinatorial Laplacians arise from the discrete Hodge decomposition.
\begin{theorem}[Hodge decomposition \cite{lim20}]
	Let $\AA\in\R^{m\times n}$ and $\BB\in\R^{n\times p}$ be matrices satisfying $\AA\BB=\mathbf{0}$. 
	Then, there is an orthogonal direct sum decomposition
	\[\R^n=\im(\AA^\top)\oplus\ker (\AA^\top\AA+\BB\BB^\top)\oplus\im(\BB).\]
\end{theorem}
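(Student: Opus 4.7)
The plan is to establish the three subspaces are pairwise orthogonal and that their sum spans $\R^n$, using the hypothesis $\AA\BB = \mathbf{0}$ (equivalently, $\im(\BB) \subseteq \ker(\AA)$) together with the standard fundamental theorem of linear algebra $\R^n = \im(\AA^\top) \oplus \ker(\AA)$.

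First, I would verify pairwise orthogonality. For any $\xx = \AA^\top \uu \in \im(\AA^\top)$ and $\yy = \BB \vv \in \im(\BB)$, we have $\xx^\top \yy = \uu^\top (\AA\BB) \vv = 0$. The key intermediate step is the identity
\[
\ker(\AA^\top\AA + \BB\BB^\top) \;=\; \ker(\AA) \cap \ker(\BB^\top),
\]
which follows because for any $\zz$ in this kernel, $\zz^\top(\AA^\top\AA + \BB\BB^\top)\zz = \|\AA\zz\|_2^2 + \|\BB^\top\zz\|_2^2 = 0$ forces both $\AA\zz = \mathbf{0}$ and $\BB^\top\zz = \mathbf{0}$ (the reverse containment being immediate). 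This identity instantly gives orthogonality of the middle subspace with both $\im(\AA^\top)$ (since $\zz \in \ker(\AA)$ means $(\AA^\top\uu)^\top \zz = \uu^\top \AA\zz = 0$) and $\im(\BB)$ (since $\zz \in \ker(\BB^\top)$ means $(\BB\vv)^\top \zz = \vv^\top \BB^\top \zz = 0$).

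For the spanning claim, I would decompose in two stages. By the fundamental theorem of linear algebra any $\xx \in \R^n$ splits uniquely as $\xx = \xx_1 + \xx_2$ with $\xx_1 \in \im(\AA^\top)$ and $\xx_2 \in \ker(\AA)$. Now I use the hypothesis $\im(\BB) \subseteq \ker(\AA)$: inside the inner-product space $\ker(\AA)$, the orthogonal complement of $\im(\BB)$ is $\ker(\AA) \cap \im(\BB)^\perp = \ker(\AA) \cap \ker(\BB^\top)$, which by the identity above equals $\ker(\AA^\top\AA + \BB\BB^\top)$. Thus
\[
\ker(\AA) \;=\; \im(\BB) \;\oplus\; \ker(\AA^\top\AA + \BB\BB^\top),
\]
and we may write $\xx_2 = \xx_3 + \xx_4$ with $\xx_3 \in \im(\BB)$ and $\xx_4 \in \ker(\AA^\top\AA + \BB\BB^\top)$. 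Combining gives the desired decomposition $\xx = \xx_1 + \xx_3 + \xx_4$, and uniqueness follows from pairwise orthogonality.

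The only mildly subtle point -- and what I view as the main conceptual step -- is recognizing that one should first peel off $\im(\AA^\top)$ via the fundamental theorem, then use the containment $\im(\BB) \subseteq \ker(\AA)$ to split the remaining piece $\ker(\AA)$ against $\im(\BB)$; the identification of the orthogonal complement of $\im(\BB)$ inside $\ker(\AA)$ with $\ker(\AA^\top\AA + \BB\BB^\top)$ is where the sum-of-squares trick on the quadratic form is essential. Everything else is routine bookkeeping with inner products.
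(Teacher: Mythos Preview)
Your proof is correct and follows the standard argument for discrete Hodge decomposition. However, the paper does not actually prove this theorem: it is stated as a cited result from \cite{lim20} and used as background, with no proof given in the paper itself. So there is no paper proof to compare against; your argument stands on its own as a valid and self-contained justification.
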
	
Since $\partial_{i} \partial_{i+1} = {\bf 0}$, it is valid to set $\AA=\partial_{i}$ and $\BB=\partial_{i+1}$. 
The matrix we get in the middle term is the \emph{combinatorial Laplacian}: $\LL_i\defeq\partial_{i}^\top \partial_{i}+\partial_{i+1} \partial_{i+1}^\top$.

The weighted combinatorial Laplacian generalizes combinatorial Laplacian.
For each $1 \le i \le d$, we assign each $i$-simplex of $\calK$ with a \emph{positive} weight, and let $\WW_i: \calC_i \rightarrow \calC_i$ be a diagonal matrix where $\WW_i[\sigma, \sigma]$ is the weighted of the $i$-simplex $\sigma$. 
Then the weighted \emph{$i$-Laplacian} of $\calK$ is a linear operator $\LL_i: \calC_i \rightarrow \calC_i$ defined as 
\[
\LL_{i} \defeq \partial_{i}^\top \WW_{i-1} \partial_{i} + \partial_{i+1} \WW_{i+1} \partial_{i+1}^\top.
\]
Note that Hodge decomposition also applies to weighted combinatorial Laplacian (by setting $\AA=\WW_{i-1}^{1/2}\partial_i$ and $\BB=\partial_{i+1}\WW_{i+1}^{1/2}$, we have $\AA\BB=\bf0$).
We call $\Ldown_i \defeq \partial_{i}^\top \WW_{i-1} \partial_{i}$ the $i$th \emph{down-Laplacian} operator and $\Lup_i \defeq \partial_{i+1} \WW_{i+1} \partial_{i+1}^\top$ the $i$th \emph{up-Laplacian} operator.
Sometimes, we use subscripts to specify the complex on which these operators are defined: $\partial_{i,\calK}, \WW_{i, \calK}, \Ldown_{i,\calK}, \Lup_{i, \calK}$.

\paragraph*{$r$-Hollowings}

Let $\calK$ be a pure $3$-complex with $n$ simplexes.
A set of triangles $\hat{\triangle}_1, \ldots, \hat{\triangle}_{k}$ form a \emph{triangle path} of length $k-1$ if for any $1 \le i \le k-1$, $\hat{\triangle}_i$ and $\hat{\triangle}_{i+1}$ share an edge.
The \emph{triangle distance} between two triangles $\triangle_1$ and $\triangle_2$ is the shortest triangle path length between $\triangle_1$ and $\triangle_2$.
The \emph{triangle diameter} of $\calK$ is the longest triangle distance between any two triangles.
A \emph{spherical shell} is $\{\xx \in \R^3: 
R_1 \le \norm{\xx}_2 \le R_2\}$ where $R_1 < R_2$.
If $\calK$ triangulates a spherical shell, we define the \emph{shell width} to be the shortest triangle distance between any two triangles where one is on the outer sphere and one is on the inner sphere.

\begin{definition}[$r$-hollowing]
Let $\calK$ be a $3$-complex with $n$ simplexes, and let $r = o(n)$ be a positive number.
We divide $\calK$ into $O(n/r)$ \emph{regions} each of $O(r)$ simplexes, $O(r^{2/3})$ \emph{boundary simplexes}, and $O(r^{2/3})$ exterior simplexes. Non-boundary simplexes are called \emph{interior simplexes}\footnote{We would like to emphasize that ``exterior simplex'' is defined for any $3$-complex, while ``boundary simplex'' and ``interior simplex'' are defined for $r$-hollowing. Although boundary and interior simplexes are mutually exclusive, an exterior simplex can be either a boundary or an interior simplex for a region of an $r$-hollowing.}. 
Interior simplexes from different regions do not share any subsimplexes.
The boundary of each region triangulates a spherical shell in $\R^3$ and has triangle diameter $O(r^{1/3})$ and shell width at least $5$.
The union of all boundary simplexes of each region is referred to as an \emph{$r$-hollowing} of $\calK$.
\label{def:hollow}
\end{definition}

Figure \ref{fig:r-hollowing-def} illustrates an example of $r$-hollowing by showing a cross-section of a $3$-complex. 
The left figure presents a cross-section of a $3$-complex $\calK$ with two holes inside, depicted as two empty discs. The outlines in black represent the ``exterior simplexes'' of $\calK$.
On the right-hand side of Figure \ref{fig:r-hollowing-def}, the gray area represents an $r$-hollowing of $\calK$. 
It divides $\calK$ into $4$ balanced regions $\{(1), (2), (3), (4)\}$, each indicated by the area inside the red, blue, green, or orange outline, respectively.
Region $(1)$ includes the smaller hole, while the larger hole ``intersects'' regions $(3)$ and $(4)$. 
The unshaded area inside the squares corresponds to ``interior simplexes,'' which encompasses the exterior simplexes of the smaller hole and the left half of the larger hole. In contrast, the gray area represents ``boundary simplexes,'' including the exterior simplexes of the right half of the larger hole. The shell width is indicated by arrows.

\begin{figure}[!ht] 
    \centering 
    \includegraphics[width=0.8\textwidth]{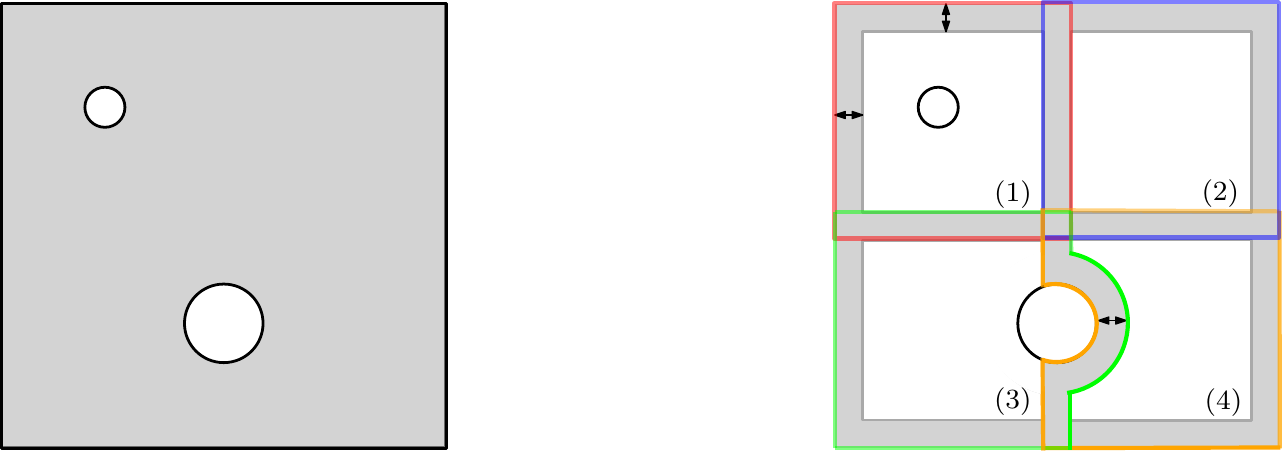} 
    \caption{A cross-sectional illustration of an $r$-hollowing.}
    \label{fig:r-hollowing-def} 
\end{figure}

In addition, this paper examines sufficient conditions for $3$-complexes that enable us to compute an $r$-hollowing in nearly-linear time. 
Specifically, we consider a pure $3$-complex $\calK$ embedded in $\R^3$ with $n$ stable simplexes each of volume $\Theta(1)$ possessing the following additional geometric structures:

\begin{definition}[Well-shaped boundary structure]
We say a $3$-complex $\calK$ has a \emph{well-shaped boundary structure} if:
\begin{enumerate}
\item \label{asp:1} The convex hull of $\calK$ has aspect ratio $O(1)$.
\item \label{asp:2} All except one boundary component have $1$-skeleton diameter $O(r^{1/3})$.
\item \label{asp:3} The total number of exterior simplexes of $\calK$ within any $\mathbb{X} \subset \R^3$ of volume $r$ is $O(r^{2/3})$; the total number of exterior simplexes of $\calK$ is $O(nr^{-1/3})$.
\item \label{asp:4} The triangle distance between any two boundary components of $\calK$ is greater than $5$. 
\end{enumerate}
\label{def:hole_requirement}
\end{definition}

It is worth noting that fulfilling the aforementioned assumptions is not excessively challenging. 
On one end of the spectrum, there are scenarios where $\calK$ contains at most $O(n/r)$ 2-dimensional holes (corresponding to the boundaries of $1$-skeleton diameters $O(r^{1/3})$), each with an interior volume of $O(r)$. 
On the other end, there are instances where $\calK$ encompasses $O(nr^{-1/3})$ uniformly distributed small holes, each with a constant interior volume. 
Moreover, it is likely that all scenarios lying between these extremes would also meet these assumptions.

%!TEX root = main.tex

\def\fup{\ff^{\text{up}}}
\def\fdown{\ff^{\text{down}}}
\def\xup{\xx^{\text{up}}}
\def\xdown{\xx^{\text{down}}}
\def\upFsolve{\textsc{UpLapFSolver}}
\def\schursolve{\textsc{SchurSolver}}
\def\downSolver{\textsc{DownLapSolver}}
\def\lapSolver{\textsc{OneLapSolver}}
\def\union1solve{\textsc{UnionOneLapSolver}}

\def\upFa{\widetilde{\Lup_1}[F,F]}
\def\LupS{\Lup_{1,\calS}}
\def\p2S{\partial_{2,\calS}}
\def\w2S{\WW_{2,\calS}}
\def\LupOneS{\widehat{\Lup_{1,\calS}}}
\def\SchurLupOneS{\schur[\widehat{\Lup_{1,\calS}}]_{\edges_2}}

\def\aPdown{\widetilde{\PPi}^{\text{down}}}
\def\aPup{\widetilde{\PPi}^{\text{up}}}
\def\Zup{\ZZ^{\text{up}}}
\def\Zdown{\ZZ^{\text{down}}}
\def\axup{\widetilde{\xx}^{\text{up}}}
\def\axdown{\widetilde{\xx}^{\text{down}}}

\def\dims{\text{dim}}

\def\downsolver{\textsc{DownSolver}}

\def\PkerF{\PPi_{\ker(\partial_2^\top[F,:])}}

\section{Main Theorems}
\label{sect:algo_overview}

We formally state our main results as follows.

\begin{theorem}
  Let $\calK$ be a pure $3$-complex embedded in $\R^3$ consisting of $n$ stable simplexes and with a known $r$-hollowing. 
  Let $\LL_1$ be the $1$-Laplacian operator of $\calK$, and 
  let $\PPi_1$ be the orthogonal projection matrix onto the image of $\LL_1$.
  For any vector $\bb$ and $\eps > 0$, we can find a solution $\tilde{\xx}$ such that 
  $\norm{\LL_1 \tilde{\xx} - \PPi_1 \bb}_2
  \le \eps \norm{\PPi_1 \bb}_2$
  in time $O \left( nr + n^{4/3}r^{5/18} \log(n/\eps) + n^2 r^{-2/3} \right)$.
  \label{thm:main_ball}
\end{theorem}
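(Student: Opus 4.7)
My plan is to first apply Hodge decomposition to split a solve in $\LL_1 = \Ldown_1 + \Lup_1$ into an independent down-solve and up-solve. Since $\im(\partial_1^\top)$, $\ker(\LL_1)$, and $\im(\partial_2)$ are invariant subspaces of $\LL_1$, a solve against $\bb$ reduces to solves against $\PPi_{\im(\partial_1^\top)}\bb$ and $\PPi_{\im(\partial_2)}\bb$. A system in $\Ldown_1 = \partial_1^\top \WW_0 \partial_1$ is essentially a graph-Laplacian system on the $1$-skeleton (because $\partial_1$ is the signed vertex-edge incidence matrix) and can be solved to the required accuracy in $\Otil(n)$ time by any nearly-linear-time graph Laplacian solver, so the real work is the $\Lup_1$-solve.

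For $\Lup_1 = \partial_2\WW_2\partial_2^\top$ I would run an Incomplete Nested Dissection modeled on \cite{KPSZ18}. Using the $r$-hollowing (Definition~\ref{def:hollow}) I partition the edges of $\calK$ into interior edges $F$ and boundary edges $C$. Because interior simplexes from distinct regions share no subsimplex, $\Lup_1[F,F]$ is block-diagonal with $O(n/r)$ blocks of size $O(r)$; each tetrahedron being stable yields $O(1)$ vertex degrees via Lemma~\ref{lem:max_deg}, so ordinary Nested Dissection factors each block in $O(r^2)$ time and answers solves in $O(r^{4/3})$ time, giving an $O(nr)$ preprocessing cost and $O(nr^{1/3})$ per-solve cost for $\Lup_1[F,F]$. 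By Fact~\ref{fact:identity} a solve in $\Lup_1$ then reduces to solves in $\Lup_1[F,F]$ and in the Schur complement $\schur[\Lup_1]_C$; the latter I attack by PCG (Theorem~\ref{thm:PCG}), using as preconditioner the intrinsic up-Laplacian of the boundary sub-complex. That preconditioner is itself block-diagonal over regions, each block being an up-Laplacian of a spherical-shell triangulation with $O(r^{2/3})$ simplexes, which can be solved either directly or by recursively invoking the overall algorithm on a much smaller instance. Each PCG iteration then costs $\Otil(nr^{1/3})$ (dominated by the Schur-complement application via the stored ND factorizations); balancing $\Otil(\sqrt{\kappa})$ iterations, where $\kappa$ is the relative condition number of $\schur[\Lup_1]_C$ against the preconditioner, against the per-iteration cost yields the $O(n^{4/3} r^{5/18}\log(n/\eps))$ middle term, and the $O(n^2 r^{-2/3})$ term accounts for the auxiliary boundary book-keeping on $O(nr^{-1/3})$ edges. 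Finally Fact~\ref{fact:solvers} converts PCG's residual guarantee into the stated $\ell_2$ bound $\norm{\LL_1\tilde{\xx}-\PPi_1\bb}_2 \leq \eps\norm{\PPi_1\bb}_2$, absorbing the $\sqrt{\kappa(\LL_1)} = \mathrm{poly}(n)$ factor into $\eps$.

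The main obstacle will be the spectral comparison between $\schur[\Lup_1]_C$ and the boundary-only preconditioner. Using Fact~\ref{fact:schur_project_kernel} one rewrites $\schur[\Lup_1]_C = \partial_2[C,:]\WW_2^{1/2} \PPi_{\ker(\partial_2[F,:]\WW_2^{1/2})} \WW_2^{1/2}\partial_2[C,:]^\top$, so the crux is to exhibit, for every boundary-supported $2$-chain, a low-energy ``Steiner'' extension lying in $\ker(\partial_2[F,:])$ whose total weighted energy is only a $\mathrm{poly}(r)$ multiple of its intrinsic boundary energy (and a matching lower bound). This is where the two geometric hypotheses of Definition~\ref{def:hollow} become essential: the triangle-diameter $O(r^{1/3})$ bounds the combinatorial distance over which such an extension must propagate, while the shell-width requirement of at least $5$ guarantees that the extension can be built entirely within the boundary shell, without recruiting interior triangles. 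Controlling this local-to-global energy comparison, and in particular showing it scales only polynomially in $r$ so that the targeted balance between the three runtime terms survives, is the step I expect to require the most work.
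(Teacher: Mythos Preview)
Your outline captures the up-Laplacian solver correctly in spirit, but it omits what is in fact the \emph{other} main technical contribution of the paper and the actual runtime bottleneck: the approximate projection $\Pup_1 = \PPi_{\im(\partial_2)}$. You write that the solve ``reduces to solves against $\PPi_{\im(\partial_1^\top)}\bb$ and $\PPi_{\im(\partial_2)}\bb$'' and then never explain how to obtain the second vector. When the first Betti number is zero one has $\im(\partial_2) = \im(\partial_1^\top)^\perp$, so $\Pup_1 = \II - \Pdown_1$ and the down-projection of Lemma~\ref{lem:cohen_project} suffices; this is exactly the regime handled by \cite{CMFNPW14,BMNW22,BN22}. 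But the whole point of Theorem~\ref{thm:main_ball} is to allow arbitrary Betti numbers, and then $\im(\partial_2)$, $\ker(\LL_1)$, and $\im(\partial_1^\top)$ are three distinct orthogonal pieces. You need $\Pup_1$ both to feed the up-solver a right-hand side in $\im(\Lup_1)$ and to project the up-solver's output afterward (Lemma~\ref{lem:cohen_1Lap}). The paper devotes all of Section~\ref{sect:up_proj} to building $\aPup_1$: it runs a \emph{second} incomplete nested dissection, this time partitioning \emph{triangles} (not edges) into interior $F$ and boundary $C$, uses the identity $\Pup_1 = \PPi_{\im(\partial_2[:,F])} + \PkerF \partial_2[:,C](\schur[\Ldown_2]_C)^\dagger \partial_2^\top[C,:]\PkerF$ (Lemma~\ref{lem:many_balls_exact_up_proj_restate}), and must prove a separate $O(r)$ relative-condition-number bound (Lemma~\ref{lem:up_proj_precondition_whole}) whose argument is genuinely different from the edge-based one. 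This projection step, not the up-solve, is what forces the $n^{8/5}$ rate; the paper even notes that the up-solve alone can be pushed to $\Otil(n^{3/2})$ (Lemma~\ref{lem:up_solver_surface}).

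A smaller inaccuracy: the preconditioner $\Lup_{1,\calT}$ is \emph{not} block-diagonal over regions. Adjacent regions share boundary simplexes, so $\calT$ is one connected $2$-complex with $O(nr^{-1/3})$ simplexes. The paper handles it by a single global Nested Dissection on $\calT$, which is precisely the source of the $O(n^2 r^{-2/3})$ preprocessing term and the $O(n^{4/3}r^{-4/9})$ per-iteration term; your ``block-diagonal, solve each $O(r^{2/3})$-size shell separately'' picture would undercount this cost.
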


We will overview our algorithm for Theorem \ref{thm:main_ball} in Section \ref{sect:alg_overview_new} and provide detailed proofs in Section \ref{sect:down_solver}, \ref{sec:up_solver_merge}, 
\ref{sect:up_proj}, and \ref{sect:together}.

\begin{theorem}
  Let $\calK$ be a pure $3$-complex embedded in $\R^3$ consisting of $n$ stable simplexes each of volume $\Theta(1)$.
  Suppose $\calK$ has a well-shaped boundary structure with parameter $r = o(n)$ defined as in Definition \ref{def:hole_requirement}. 
  Let $\LL_1$ be the $1$-Laplacian operator of $\calK$, and 
  let $\PPi_1$ be the orthogonal projection matrix onto the image of $\LL_1$.
  For any vector $\bb$ and $\eps > 0$, we can find a solution $\tilde{\xx}$ such that 
  $\norm{\LL_1 \tilde{\xx} - \PPi_1 \bb}_2
  \le \eps \norm{\PPi_1 \bb}_2$
  in time $O \left( nr + n^{4/3}r^{5/18} \log(n/\eps) + n^2 r^{-2/3} \right)$.
  \label{thm:main_ball_rhollowing}
\end{theorem}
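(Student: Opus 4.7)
The plan is to reduce Theorem \ref{thm:main_ball_rhollowing} to Theorem \ref{thm:main_ball} by constructing an explicit $r$-hollowing of $\calK$ in nearly-linear time from the well-shaped boundary structure of Definition \ref{def:hole_requirement}. Once the hollowing is produced, Theorem \ref{thm:main_ball} immediately yields the claimed runtime, and the $O(n \log n)$ construction cost is absorbed into the $O(nr)$ term.

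The construction would proceed as follows. Because the convex hull of $\calK$ has $O(1)$ aspect ratio by Assumption \ref{asp:1} and every simplex has $\Theta(1)$ volume, an axis-aligned grid on $\R^3$ whose cells are cubes of side length $c r^{1/3}$ (for a sufficiently large absolute constant $c$) meets $\calK$ in $O(n/r)$ cells, each intersecting $O(r)$ simplexes. Declare the boundary wall of a cell to be the simplexes whose Euclidean distance to one of the cell's $2$-dimensional faces is at most $6$, and define the region of a cell to be this wall together with all simplexes strictly inside the cube. Stability of every simplex implies the wall has triangle thickness between $5$ and $O(1)$ everywhere, so the shell-width-$5$ condition is met and the wall contains $O(r^{2/3})$ simplexes. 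Walls of adjacent cells coincide while cell interiors are disjoint, so interior simplexes of different regions share no subsimplex.

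Next I would verify the two topological conditions on each region's boundary. For an interior cell (one whose cube lies inside the convex hull of $\calK$), the wall is homeomorphic to $S^2 \times [0,5]$, a spherical shell of triangle diameter $O(r^{1/3})$ since the surface of a cube of side $c r^{1/3}$ has diameter $\Theta(r^{1/3})$. For a cell on the exterior of $\calK$, some of the cube's faces lie outside $\mathbb{K}$; to close the shell I would append to the wall the exterior simplexes of $\calK$ bounding the cube and thicken them $5$ steps inward, which adds only $O(r^{2/3})$ simplexes by Assumption \ref{asp:3}. The main subtlety is a boundary component of $\calK$ (a hole) crossing the interior of a wall: Assumption \ref{asp:2} guarantees every such hole other than the outer one has $1$-skeleton diameter $O(r^{1/3})$, while Assumption \ref{asp:4} guarantees distinct holes are triangle-distance $>5$ apart, so local perturbations that push the wall around a hole are non-interacting and each adds only $O(r^{2/3})$ simplexes to one of the two adjacent regions. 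After these repairs, every region's boundary remains homeomorphic to a spherical shell of width at least $5$ and triangle diameter $O(r^{1/3})$.

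Finally I would tally the counts required by Definition \ref{def:hollow}: $O(n/r)$ regions, $O(r)$ simplexes per region, $O(r^{2/3})$ boundary simplexes per region, and $O(r^{2/3})$ exterior simplexes per region (directly by Assumption \ref{asp:3}). The construction is implemented in $O(n \log n)$ time by bucketing simplexes by cell, computing the connected components of the exterior to identify holes, and performing the local hole-routings. Feeding the resulting $r$-hollowing into Theorem \ref{thm:main_ball} completes the proof. The main obstacle is the topological bookkeeping in the second step: showing that after routing around holes and augmenting walls on cells adjacent to the exterior of $\calK$, the boundary of every region is genuinely a spherical shell of width $\ge 5$ and diameter $O(r^{1/3})$. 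It is precisely here that all four assumptions of Definition \ref{def:hole_requirement} are simultaneously used, and it is the step that would occupy most of the detailed proof.
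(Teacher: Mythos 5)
Your proposal is correct and follows essentially the same route as the paper: reduce to Theorem~\ref{thm:main_ball} by constructing an $r$-hollowing in $O(n\log n)$ time (the paper's Lemma~\ref{lem:rhollowing} and Algorithm~\ref{alg:hollow}), where the hollowing is obtained by partitioning a nice bounding box into $O(n/r)$ equal-volume cells via axis-aligned cutting planes, thickening the cutting surfaces to shell width $5$, and locally routing around the interior boundary components using Assumptions~\ref{asp:2}--\ref{asp:4}. The only cosmetic difference is that you phrase the subdivision as a cube grid of side $\Theta(r^{1/3})$ rather than as $\lfloor n^{1/3}r^{-1/3}\rfloor$ evenly-spaced plane families per axis of a nice bounding box, but these yield the same cell structure, and you correctly identify that the main technical burden is the topological bookkeeping establishing each region boundary is a genuine spherical shell.
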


The known $r$-hollowing assumption is replaced with the well-shaped boundary structure assumption in Theorem \ref{thm:main_ball_rhollowing}, and a nearly-linear time algorithm for finding $r$-hollowing is presented in Section \ref{sect:hollowing}.
It is worth mentioning that the additional geometric structures (Definition \ref{def:hole_requirement}) are introduced to ensure the feasibility of finding an $r$-hollowing in nearly-linear time. However, the algorithm for solving the system of linear equations remains the same.

\begin{theorem}
Let $\calU$ be a union of pure $3$-complexes glued together by identifying certain subsets of their exterior simplexes. Each $3$-complex chunk is embedded in $\R^3$ and comprises $n_i$ stable simplexes, and has a known $\Theta (n_i^{3/5})$-hollowing.
Let $\LL_1$ be the $1$-Laplacian operator of $\calU$, and let $\PPi_1$ be the orthogonal projection matrix onto the image of $\LL_1$.
For any vector $\bb$ and $\eps > 0$, we can find a solution $\tilde{\xx}$ such that 
$\norm{\LL_1 \tilde{\xx} - \PPi_1 \bb}_2
\le \eps \norm{\PPi_1 \bb}_2$
in time $\Otil(n^{8/5} + n^{3/10} k^2 + k^3)$,
where $n$ is the number of simplexes in $\calU$ and $k$ is the number of exterior simplexes shared by more than one chunk.
  \label{thm:main}
\end{theorem}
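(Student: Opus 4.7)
The plan is to reduce solving the $1$-Laplacian of $\calU$ to per-chunk subsystems via a Schur-complement decomposition on the shared simplexes, leveraging the machinery behind Theorem~\ref{thm:main_ball} for each chunk.

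First, I use the Hodge decomposition to split the solve. For $\bb$, write $\bb = \bb^{\text{down}} + \bb^{\text{up}} + \bb^{\text{harm}}$ with $\bb^{\text{down}} \in \im(\partial_1^\top)$, $\bb^{\text{up}} \in \im(\partial_2)$, and $\bb^{\text{harm}} \in \ker(\LL_1)$, so $\PPi_1 \bb = \bb^{\text{down}} + \bb^{\text{up}}$. The goal then reduces to solving $\Ldown_1 \xx^{\text{down}} = \bb^{\text{down}}$ and $\Lup_1 \xx^{\text{up}} = \bb^{\text{up}}$ separately and summing. The down-Laplacian $\Ldown_1 = \partial_1^\top \partial_1$ reduces to a vertex graph Laplacian $\partial_1 \partial_1^\top$ on the $1$-skeleton of $\calU$ (which has $O(n)$ vertices and edges), so this part takes $\Otil(n)$ time via any nearly-linear-time graph Laplacian solver, well within the claimed bound.

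Second, for the up-Laplacian, I partition the edges of $\calU$ into $F$ (not shared by more than one chunk) and $C$ (shared), with $|C| \le k$. A key observation is that two non-shared edges in different chunks never co-appear in a triangle (such a triangle would itself be shared, contradicting the non-shared status of its edges), so $\Lup_1[F,F]$ is block-diagonal across chunks with block $\Lup_{1,i}[F_i, F_i]$ for chunk $i$. Applying Fact~\ref{fact:identity}, solving $\Lup_1$ reduces to two solves in $\Lup_1[F,F]$ and one solve in $S \defeq \schur[\Lup_1]_C$, a $k \times k$ PSD matrix. For the former, each block $\Lup_{1,i}[F_i, F_i]$ is the principal submatrix of chunk $i$'s up-Laplacian on non-shared edges; I would adapt the chunk's up-Laplacian solver from the proof of Theorem~\ref{thm:main_ball}, treating $C_i$ as an additional Dirichlet-type boundary layer inside the chunk's $\Theta(n_i^{3/5})$-hollowing, for per-chunk cost $\Otil(n_i^{8/5})$ and total $\Otil(n^{8/5})$.

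Third, to handle $S$, I exploit an additive decomposition. Since the chunks share only exterior simplexes, $\Lup_1 = \sum_i \Lup_{1,i}^{\text{ext}}$ (extended by zero outside chunk $i$'s edges), and because $\Lup_1[F,F]$ is block-diagonal, the Schur complement splits as $S = \sum_i \schur[\Lup_{1,i}]_{C_i}$ (with any shared triangle contributing to $\Lup_1[C,C]$ directly). I would form each local dense block $S_i \defeq \schur[\Lup_{1,i}]_{C_i}$ by opening up the chunk's up-Laplacian solver: its partial Cholesky along the hollowing produces intermediate factors from which the Schur complement onto $C_i$ can be assembled in time $\Otil(n_i^{8/5} + n_i^{3/10} |C_i|^2 + |C_i|^3)$, rather than the naive $\Otil(|C_i| \cdot n_i^{8/5})$ from $|C_i|$ independent back-solves. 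Summing across chunks and then solving the assembled $k \times k$ dense $S$ by Cholesky produces the $\Otil(n^{8/5} + n^{3/10} k^2 + k^3)$ bound, with standard error propagation (via Fact~\ref{fact:solvers} and Theorem~\ref{thm:PCG}) yielding the relative-error guarantee.

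The main obstacle will be establishing that each chunk's Schur complement $S_i$ can be formed within the advertised $\Otil(n_i^{8/5} + n_i^{3/10} |C_i|^2 + |C_i|^3)$ budget. This requires re-examining the hollowing-based factorization behind Theorem~\ref{thm:main_ball}: after fast elimination of the chunk's interior simplexes, one is left with a PCG-type solver on a ``boundary-preconditioned'' subsystem whose condition number is $O(r) = O(n_i^{3/5})$, giving $\Otil(n_i^{3/10})$ iterations. I would need to argue that applying this solver simultaneously to the $|C_i|$ canonical basis vectors supported on $C_i$, and amortizing the PCG and dense Cholesky phases across the right-hand sides, reproduces exactly the target cost. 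A secondary difficulty is the bookkeeping for shared triangles, which can be absorbed into the $C$ block of $\Lup_1$ without affecting the asymptotic analysis.
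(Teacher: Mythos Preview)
Your plan has two genuine gaps.

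First, you assume you can decompose $\bb = \bb^{\text{down}} + \bb^{\text{up}} + \bb^{\text{harm}}$ and then solve for each piece, but you never explain how to compute $\bb^{\text{up}} = \Pup_1 \bb$. For a complex with arbitrary first Betti number this projection is at least as hard as the up-Laplacian solve itself, and in the paper it is in fact the runtime bottleneck (Lemma~\ref{lem:up_projection_approx} for a single chunk, Lemma~\ref{lem:up_project} for the union). The cheap projection of Lemma~\ref{lem:cohen_project_up} is unavailable here since $\beta_1(\calU)$ need not vanish; you need the full machinery of Section~\ref{sect:up_proj} adapted to the union. Without this, the plan is incomplete.

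Second, your one-level Schur reduction---taking $C$ to be only the $\le k$ shared edges---makes each $F$-block essentially the full up-Laplacian of a chunk restricted to its non-shared edges, for which Theorem~\ref{thm:main_ball} gives only an \emph{approximate} solver. But Lemma~\ref{lem:combine_solvers} requires an exact $\upFsolve$, and forming the dense $k\times k$ Schur complement explicitly needs $k$ exact applications of $\Lup_1[F,F]^{\dagger}$. Your proposed fix, amortizing PCG across $|C_i|$ right-hand sides to get $\Otil(n_i^{3/10}|C_i|^2)$, does not work: PCG builds a separate Krylov subspace per right-hand side and does not amortize that way. The paper sidesteps both problems with a two-level scheme: take $C$ to be \emph{all} hollowing-boundary edges (so $\Lup_1[F,F]$ is solved exactly by Nested Dissection, Lemma~\ref{lem:solve_interior}), solve $\schur[\Lup_1]_C$ by PCG preconditioned with $\Lup_{1,\calT_U}$, and inside the preconditioner solve perform a second Schur step onto the $k$ shared edges---this is where the $n^{3/10}k^2$ and $k^3$ terms actually arise, as preprocessing and direct inversion of a $k\times k$ matrix within the preconditioner.
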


We will defer our proof of Theorem \ref{thm:main} to Section \ref{sect:many_balls}.

\section{Algorithm Overview}
\label{sect:alg_overview_new}

Cohen, Fasy, Miller, Nayyeri, Peng, and Walkington \cite{CMFNPW14} observed that 
\[
\LL_1^{\dagger} = \left( \Ldown_1 \right)^{\dagger} + \left( \Lup_1 \right)^{\dagger},
\]
where $\Ldown_1 = \partial_1^\top \WW_0 \partial_1$ is the down-Laplacian and $\Lup_1 = \partial_2 \WW_2 \partial_2^\top$ is the up-Laplacian.
The orthogonal projection matrices onto $\im(\partial_1^\top)$ and $\im(\partial_2)$ are:
\begin{align*}
\Pdown_1 \defeq \partial_1^\top(\partial_1\partial_1^\top)^\dagger\partial_1, ~ 
\Pup_1 \defeq \partial_2(\partial_2^\top\partial_2)^\dagger\partial_2^\top.
\end{align*}

\begin{lemma}[Lemma 4.1 of \cite{CMFNPW14}]
Let $\bb$ be a vector. Consider the systems of linear equations: $\LL_1 \xx = \PPi_{1} \bb, 
\Lup_1 \xx^{\text{up}} = \Pup_1 \bb, \Ldown_1 \xx^{\text{down}} = \Pdown_1 \bb$. Then, 
$\xx = \Pup_1 \xx^{\text{up}} + \Pdown_1 \xx^{\text{down}}$.
\label{lem:cohen_1Lap}
\end{lemma}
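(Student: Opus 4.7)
The plan is to verify directly that $\xx = \Pup_1 \xx^{\text{up}} + \Pdown_1 \xx^{\text{down}}$ satisfies $\LL_1 \xx = \PPi_1 \bb$, exploiting the fundamental anti-commutation $\partial_1 \partial_2 = \boldsymbol{0}$ to decouple the up- and down-parts. The key ingredient will be the identity $\PPi_1 = \Pup_1 + \Pdown_1$, which I would establish first: by the Hodge decomposition, $\im(\LL_1) = \im(\partial_1^\top) \oplus \im(\partial_2)$ (an orthogonal sum since $\partial_1 \partial_2 = \boldsymbol{0}$), and $\Pdown_1, \Pup_1$ are exactly the orthogonal projections onto these two orthogonal subspaces, so their sum is the projection onto $\im(\LL_1)$.

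Next I would collect two elementary identities about how the Laplacian pieces interact with the projection matrices. First, the cross-terms vanish: $\Lup_1 \Pdown_1 = \partial_2 \WW_2 \partial_2^\top \partial_1^\top (\partial_1 \partial_1^\top)^\dagger \partial_1 = \boldsymbol{0}$ because $\partial_2^\top \partial_1^\top = (\partial_1 \partial_2)^\top = \boldsymbol{0}$, and symmetrically $\Ldown_1 \Pup_1 = \boldsymbol{0}$. Second, the ``same-side'' projections act as the identity on the Laplacian images: since $\im(\Lup_1) = \im(\partial_2 \WW_2 \partial_2^\top) = \im(\partial_2)$ by Fact \ref{fact:image} (applied to $\partial_2 \WW_2^{1/2}$), we get $\Pup_1 \Lup_1 = \Lup_1$ and hence $\Lup_1 \Pup_1 = \Lup_1$ by symmetry; analogously $\Ldown_1 \Pdown_1 = \Ldown_1$.

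With these in hand, the verification is a short computation. Expanding $\LL_1 (\Pup_1 \xx^{\text{up}} + \Pdown_1 \xx^{\text{down}}) = (\Lup_1 + \Ldown_1)(\Pup_1 \xx^{\text{up}} + \Pdown_1 \xx^{\text{down}})$ and killing the two cross-terms via the previous paragraph leaves $\Lup_1 \Pup_1 \xx^{\text{up}} + \Ldown_1 \Pdown_1 \xx^{\text{down}} = \Lup_1 \xx^{\text{up}} + \Ldown_1 \xx^{\text{down}}$, which by the defining equations for $\xx^{\text{up}}$ and $\xx^{\text{down}}$ equals $\Pup_1 \bb + \Pdown_1 \bb = \PPi_1 \bb$, as needed.

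The only subtle point, rather than a real obstacle, is that $\xx^{\text{up}}$ and $\xx^{\text{down}}$ are only defined up to the kernels of $\Lup_1$ and $\Ldown_1$, so the statement should be read as producing \emph{some} solution to $\LL_1 \xx = \PPi_1 \bb$ (any two solutions differ by an element of $\ker(\LL_1)$). Applying $\Pup_1$ and $\Pdown_1$ to $\xx^{\text{up}}$ and $\xx^{\text{down}}$ selects the canonical representatives lying in $\im(\partial_2)$ and $\im(\partial_1^\top)$ respectively, and these are the pieces that correctly assemble into the Hodge decomposition of $\xx$. I would note this interpretation explicitly at the end of the proof to clarify what ``$\xx = \ldots$'' means.
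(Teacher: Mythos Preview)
Your proof is correct and follows the standard argument; the paper itself does not provide a proof of this lemma but simply cites it as Lemma~4.1 of \cite{CMFNPW14}, preceded only by the remark that $\LL_1^{\dagger} = (\Ldown_1)^{\dagger} + (\Lup_1)^{\dagger}$. Your direct verification via the vanishing cross-terms $\Lup_1 \Pdown_1 = \Ldown_1 \Pup_1 = \boldsymbol{0}$ and the identity $\PPi_1 = \Pup_1 + \Pdown_1$ is exactly the expected elementary route, and your closing remark about the non-uniqueness of $\xx^{\text{up}}, \xx^{\text{down}}$ is a useful clarification that the cited source also leaves implicit.
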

Lemma \ref{lem:cohen_1Lap} implies that four operators are needed to approximate $\LL_1^{\dagger}$:
(1) an approximate projection operator $\aPdown_1 \approx \Pdown_1$, 
(2) an approximate projection operator 
$\aPup_1 \approx \Pup_1$,
(3) a down-Laplacian solver $\ZZ^{\text{down}}_1$ such that $\Ldown_1 \ZZ^{\text{down}}_1 \bb \approx \bb$ for any $\bb \in \im(\Lup_1)$, and (4) an up-Laplacian solver $\ZZ^{\text{up}}_1$ such that $\Lup_1 \ZZ^{\text{up}}_1 \bb \approx \bb $ for any $\bb \in \im(\Lup_1)$.

We will apply the same approximate orthogonal projection $\aPdown_1$ given in \cite{CMFNPW14}, which does not depend on Betti numbers.
Our solver for the first down-Laplacian is a slight modification of the one in \cite{CMFNPW14} to incorporate simplex weights. We state the two lemmas but defer their proofs to Section \ref{sect:down_solver}.

\begin{lemma}[Down-projection operator, Lemma 3.2 of \cite{CMFNPW14}]
Let $\calK$ be a $3$-complex with $n$ simplexes. For any $\eps > 0$, there exists a linear operator $\aPdown_1$ such that 
\begin{align*}
    & (1-\eps)\Pdown_1 \pleq \aPdown_1(\eps) \pleq \Pdown_1.
\end{align*}
\label{lem:cohen_project}
\end{lemma}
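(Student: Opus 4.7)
The plan is to reduce the construction of $\aPdown_1$ to a single invocation of a fast graph Laplacian solver. Since $\partial_1$ is the vertex--edge incidence matrix of the $1$-skeleton of $\calK$, the matrix $\partial_1 \partial_1^\top$ coincides with the unweighted graph Laplacian, which we denote $\LL_0$. Hence
\[
\Pdown_1 \;=\; \partial_1^\top \LL_0^{\dagger} \partial_1,
\]
and the only nontrivial object to approximate is $\LL_0^{\dagger}$.

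First I would invoke a nearly-linear time graph Laplacian solver to build, for any $\eps' > 0$, a symmetric PSD linear operator $\ZZ(\eps')$ --- applied to input vectors in nearly-linear time rather than formed explicitly --- satisfying the two-sided PSD sandwich
\[
(1-\eps')\,\LL_0^{\dagger} \;\pleq\; \ZZ(\eps') \;\pleq\; (1+\eps')\,\LL_0^{\dagger}.
\]
A standard symmetrization (for example, wrapping an unsymmetric solver inside a Chebyshev or preconditioned Richardson iteration for a fixed, even number of steps) yields such a symmetric operator.

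Next I would rescale to convert this two-sided approximation into the required one-sided one. Setting $\eps' = \eps/2$, I define
\[
\aPdown_1(\eps) \;\defeq\; \frac{1}{1+\eps'}\, \partial_1^\top \ZZ(\eps') \partial_1.
\]
Because conjugation by $\partial_1^\top$ preserves the PSD order (Fact \ref{fact:psd}), the upper bound on $\ZZ(\eps')$ gives $\aPdown_1(\eps) \pleq \partial_1^\top \LL_0^{\dagger} \partial_1 = \Pdown_1$, while the lower bound gives $\aPdown_1(\eps) \pgeq \tfrac{1-\eps'}{1+\eps'}\, \Pdown_1 \pgeq (1-\eps)\, \Pdown_1$, using the elementary inequality $(1-\eps/2)/(1+\eps/2) \ge 1-\eps$ for $\eps \in (0,1]$.

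The main obstacle is ensuring the kernel of $\LL_0$ is handled correctly, since the sandwich bound on $\ZZ(\eps')$ is meaningful only on $\im(\LL_0)$ (both $\ZZ(\eps')$ and $\LL_0^{\dagger}$ annihilate $\ker(\LL_0)$). By Fact \ref{fact:image}, $\im(\partial_1) = \im(\partial_1 \partial_1^\top) = \im(\LL_0)$, so every vector fed into $\ZZ(\eps')$ inside $\aPdown_1(\eps)$ already lies in $\im(\LL_0)$, and the PSD inequalities therefore transfer cleanly through the conjugation. Producing a symmetric, PSD-ordered Laplacian solver is the genuine technical ingredient --- imported as a black box from the nearly-linear time Laplacian solver literature --- while the rest of the proof reduces to the identity $\partial_1 \partial_1^\top = \LL_0$ and the monotonicity of conjugation.
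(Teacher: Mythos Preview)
Your proposal is correct and follows the same approach as the paper: the paper does not give an independent proof of this lemma but imports it directly from \cite{CMFNPW14}, whose construction is precisely the reduction you describe --- recognize $\partial_1\partial_1^\top$ as the graph Laplacian of the $1$-skeleton, apply a nearly-linear time Laplacian solver to approximate $\LL_0^{\dagger}$, and conjugate by $\partial_1$. Your handling of the rescaling and the kernel alignment via $\im(\partial_1)=\im(\LL_0)$ is exactly what is needed.
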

\begin{lemma}[Down-Laplacian solver]
Let $\calK$ be a weighted simplicial complex, and let $\bb \in \im(\Ldown_1)$.
There exists an operator $\Zdown_1$ such that 
$\Ldown_1 \Zdown_1 \bb = \bb$. In addition, we can compute $\Zdown_1 \bb$ in linear time.
\label{lem:down_lap_solver}
\end{lemma}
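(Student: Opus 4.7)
The plan is to reduce the system $\Ldown_1 \xx = \partial_1^\top \WW_0 \partial_1 \xx = \bb$ to two linear-time integration passes on the $1$-skeleton graph $G$ of $\calK$. The starting point is Fact~\ref{fact:image}: since $\WW_0$ is diagonal with strictly positive entries, $\im(\Ldown_1) = \im(\partial_1^\top \WW_0^{1/2}) = \im(\partial_1^\top)$, so the hypothesis $\bb \in \im(\Ldown_1)$ means $\bb$ is a gradient on $G$. I would build a spanning forest $T$ of $G$ once, in $O(|\calC_0|+|\calC_1|)$ time, and express $\Zdown_1 \bb$ as the output of two successive sweeps along $T$.

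In the first sweep I compute a vertex vector $\gg \in \R^{\calC_0}$ with $\partial_1^\top \gg = \bb$. Root each tree of $T$ arbitrarily, initialize $\gg$ to $0$ at every root, and propagate: along each oriented tree edge $e = [u,v]$ set $\gg[v] = \gg[u] + \bb[e]$. Because $\bb \in \im(\partial_1^\top)$, its circulation around every fundamental cycle of $T$ vanishes, so the $\gg$ determined from tree edges automatically satisfies $\partial_1^\top \gg = \bb$ on every non-tree edge as well.

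Next I want $\xx$ with $\partial_1 \xx = \WW_0^{-1} \gg$, which will give $\Ldown_1 \xx = \partial_1^\top \WW_0 \cdot \WW_0^{-1} \gg = \partial_1^\top \gg = \bb$. Feasibility requires $\WW_0^{-1} \gg \in \im(\partial_1)$, equivalently $\WW_0^{-1}\gg \perp \ker(\partial_1^\top)$, that is, $\sum_{v \in C}\WW_0[v,v]^{-1}\gg[v]=0$ on every connected component $C$ of $G$ (since $\ker(\partial_1^\top)$ is spanned by component indicator vectors). Shifting $\gg$ on a component $C$ by a constant does not change $\partial_1^\top \gg$, so I would add $\alpha_C = -\bigl(\sum_{v\in C}\WW_0[v,v]^{-1}\gg[v]\bigr) / \bigl(\sum_{v\in C}\WW_0[v,v]^{-1}\bigr)$ to $\gg$ on $C$; the denominator is strictly positive because all weights are positive. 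With this correction, I recover $\xx$ by setting $\xx[e]=0$ on every non-tree edge and processing tree vertices in post-order: at a leaf $v$ the unique incident tree edge must carry the flow forced by $(\WW_0^{-1}\gg)[v]$, after which the parent's residual target divergence is updated and the recursion proceeds toward the root. The per-component balancing guarantees the recursion terminates consistently at each root.

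All three stages visit each vertex and each edge only $O(1)$ times, giving overall linear time, and every step depends linearly on $\bb$ (the forest, the roots, and the rescaling denominators are fixed by $\calK$ and $\WW_0$), so the construction defines a genuine linear operator $\Zdown_1$. The only subtle step I expect is the per-component rescaling in the second sweep: without it, the divergence vector $\WW_0^{-1}\gg$ would generally lie outside $\im(\partial_1)$ whenever $G$ is disconnected or the naive BFS produces a nonzero weighted mean on some component, rendering the final flow-reconstruction sweep infeasible.
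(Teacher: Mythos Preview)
Your proposal is correct and follows essentially the same three-step approach as the paper: solve $\partial_1^\top \gg = \bb$, project so that $\WW_0^{-1}\gg \in \im(\partial_1)$, then solve $\partial_1 \xx = \WW_0^{-1}\gg$. The paper invokes black-box linear-time operators $\partial_1^{+\top}$ and $\partial_1^{+}$ from \cite{CMFNPW14} for the two outer steps (these are precisely your tree sweeps) and writes the middle step as the orthogonal projection $\II - \uu\uu^\top/\norm{\uu}_2^2$ with $\uu = \WW_0^{-1/2}\mathbf{1}$, which unwinds to exactly your per-component shift by $\alpha_C$.
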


\subsection{Solver for Up-Laplacian}

One of our primary technical contributions is the development of an efficient solver for the up-Laplacian system, stated in Lemma \ref{lem:up_solver}.
We will describe the key idea behind our solver in this section.

\begin{lemma}[Up-Laplacian solver]
  Let $\calK$ be a pure $3$-complex embedded in $\R^3$ and composed of $n$ stable simplexes. Suppose we are given an $r$-hollowing for $\calK$ where $r = o(n)$.
  Then for any $\eps > 0$, there exists an operator $\Zup_1$ such that
  \[
  \forall \bb \in \im(\Lup_1), ~ \norm{\Lup_1 \Zup_1 \bb - \bb}_2 \le \eps \norm{\bb}_2.
  \]
  In addition, $\Zup_1 \bb$ can be computed in time 
  $O \left( nr + n^{4/3}r^{5/18} \log(n/\eps) + n^2 r^{-2/3} \right)$.
  \label{lem:up_solver}
\end{lemma}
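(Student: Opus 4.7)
The plan is to adopt the Incomplete Nested Dissection framework of \cite{KPSZ18}, adapted to the up-Laplacian. Let $F$ denote the interior edges (those not in the $r$-hollowing) and $C$ the boundary edges. Because interior simplexes of distinct regions share no sub-simplexes, every triangle incident to an interior edge lies entirely inside one region, so $\Lup_1[F,F]$ is block diagonal with one block per region, each of order $O(r)$. The factorization of Fact \ref{fact:identity} then reduces solving $\Lup_1 \xx = \bb$ (for $\bb \in \im(\Lup_1)$) to three subtasks: (i) applying $\Lup_1[F,F]^{\dagger}$, (ii) applying $\schur[\Lup_1]_C$, and (iii) solving a system in $\schur[\Lup_1]_C$.

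For (i) and (ii), I would apply classical Nested Dissection \cite{LRT79,MT90} inside each region. Each region has $O(r)$ stable simplexes in $\R^3$, so its separator structure yields a Cholesky-type factorization in $O(r^2)$ time, with $O(r^{4/3})$-time solves thereafter. Across the $O(n/r)$ regions the preprocessing totals $O(nr)$, and a single application of $\Lup_1[F,F]^{\dagger}$ or of $\schur[\Lup_1]_C$ costs $O(nr^{1/3})$. The (possibly nontrivial) kernel of $\Lup_1[F,F]$, arising from triangle cycles trapped inside a region, is handled implicitly: by Fact \ref{fact:schur_project_kernel}, $\schur[\Lup_1]_C$ automatically encodes the projection onto $\ker(\WW_2^{1/2}\partial_2^\top[F,:])$, so no Betti-number information about $\calK$ is ever required.

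For (iii), I would run Preconditioned Conjugate Gradient (Theorem \ref{thm:PCG}) on $\schur[\Lup_1]_C$ using a preconditioner $\BB$ built from the up-Laplacian of the $r$-hollowing itself, possibly augmented by a low-rank correction per region to account for the single harmonic class each spherical shell carries. Each PCG iteration costs $O(nr^{1/3})$ to apply $\schur[\Lup_1]_C$ and, after a one-time dense factoring of $\BB$ costing $O(|C|^2)=O(n^2 r^{-2/3})$, only $\Otil(|C|)$ to apply $\BB^{\dagger}$; this dense factoring is exactly the source of the $n^2 r^{-2/3}$ term. Coupled with the spherical-shell geometry (shell width at least $5$ and triangle diameter $O(r^{1/3})$), I expect a bound $\kappa(\schur[\Lup_1]_C,\BB)=\mathrm{poly}(r)$, so that $\Otil(\sqrt{\kappa})$ iterations suffice, and a careful balancing produces the $n^{4/3} r^{5/18}\log(n/\eps)$ term.

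The hard part will be the spectral bound on $\kappa(\schur[\Lup_1]_C,\BB)$. It requires a discrete-harmonic argument on each spherical shell: in one direction, any boundary current in $\im(\partial_2[C,:])$ must admit a bounded-energy extension across the five-triangle-thick shell (this is exactly where the shell-width hypothesis enters, by supplying enough room to diffuse currents radially), and in the other direction the $O(r^{1/3})$ triangle diameter controls a reverse Poincar\'e-type inequality. A secondary difficulty is dealing with the single nontrivial second-homology class of each spherical shell without paying a Betti-number factor; I would resolve this by augmenting $\BB$ with one harmonic representative per region as a rank-one correction, keeping the overhead polylogarithmic. Combining (i)--(iii) via Fact \ref{fact:identity} then yields the claimed runtime and the stated relative error.
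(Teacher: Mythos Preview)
Your high-level plan---partition edges into $F$ (interior) and $C$ (hollowing), solve $\Lup_1[F,F]$ blockwise by Nested Dissection, and attack $\schur[\Lup_1]_C$ by PCG preconditioned with the hollowing's own up-Laplacian---is exactly what the paper does (Algorithm~\ref{alg:solver}, Lemmas~\ref{lem:solve_interior} and~\ref{lem:solve_schur}). Two points in your proposal diverge from the paper and are worth correcting.

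First, the low-rank harmonic correction you propose is unnecessary. The paper takes $\BB=\Lup_{1,\calT}$ \emph{as is} and proves $\Lup_{1,\calT}\pleq\schur[\Lup_1]_C\pleq O(r)\,\Lup_{1,\calT}$ (Lemma~\ref{lem:eigS}). The delicate point is not a nontrivial second-homology class of the shell but rather showing $\im(\schur[\Lup_1]_C)=\im(\Lup_{1,\calT})$; the paper handles this per region (Claim~\ref{clm:image}) by filling the region to a ball and using that both the ball and the inner sphere have first Betti number zero, so $\ker(\partial_2^\top)=\im(\partial_1^\top)$ on each. With images equal, the condition number bound follows from $\lambda_{\max}(\schur)\le O(1)$ (Claim~\ref{clm:eigmax}) and $\lambda_{\min}(\Lup_{1,\calB})=\Omega(r^{-1})$, the latter obtained by orienting the sphere's triangles consistently so that $\partial_2^\top\partial_2$ becomes a graph Laplacian of the dual graph and invoking the diameter bound $O(r^{1/3})$ (Claim~\ref{clm:eigmin}, Theorem~\ref{thm:diameter}). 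No energy-extension argument across the shell is used here, and the shell-width-$5$ hypothesis plays no role in this lemma; it is needed only later for the up-\emph{projection} (Section~\ref{sect:up_proj}).

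Second, your cost accounting for the preconditioner is off. A dense factoring of a $|C|\times|C|$ matrix costs $O(|C|^3)$, not $O(|C|^2)$, and would not give $\Otil(|C|)$ solves. The paper instead runs Nested Dissection on $\calT$ itself (which, as a union of spherical shells, embeds in $\R^3$ with bounded aspect ratio), yielding factoring time $O(|C|^2)=O(n^2r^{-2/3})$ and solve time $O(|C|^{4/3})=O(n^{4/3}r^{-4/9})$. Combined with $\Otil(r^{1/2})$ PCG iterations and the $O(nr^{1/3})$ cost of applying $\schur[\Lup_1]_C$, this is precisely where $r^{1/2}\cdot n^{4/3}r^{-4/9}=n^{4/3}r^{5/18}$ comes from.
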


We remark that Lemma \ref{lem:up_solver} can be improved to $\Otil(n^{3/2})$ by using a slightly different $r$-hollowing (proved in Section \ref{sect:solver_bdry}), which might be of independent interest. Since the bottleneck of our solver for $1$-Laplacians is from the projection for up $1$-Laplaicans, we use the same $r$-hollowing here.

The given $r$-hollowing suggests a partition of the edges in $\calK$ into $F \cup C$. We will explain the concrete partition shortly. 
We have the following matrix identity:
\begin{align*}
  \Lup_1 = \begin{pmatrix}
    \II & \\
    \Lup_1[C, F] \Lup_1[F, F]^{\dagger} & \II 
  \end{pmatrix} 
  \begin{pmatrix}
    \Lup_1[F, F] & \\
    & \schur[\Lup_1]_C
  \end{pmatrix}
  \begin{pmatrix}
    \II & \Lup_1[F, F]^{\dagger} \Lup_1[F, C] \\
    & \II
  \end{pmatrix},
  \label{eqn:matrix_identity2}
\end{align*}
where 
\[
\schur[\Lup_1]_C = \Lup_1[C, C] - \Lup_1[C, F] \Lup_1[F, F]^{\dagger} \Lup_1[F, C].
\]
The following Lemma \ref{lem:combine_solvers} reduces  (approximately) solving a system in $\Lup_1$ to (approximately) solving two systems in $\Lup_1[F,F]$ and one system in $\schur[\Lup_1]_C$, which is proved in Appendix \ref{sect:proof_algo_overview}.
It is worth noting that Lemma \ref{lem:combine_solvers} holds if we replace $\Lup_1$ with an arbitrary symmetric PSD matrix, and we will apply it or its variants for different PSD matrices in our solvers.
To avoid introducing additional notations, we state the lemma below in terms of $\Lup_1$.

\begin{lemma}
Suppose we have two operators (1) $\upFsolve(\cdot)$ such that given any $\bb \in \im(\Lup_1[F,F])$, $\upFsolve(\bb)$ returns a vector $\xx$ satisfying $\Lup_1[F,F] \xx = \bb$, 
and (2) $\schursolve(\cdot, \cdot)$ such that for any $\hh \in \im(\schur[\Lup_1]_C)$ and $\delta > 0$, $\schursolve(\hh, \delta)$ returns $\tilde{\xx}$ satisfying 
$\norm{\schur[\Lup_1]_C \tilde{\xx} - \hh}_2 \le \delta \norm{\hh}_2.$
Given any $\bb = \begin{pmatrix}
    \bb_F \\
    \bb_C
\end{pmatrix} \in \im(\Lup_1)$ and any $\eps > 0$, 
let
\begin{align}
\begin{split}
& \hh = \bb_C - \Lup_1[C,F] \cdot \upFsolve (\bb_F), \\
& \tilde{\xx}_C = \schursolve(\hh, \delta), \\
& \tilde{\xx}_F = \upFsolve \left( \bb_F - \Lup_1[F,C] \tilde{\xx}_C \right),
\end{split}
\label{eqn:xtilde}
\end{align}
where $\delta \le \frac{\eps}{1 + \norm{\Lup_1[C,F] \Lup_1[F,F]^{\dagger}}_2}$. 
Then, 
\[
\norm{\Lup_1 \tilde{\xx} - \bb}_2 \le \eps  \norm{\bb}_2,
\]
where $\tilde{\xx} = \begin{pmatrix}
    \tilde{\xx}_F \\
    \tilde{\xx}_C
\end{pmatrix}$.
Let $m_F = \abs{F}$ and $m_C = \abs{C}$, and let $\upFsolve$ have runtime $t_1(m_F)$ and $\schursolve$ have runtime $t_2(m_C)$. Then, we can compute $\tilde{\xx}$ in time $O(t_1(m_F) + t_2(m_C) + m_F + m_C)$. 
\label{lem:combine_solvers}
\end{lemma}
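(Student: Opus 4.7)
The plan is to run the standard block Gaussian elimination scheme suggested by Fact~\ref{fact:identity} and then track how a single approximate Schur complement solve propagates to the final residual. First I would rewrite $\Lup_1 = \LL\DD\UU$ where
\[
  \LL = \begin{pmatrix} \II & \\ \Lup_1[C,F]\Lup_1[F,F]^{\dagger} & \II \end{pmatrix},\quad
  \DD = \begin{pmatrix} \Lup_1[F,F] & \\ & \schur[\Lup_1]_C \end{pmatrix},\quad
  \UU = \begin{pmatrix} \II & \Lup_1[F,F]^{\dagger}\Lup_1[F,C] \\ & \II \end{pmatrix},
\]
and observe that $\LL,\UU$ are block triangular with identity diagonal hence invertible, so $\im(\Lup_1) = \LL\cdot\im(\DD)$. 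Applying $\LL^{-1}$ to $\bb$ gives $(\bb_F,\,\bb_C - \Lup_1[C,F]\Lup_1[F,F]^{\dagger}\bb_F)^\top = (\bb_F, \hh)^\top$, so the assumption $\bb\in\im(\Lup_1)$ yields the two containments $\bb_F \in \im(\Lup_1[F,F])$ and $\hh \in \im(\schur[\Lup_1]_C)$ that the black-box solvers $\upFsolve$ and $\schursolve$ require.

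Next I would compute $\Lup_1\tilde{\xx} - \bb$ blockwise. Because $\upFsolve$ is exact, the $F$-block cancels identically:
\[
  \Lup_1[F,F]\tilde{\xx}_F + \Lup_1[F,C]\tilde{\xx}_C
  = \bigl(\bb_F - \Lup_1[F,C]\tilde{\xx}_C\bigr) + \Lup_1[F,C]\tilde{\xx}_C = \bb_F.
\]
Substituting the formula for $\tilde{\xx}_F$ into the $C$-block and using the definition of $\schur[\Lup_1]_C$, the $C$-block of the residual collapses to $\schur[\Lup_1]_C\tilde{\xx}_C - \hh$, whose norm is at most $\delta\,\norm{\hh}_2$ by the guarantee on $\schursolve$. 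Hence $\norm{\Lup_1\tilde{\xx}-\bb}_2 = \norm{\schur[\Lup_1]_C\tilde{\xx}_C - \hh}_2 \le \delta \norm{\hh}_2$.

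It remains to convert $\delta\norm{\hh}_2$ into $\eps\norm{\bb}_2$. A triangle inequality on $\hh = \bb_C - \Lup_1[C,F]\Lup_1[F,F]^{\dagger}\bb_F$, together with $\norm{\bb_F}_2,\norm{\bb_C}_2 \le \norm{\bb}_2$, gives
\[
  \norm{\hh}_2 \le \bigl(1 + \norm{\Lup_1[C,F]\Lup_1[F,F]^{\dagger}}_2\bigr)\norm{\bb}_2,
\]
so the prescribed choice $\delta \le \eps/\bigl(1+\norm{\Lup_1[C,F]\Lup_1[F,F]^{\dagger}}_2\bigr)$ yields $\norm{\Lup_1\tilde{\xx}-\bb}_2 \le \eps\norm{\bb}_2$, as desired.

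For the runtime, the recipe \eqref{eqn:xtilde} makes two calls to $\upFsolve$, one call to $\schursolve$, and performs $O(1)$ multiplications by the off-diagonal blocks $\Lup_1[C,F]$ and $\Lup_1[F,C]$; by Lemma~\ref{lem:max_deg} the stable-simplex assumption bounds $\nnz(\Lup_1)$ by $O(m_F+m_C)$, so the total cost is $O(t_1(m_F)+t_2(m_C)+m_F+m_C)$. The only nonroutine point I anticipate is the verification that the inputs to the subsolvers actually land in their respective images so that the stated guarantees can be invoked; everything else is bookkeeping on a block $LDU$ factorization, and the critical structural fact is that the $F$-block of the residual is \emph{exactly} zero, which is what prevents the error in $\schursolve$ from being amplified by $\norm{\Lup_1[F,F]^{\dagger}\Lup_1[F,C]}_2$ in the final bound.
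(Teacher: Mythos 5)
Your proposal matches the paper's proof in structure and substance: both build on the block $LDU$ identity (Fact~\ref{fact:identity}), both use it to show $\bb_F$ and $\hh$ land in the images that the two black-box solvers require, both observe that the $F$-block of the residual cancels exactly, and both reduce the $C$-block residual to $\schur[\Lup_1]_C\tilde{\xx}_C - \hh$ before applying the triangle inequality to convert $\delta\norm{\hh}_2$ into $\eps\norm{\bb}_2$. The one step you pass over silently is that $\upFsolve$ is only guaranteed to return \emph{some} solution of $\Lup_1[F,F]\xx = \bb_F$, possibly differing from $\Lup_1[F,F]^{\dagger}\bb_F$ by a vector in $\ker(\Lup_1[F,F])$; your identification $\hh = \bb_C - \Lup_1[C,F]\Lup_1[F,F]^{\dagger}\bb_F$ and your claim that the $C$-block collapses to $\schur[\Lup_1]_C\tilde{\xx}_C - \hh$ both tacitly assume $\Lup_1[C,F]$ annihilates that kernel component. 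This does hold because $\Lup_1$ is PSD (writing $\Lup_1 = \BB\BB^\top$ with $\BB_F,\BB_C$ the row blocks, $\ker(\Lup_1[F,F]) = \ker(\BB_F^\top) \subseteq \ker(\BB_C\BB_F^\top) = \ker(\Lup_1[C,F])$), but it deserves a sentence, and the paper spells it out by decomposing $\upFsolve(\bb_F) = \Lup_1[F,F]^{\dagger}\bb_F + \yy$ with $\partial_2[F,:]^\top\yy = \mathbf{0}$.
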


\subsubsection{Partitioning the Edges}

As suggested by Lemma \ref{lem:combine_solvers},
we want to partition the edges of $\calK$ into $F \cup C$ so that both systems in $\Lup_1[F,F]$ and the Schur complement $\schur[\Lup_1]_C$ can be efficiently solved. 
The given $O(n^{3/5})$-hollowing divides $\calK$ into \emph{``disjoint'' and balanced} regions with \emph{small} boundary.
Let $F$ be the set of the interior edges of the regions and $C$ be the set of the boundary edges.

We first show the interiors of different regions are ``disjoint'' in the sense that 
$\Lup_1[F,F]$ is a block diagonal matrix where each diagonal block corresponds to the interior of a region.
We can write $\Lup_1$ as the sum of rank-$1$ matrices that each corresponds to a triangle in $\calK$:
\begin{equation*}
    \Lup_1 = \partial_2 \WW_2 \partial_2^\top
= \sum_{\sigma: \text{triangle in } \calK} 
\WW_2[\sigma, \sigma] \cdot \partial_2[:, \sigma]  \partial_2[:, \sigma]^\top.
\end{equation*}
For any two edges $e_1, e_2$, $\Lup_1[e_1, e_2] = 0$ if and only if no triangle in $\calK$ contains both $e_1, e_2$.
By our definition of $r$-hollowing in Definition \ref{def:hollow}, for different regions $R_1, R_2$ of $\calK$ w.r.t. an $r$-hollowing, no triangle contains both an edge from $R_1$ and an edge from $R_2$.

In addition, the following lemma shows that the boundaries of the regions well approximate the Schur complement onto the boundaries.
The proof is in Section \ref{sect:condNum}.

\begin{lemma}[Spectral bounds for $r$-hollowing]
Let $\calK$ be a pure $3$-complex embedded in $\R^3$ composed of stable simplexes.  Let $\calT$ be an $r$-hollowing of $\calK$, and let $C$ be the edges of $\calT$. Then, 
$\Lup_{1,\calT} \pleq \schur[\Lup_1]_C \pleq O(r) \Lup_{1,\calT}.$
  \label{lem:eigS}
\end{lemma}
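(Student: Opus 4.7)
The plan is to first establish an algebraic identity that cleanly separates the Schur complement into a ``shell part'' and an ``interior-residual part.'' Partition the triangles of $\calK$ as $T = T_C \sqcup T_F$, where $T_C$ are the triangles of $\calT$ and $T_F$ the remaining triangles. The defining property of the $r$-hollowing (Definition~\ref{def:hollow}) forces every triangle of $\calT$ to have all three edges in $C$, so $\partial_2[F,\sigma] = 0$ for $\sigma \in T_C$. Expanding $\Lup_1 = \Lup_{1,T_C} + \Lup_{1,T_F}$ block-wise and using that $\Lup_{1,T_C}$ vanishes outside the $C \times C$ block (where it equals $\Lup_{1,\calT}$), a direct computation gives
\[
\schur[\Lup_1]_C \;=\; \Lup_{1,\calT} \;+\; \schur[\Lup_{1,T_F}]_C .
\]
Since Schur complements of PSD matrices are PSD, the lower bound $\Lup_{1,\calT} \pleq \schur[\Lup_1]_C$ is immediate.

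For the upper bound it then suffices to show $\schur[\Lup_{1,T_F}]_C \pleq (O(r){-}1) \Lup_{1,\calT}$. I would use the variational (min) characterization and peel off the gradient component of $\yy$ via a Hodge decomposition on $\calT$: write $\yy = \yy_\parallel + \yy_\perp$ with $\yy_\parallel \in \im(\partial_{1,\calT}^\top)$ and $\yy_\perp \in \im(\partial_{2,\calT})$. The harmonic component is absent because each connected piece of $\calT$ triangulates a spherical shell in $\R^3$, which is homotopy equivalent to $S^2$ and so has trivial first homology. For $\yy_\parallel = \partial_{1,\calT}^\top \ff$, extend $\ff$ arbitrarily to $\tilde\ff$ on all vertices of $\calK$ and set $\zz_\parallel = \partial_1[F,:]^\top \tilde\ff$; since the endpoints of every $C$-edge lie in $V_\calT$, the pair $(\zz_\parallel, \yy_\parallel)$ equals $\partial_1^\top \tilde\ff$ globally, which lies in $\ker(\partial_2^\top)$. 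So the gradient part is annihilated by the Schur quadratic form, and we may assume $\yy = \yy_\perp$.

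For the curl part $\yy_\perp$, combine the general Schur-onto-diagonal inequality $\schur[\Lup_{1,T_F}]_C \pleq \Lup_{1,T_F}[C,C]$ with the stable-simplex degree bound of Lemma~\ref{lem:max_deg}. Each $C$-edge meets $O(1)$ triangles of any kind, so
\[
\Lup_{1,T_F}[C,C] \;=\; \sum_{\sigma \in T_F} w_\sigma\, \partial_2[C,\sigma] \partial_2[C,\sigma]^\top \;\pleq\; O(1) \cdot \II\big|_{C} .
\]
The desired upper bound therefore reduces to a discrete Poincaré-type inequality on the shell: restricted to $\im(\partial_{2,\calT})$, $\II \pleq O(r) \Lup_{1,\calT}$, or equivalently $\lambda_{\min}^+\!\big(\Lup_{1,\calT}\big|_{\im(\partial_{2,\calT})}\big) = \Omega(1/r)$.

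The main obstacle is this final spectral-gap estimate for $\Lup_{1,\calT}$ on cycles. The shell has $O(r^{2/3})$ stable simplexes, triangle diameter $O(r^{1/3})$, and shell width at least $5$, triangulating a spherical shell. I would prove the gap by a standard well-shaped comparison: build a ``cohomologous representative'' $\gg \in \im(\partial_{2,\calT}^\top)$ for every $\yy_\perp = \partial_{2,\calT} \gg$ and bound its norm using a multicommodity flow / routing argument on the shell that exploits the bounded diameter and positive width (so that any two triangles can be connected by an $O(r^{1/3})$-length triangle path of bounded congestion). The Cheeger-style estimate then yields the $\Omega(1/r)$ spectral gap, with the factor absorbed into $O(r)$. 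This step is the most delicate part; all the other ingredients (identity, lower bound, gradient-part extension, degree-based diagonal bound) are structural and follow from the $r$-hollowing definition together with stability.
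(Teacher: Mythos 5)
Your algebraic identity $\schur[\Lup_1]_C = \Lup_{1,\calT} + \schur[\Lup_{1,T_F}]_C$ is correct (it follows from the fact that triangles of $\calT$ contribute only to the $C\times C$ block), and it yields the lower bound cleanly; this is essentially how the paper gets its lower bound as well. The gradient-part extension argument is also sound. The problem is in the rest of the upper-bound analysis, where you work directly with all of $\calT$ instead of reducing to a per-region problem, and this introduces two genuine gaps.

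First, the claim that ``the harmonic component is absent because each connected piece of $\calT$ triangulates a spherical shell'' is not justified. Definition~\ref{def:hollow} makes each \emph{individual region boundary} a spherical shell, but $\calT$ is the union of all of these shells, and boundary simplexes are allowed to be shared between regions. The union of overlapping shells is generically a single connected $2$-complex whose first homology need not vanish (think of the thickened wall network separating a 3D grid of cells). If $H_1(\calT) \neq 0$, your Hodge decomposition of $\yy$ on $\calT$ has a harmonic summand $\yy_0$, and you would need $\yy_0 \in \ker\bigl(\schur[\Lup_{1,T_F}]_C\bigr)$ — otherwise both the inequality for pure harmonic vectors and the absence of cross-terms between components break. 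You have not argued this, and it is in fact the substantive topological content of the lemma. The paper's Claim~\ref{clm:image} proves the analogous kernel/image equality, but does so \emph{per region} for the inner sphere $\calB$, using that a single sphere has $H_1 = 0$ and that each region can be filled to a $3$-ball; there is no analogous shortcut for all of $\calT$ at once.

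Second, even for the curl part, your final reduction needs $\lambda_{\min}^{+}\bigl(\Lup^{\text{up}}_{1,\calT}\bigr) = \Omega(1/r)$, but this spectral gap does \emph{not} hold for $\calT$ as a whole. The $r$-hollowing $\calT$ has $\Theta(n r^{-1/3})$ simplexes and its triangle diameter can be $\Theta\bigl((n/r)^{1/3} \cdot r^{1/3}\bigr) = \Theta(n^{1/3})$, so the diameter-based bound (Theorem~\ref{thm:diameter}, or your routing argument) degrades with $n$, not just with $r$. Your description of routing ``on the shell'' with diameter $O(r^{1/3})$ implicitly assumes the per-shell setting, but you never reduce the global eigenvalue problem to per-shell ones. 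The missing ingredient is precisely the paper's decomposition: write $\calT = \calB_1 \cup \calB_2$, where $\calB_1$ is the union of the \emph{inner} spheres of the region boundaries. Because the shell width is at least $5$, the inner spheres from different regions are \emph{pairwise disjoint}, so $\Lup_{1,\calB_1}$ is block-diagonal with one block per region, each block a triangulated $S^2$ of size $O(r^{2/3})$ and diameter $O(r^{1/3})$. Claim~\ref{clm:schur_decompose_approx} shows it suffices to compare the per-region Schur complements to these inner-sphere blocks, and then Claims~\ref{clm:image}, \ref{clm:eigmax}, and \ref{clm:eigmin} together give the $O(r)$ bound. Without the $\calB_1/\calB_2$ split and the resulting block structure, both your harmonic-kernel step and your spectral-gap step fail.
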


\subsubsection{Proof of Lemma \ref{lem:up_solver} for Up-Laplacian Solver}

Algorithm \ref{alg:solver} sketches a pseudo-code for our up-Laplacian solver.

\begin{algorithm}
    \caption{\upsolver($\calK, \calT, \bb, \eps$)}
    \label{alg:solver}
    \KwData{A pure $3$-complex $\calK$ of $n$ stable simplexes with up-Laplacian $\Lup_1$, an $r$-hollowing $\calT$, a vector $\bb \in \im(\Lup_1)$, an error parameter $\eps > 0$}
    \KwResult{An approximate solution $\tilde{\xx}$ such that $\norm{\Lup_1\tilde{\xx} -\bb}_2\leq \eps\norm{\bb}_2$}

    $F \leftarrow$ the interior edges of regions of $\calK$ w.r.t. $\calT$;   ~~~
    $C \leftarrow$ the boundary edges of regions.

    \upFsolve($\cdot$)$\leftarrow$ a solver by \nd \ that satisfies the requirement in Lemma \ref{lem:combine_solvers}. \label{lin:solver:F}

    \schursolve($\cdot, \cdot$)$\leftarrow$ a solver by \pcg \ with the preconditioner being the up-Laplacian of $\calT$ that satisfies the requirement in Lemma \ref{lem:combine_solvers}. \label{lin:solver:schur}

    $\tilde{\xx} \leftarrow$ computed by Equation \eqref{eqn:xtilde}

    \Return{solution $\tilde{\xx}$}
\end{algorithm}

By Lemma \ref{lem:combine_solvers}, the $\tilde{\xx}$ returned by Algorithm \ref{alg:solver} satisfies $\norm{\Lup_1 \tilde{\xx} - \bb}_2 \le \eps \norm{\bb}_2$. To bound the runtime of Algorithm \ref{alg:solver}, we need the following lemmas for lines 
\ref{lin:solver:F} and \ref{lin:solver:schur}.

\begin{lemma}[Solver for the ``$F$'' part]
Let $\calK$ be a pure $3$-complex embedded in $\R^3$ and composed of $n$ stable simplexes. Let $\calT$ be an $r$-hollowing of $\calK$, and let $F$ be the set of interior edges in each region of $\calK$ w.r.t. $\calT$.
Then, with a pre-processing time $O(n r)$, there exists a solver $\upFsolve(\cdot)$ such that given any $\bb_F \in im(\Lup_1[F,F])$, $\upFsolve(\bb_F)$ returns an $\xx_F$ such that
$\Lup_1[F,F]\xx_F = \bb_F$ in time $O(n r^{1/3} )$.
\label{lem:solve_interior}
\end{lemma}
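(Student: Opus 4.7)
The plan is to exploit the block-diagonal structure of $\Lup_1[F,F]$ induced by the $r$-hollowing $\calT$, and then factorize each diagonal block using the $3$D nested dissection algorithm of Miller and Thurston \cite{MT90}.

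\textbf{Block-diagonal structure.} I would first show that $\Lup_1[F,F] = \bigoplus_R M_R$ where the direct sum ranges over regions $R$ of $\calT$ and $M_R := \Lup_1[F_R, F_R]$ is the restriction of $\Lup_1$ to the interior edges $F_R$ of $R$. Recall $\Lup_1[e_1, e_2] \neq 0$ only if some triangle of $\calK$ contains both $e_1$ and $e_2$. By Definition \ref{def:hollow}, the boundary of each region triangulates a spherical shell, so sub-simplexes of boundary simplexes are themselves boundary; equivalently, any triangle containing an interior edge must itself be an interior simplex. Moreover, interior simplexes of distinct regions share no sub-simplexes, so an interior triangle belongs to a unique region. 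Hence if $e_1 \in F_{R_1}$ and $e_2 \in F_{R_2}$ with $R_1 \neq R_2$, no common triangle exists and $\Lup_1[e_1, e_2] = 0$.

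\textbf{Pre-processing via nested dissection.} Each region $R$ is a pure $3$-complex embedded in $\R^3$ with $O(r)$ stable simplexes. By Lemma \ref{lem:max_deg}, each vertex of $R$ touches $O(1)$ tetrahedra, so each edge of $R$ touches $O(1)$ triangles; therefore $M_R$ has $O(r)$ nonzeros and its sparsity graph inherits the geometric embedding of $R$. The Miller--Thurston framework \cite{MT90} then provides a recursive balanced separator of size $O(r^{2/3})$, and nested dissection computes a sparse symmetric factorization of $M_R$ in $O(r^2)$ time whose back-substitution costs $O(r^{4/3})$. Summed over the $O(n/r)$ regions, pre-processing takes $O((n/r)\cdot r^2) = O(nr)$.

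\textbf{Solve phase.} On input $\bb_F \in \im(\Lup_1[F,F])$, I split it into the blocks $\bb_{F_R}$ per region; the block-diagonal structure forces $\bb_{F_R}\in\im(M_R)$. Back-substitution against the stored factorization of each $M_R$ returns $\xx_{F_R}$ with $M_R \xx_{F_R} = \bb_{F_R}$ exactly, and concatenating yields $\xx_F$ with $\Lup_1[F,F]\xx_F = \bb_F$, at total cost $O((n/r)\cdot r^{4/3}) = O(nr^{1/3})$.

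\textbf{Main obstacle.} The principal subtlety is that $M_R$ is merely PSD, not positive definite: its kernel can be nontrivial, containing for instance boundaries of tetrahedra wholly in $R$ together with any second-homology classes of $R$. A textbook Cholesky factorization would fail on zero pivots, so I would employ an LDL$^\top$-style variant of nested dissection that uses symmetric block pivots and explicitly tracks the kernel directions. Since $\bb_{F_R}\in\im(M_R)$, back-substitution through the non-degenerate blocks still produces a valid preimage, and this modification does not affect the asymptotic bounds claimed above.
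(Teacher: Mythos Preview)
Your approach matches the paper's: block-diagonal decomposition over regions, then nested dissection on each block with the Miller--Thurston separators, handling the PSD kernel during factorization.  One technical step you glide over is worth flagging.  The rows and columns of $M_R$ are indexed by \emph{edges} of the region, not vertices, and its sparsity graph (two edges adjacent iff they share a triangle) is not the $1$-skeleton of $R$; Theorem \ref{thm:mt90} therefore does not apply directly.  The paper bridges this by proving Corollary \ref{cor:mt90}: starting from the Miller--Thurston vertex separator $S$, removing all edges incident to $S$ yields an \emph{edge} separator of size $O(t^{2/3}+\bar v)$ whose two sides are $\triangle$-disjoint, and hence a balanced node separator for the sparsity graph of $M_R$.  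The paper also absorbs the exterior-edge term $\bar v$ by placing all $O(r^{2/3})$ exterior edges into the root separator at the top of the recursion, which is what guarantees the clean $O(r^{2/3})$ separator at every level.  Your phrase ``its sparsity graph inherits the geometric embedding of $R$'' is the right intuition but does not by itself deliver the separator bound; filling this in is exactly Corollary \ref{cor:mt90}.  On the PSD issue, the paper uses a simpler fix than your $\mathrm{LDL}^\top$ block pivoting: it runs ordinary Cholesky in the nested-dissection order and, whenever a Schur complement develops an all-zero leading row/column, skips that pivot; this yields a factorization $\PP^\top M_R \PP = \LL\LL^\top$ with only $\operatorname{rank}(M_R)$ nonzero columns in $\LL$, and back-substitution through the nonzero block recovers a preimage for any $\bb_{F_R}\in\im(M_R)$.
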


By our choice of $F$, the matrix $\Lup_1[F,F]$ can be written as a block diagonal matrix where each block corresponds to a region of $\calK$ w.r.t. the $r$-hollowing $\calT$. 
Since each region is a $3$-complex in which every tetrahedron has an aspect ratio $O(1)$, we can construct the solver \upFsolve \ by \nd \ \cite{MT90}. However, since each row or column of $\Lup_1[F,F]$ corresponds to an edge in $\calK$, we need to turn the good \emph{vertex separators} in \cite{MT90} into good \emph{edge separators} for regions of $\calK$. We prove Lemma \ref{lem:solve_interior} in Section \ref{sect:solver_int}.

\begin{lemma}[Solver for the Schur complement]
Let $\calK$ be a pure $3$-complex embedded in $\R^3$ and composed of $n$ stable simplexes. 
Let $\calT$ be an $r$-hollowing of $\calK$, and let $C$ be the set of boundary edges of each region of $\calK$ w.r.t. $\calT$.
Then, with a pre-processing time $O(nr + n^2 r^{-2/3})$
there exists a solver $\schursolve(\cdot,\cdot)$ such that for any $\hh \in \im(\schur[\Lup_1]_C)$ and $\delta > 0$, $\schursolve(\hh, \delta)$ returns an $\tilde{\xx}_C$ such that $\norm{\schur[\Lup_1]_C \tilde{\xx}_C - \hh}_2 \le \delta \norm{\hh}_2$ in time 
$\Otil \left( nr^{5/6} + n^{4/3}r^{5/18}
\right)$.
\label{lem:solve_schur}
\end{lemma}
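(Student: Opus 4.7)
The plan is to apply Preconditioned Conjugate Gradient (Theorem~\ref{thm:PCG}) to $\schur[\Lup_1]_C$ using the $r$-hollowing up-Laplacian $\Lup_{1,\calT}$ as preconditioner. By Lemma~\ref{lem:eigS} the relative condition number satisfies $\kappa(\schur[\Lup_1]_C,\Lup_{1,\calT})=O(r)$, so PCG converges to relative error $\delta$ in $\Otil(\sqrt{r}\log(1/\delta))$ iterations, each of which requires one multiplication by $\schur[\Lup_1]_C$ and one solve in $\Lup_{1,\calT}$.

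In the first pre-processing phase I will assemble $\schur[\Lup_1]_C$ region by region. Because interior edges of distinct regions of the $r$-hollowing share no triangle, $\Lup_1[F,F]$ is block diagonal across regions and the full Schur complement decomposes as $\schur[\Lup_1]_C=\sum_R \schur[\Lup_{1,R}]_{C_R}$ (each summand padded with zeros outside that region's boundary indices). Each summand is produced by running a complete Nested Dissection factorization \cite{MT90} on the region -- a 3-complex of $O(r)$ stable tetrahedra in $\R^3$ -- which yields its Schur complement onto the $O(r^{2/3})$ boundary edges in $O(r^2)$ time. Summed over $O(n/r)$ regions this costs $O(nr)$, and the stored sparse representation has $O(n/r\cdot r^{4/3})=O(nr^{1/3})$ total nonzeros. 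Consequently each outer PCG iteration spends $O(nr^{1/3})$ on the $\schur[\Lup_1]_C$-multiplication, for a total contribution of $\Otil(nr^{5/6})$.

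The remaining budget -- $O(n^2 r^{-2/3})$ of pre-processing and $\Otil(n^{4/3}r^{5/18})$ of runtime -- must cover the $\Otil(\sqrt{r})$ preconditioner solves in $\Lup_{1,\calT}$. Since $|C|=O(nr^{-1/3})$ the pre-processing budget equals $|C|^2$ and the per-solve budget equals $O(n^{4/3}r^{-2/9})$. I will build the inner solver by exploiting that $\calT$ is a union of $O(n/r)$ quasi-two-dimensional spherical shells, each having $O(r^{2/3})$ stable simplexes, triangle diameter $O(r^{1/3})$, and shell width at least $5$. Although neighbouring shells may share boundary simplexes, each triangle of $\calT$ lies in a unique shell, so $\Lup_{1,\calT}$ decomposes as a sum of shell-level up-Laplacians; a dedicated solver per shell, obtained by Nested Dissection inside each shell together with a sparsifier-and-recursion step in the spirit of Incomplete Nested Dissection \cite{KPSZ18}, then assembles into an overall $\Lup_{1,\calT}$ solver that fits the stated budget.

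The hard part will be engineering the shell-level solver to respect both the preprocessing and per-solve budgets at once. A purely direct factorization of each shell is cheap enough in preprocessing but gives a per-solve time that does not exploit the right trade-off; a purely iterative inner scheme requires an accurate preconditioner for the non-graph PSD operator $\Lup_{1,\text{shell}}$, whose kernel must be tracked through the image-projection in Theorem~\ref{thm:PCG}. Resolving this trade-off by leveraging the two-dimensionality of each shell and the separation between its inner and outer spherical boundaries is the key technical step, which will be carried out in the dedicated boundary-solver section.
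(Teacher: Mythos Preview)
Your outer loop---PCG on $\schur[\Lup_1]_C$ with preconditioner $\Lup_{1,\calT}$, giving $\Otil(\sqrt{r})$ iterations via Lemma~\ref{lem:eigS}---is exactly the paper's approach, and your accounting for the Schur-complement multiplication is equivalent to the paper's (the paper does not explicitly assemble $\schur[\Lup_1]_C$ but applies it implicitly through the Cholesky factors of $\Lup_1[F,F]$; either way the preprocessing is $O(nr)$ and the per-apply cost is $O(nr^{1/3})$).

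Where you diverge---and overcomplicate---is the preconditioner solve. The paper does \emph{not} build any shell-level solver, sparsifier, or recursion in the style of \cite{KPSZ18}. It simply observes that $\calT$ is itself a $3$-complex embedded in $\R^3$ with $O(nr^{-1/3})$ stable simplexes, and runs Nested Dissection on $\Lup_{1,\calT}$ as a whole via the edge-separator machinery of Corollary~\ref{cor:mt90}. This yields a Cholesky factorization in time $O((nr^{-1/3})^{2})=O(n^{2}r^{-2/3})$ and back-solves in time $O((nr^{-1/3})^{4/3})=O(n^{4/3}r^{-4/9})$. Multiplying the latter by $\Otil(\sqrt{r})$ iterations already lands inside the stated $\Otil(n^{4/3}r^{5/18})$ budget (with room to spare), so the ``hard part'' you anticipate---engineering a two-level shell solver that balances preprocessing against per-solve cost---never arises.

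Your proposed shell-by-shell route also has an unaddressed gap: although each triangle of $\calT$ lies in a unique shell, neighbouring shells share \emph{edges}, so $\Lup_{1,\calT}$ is not block diagonal over shells and a sum-of-operators decomposition does not by itself yield an exact solver; you would still need a mechanism to couple the shells (another Schur step, or an outer PCG with yet another preconditioner), which you do not specify.
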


Our solver \schursolve \ is based on the \pcg \ (PCG) with the preconditioner $\Lup_{1,\calT}$, the up-Laplacian operator of $\calT$.
By Theorem \ref{thm:PCG} and Lemma \ref{lem:eigS}, the number of PCG iterations is $\Otil(\sqrt{r})$. In each PCG iteration, 
we solve the system in $\Lup_{1,\calT}$ via \nd.
We prove Lemma \ref{lem:solve_schur} in Section \ref{sect:schur_solve}.

Given the above lemmas, we prove Lemma \ref{lem:up_solver}.

\begin{proof}[Proof of Lemma \ref{lem:up_solver}]
The correctness of Algorithm \ref{alg:solver} is by Lemma \ref{lem:combine_solvers}.
By Lemma \ref{lem:solve_interior} and \ref{lem:solve_schur},
the total runtime of the algorithm is 
\begin{align*}
    \Otil \left(nr^{5/6} + n^{4/3}r^{5/18} +  nr + n^2 r^{-2/3} \right).
\end{align*}
\end{proof}

\subsection{Projection onto the Image of Up-Laplacian}

As the first Betti number of $\calK$ can be arbitrary, the approximate projection operators for the up $1$-Laplacian provided in \cite{CMFNPW14,BMNW22,BN22} are not applicable here. Our approximate projection operator follows a similar approach to our up-Laplacian solver, which is based on an incomplete \nd\ for \emph{triangles}, instead of edges. 

\begin{lemma}[Up-projection operator]
  Let $\calK$ be a pure $3$-complex embedded in $\R^3$ and composed of $n$ stable simplexes. Suppose we are given an $r$-hollowing for $\calK$. Then, for any $\eps > 0$, there exists an operator $\aPup_1$ such that for any $\bb$, $\aPup_1 \bb$ is in the image of $\Lup_1$, and
  \[
  \norm{\aPup_1 \bb - \Pup_1 \bb}_2 \le \eps \norm{\Pup_1 \bb}_2.
  \]
  In addition, $\aPup_1 \bb$ can be computed in time $O \left( nr + n^{4/3}r^{5/18} \log(n/\eps) + n^2 r^{-2/3} \right)$.
  \label{lem:up_projection_approx}
\end{lemma}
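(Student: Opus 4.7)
The plan is to mirror the up-Laplacian solver of Lemma~\ref{lem:up_solver}, but apply it to $M := \partial_2^\top \partial_2$ rather than to $\Lup_1 = \partial_2 \WW_2 \partial_2^\top$. Since $\Pup_1 = \partial_2 M^\dagger \partial_2^\top$ and $\partial_2^\top \bb$ always lies in $\im(M)=\im(\partial_2^\top)$ by Fact~\ref{fact:image}, I will define $\aPup_1 \bb := \partial_2 \tilde{\yy}$ where $\tilde{\yy}$ approximately solves $M\yy = \partial_2^\top\bb$. The required containment $\aPup_1 \bb \in \im(\Lup_1)$ is then automatic, because $\im(\Lup_1) = \im(\partial_2\WW_2^{1/2}) = \im(\partial_2)$ (again by Fact~\ref{fact:image}, using that $\WW_2$ is positive diagonal).

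Next I would partition the triangles of $\calK$ into $F$ (triangles interior to some region of the $r$-hollowing) and $C$ (boundary triangles). Because interior simplexes of distinct regions share no sub-simplexes, and two triangles contribute a non-zero entry to $M$ only when they share an edge, $M[F,F]$ is block-diagonal with one block per region. Each block corresponds to a $3$-complex with $O(r)$ stable simplexes, so the triangle-indexed analog of Lemma~\ref{lem:solve_interior} (same \nd\ strategy, now separating the triangle index set rather than the edge index set) gives an exact solver for $M[F,F]$ within $O(nr)$ preprocessing and $O(nr^{1/3})$ per application. For the Schur complement I would precondition $\schur[M]_C$ by the analogous triangle-triangle matrix $\partial_{2,\calT}^\top \partial_{2,\calT}$ on $\calT$; the triangle analog of Lemma~\ref{lem:eigS} gives relative condition number $O(r)$, so \pcg\ converges in $\Otil(\sqrt{r})$ iterations, each performing one \nd\ solve on $\partial_{2,\calT}^\top \partial_{2,\calT}$. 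Combining via a variant of Lemma~\ref{lem:combine_solvers} applied to $M$ produces, for any $\delta>0$, a $\tilde{\yy}$ with $\norm{M\tilde{\yy} - \partial_2^\top\bb}_2 \le \delta\norm{\partial_2^\top\bb}_2$ in time $O(nr + n^{4/3}r^{5/18}\log(n/\delta) + n^2 r^{-2/3})$.

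The step I expect to be the main obstacle is converting this backward error on $M$ into the forward error $\norm{\partial_2\tilde{\yy} - \Pup_1\bb}_2$ measured against $\norm{\Pup_1\bb}_2$. Without loss of generality I may assume $\tilde{\yy}\in\im(M)$, since projecting onto $\im(M)=\ker(M)^\perp$ does not change $\partial_2\tilde{\yy}$ (as $\ker(M)=\ker(\partial_2)$). Writing $\yy^* := M^\dagger \partial_2^\top\bb$ and $\ee := M\tilde{\yy} - \partial_2^\top\bb = M(\tilde{\yy}-\yy^*)$, I obtain
\[
\norm{\partial_2\tilde{\yy} - \Pup_1 \bb}_2^2
= (\tilde{\yy}-\yy^*)^\top M(\tilde{\yy}-\yy^*)
= \ee^\top M^\dagger \ee
\le \frac{\norm{\ee}_2^2}{\lambda_{\min}(M)},
\]
whereas $\norm{\Pup_1\bb}_2^2 = (\partial_2^\top\bb)^\top M^\dagger (\partial_2^\top\bb) \ge \norm{\partial_2^\top\bb}_2^2 / \lambda_{\max}(M)$. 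Combining these bounds gives $\norm{\partial_2\tilde{\yy} - \Pup_1\bb}_2 \le \delta\sqrt{\kappa(M)}\,\norm{\Pup_1\bb}_2$. Lemma~\ref{lem:max_deg} together with stability of simplexes bounds $\lambda_{\max}(M) = O(1)$, and standard arguments give $\lambda_{\min}(M) \ge 1/\mathrm{poly}(n)$, so $\kappa(M) = \mathrm{poly}(n)$. Choosing $\delta = \eps/\mathrm{poly}(n)$ thus yields the desired error guarantee, and only inflates $\log(n/\delta)$ to $O(\log(n/\eps))$, preserving the stated runtime.
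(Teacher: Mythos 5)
Your overall strategy matches the paper's: both reduce computing $\Pup_1\bb$ to solving $\Ldown_2\,\yy = \partial_2^\top\bb$ (your $M = \partial_2^\top\partial_2$ is exactly the paper's $\Ldown_2$ in the unweighted case) and outputting $\partial_2\yy$, both partition triangles into interior $F$ and $r$-hollowing boundary $C$, both run \nd\ on $M[F,F]$ (the paper does this via Corollary~\ref{cor:triangle_separator} and Lemma~\ref{lem:solve_interior_one_piece_triangle}), and both run \pcg\ on $\schur[\Ldown_2]_C$ preconditioned by the boundary triangle-triangle matrix $\partial_2^\top[C,:]\partial_2[:,C]$. The paper instead bounds the error via the orthogonal decomposition in Lemma~\ref{lem:many_balls_exact_up_proj_restate}, which avoids your $\sqrt{\kappa(M)}$ factor; but since the final Theorem~\ref{thm:main_ball} already sets $\delta \lesssim \eps/\kappa$ and needs $\kappa(\Lup_1) = \mathrm{poly}(n)$ anyway, your extra factor is harmless and gets absorbed into the $\log(n/\eps)$ term. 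Your backward-to-forward error conversion is correct.

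The genuine gap is the sentence ``the triangle analog of Lemma~\ref{lem:eigS} gives relative condition number $O(r)$.'' This is not a routine analog, and the paper itself warns of exactly this at the end of Section~\ref{sect:alg_overview_new}: ``The primary technical challenge arises when bounding the relative condition number of the Schur complement and the preconditioner, which requires a different approach.'' The mechanism behind Lemma~\ref{lem:eigS} does not carry over: for $\Lup_1$ restricted to the boundary shell, Claim~\ref{clm:eigmin} exploits the fact that (after re-orienting triangles via Claim~\ref{clm:orient_triangle}) $\partial_2^\top\partial_2$ of a $2$-sphere is literally the graph Laplacian of its dual graph, so Theorem~\ref{thm:diameter} gives $\lambda_{\min} = \Omega(r^{-1})$. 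No such dual-graph interpretation exists for the triangle-triangle matrix $\partial_2^\top[C,:]\partial_2[:,C]$ on a spherical \emph{shell}, where edges belong to arbitrarily many triangles. Moreover, $\schur[\Ldown_2]_C$ and the preconditioner need not share an image (the analogue of Claim~\ref{clm:image} fails), which is why the paper's Lemma~\ref{lem:up_proj_schur_precond} has to be stated with an explicit projector $\PPi_i$. The paper's actual argument (Section~\ref{sec:up_proj_precondition}) reduces the relative condition number to $\lambda_{\min}\bigl(\PPi_{\im(\BB_{bd}^\top)}\PPi_{\ker(\BB_{int})}\PPi_{\im(\BB_{bd}^\top)}\bigr)$, characterizes the relevant image intersection, decomposes $\partial_2$ along the inner sphere of the boundary layer as in Equation~\eqref{eqn:decompose_partial_2}, normalizes $\BB_2,\BB_3$ by orienting and reordering, and only then invokes a diameter bound on $\BB_4^\top\BB_4$ — none of which is a mechanical translation of the up-Laplacian proof. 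Without this, your proposal asserts the conclusion of Lemma~\ref{lem:up_proj_precondition_whole} rather than proving it.
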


A detailed proof of Lemma \ref{lem:up_projection_approx} is deferred to Section \ref{sect:up_proj}.
The subsequent Lemma offers a helpful formula of $\Pup_1$, the orthogonal projection matrix onto the image of $\Lup_1$.

\begin{lemma}
Let $\calK$ be a simplicial complex with boundary operator $\partial_2$. For any partition $F \cup C$ of the $2$-simplexes of $\calK$, the orthogonal projection $\Pup_1$ for 
$\calK$ can be decomposed as
\begin{align*}
    \Pup_1 = \PPi_{\im(\partial_2[:,F])} 
+ \PkerF \partial_2[:,C] (\schur[\Ldown_2]_C)^{\dagger} \partial_2^\top [C,:] \PkerF,
\label{eqn:up_proj_up_Lap_formula}
\end{align*}
where $\Ldown_2$ is the down $2$-Laplacian.
\label{lem:many_balls_exact_up_proj}
\end{lemma}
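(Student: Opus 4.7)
The plan is to decompose $\im(\partial_2)$ as an orthogonal direct sum and write $\Pup_1$ as the corresponding sum of two orthogonal projections, then identify one of them with the $\schur[\Ldown_2]_C$--term.

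First I would set $V_1 = \im(\partial_2[:,F])$ and note that $V_1^{\perp} = \ker(\partial_2[:,F]^{\top}) = \ker(\partial_2^{\top}[F,:])$, so $\PkerF$ is precisely the orthogonal projection onto $V_1^{\perp}$. Define $V_2 = \im(\PkerF\,\partial_2[:,C])$. By construction $V_2 \subseteq V_1^{\perp}$, so $V_1 \oplus V_2$ is an internal orthogonal direct sum. I would then check $V_1 \oplus V_2 = \im(\partial_2)$: every column of $\partial_2[:,C]$ splits as $(\II - \PkerF)\partial_2[:,C] + \PkerF\,\partial_2[:,C]$, where the first summand lies in $V_1$ (since $\II - \PkerF$ projects onto $V_1$) and the second lies in $V_2$; combined with $\im(\partial_2[:,F]) = V_1$, this shows $\im(\partial_2) \subseteq V_1 + V_2$, and the reverse inclusion is immediate.

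Second I would use the standard sum-of-projections formula for an orthogonal decomposition,
\[
\Pup_1 \;=\; \PPi_{\im(\partial_2)} \;=\; \PPi_{V_1} + \PPi_{V_2} \;=\; \PPi_{\im(\partial_2[:,F])} + \PPi_{V_2},
\]
and compute $\PPi_{V_2}$ via the identity $\PPi_{\im(M)} = M(M^{\top}M)^{\dagger}M^{\top}$ applied to $M = \PkerF\,\partial_2[:,C]$. Using $\PkerF^{\top} = \PkerF$ and $\PkerF^{2} = \PkerF$, this yields
\[
\PPi_{V_2} \;=\; \PkerF\,\partial_2[:,C]\,\bigl(\partial_2[:,C]^{\top}\PkerF\,\partial_2[:,C]\bigr)^{\dagger}\,\partial_2[:,C]^{\top}\,\PkerF.
\]

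Third I would identify the middle matrix with $\schur[\Ldown_2]_C$. Writing $\Ldown_2 = \partial_2^{\top}\partial_2 = \BB\BB^{\top}$ with $\BB = \partial_2^{\top}$, the $F$- and $C$-row blocks of $\BB$ are $\BB_F = \partial_2^{\top}[F,:]$ and $\BB_C = \partial_2^{\top}[C,:] = \partial_2[:,C]^{\top}$. Fact~\ref{fact:schur_project_kernel} then gives
\[
\schur[\Ldown_2]_C \;=\; \BB_C\,\PPi_{\ker(\BB_F)}\,\BB_C^{\top} \;=\; \partial_2[:,C]^{\top}\,\PkerF\,\partial_2[:,C],
\]
and substituting this into the expression for $\PPi_{V_2}$ matches the second term of the claimed decomposition exactly.

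I do not expect any serious obstacle here: the content is a clean orthogonal-splitting argument together with one application of Fact~\ref{fact:schur_project_kernel}. The only point that deserves care is the verification that $V_1 + V_2 = \im(\partial_2)$, since without this equality the sum $\PPi_{V_1} + \PPi_{V_2}$ would only be a projection onto a subspace of $\im(\partial_2)$; the splitting argument above handles this cleanly.
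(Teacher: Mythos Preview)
Your proof is correct and complete. The key ingredients---the orthogonal splitting $\im(\partial_2) = V_1 \oplus V_2$ with $V_1 = \im(\partial_2[:,F])$ and $V_2 = \im(\PkerF\,\partial_2[:,C])$, the projection formula $\PPi_{\im(M)} = M(M^{\top}M)^{\dagger}M^{\top}$, and the identification of $M^{\top}M$ with $\schur[\Ldown_2]_C$ via Fact~\ref{fact:schur_project_kernel}---are all sound, and the verification that $V_1 + V_2 = \im(\partial_2)$ via the column-splitting argument is the right thing to check.

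Your route, however, is genuinely different from the paper's. The paper argues \emph{algorithmically}: it writes down the block-elimination formulas (those in Lemma~\ref{lem:combine_solvers} with $\delta = 0$) for solving $\Ldown_2 \xx = \partial_2^{\top}\bb$, observes that any such solution satisfies $\partial_2\xx = \Pup_1\bb$, and then expands $\partial_2\xx$ in terms of $\xx_F$ and $\xx_C$ to read off the claimed formula; only afterwards does it invoke Fact~\ref{fact:schur_project_kernel} to interpret the second term as a projection. Your argument is instead \emph{geometric}: you start from the orthogonal decomposition of $\im(\partial_2)$ and derive the formula directly, with Fact~\ref{fact:schur_project_kernel} entering only to rename the middle factor. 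Your approach is shorter and more conceptual, and it yields the paper's ``in addition'' clause (that the second summand is the orthogonal projection onto $\im(\PkerF\,\partial_2[:,C])$) for free rather than as a separate observation. The paper's approach, on the other hand, mirrors how the projection is actually computed in the algorithm, which makes it a natural fit for the surrounding exposition.
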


Once more, an $r$-hollowing offers a natural partition of the triangles within $\calK$. We assign all the interior triangles to $F$ and all the boundary triangles to $C$. As such, \nd\ can be utilized to compute $\PPi_{\im(\partial_2[:,F])}$ and $\PkerF$, and PCG to solve the system in the Schur complement. 
The primary technical challenge arises 
when bounding the relative condition number of the Schur complement and the preconditioner, which requires a different approach.

\subsection{Finding \texorpdfstring{$r$}{Lg}-Hollowings}

\begin{lemma}[Finding an $r$-hollowing]
Let $\calK$ be a pure $3$-complex embedded in $\R^3$ and composed of $n$ stable simplexes each of volume $\Theta(1)$. 
If $\calK$ has a well-shaped boundary structure with parameter $r=o(1)$ defined as in Definiton \ref{def:hole_requirement},
then we can find an $r$-hollowing of $\calK$ in time $O(n\log n)$.
\label{lem:rhollowing}
\end{lemma}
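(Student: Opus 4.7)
The plan is to construct the $r$-hollowing via a cubic grid partition of $\R^3$ followed by thickening the grid faces into shells.

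First I would set up the grid. Since the convex hull of $\calK$ has $O(1)$ aspect ratio by assumption \ref{asp:1} and $\calK$ contains $n$ stable simplexes each of $\Theta(1)$ volume, the convex hull has volume $\Theta(n)$ and sits inside a cube of side $\Theta(n^{1/3})$. I would overlay an axis-aligned grid of side length $L = c\, r^{1/3}$ for a suitable constant $c$; this yields $O(n/r)$ nonempty cells, each meeting $O(r)$ simplexes of $\calK$ by a volume argument together with Lemma \ref{lem:max_deg}.

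Next I would produce the shell of each cell $Q$ by taking all simplexes within a fixed constant Euclidean distance of the $2$-faces of $\partial Q$, with the constant chosen so that the shell is at least $5$ triangles thick in triangle-distance. A standard surface-area bound gives $O(L^2)=O(r^{2/3})$ such simplexes per shell, and the triangle diameter of the shell is $O(L)/\Theta(1) = O(r^{1/3})$. Interior simplexes are those inside $Q$ but not in any shell; by construction interior simplexes from distinct cells are separated by at least $5$ triangles and hence share no subsimplex. Assumption \ref{asp:3} bounds the exterior simplex count per cell by $O(r^{2/3})$ directly.

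The hard part will be ensuring that each shell triangulates a \emph{spherical shell} (homeomorphic to $S^2\times[0,1]$) rather than a more complicated space. When a cell meets no boundary components of $\calK$, its shell is a neighborhood of a cubical surface and is plainly homeomorphic to $S^2\times[0,1]$. The difficulty is that a boundary component (``hole'') of $\calK$ meeting the shell can introduce an extra cavity or tunnel. This is where the full strength of assumptions \ref{asp:2}--\ref{asp:4} comes in: assumption \ref{asp:2} bounds each hole's $1$-skeleton diameter by $O(r^{1/3})$, so each hole fits inside one cell up to $O(r^{1/3})$ slack; assumption \ref{asp:4} keeps distinct boundary components more than $5$ triangles apart, so local fixes for different holes do not collide; and assumption \ref{asp:3} keeps hole-induced extra shell simplexes within the $O(r^{2/3})$ budget. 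For each hole that would pierce a cell shell, I would locally perturb the grid face by $O(r^{1/3})$ Euclidean distance to push the hole entirely into the interior of a single adjacent cell, after which the shell is cavity-free and retains the $S^2\times[0,1]$ topology.

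For the runtime, bucketing the $n$ simplex centroids into the grid costs $O(n\log n)$ via a sort; identifying shell simplexes and assembling cell interiors is linear in the total simplex count, $O(n)$ overall; processing the hole-induced perturbations takes time proportional to the total exterior simplex count, which is $O(n\, r^{-1/3})$ by assumption \ref{asp:3}, a negligible additive term. The full construction runs in $O(n\log n)$ time.
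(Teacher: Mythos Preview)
Your grid-based construction matches the paper's approach at the high level: the paper likewise takes a bounding box of $\calK$ (obtained via the convex hull and Corollary~\ref{lem:nice_bounding_box}), slices it by $\Theta(n^{1/3}r^{-1/3})$ evenly spaced planes in each coordinate direction into $O(n/r)$ smaller boxes, and then thickens the resulting $2$-faces to width~$5$. The simplex-count and triangle-diameter bounds you sketch are essentially those in the paper's proof.

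The substantive difference is in how the shell is repaired when a boundary component of $\calK$ pierces it. You propose to perturb the offending grid face by $O(r^{1/3})$ so as to push the hole entirely into the interior of one adjacent cell. The paper does something cheaper and more local: when a slicing plane $P$ crosses a hole, it augments the shell with the tetrahedra within $O(1)$ distance of that hole's boundary component, on one side of $P$, forming a ``half spherical shell'' that hugs the hole (Algorithm~\ref{alg:hollow}, Figure~\ref{fig:complex_and_r_hollowing}(c)). The region interface thus detours \emph{along} the hole surface rather than being displaced away from it; the added material is $O(1)$ times the hole's surface area, and because the detour has thickness $O(1)$, distinct detours cannot interact thanks to assumption~\ref{asp:4}.

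Your perturbation scheme is not obviously wrong, but it leaves real work undone. An $O(r^{1/3})$-deep bump on one face must be glued consistently to the four orthogonal faces; a hole near a grid edge or vertex forces simultaneous perturbation of two or three mutually orthogonal faces; and the bump can protrude well into the neighboring cell, where it may meet another hole---assumption~\ref{asp:4} only guarantees $\Theta(1)$ separation between boundary components, far less than your $O(r^{1/3})$ displacement. You would need to argue that all these perturbations can be made coherent and still yield an $S^2\times[0,1]$ shell with $O(r^{2/3})$ simplexes per region. The paper's $O(1)$-thickness patch avoids these complications entirely.
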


Algorithm \ref{alg:hollow} describes how to find an $r$-hollowing in nearly-linear time. The algorithm first finds a \emph{nice bounding box} for the convex hull of $\calK$, then ``cuts'' the box into $O(n/r)$ smaller boxes of equal volume, and finally turns each smaller box into a region of an $r$-hollowing.

To clarify the importance of the well-shaped boundary structure, as defined in Definition \ref{def:hole_requirement}, we highlight its specific purposes as follows:
(\ref{asp:1}) The requirement for an $O(1)$ aspect ratio of the convex hull of $\calK$ is essential for obtaining a nice bounding box of $\calK$ with a comparable volume (Corollary \ref{lem:nice_bounding_box});
(\ref{asp:2}) The constraint on the $O(r^{1/3})$ diameter of boundary components ensures that each region's boundary forms a spherical shell;
(\ref{asp:3}) The bounding of the number of exterior simplexes guarantees the required bound on the number of boundary/exterior simplexes of each region of the resulting $r$-hollowing;
(\ref{asp:4}) The specification for the distance between boundary components guarantees a sufficiently wide shell width for the resulting $r$-hollowing. 

We prove Lemma \ref{lem:rhollowing} in Section \ref{sect:hollowing}.

%!TEX root = main.tex

% \def\LupF{\widehat{\Lup_1}}
\def\LupF{\MM}
\def\LupFlarger{\widehat{\Lup_1}}

\section{Solver for Down-Laplacian}
\label{sect:down_solver}

This section shows how to solve $\Ldown_1 \xdown = \bb$ for any $\bb \in \im(\Ldown_1)$ in linear time and proves Lemma \ref{lem:down_lap_solver}. Recall that $\Ldown_1 = \partial_1^\top \WW_0 \partial_1$.
Our down-Laplacian solver works for \emph{any} simplicial complexes and returns a solution \emph{without error}.
Our approach is a slight modification of the down-Laplacian solver in \cite{CMFNPW14} to incorporate the vertex weights $\WW_0$.
Specifically, we compute $\xdown$ by three steps: 
solve $\partial_1^\top \WW_0^{1/2} \yy = \bb$, project $\yy$ onto $\im(\WW_0^{1/2} \partial_1)$ and get a new vector $\yy'$, then solve $\WW_0^{1/2} \partial_1 \xdown = \yy'$.
The first and the last steps can be solved by the approach in Lemma 4.2 of \cite{CMFNPW14} (stated below), and the second step can be explicitly solved by utilizing that $\partial_1$ is a vertex-edge incidence matrix of an oriented graph.

\begin{lemma}[A restatement of Lemma 4.2 of \cite{CMFNPW14}]
Given any $\bb_1 \in \im(\partial_1^{\top})$ (respectively, $\bb_0 \in \im(\partial_1)$), there is a linear operator $\partial_1^{+ \top}$ such that $\partial_1^\top \partial_1^{+ \top} \bb_1 = \bb_1$ (respectively, $\partial_1^+ = (\partial_1^{+ \top})^\top$
such that $\partial_1 \partial_1^{+ } \bb_0 = \bb_0$). In addition, we can compute $\partial_1^{+ \top} \bb_1 $ (respectively, $\partial_1^{+} \bb_0 $) in linear time.
\label{lem:cohen_down_solve}
\end{lemma}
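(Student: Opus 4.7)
The plan is to exploit that $\partial_1$ is the signed vertex--edge incidence matrix of an oriented graph, so $\partial_1^\top \yy$ records the edge differences $\yy[\text{head}(e)] - \yy[\text{tail}(e)]$ while $\partial_1 \xx$ records signed vertex divergences of the edge flow $\xx$. I will build both operators from a single spanning forest $T$ of the $1$-skeleton of $\calK$, computed once by BFS in time $O(|V|+|E|)$, where $V$ and $E$ are the $0$- and $1$-simplexes of $\calK$. The same forest will be reused for every application of $\partial_1^{+\top}$ and $\partial_1^+$, which makes each map a fixed linear operator.

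For $\partial_1^{+\top}$: given $\bb_1 \in \im(\partial_1^\top)$, root each tree of $T$ at an arbitrary vertex $r$ and set $\yy[r] = 0$. Traverse $T$ in BFS order; when reaching a vertex $v$ via its parent tree edge $e$ with parent $u$, assign $\yy[v] = \yy[u] + \sigma(e)\,\bb_1[e]$, where $\sigma(e) \in \{\pm 1\}$ encodes the orientation of $e$ relative to the direction from $u$ to $v$. By construction $(\partial_1^\top \yy)[e] = \bb_1[e]$ on every tree edge. For a non-tree edge $f = (p,q)$, concatenating $f$ with the unique $T$-path from $q$ to $p$ yields a cycle whose signed indicator lies in $\ker(\partial_1)$; the hypothesis $\bb_1 \in \im(\partial_1^\top) = \ker(\partial_1)^\perp$ therefore forces the signed sum of $\bb_1$ along this cycle to be zero, which rearranges exactly to $\bb_1[f] = \yy[q] - \yy[p] = (\partial_1^\top \yy)[f]$. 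Hence $\partial_1^\top \yy = \bb_1$ identically, and $\bb_1 \mapsto \yy$ is linear because each $\yy[v]$ is a fixed signed sum of entries of $\bb_1$ along the tree path from $r$ to $v$.

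For $\partial_1^{+}$: given $\bb_0 \in \im(\partial_1)$, set $\xx[e] = 0$ on every non-tree edge and determine tree-edge flows by processing vertices in reverse BFS order. At a non-root vertex $v$, only one incident tree edge $e$ remains unassigned (the one leading to its parent $u$), so $\xx[e]$ is uniquely determined with appropriate sign by the requirement $(\partial_1 \xx)[v] = \bb_0[v]$. When the recursion reaches a root $r$, the total divergence of $\xx$ on the component is zero by construction, while $\bb_0$ sums to zero on the component because $\bb_0 \in \im(\partial_1)$; hence $(\partial_1 \xx)[r] = \bb_0[r]$ is automatic. Linearity of $\bb_0 \mapsto \xx$ follows because $\xx[e]$ is, up to sign, the sum of $\bb_0$ over the subtree hanging below $e$.

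The main (and essentially only) point to verify carefully is that the hypotheses $\bb_1 \in \im(\partial_1^\top)$ and $\bb_0 \in \im(\partial_1)$ are exactly what make the greedy tree-based construction globally consistent --- on non-tree edges in the first case and at the roots in the second. No numerical error is incurred since both procedures use only signed additions, and after the one-time forest computation, each subsequent application of $\partial_1^{+\top}$ or $\partial_1^{+}$ touches each vertex and each edge a constant number of times, giving overall time $O(|V|+|E|)$, linear in the size of the $1$-skeleton of $\calK$.
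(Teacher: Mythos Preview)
The paper does not prove this lemma; it is stated as a restatement of Lemma~4.2 of \cite{CMFNPW14} and used as a black box. Your spanning-forest construction is the standard proof of this fact and is correct: the potential-by-tree-path argument for $\partial_1^{+\top}$ and the flow-by-subtree-sum argument for $\partial_1^{+}$ both go through exactly as you describe, with the hypotheses $\bb_1\in\im(\partial_1^\top)$ and $\bb_0\in\im(\partial_1)$ supplying precisely the cycle-orthogonality and per-component zero-sum conditions needed for global consistency.

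One small point worth adding: the lemma as stated asserts that a \emph{single} linear operator $\partial_1^{+\top}$ handles the first task and that its transpose $\partial_1^{+}=(\partial_1^{+\top})^\top$ handles the second. You construct the two maps separately and do not verify that they are transposes of one another. They are: in your first construction the matrix entry $(\partial_1^{+\top})[v,e]$ is $\pm 1$ exactly when the tree edge $e$ lies on the root-to-$v$ path, equivalently when $v$ lies in the subtree hanging below $e$, which is exactly the condition for $(\partial_1^{+})[e,v]=\pm 1$ in your second construction (both vanish on non-tree edges, and the signs agree once a consistent orientation convention is fixed). This observation closes the gap with the lemma statement, though in fact the paper's only downstream use of the lemma (in the proof of Lemma~\ref{lem:down_lap_solver}) does not actually rely on the transpose relation---it merely needs one operator satisfying $\partial_1^\top\partial_1^{+\top}\bb_1=\bb_1$ and another satisfying $\partial_1\partial_1^{+}\bb_0=\bb_0$.
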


We remark that the operators $\partial_1^+$ and $\partial_1^{+ \top}$ in Lemma \ref{lem:cohen_down_solve}
are not necessary to be the pseudo-inverses of $\partial_1$ and $\partial_1^\top$.

Claim \ref{clm:down_ker} explicitly characterizes the image of $\WW_0^{1/2} \partial_1$.

\begin{claim}
Let $\calK$ be a simplicial complex whose $1$-skeleton is connected, and let $v$ be the number of vertices in $\calK$.
Let $\WW_0 = \diag\left(w_1, \ldots, w_{v} \right)$ where $w_1,\ldots, w_{v} > 0$. Then, 
    \[
    \ker(\partial_1^\top \WW_0^{1/2}) = \Span \left\{ \uu \right\} 
    \text{ where } \uu = \left(\frac{1}{\sqrt{w_1}}, \ldots, \frac{1}{\sqrt{w_{v}}} \right)^\top.
    \]
\label{clm:down_ker}
\end{claim}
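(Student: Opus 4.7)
The plan is to unfold the definition of $\partial_1^\top \WW_0^{1/2}$ row by row, translate the kernel condition into an edge-by-edge constraint on the entries of any vector $\xx$ in the kernel, and then use connectedness of the $1$-skeleton to propagate this constraint to all vertices. The whole proof should be short and elementary; there is no real technical obstacle — the only subtlety is keeping track of how the diagonal rescaling $\WW_0^{1/2}$ distorts the usual graph-Laplacian kernel $\Span\{\vecone\}$ into $\Span\{\uu\}$.

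First I would recall that each row of $\partial_1^\top$ corresponds to an oriented edge $e = [v_i, v_j]$ of the $1$-skeleton and has precisely two nonzero entries, $-1$ in column $i$ and $+1$ in column $j$. Right-multiplying by the diagonal matrix $\WW_0^{1/2}$ rescales the $k$th column by $\sqrt{w_k}$, so the row of $\partial_1^\top \WW_0^{1/2}$ corresponding to $e=[v_i,v_j]$ has $-\sqrt{w_i}$ in column $i$ and $+\sqrt{w_j}$ in column $j$. Therefore, for any $\xx \in \R^{v}$, the equation $\partial_1^\top \WW_0^{1/2} \xx = \veczero$ is equivalent to the system
\[
\sqrt{w_j}\,\xx[j] - \sqrt{w_i}\,\xx[i] = 0 \quad \text{for every edge } [v_i,v_j] \in \calK.
\]

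Next I would define the rescaled vector $\yy \in \R^{v}$ by $\yy[k] = \sqrt{w_k}\,\xx[k]$. The above system says $\yy[i] = \yy[j]$ whenever $\{v_i, v_j\}$ is an edge of the $1$-skeleton. Since the $1$-skeleton is connected by assumption, iterating this equality along any path shows that $\yy$ is constant across all vertices, i.e., $\yy = c\,\vecone$ for some scalar $c \in \R$. Unwinding the substitution gives $\xx[k] = c/\sqrt{w_k}$ for all $k$, hence $\xx = c\,\uu$. This shows $\ker(\partial_1^\top \WW_0^{1/2}) \subseteq \Span\{\uu\}$; the reverse inclusion follows by a direct one-line check that the vector $\uu$ satisfies $\sqrt{w_j}\,\uu[j] - \sqrt{w_i}\,\uu[i] = 1 - 1 = 0$ on every edge. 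Combining the two inclusions yields the claim. Since all $w_k > 0$, the vector $\uu$ is well-defined and nonzero, so the kernel is exactly one-dimensional as stated.
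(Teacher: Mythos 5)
Your proof is correct, and it takes a genuinely different route from the paper's. The paper's argument is a dimension count: it first observes that $\ker(\partial_1^\top \WW_0^{1/2}) = \ker(\partial_1^\top)$ has dimension $v - \dim(\im(\partial_1)) = 1$ because the $1$-skeleton is connected, and then exhibits $\uu$ as one nonzero element of the kernel by checking $\partial_1^\top \WW_0^{1/2} \uu = \partial_1^\top \vecone = \veczero$. You instead characterize the kernel from scratch: you unpack each row of $\partial_1^\top \WW_0^{1/2}$ as an edge constraint $\sqrt{w_j}\,\xx[j] = \sqrt{w_i}\,\xx[i]$, substitute $\yy[k] = \sqrt{w_k}\,\xx[k]$, and propagate the constant-on-edges condition along paths using connectedness. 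Your approach is more self-contained (it avoids invoking rank-nullity and the standard fact that a connected graph's incidence matrix has rank $v-1$) at the cost of being a bit longer; the paper's is shorter given that this fact is assumed known. Both uses of connectedness are essentially equivalent in content, just packaged differently. One small stylistic note: you only need one direction of the argument plus the observation that $\uu \neq \veczero$ to get equality of the two spans, but stating the easy reverse inclusion explicitly, as you do, is cleaner.
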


\begin{proof}
Let $\dims(\mathcal{V})$ be the dimension of a space $\mathcal{V}$.
Since all the diagonals of $\WW_0$ are positive, 
\begin{align*}
\dims(\ker(\partial_1^\top \WW_0^{1/2})) = 
\dims(\ker(\partial_1^\top ))
= v - \dims(\im(\partial_1))
= 1,
\end{align*}
where the last equality holds since the $1$-skeleton of $\calK$ is connected.
We can check that 
\begin{align*}
\partial_1^\top \WW_0^{1/2} \uu 
& = \partial_1^\top {\bf 1} = {\bf 0}.
\end{align*}
Here, ${\bf 1}$ is the all-one vector, and the second equality holds since $\partial_1$ is the vertex-edge incidence matrix of an oriented (weakly) connected graph.
Thus, the statement holds.
\end{proof}

We construct our down-Laplacian solver by combining Lemma \ref{lem:cohen_down_solve} and Claim \ref{clm:down_ker}. We remark this down-Laplacian solver applies to any simplicial complex.

\begin{proof}[Proof of Lemma \ref{lem:down_lap_solver}]
Without loss of generality, we assume the $1$-skeleton of $\calK$ is connected. Otherwise, we can write $\Ldown_1$ as a block diagonal matrix, each corresponding to a connected component of the $1$-skeleton, and we reduce solving a system in $\Ldown_1$ into solving several smaller down-Laplacian systems.

Let $\partial_1^+, \partial_1^{+ \top}$ be the linear operators in Lemma \ref{lem:cohen_down_solve} for $\calK$, and let $\uu$ be the vector in Claim \ref{clm:down_ker}.
We define 
\begin{align*}
% \label{eq:down_solver}
    \Zdown_1 \defeq \partial_1^+ \WW_0^{-1/2} \left(\II - \frac{\uu \uu^\top}{\norm{\uu}_2^2} \right) \WW_0^{-1/2} \partial_1^{+ \top}.
\end{align*}
We can compute $\Zdown_1 \bb$ in linear time.
In addition, we define
\begin{align*}
\yy = \partial_1^{+ \top} \bb, 
\zz = \WW_0^{-1/2} \yy, \zz_1 =  \left(\II - \frac{\uu \uu^\top}{\norm{\uu}_2^2} \right) \zz, 
\text{ and } \xx = \partial_1^+ \WW_0^{-1/2} \zz_1.
\end{align*}
Then, $\zz_1 \in \im(\WW_0^{1/2} \partial_1)$ and $\zz - \zz_1 \in \ker(\partial_1^\top \WW_0^{1/2})$.
\begin{align*}
\Ldown_1 \Zdown_1 \bb
& = \partial_1^\top \WW_0 \partial_1 \xx
= \partial_1^\top \WW_0 \partial_1 \partial_1^+ \WW_0^{-1/2} \zz_1 \\
    & = \partial_1^\top \WW_0 \WW_0^{-1/2} \zz_1
    \tag{since $\WW_0^{-1/2} \zz_1 \in \im(\partial_1)$}
    \\
    & = \partial_1^\top \WW_0^{1/2} \zz
    \tag{since $\zz - \zz_1 \in \ker(\partial_1^\top \WW_0^{1/2}) $} \\
    & 
    % = \partial_1^\top \WW_0^{1/2} \zz 
    = \partial_1^\top \yy = \bb. 
    % \qedhere
\end{align*}
\end{proof}

\section{Solver for Up-Laplacian}
\label{sec:up_solver_merge}

In this section, we will prove the two key lemmas, Lemma \ref{lem:solve_interior} and Lemma \ref{lem:solve_schur}, for building the first up-Laplacian solver.

\subsection{Solver for \texorpdfstring{$\mathbf{L}^{\text{up}}_1[F,F]$}{Lg}}
\label{sect:solver_int}

In this section, we construct an efficient solver $\upFsolve(\bb)$ that returns an 
$\xx$ such that  $\Lup_1[F,F] \xx = \bb$ for any $\bb \in \im(\Lup_1[F,F])$ and prove Lemma \ref{lem:solve_interior}.

We first show the interiors of different regions are ``disjoint'' in the sense that 
$\Lup_1[F,F]$ can be written as a block diagonal matrix where each diagonal block corresponds to the interior of a region after proper row and column permutation.
By definition, we can write $\Lup_1$ as the sum of rank-$1$ matrices that each corresponds to a triangle in $\calK$:
\begin{equation*}
\label{eq:L_as_a_sum}
    \Lup_1 = \partial_2 \WW_2 \partial_2^\top
= \sum_{\sigma: \text{triangle in } \calK} 
\WW_2[\sigma, \sigma] \cdot \partial_2[:, \sigma]  \partial_2[:, \sigma]^\top.
\end{equation*}
For any two edges $e_1, e_2$, $\Lup_1[e_1, e_2] = 0$ if and only if no triangle in $\calK$ contains both $e_1, e_2$.
We say such $e_1$ and $e_2$ are \emph{$\triangle$-disjoint}.
The following claim shows that interior edges from different regions are $\triangle$-disjoint.

\begin{claim}
Let $\calK$ be a stable $3$-complex and $\calT$ be an $r$-hollowing of $\calK$.
Let $R_1, R_2$ be two different regions of $\calK$ w.r.t. $\calT$, and let $e_1$ be an interior edge in $R_1$ and $e_2$ an interior edge in $R_2$.
Then, $e_1$ and $e_2$ are $\triangle$-disjoint.
\label{clm:hollow_interior_disjoint}
\end{claim}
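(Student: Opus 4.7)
The plan is to derive the claim as a direct combinatorial consequence of the non-sharing property in Definition \ref{def:hollow}. The first step is to observe that $e_1$ and $e_2$ are interior simplexes of different regions $R_1 \ne R_2$, so by the defining property of $r$-hollowings, they cannot share any common subsimplex. In particular, their endpoints (which are $0$-dimensional subsimplexes of both) must all be distinct, i.e., $e_1$ and $e_2$ have no vertex in common.

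The second step is an elementary pigeonhole observation about $2$-simplexes: any triangle has only three vertices, and each of its edges uses two of them, so any two distinct edges of a common triangle must share at least one vertex (since $2+2>3$). Combining this with the previous step, if there were a triangle $\sigma \in \calK$ containing both $e_1$ and $e_2$ as edges, then $e_1$ and $e_2$ would necessarily share a vertex, contradicting the conclusion of the first step. Hence no such $\sigma$ exists, which is precisely the definition of $\triangle$-disjointness.

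I do not foresee any substantive obstacle here: the proof is purely combinatorial and uses only the ``interior simplexes from different regions do not share any subsimplexes'' clause of Definition \ref{def:hollow} together with a triviality about triangles. None of the finer properties of an $r$-hollowing (spherical-shell boundaries, bounded triangle diameter, shell width at least $5$) nor the stability assumption on $\calK$ are needed for this claim; those hypotheses become essential only in the quantitative estimates of later lemmas such as Lemma \ref{lem:eigS}.
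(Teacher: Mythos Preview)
Your proof is correct and, if anything, cleaner than the paper's. Both arguments proceed by contradiction and are very short, but they invoke the $r$-hollowing definition differently. The paper assumes a triangle $\sigma$ contains both $e_1$ and $e_2$, considers the third edge $e_3$ of $\sigma$, and argues informally that $e_3$ would have to ``cross the boundary'' between $R_1$ and $R_2$, contradicting the fact that regions can intersect only on their boundaries. You instead quote directly the clause ``interior simplexes from different regions do not share any subsimplexes'' to conclude that $e_1$ and $e_2$ have no common vertex, and then finish with the pigeonhole observation that two edges of a single triangle must share a vertex. Your route is more self-contained because it pins down exactly which sentence of Definition~\ref{def:hollow} is doing the work and avoids any appeal to a geometric notion of ``crossing''; the paper's version is slightly more intuitive but less formally grounded. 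Your remark that neither stability nor the spherical-shell conditions are needed here is also accurate.
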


\begin{proof}
Assume by contradiction there exists a triangle $\sigma \in \calK$ contains both $e_1, e_2$. Let $e_3$ be the third edge in $\sigma$.
Since $e_1, e_2$ are interior edges of different regions $R_1, R_2$, $e_3$ must cross the boundary of $R_1, R_2$. 
This contradicts the fact that regions can only intersect on their boundaries.
Thus, $e_1, e_2$ must be $\triangle$-disjoint.
\end{proof}

By the above Claim, 
computing an $\xx$ such that $\Lup_1[F,F] \xx = \bb$ reduces to computing a solution for each diagonal block submatrix per region.
For this reason, Lemma \ref{lem:solve_interior} is a corollary of the following Lemma \ref{lem:solve_interior_one_piece}.

\begin{lemma}
  Let $\calX$ be a $3$-complex with $O(r)$ simplexes embedded in $\R^3$ such that (1) each tetrahedron of $\calX$ has $O(1)$ aspect ratio and 
  (2) $\calX$ has $O(r^{2/3})$ exterior simplexes.
  Let $F$ be the set of interior edges of $\calX$, and let $\Lup_{1,\calX}$ be the up-Laplacian of $\calX$ and $\MM \defeq \Lup_{1,\calX}[F, F]$.
  Then, there is a permutation matrix $\PP$ and a lower triangular matrix $\LL$ with $O(r^{4/3})$ nonzeros such that 
  \begin{align}
      \MM = \PP \LL \LL^\top \PP^\top.
      \label{eqn:cholesky}
  \end{align}
  In addition, we can find such $\PP$ and $\LL$ in time $O(r^2)$. Given the above factorization, for any $\bb \in \im(\MM)$, we can compute an $\xx$ such that $\MM \xx = \bb$ in $O(r^{4/3})$ time.
  \label{lem:solve_interior_one_piece}
\end{lemma}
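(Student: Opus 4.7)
\medskip

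\noindent\textbf{Proof proposal.} The plan is to perform a Cholesky factorization of $\MM$ using a nested dissection ordering derived from the Miller--Thurston separators for well-shaped 3-complexes, transferred from vertex separators on $\calX$ to edge separators on the ``triangle-adjacency'' structure that controls the sparsity pattern of $\MM$.

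First, I would set up the relevant combinatorial object. The rows and columns of $\MM$ are indexed by the interior edges $F$ of $\calX$, and from the rank-one expansion of $\Lup_{1,\calX}$ in Equation~\eqref{eq:L_as_a_sum}, two interior edges $e_1,e_2$ satisfy $\MM[e_1,e_2]\neq 0$ iff they are $\triDisjoint$-violating, i.e.\ some triangle of $\calX$ contains both. Call this the triangle-adjacency graph $G_F$ on vertex set $F$. By Lemma~\ref{lem:max_deg}, each vertex of $\calX$ lies in $O(1)$ tetrahedra and hence $O(1)$ edges and triangles, so $G_F$ has $O(r)$ vertices and $O(r)$ edges, and bounded degree.

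Second, I would invoke the Miller--Thurston separator theorem on $\calX$ (or on any subcomplex obtained during recursion): there is a vertex separator $S$ in the $1$-skeleton of size $O(r^{2/3})$ whose removal partitions the remaining vertices into $V_1\cup V_2$ with no $1$-skeleton edge between them, and with $|V_1|,|V_2|\le \tfrac{2}{3}|V|$. I would convert $S$ into an edge separator $E_S$ for $G_F$ by letting $E_S$ be the set of interior edges incident to a vertex of $S$; by Lemma~\ref{lem:max_deg}, $|E_S|=O(|S|)=O(r^{2/3})$. The key verification: any triangle $\sigma$ of $\calX$ has its three vertices in $V_1\cup S$ or in $V_2\cup S$ (otherwise $\sigma$ would carry a 1-skeleton edge crossing $S$), so the three edges of $\sigma$ lie entirely in $E_1\cup E_S$ or entirely in $E_2\cup E_S$, where $E_i$ denotes interior edges with both endpoints in $V_i$. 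Hence removing $E_S$ disconnects $G_F$ into triangle-disjoint pieces of size at most $\tfrac{2}{3}|F|+O(r^{2/3})$. Exterior simplices, which are only $O(r^{2/3})$ in total, do not interfere with this recursion.

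Third, I would apply this separator recursively to build a nested dissection ordering: eliminate the interior edges in the smaller subpieces first, then their parents' separators, and so on, placing $E_S$ last at each level. This is exactly the setting of George's / Lipton--Rose--Tarjan's analysis specialized to complexes with $O(r^{2/3})$ recursive separators: the resulting Cholesky factor $\LL$ in $\MM=\PP\LL\LL^\top\PP^\top$ has $O(r^{4/3})$ nonzeros, can be computed in $O(r^2)$ time, and supports back/forward substitution in $O(r^{4/3})$ time, as claimed. To handle the possible rank-deficiency of $\MM$, I would work with a pseudo-Cholesky (skipping zero pivots, which correspond to $\ker(\MM)$ directions) so that for any $\bb\in\im(\MM)$ the triangular solves return a valid $\xx$ with $\MM\xx=\bb$; this only affects bookkeeping, not the nonzero/runtime bounds.

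\emph{Main obstacle.} The bulk of the work is not in the algebra but in ensuring the Miller--Thurston separator theorem can be recursively invoked on the subpieces produced at each level (which are subcomplexes of a stable $3$-complex, not necessarily convex or boundary-free), and in carefully showing that the vertex-to-edge separator conversion yields separators for $G_F$ that recursively maintain both balance and the $O(\text{piece size}^{2/3})$ size bound. The $O(r^{2/3})$ bound on exterior simplices of $\calX$ is what lets the recursion start cleanly, and bounded degree from Lemma~\ref{lem:max_deg} is what makes the vertex-to-edge conversion lossless up to constants. Once these structural points are in place, the fill-in and runtime bounds follow from the standard nested dissection accounting.
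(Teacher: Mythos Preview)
Your proposal is correct and follows essentially the same route as the paper: convert Miller--Thurston vertex separators to edge ($\triangle$-adjacency) separators via Lemma~\ref{lem:max_deg}, feed the resulting $O(r^{2/3})$-separator family into the Lipton--Rose--Tarjan nested dissection bound (Theorem~\ref{thm:LRT79}), and skip zero pivots to handle the PSD case. The paper resolves the recursion obstacle you flag exactly as you would expect: it places all $O(r^{2/3})$ exterior edges into the root separator so the $\bar{v}$ term of Theorem~\ref{thm:mt90} never accumulates, and when recursing on a subpiece $\calY$ it pads to $\bar{\calY}$ (adding back all simplexes of $\calX$ touching a vertex of $\calY$) so that the separator theorem applies cleanly with the exterior-vertex contribution already absorbed by earlier separators.
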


The factorization in Equation \eqref{eqn:cholesky} is called \emph{Cholesky factorization}. We will utilize the geometric structures of $\calX$ to find a sparse Cholesky factorization efficiently.
Miller and Thurston \cite{MT90} studied vertex separators of the $1$-skeleton of a $3$-complex $\calX$ in which each tetrahedron has $O(1)$ aspect ratio.
A subset of vertices $S$ of a graph over $v$ vertices \emph{$\delta$-separates} if the remaining vertices can be partitioned into two sets $A, B$ such that there are no edges between $A$ and $B$, and $\abs{A}, \abs{B} \le \delta v$. 
The set $S$ is an \emph{$f(v)$-separator} if there exists a constant $\delta < 1$ such that $S$ $\delta$-separates and $\abs{S} \le f(v)$.
These separators can be incorporated with \nd \ to efficiently compute a sparse Cholesky factorization of a matrix in which the nonzero structure encodes the $1$-skeleton of $\calX$.

\begin{theorem}[Vertex separator for a $3$-complex, Theorem 1.5 of \cite{MT90} and \cite{MTTV98}]
Let $\calX$ be a 3-complex in $\mathbb{R}^3$ in which each tetrahedron has $O(1)$ aspect ratio, and suppose $\calX$ has $t$ tetrahedrons and $\bar{v}$ exterior vertices.
Then, the $1$-skeleton of $\calX$ has a $O(t^{2/3} + \bar{v})$-separator that $4/5$-separates $\calX$.
In addition, such a separator can be found in linear time. 
\label{thm:mt90}
\end{theorem}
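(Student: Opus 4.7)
The plan is to prove the vertex-separator bound by reducing to the geometric sphere-separator theorem for bounded-ply Euclidean ball systems in $\R^3$ due to Miller, Teng, Thurston, and Vavasis. I would proceed in four stages: construct an appropriate ball system on the vertices of $\calX$, apply the MTTV geometric separator to produce a cutting sphere, translate the sphere back into a combinatorial vertex separator of the $1$-skeleton, and account for exterior vertices.

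First I would associate to each interior vertex $v$ of $\calX$ a closed Euclidean ball $B_v \subset \R^3$ centered at $v$ whose radius $r_v$ equals half the length of the longest edge of $\calX$ incident to $v$. Because every tetrahedron of $\calX$ has $O(1)$ aspect ratio, all edges incident to a common vertex have lengths within a constant multiplicative factor of each other, and by Lemma~\ref{lem:max_deg} each vertex belongs to only $O(1)$ tetrahedrons. A direct volume-packing argument using these two facts shows that $\{B_v\}$ is a \emph{bounded-ply} system: every point of $\R^3$ is contained in $O(1)$ of the $B_v$. Also, by Lemma~\ref{lem:max_deg} the number of interior vertices is $O(t)$.

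Next I would invoke the geometric sphere-separator theorem for neighborhood systems: any collection of $n$ balls in $\R^3$ of ply $O(1)$ admits a sphere $\Sigma \subset \R^3$ with at most $\tfrac{4}{5} n$ balls strictly inside, at most $\tfrac{4}{5} n$ balls strictly outside, and only $O(n^{2/3})$ balls meeting $\Sigma$. Apply this to $\{B_v\}$ with $n = O(t)$, and let $S_0 = \{v : B_v \cap \Sigma \neq \emptyset\}$, $A = \{v : B_v \text{ strictly inside } \Sigma\}$, $B = \{v : B_v \text{ strictly outside } \Sigma\}$. To see that $S_0$ separates $A$ from $B$ in the $1$-skeleton restricted to interior vertices, suppose $(u,w)$ is an edge with $u \in A$ and $w \in B$. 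By the choice of radii, $r_u + r_w \ge \|u - w\|$, so $B_u \cup B_w$ covers the segment $[u,w]$; since this segment necessarily crosses $\Sigma$, one of $B_u, B_w$ intersects $\Sigma$, contradicting $u \in A, w \in B$. Now set $S := S_0 \cup \{\text{exterior vertices of } \calX\}$. Any edge whose endpoints are both interior is handled by $S_0$, and any edge incident to an exterior vertex is absorbed by the exterior buffer, so $S$ separates $A$ from $B$ in the full $1$-skeleton. The size bound $|S| = O(t^{2/3}) + \bar v$ and the $\tfrac{4}{5}$-balance are immediate. For the linear-time claim, an approximate centerpoint computed by random sampling plus standard derandomization techniques makes the entire construction run in $O(t + \bar v)$ time.

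The main obstacle is the MTTV sphere-separator theorem itself, which is the genuinely nontrivial geometric input. Its proof proceeds by lifting the ball centers via stereographic projection onto the unit sphere $S^3 \subset \R^4$, using Helly's theorem to produce a centerpoint in $S^3$, applying a Möbius transformation that relocates the centerpoint to the north pole, and finally showing that a uniformly random great $2$-sphere $\Sigma'$ through the origin (pulled back to a Euclidean sphere $\Sigma$ in $\R^3$ by inverse stereographic projection) meets only $O(n^{2/3})$ balls in expectation; this bound comes from a surface-area calculation that converts the ply hypothesis into the $n^{2/3}$ factor. Once that geometric core is granted, the remaining stages of the argument — the radius choice, the segment-covering inequality $r_u + r_w \ge \|u-w\|$, and the exterior-vertex buffering — are routine bookkeeping.
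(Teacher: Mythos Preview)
The paper does not prove this theorem; it is quoted as a black-box result from Miller--Thurston and Miller--Teng--Thurston--Vavasis (note the attribution ``Theorem 1.5 of \cite{MT90} and \cite{MTTV98}'' in the statement). Your sketch is a faithful reconstruction of the argument in those references: build a neighborhood system of balls on the vertices, use the $O(1)$-aspect-ratio hypothesis to bound the ply, invoke the sphere-separator theorem for $k$-ply systems in $\R^3$, and absorb the exterior vertices into the separator to handle boundary effects. There is nothing to compare against in the present paper.

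One point you gloss over that is worth flagging: the bounded-ply claim is not quite a ``direct volume-packing argument'' from Lemma~\ref{lem:max_deg} alone. You also need that if $p \in B_v$ then the longest edge at $v$ is within a constant factor of the shortest edge at any vertex $u$ whose ball also contains $p$; this local comparability of edge lengths across neighboring vertices is what the aspect-ratio bound actually buys, and is proved carefully in \cite{MT90}. With that in hand, the rest of your outline is correct, including the segment-covering inequality $r_u + r_w \ge \|u-w\|$ and the $4/5$ balance constant $(d+1)/(d+2)$ in dimension $d=3$.
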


We need a slightly modified version of Theorem \ref{thm:mt90} to apply to our matrix $\LupF$.

\begin{corollary}[Edge separator for a $3$-complex]
Let $\calX$ be a $3$-complex that satisfies the requirements in Theorem \ref{thm:mt90}. In addition, the $1$-skeleton of $\calX$ is connected.
Let $m$ be the number of edges in $\calX$.
Then, there exists an algorithm that 
removes $O(t^{2/3} + \bar{v})$ edges in linear time so that the remaining edges can be partitioned into two $\triangle$-disjoint sets each of size at most $c m$ for some constant $c < 1$.
\label{cor:mt90}
\end{corollary}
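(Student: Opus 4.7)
The plan is to derive an edge separator from the vertex separator of Theorem \ref{thm:mt90}. First, I would invoke a weighted variant of Theorem \ref{thm:mt90} (which the centerpoint-based construction of MTTV98 supports) with each vertex $v$ weighted by its degree. This produces a vertex separator $S$ of size $O(t^{2/3} + \bar{v})$ together with a partition $V = A \cup S \cup B$ with no edges between $A$ and $B$, such that $\sum_{v \in A} \deg(v), \sum_{v \in B} \deg(v) \leq \alpha \cdot 2m$ for some constant $\alpha < 1$ inherited from the balance parameter of the separator.

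Next, I define $E_A = \{e : e \subseteq A\}$, $E_B = \{e : e \subseteq B\}$, and let $E_S$ denote the edges with at least one endpoint in $S$. We remove $E_S$; since Lemma \ref{lem:max_deg} guarantees $\calX$ has constant maximum degree $\Delta$, we have $|E_S| \leq \Delta |S| = O(t^{2/3} + \bar{v})$, matching the target separator size. The surviving edges partition into $E_A \cup E_B$.

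To verify that $E_A$ and $E_B$ are $\triangle$-disjoint: any two edges of a common triangle share a vertex, but $A \cap B = \emptyset$, so no triangle can contain both an edge from $E_A$ and an edge from $E_B$. For the size bound, observe that $\sum_{v \in A} \deg(v) = 2|E_A| + |E(A,S)|$, which gives $|E_A| \le \tfrac{1}{2}\sum_{v \in A}\deg(v) \le \alpha m$; symmetrically $|E_B| \le \alpha m$. Setting $c = \alpha < 1$ yields the size requirement. The overall runtime is linear, since the separator algorithm runs in $O(|V| + |E|) = O(m)$ time (using that $|E| = O(|V|)$ by bounded degree) and forming $E_A, E_B, E_S$ from the vertex partition costs a single pass over the edges.

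The main obstacle, though standard, is confirming that the degree-weighted instance of Theorem \ref{thm:mt90} retains the $O(t^{2/3} + \bar{v})$ separator-size bound; this is immediate from the weighted centerpoint argument in MTTV98 because only the balance condition is weight-dependent while the separator size depends on the geometric structure of $\calX$. A naive alternative of invoking the unweighted vertex separator fails: vertex balance together with bounded max degree only gives $|E_A| \le (4\Delta/(5\delta))\,m$, and the ratio $\Delta/\delta$ can exceed $5/4$ for stable $3$-complexes, so the degree-weighted version is genuinely needed.
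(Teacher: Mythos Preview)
Your argument is correct, but it differs from the paper's. You invoke a degree-weighted variant of Theorem~\ref{thm:mt90} so that the vertex separator directly balances $\sum_{v\in A}\deg(v)$ and $\sum_{v\in B}\deg(v)$, and then $|E_A|\le\alpha m$ is immediate. The paper instead uses the \emph{unweighted} separator and exploits the connectedness hypothesis: since $|B|\le 4v/5$ and $|S|=o(v)$, one has $|A|\ge(1-o(1))v/5$; connectedness then forces at least $|A|$ edges to be incident to $A$, so after removing the $O(|S|)$ edges touching $S$ one still has $|E_A|\ge c'm$ for some constant $c'>0$ (using $v=\Theta(m)$), whence $|E_B|\le(1-c')m$ and symmetrically for $|E_A|$. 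Your route is cleaner and does not rely on connectedness for the balance step, at the cost of appealing to the weighted form of the separator theorem. One correction: your final remark that ``the unweighted vertex separator fails'' is not accurate---the paper makes precisely that approach work via the lower-bound-then-complement trick just described; what fails is only the crude bound $|E_A|\le\Delta|A|/2$ you wrote down, not the unweighted separator itself.
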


\begin{proof}
Let $v = O(r)$ be the number of vertices in $\calX$.
By Lemma \ref{lem:max_deg}, $v = \Theta(m)$.
Let $S$ be the set of $O(t^{2/3} + \bar{v})$ vertices that $4/5$-separates $\calX$ (by Theorem \ref{thm:mt90}), and let $A,B$ be two disjoint sets of the remaining vertices after removing $S$ such that $\abs{A}, \abs{B} \le 4v/5$.
Let $E$ be the set of edges in $\calX$ incident to some vertex in $S$.
Since each vertex has $O(1)$ degree by Lemma \ref{lem:max_deg}, $\abs{E} = O(\abs{S}) =  O(t^{2/3} + \bar{v})$.
Now, we remove edges in $E$ from the edges of $\calX$.
Let $E_A$ be the set of the remaining edges incident to a vertex in $A$, and let $E_B$ be the set of the remaining edges incident to a vertex in $B$.
Since $A,B$ are disjoint, $E_A$ and $E_B$ are $\triangle$-disjoint.
We then show that $\abs{E_A}, \abs{E_B} \le cm$ for some constant $c < 1$.
Let $E'_A$ be the set of edges in $\calX$ that are incident to some vertex in $A$.
Since the $1$-skeleton of $\calX$ is a connected graph, $\abs{E'_A} \ge \abs{A} \ge (1-o(1)) \frac{v}{5}$.
Besides, $\abs{E'_A} = \abs{E_A} + \abs{E''_A}$, where $E''_A$ contains edges with one endpoint in $A$ and the other in $S$.
Since $\abs{E''_A} = O(t^{2/3} + \bar{v})$, we know $\abs{E_A} \ge (1-o(1)) \frac{v}{5} > c' m$ for some constant $c'$. 
Since $E_A$ and $E_B$ are disjoint,
we know $\abs{E_B} \le m - \abs{E_A} \le (1-c') m$. 
By symmetry, we have $\abs{E_A} \le (1-c') m$.
\end{proof}

We need the following theorem about \nd \ from \cite{LRT79}.

\begin{theorem}[\nd, Theorem 6 of \cite{LRT79}]
Let $\calG$ be any class of graphs closed under subgraph on which an $v^{\alpha}$-separator exists for $\alpha > 1/2$. 
Let $\AA \in \R^{v \times v}$ be symmetric and positive definite (that is, all eigenvalues of $\AA$ are positive). Let $G_A$ be a graph over vertices $\{1,\ldots,v\}$ where vertices $i,j$ are connected if and only if $\AA[i,j] \neq 0$.
If $G_A \in \calG$, then we can find a permutation matrix $\PP$ and a lower triangular matrix $\LL$ with $O(v^{2\alpha})$ nonzeros in $O(v^{3\alpha})$ time such that $\AA = \PP \LL \LL^\top \PP^\top$.
\label{thm:LRT79}
\end{theorem}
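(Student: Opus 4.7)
\textbf{Proof plan for Theorem~\ref{thm:LRT79}.}

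The plan is to build a \emph{nested dissection ordering} of the vertices of $G_A$ by recursive separator decomposition and then run a sparse Cholesky factorization on $\AA$ in that order. Since $\AA$ is positive definite, no pivot vanishes, so the nonzero pattern of $\LL$ is determined entirely by $G_A$ together with the elimination order. Recursively I would invoke the $v^\alpha$-separator hypothesis to find $S \subseteq V(G_A)$ with $|S| = O(v^\alpha)$ whose removal partitions the remaining vertices into $A$ and $B$ of size at most $\delta v$ each, for some constant $\delta < 1$, with no $G_A$-edges between them. Because $\calG$ is closed under subgraph, I recurse on $G_A[A]$ and $G_A[B]$ and build the permutation $\PP$ that lists the recursive ordering of $A$, then of $B$, then the separator $S$ last.

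The key fill-in observation is that, since no $G_A$-edge connects $A$ and $B$, eliminating a vertex of $A$ creates fill only among its neighbors, all of which lie in $A \cup S$; symmetrically for $B$. Therefore no entry of $\LL$ is ever produced between an $A$-vertex and a $B$-vertex, and the only block allowed to fully densify is the trailing $O(v^\alpha)\times O(v^\alpha)$ block indexed by $S$. Letting $N(v)$ and $T(v)$ denote the nonzero count of $\LL$ and the running time on $v$ vertices, this gives
\[
    N(v) \le 2 N(\delta v) + O(v^{2\alpha}), \qquad T(v) \le 2 T(\delta v) + O(v^{3\alpha}),
\]
where the $O(v^{3\alpha})$ term is the cost of eliminating the separator against a dense $|S|\times|S|$ block. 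For $\alpha > 1/2$ the non-recursive term dominates at the root, so both recurrences solve to $N(v) = O(v^{2\alpha})$ and $T(v) = O(v^{3\alpha})$.

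The step I expect to be the main obstacle is a clean justification of the fill-in bound. The standard route is to argue by induction on the dissection depth that after eliminating $A$ the resulting Schur complement on $B \cup S$ has nonzero pattern contained in the union of the inductively bounded pattern on $B$ and a dense $|S|\times|S|$ block, and symmetrically after eliminating $B$; this is exactly the structure of the elimination tree induced by the dissection. Positive definiteness of $\AA$ then promotes this graph-theoretic pattern bound into an exact sparsity statement about $\LL$ because no symbolic cancellations occur during elimination. Finally, the cost of running the separator oracle absorbs into the same recurrence given its assumed linear-time computation, so finding $\PP$ and $\LL$ fits within $O(v^{3\alpha})$ time and applying $\LL,\LL^\top$ to solve a linear system costs $O(v^{2\alpha})$.
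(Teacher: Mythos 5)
The paper does not prove this statement; it is imported verbatim as Theorem 6 of \cite{LRT79}, so there is no ``paper's own proof'' to compare against. Your sketch is the standard nested-dissection outline, and the final recurrences you write down are indeed the ones that yield the theorem. But as written the argument has a real gap, and it is exactly the step that makes the Lipton--Rose--Tarjan proof nontrivial.

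The gap is in the claim that the non-recursive cost is $O(v^{2\alpha})$, justified by saying ``the only block allowed to fully densify is the trailing $O(v^\alpha)\times O(v^\alpha)$ block indexed by $S$.'' That statement is true, but it is not the only fill outside the two recursive subproblems. Columns of $\LL$ indexed by $A$ also acquire nonzeros in rows indexed by $S$ (and symmetrically for $B$): $\LL[s,a]\neq 0$ whenever $a$ reaches $s$ through already-eliminated vertices of $A$. This $\LL[S,A]$ block is not counted in $N(\delta v)$, because the subproblem on $A$ does not contain $S$; and it is not obviously bounded by $O(v^{2\alpha})$, because in the worst case it can have $\Omega(|A|\cdot|S|) = \Omega(v^{1+\alpha})$ nonzeros, which strictly exceeds $v^{2\alpha}$ for every $\alpha<1$. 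Your proposed inductive argument about the Schur complement on $B\cup S$ after eliminating $A$ addresses a different object: it bounds the fill pattern \emph{remaining} on $B\cup S$, not the fill \emph{already written} into the $A$-columns of $\LL$.

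The missing idea, which is the heart of Theorem~6 in \cite{LRT79}, is that the separator at each recursive level must be chosen to split not only the current piece but also its current boundary (the ancestor-separator vertices adjacent to the piece). With that boundary-balancing choice, one maintains the invariant that a piece of size $m$ has boundary of size $O(m^\alpha)$, so each separator vertex at a node of the dissection tree only fills with the $O(m^\alpha)$ boundary plus the $O(m^\alpha)$ local separator. Summing over the tree then gives a convergent geometric series (because $1-2\alpha<0$) and the claimed $O(v^{2\alpha})$ fill and $O(v^{3\alpha})$ work. Without this invariant, the clean two-term recurrence you wrote is not justified; with it, the rest of your plan (positive definiteness $\Rightarrow$ no pivot breakdown, fill pattern determined purely combinatorially, linear-time separator oracle absorbed into the recurrence) is correct and matches the cited source.
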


Combining Corollary \ref{cor:mt90} and Theorem \ref{thm:LRT79}, we prove Lemma \ref{lem:solve_interior_one_piece}.

\begin{proof}[Proof of Lemma \ref{lem:solve_interior_one_piece}]
Our approach for finding a sparse Cholesky factorization is the same as Section $4$ of \cite{MT90}. Specifically,
we find a family of separators for $\calX$ by recursively applying Corollary \ref{cor:mt90} to the remaining sets of edges $A, B$.
Since the separator size grows as a function of the exterior vertices, 
at the top level of the recursion, we include all the exterior edges $O(r^{2/3})$ in the root separator, which only increases the root separator by a constant factor.
In each remaining recursion step, suppose we want to separate $\calY$, a sub-complex of $\calX$; we let $\bar{\calY}$ be the complex consisting $\calY$ and all the simplexes in $\calX$ that contain a vertex in $\calY$.
Then we apply a slightly modified version of Corollary \ref{cor:mt90} (obtained by a slightly modified Theorem \ref{thm:mt90}) to $\bar{\calY}$ in which we only separate $\calY$. Here, the exterior-vertex term $O(\bar{v})$ in the size of the separator can be ignored since all the exterior edges of $\bar{\calY}$ have already been included in upper-level separators; we also use the fact that the number of exterior vertices and the number of exterior edges are equal up to a constant factor.
This separator family provides an elimination ordering for the edges of $\calK$, which is the permutation matrix $\PP$ in Equation \eqref{eqn:cholesky}, and the ordering uniquely determines the matrix $\LL$.
By Theorem \ref{thm:LRT79}, $\LL$ has $O(r^{4/3})$ nonzeros, and $\PP, \LL$ can be found in time $O(r^2)$.

One issue left is that $\MM$ in Lemma \ref{lem:solve_interior_one_piece} is positive semidefinite but \emph{not} positive definite. 
During the process of numeric factorization, the first row and column of some Schur complements are all-zero. 
We simply ignore these zeros and proceed (Ref: Chapter 4.2.8 of \cite{VG96}).
This produces a Cholesky factorization of $\PP^\top \LupF \PP = \LL \LL^\top$ such that only $k$ columns of $\LL$ are nonzero, where $k$ is the rank of $\LupF$.
We permute the rows and the columns of $\LL$ by multiplying permutation matrices $\PP_1, \PP_2$ so that 
\[
\PP_1 \LL \PP_2 = \begin{pmatrix}
\TT & {\bf 0}
\end{pmatrix} \defeq
\begin{pmatrix}
\HH_1 & {\bf 0} \\
\HH_2 & {\bf 0}
\end{pmatrix},
\]
where $\HH_1$ is a $k \times k$ non-singular lower-triangular matrix.
Then, solving $\MM \xx = \bb$ is equivalent to solving 
\[
\TT \TT^\top \zz = \PP_1 \PP^\top \bb \defeq \ff, 
~ \PP_1 \PP^\top \xx = \zz.
\]
We solve the first system by solving $\TT \yy = \ff$ and $\TT^\top \zz = \yy$.
We let $\yy$ satisfy $\HH_1 \yy = \ff[1:k]$. Since $\HH_1$ has full rank and $O(r^{4/3})$ nonzeros, such a $\yy$ exists and can be found in $O(r^{4/3})$ time.
Since $\yy \in \im(\TT)$, we know $\TT \yy = \ff$. 
Then we let $\zz[k+1:v] = 0$, where $v$ is the number of vertices in  $\calX$, and we let $\HH_1^\top \zz[1:k] = \yy$. Again, such $\zz$ exists and can be found in $O(r^{4/3})$ time. Finally, we compute $\xx =  \PP \PP_1^\top \zz$ in linear time.
\end{proof}

\begin{proof}[Proof of Lemma \ref{lem:solve_interior}]
We apply Lemma \ref{lem:solve_interior_one_piece} to each diagonal block matrix of $\Lup_1[F,F]$ to compute a sparse Cholesky factorization as in Equation \eqref{eqn:cholesky} in time 
\[
O \left( \frac{n}{r} \cdot r^2 \right) = O(n r).
\]
Given this Cholesky factorization, we can solve a system in $\Lup_1[F,F]$ in time 
\[
O \left( \frac{n}{r} \cdot r^{4/3} \right) = O\left(n r^{1/3} \right).
\]
\end{proof}
%!TEX root = main.tex

% \def\LupOneT{\widehat{\Lup_{1,\calT}}}
\def\LupOneT{\MM}

\subsection{Solver for the Schur Complement}
\label{sect:schur_solve}

In this section, we establish a fast approximate solver for $\schur[\Lup_1]_C$ and prove Lemma \ref{lem:solve_schur}.
Recall that $C$ contains all the edges in $\calT$ (an $r$-hollowing of $\calK$), and $\schur[\Lup_1]_C$ is the Schur complement of the up-Laplacian operator of $\calK$ onto $C$. 
The idea is to run the \pcg \ (PCG) for systems in $\schur[\Lup_1]_C$ with preconditioner $\Lup_{1,\calT} = \partial_{2,\calT}\partial_{2,\calT}^\top$, which is the first up-Laplacian operator of $\calT$.
By Theorem \ref{thm:PCG}, the number of PCG iterations is $O(\sqrt{\kappa} \log(\kappa / \eps))$ where $\kappa = \kappa(\schur[\Lup_1]_C, \Lup_{1,\calT})$ is the relative condition number and $\eps$ is the error parameter; 
in each PCG iteration, we need to solve a system in $\Lup_{1,\calT}$, multiply $\schur[\Lup_1]_C$ with $O(1)$ vectors, and implement $O(1)$ vector operations.
In Section \ref{sect:condNum}, we upper bound the relative condition number.
In Section \ref{sec:proof_pcg_lemma}, we prove Lemma \ref{lem:solve_schur} about the solver for Schur complement.

We will need the following observation.
\begin{claim}
Let $\calX$ be a simplicial complex. 
Changing the orientations of the triangles in $\calX$ does not change its first up-Laplacian operator.
\label{clm:orient_triangle}
\end{claim}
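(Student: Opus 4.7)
The plan is to express the effect of reorienting triangles as right-multiplication of the boundary operator $\partial_2$ by a signed diagonal matrix, and then use the fact that such a diagonal matrix commutes with the diagonal weight matrix $\WW_2$ and squares to the identity.

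More concretely, let $\calX$ have triangles $\sigma_1,\dots,\sigma_m$, and let $S \subseteq \{1,\dots,m\}$ be the set of indices whose orientation we wish to flip. Reversing the orientation of an oriented $2$-simplex negates its boundary (since swapping two vertices in the ordered tuple $[v_0,v_1,v_2]$ changes the sign of $\partial_2(\sigma)$). Hence if $\partial_2$ denotes the original boundary operator and $\tilde{\partial}_2$ the one after reorientation, we can write
\[
\tilde{\partial}_2 = \partial_2 \DD,
\]
where $\DD$ is the diagonal matrix with $\DD[\sigma_i,\sigma_i] = -1$ for $i \in S$ and $+1$ otherwise. Note that reorienting triangles does not change the set of $2$-simplexes, so the weight matrix $\WW_2$ and the triangle index set are unchanged.

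The key step is then a direct calculation: the up-Laplacian after reorientation is
\[
\tilde{\Lup}_1 = \tilde{\partial}_2 \WW_2 \tilde{\partial}_2^\top = \partial_2 \DD \WW_2 \DD^\top \partial_2^\top.
\]
Since $\DD$ and $\WW_2$ are both diagonal, they commute, so $\DD \WW_2 \DD^\top = \WW_2 \DD \DD = \WW_2 \DD^2 = \WW_2$, where the last equality uses $\DD^2 = \II$ because every diagonal entry of $\DD$ is $\pm 1$. Therefore $\tilde{\Lup}_1 = \partial_2 \WW_2 \partial_2^\top = \Lup_1$, as desired.

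No step here is a genuine obstacle; the only thing to be careful about is confirming that flipping orientation really corresponds to negating the corresponding column of $\partial_2$ (which follows from the antisymmetry of $\partial_i$ under vertex transpositions in $[v_0,\dots,v_i]$) and that the weight $\WW_2[\sigma,\sigma]$ is attached to the unoriented $2$-simplex, so it does not change when we reorient. Both are standard from the definitions given in Section \ref{sect:preliminaries}.
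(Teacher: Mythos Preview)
Your proof is correct and follows essentially the same approach as the paper: both represent the reorientation as right-multiplication of $\partial_2$ by a $\pm 1$ diagonal matrix and then observe that this diagonal matrix cancels with itself around $\WW_2$, yielding the original $\Lup_1$. Your version is slightly more explicit about why $\DD \WW_2 \DD^\top = \WW_2$, but the argument is the same.
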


\begin{proof}
Let $\partial_{2,\calX}$ be the second boundary operator of $\calX$, $\WW_{2,\calX}$ the diagonal matrix for the triangle weights, and $\Lup_{1,\calX}$ the first up-Laplacian.
Changing the orientations of the triangles in $\calX$  corresponds to multiplying a $\pm 1$ diagonal matrix $\XX$ to the right of $\partial_{2,\calX}$.
Observe
\[
\partial_{2,\calX} (\XX \WW_{2,\calX} \XX) \partial_{2,\calX}^\top = \partial_{2,\calX} \WW_{2,\calX} \partial_{2,\calX}^\top = \Lup_{1,\calX}.
\]
Thus the statement holds.
\end{proof}

\subsubsection{Preconditioning the Schur Complement}
\label{sect:condNum}

We will upper bound the relative condition number of $\schur[\Lup_1]_C$ and $\Lup_{1,\calT}$ and prove will Lemma \ref{lem:eigS}.

We decompose the $r$-hollowing $\calT$ into two parts. Recall that the boundary of each region triangulates a spherical shell in $\R^3$. Let $\calB_1$ be $2$-complex of the union of all the inner spheres of the boundaries, and let $\calB_2$ be $2$-complex of the rest of the boundaries. Let $C_1$ be the set of edges of $\calB_1$.
Let $\widehat{\Lup_1}$ be the first up-Laplacian of the union of $\calB_1$ and all the interior simplexes in all the regions. Then, 
\begin{align*}
\Lup_{1,\calT} = \Lup_{1,\calB_1} + \Lup_{1,\calB_2} 
\text{ and }
\schur[\Lup_1]_C = \schur[\widehat{\Lup_1}]_{C_1} + \Lup_{1,\calB_2}.
\end{align*}

\begin{claim}
If $\Lup_{1,\calB_1} \pleq \schur[\widehat{\Lup_1}]_{C_1} \pleq \alpha \Lup_{1,\calB_1}$ where $\alpha > 1$, then 
$\Lup_{1,\calT} \pleq \schur[\Lup_1]_{C} \pleq \alpha \Lup_{1,\calT}$.
\label{clm:schur_decompose_approx}
\end{claim}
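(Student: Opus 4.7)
The plan is to derive the conclusion directly from the two decomposition identities stated immediately before the claim, namely $\Lup_{1,\calT} = \Lup_{1,\calB_1} + \Lup_{1,\calB_2}$ and $\schur[\Lup_1]_{C} = \schur[\widehat{\Lup_1}]_{C_1} + \Lup_{1,\calB_2}$, by manipulating the Loewner order. First, I would clarify the index convention: all of the matrices above are viewed as symmetric operators on $\R^{|C|}$, where $\schur[\widehat{\Lup_1}]_{C_1}$ and $\Lup_{1,\calB_1}$ are zero-padded on the coordinates of $C \setminus C_1$. With this convention, the given hypothesis $\Lup_{1,\calB_1} \pleq \schur[\widehat{\Lup_1}]_{C_1} \pleq \alpha \Lup_{1,\calB_1}$ is an inequality between $|C| \times |C|$ PSD matrices.

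Next, since the Loewner order is preserved under addition of a common matrix, adding the PSD matrix $\Lup_{1,\calB_2}$ to each of the three expressions yields
\begin{equation*}
\Lup_{1,\calB_1} + \Lup_{1,\calB_2} \pleq \schur[\widehat{\Lup_1}]_{C_1} + \Lup_{1,\calB_2} \pleq \alpha \Lup_{1,\calB_1} + \Lup_{1,\calB_2}.
\end{equation*}
Substituting the two decomposition identities gives $\Lup_{1,\calT} \pleq \schur[\Lup_1]_{C}$ on the left, which is the lower bound we want.

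For the upper bound, I would use $\alpha > 1$ together with the fact that $\Lup_{1,\calB_2}$ is PSD. Concretely, $\Lup_{1,\calB_2} \pleq \alpha \Lup_{1,\calB_2}$, so
\begin{equation*}
\alpha \Lup_{1,\calB_1} + \Lup_{1,\calB_2} \pleq \alpha \Lup_{1,\calB_1} + \alpha \Lup_{1,\calB_2} = \alpha \bigl(\Lup_{1,\calB_1} + \Lup_{1,\calB_2}\bigr) = \alpha \Lup_{1,\calT}.
\end{equation*}
Chaining this with the previous inequality yields $\schur[\Lup_1]_{C} \pleq \alpha \Lup_{1,\calT}$, completing the proof.

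The argument is essentially a mechanical manipulation of the Loewner order, and there is no substantial obstacle. The only subtle point is the bookkeeping with the index sets — making sure $\Lup_{1,\calB_1}$, $\Lup_{1,\calB_2}$, and $\schur[\widehat{\Lup_1}]_{C_1}$ are all treated as operators on the same ambient coordinate space $\R^{|C|}$ so that the PSD inequalities and the sum identities make sense simultaneously. Once that convention is made explicit, the two-line computation above finishes the claim.
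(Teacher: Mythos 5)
Your proof is correct and follows the same route as the paper: add $\Lup_{1,\calB_2}$ to the hypothesized Loewner chain, substitute the two decomposition identities, and use $\alpha>1$ with the PSD-ness of $\Lup_{1,\calB_2}$ to absorb it into $\alpha\Lup_{1,\calT}$ for the upper bound. The extra remarks on zero-padding conventions are sensible bookkeeping but do not change the substance.
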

\begin{proof}
\begin{align*}
& \schur[\Lup_1]_{C} 
\pgeq \Lup_{1,\calB_1} + \Lup_{1,\calB_2} = \Lup_{1,\calT}, \\
& \schur[\Lup_1]_{C} \pleq \alpha \Lup_{1,\calB_1} + \Lup_{1,\calB_2} \pleq \alpha \Lup_{1,\calT}.
\end{align*}
\end{proof}

To prove Lemma \ref{lem:eigS}, it suffices to show the following lemma.

\begin{lemma}
$\Lup_{1,\calB_1} \pleq \schur[\widehat{\Lup_1}]_{C_1} \pleq O(r) \Lup_{1,\calB_1}$.
\label{lem:eigS_inner_spheres}
\end{lemma}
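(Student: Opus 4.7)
The plan is to handle the two PSD inequalities separately, and first to exploit the block structure of the $r$-hollowing so as to work inside a single region. Because distinct regions share no interior simplexes by Claim~\ref{clm:hollow_interior_disjoint}, and the inner spheres of different regions are geometrically disjoint, the matrices $\widehat{\Lup_1}$, $\Lup_{1,\calB_1}$, and consequently $\schur[\widehat{\Lup_1}]_{C_1}$ all decompose as direct sums over regions. It therefore suffices to prove, for each region $R$ with inner sphere $\calB_{1,R}$, interior edges $F_R$, and inner-sphere edges $C_{1,R}$,
\[
\Lup_{1,\calB_{1,R}} \pleq \schur[\widehat{\Lup_1}^R]_{C_{1,R}} \pleq O(r)\, \Lup_{1,\calB_{1,R}},
\]
where $\widehat{\Lup_1}^R$ is the up-Laplacian of $\calB_{1,R}$ together with the interior of $R$.

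For the lower bound I would apply Fact~\ref{fact:schur_project_kernel} with $\BB = \partial_{2,\widehat{\calK}_R} \WW_{2,\widehat{\calK}_R}^{1/2}$ and rows partitioned by $F_R$ and $C_{1,R}$. The key observation is that every triangle of $\calB_{1,R}$ has all three of its edges in $C_{1,R}$, so the columns of $\BB_F$ indexed by such triangles vanish. Hence the coordinate subspace $U$ spanned by $\calB_{1,R}$-triangle indicators lies in $\ker(\BB_F)$, giving $\PPi_{\ker(\BB_F)} \pgeq \PPi_U$. Conjugating by $\BB_{C_1}$ and invoking Fact~\ref{fact:psd} yields $\schur[\widehat{\Lup_1}^R]_{C_{1,R}} \pgeq \BB_{C_1} \PPi_U \BB_{C_1}^\top = \Lup_{1,\calB_{1,R}}$.

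The upper bound is the main obstacle and proceeds in two stages. First, the energy-minimization form of the Schur complement (Fact~\ref{fact:identity}) applied to the trivial extension $\yy = \mathbf{0}$ gives $\schur[\widehat{\Lup_1}^R]_{C_{1,R}} \pleq \widehat{\Lup_1}^R[C_{1,R}, C_{1,R}]$. Second, because every simplex is stable and each vertex lies in $O(1)$ tetrahedra by Lemma~\ref{lem:max_deg}, the matrix $\widehat{\Lup_1}^R$ has $O(1)$ nonzeros per row each of bounded magnitude, giving $\widehat{\Lup_1}^R[C_{1,R}, C_{1,R}] \pleq c_1 \II_{C_{1,R}}$ for an absolute constant $c_1$. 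A short kernel check, using the same formula $\BB_{C_1} \PPi_{\ker(\BB_F)} \BB_{C_1}^\top$ for the Schur complement together with $H_1(\widehat{\calK}_R) = 0$ (since $\widehat{\calK}_R$ is a topological ball), shows that $\im(\schur[\widehat{\Lup_1}^R]_{C_{1,R}}) = \im(\Lup_{1,\calB_{1,R}})$. So we may replace $\II_{C_{1,R}}$ by the orthogonal projector $\PPi$ onto this common image and conclude $\schur[\widehat{\Lup_1}^R]_{C_{1,R}} \pleq c_1 \PPi$.

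It remains to show the spectral-gap bound $\PPi \pleq O(r)\, \Lup_{1,\calB_{1,R}}$, equivalently $\lambda_{\min}(\Lup_{1,\calB_{1,R}}) \ge \Omega(1/r)$ on its image; this is the delicate step and is where the $r$-hollowing geometry really enters. Since the nonzero spectra of $\partial_2 \partial_2^\top$ and $\partial_2^\top \partial_2$ coincide, it suffices to lower-bound $\lambda_{\min}(\Ldown_{2,\calB_{1,R}})$. Because $\calB_{1,R}$ triangulates a $2$-sphere, every edge belongs to exactly two triangles, and after reorienting triangles consistently (a sign conjugation that preserves the spectrum, analogous to Claim~\ref{clm:orient_triangle}) the operator $\Ldown_{2,\calB_{1,R}}$ equals, up to $\Theta(1)$ weight factors, the ordinary graph Laplacian of the \emph{triangle-adjacency (dual) graph}: a connected $3$-regular graph on $O(r^{2/3})$ vertices of combinatorial diameter $O(r^{1/3})$ (inherited from the triangle-diameter hypothesis of Definition~\ref{def:hollow} combined with the stable-simplex assumption). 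Standard Laplacian spectral-gap bounds for connected graphs, of the form $\lambda_2 \ge c/(n \cdot \mathrm{diam})$, then yield $\lambda_{\min}(\Ldown_{2,\calB_{1,R}}) \ge \Omega(1/(r^{2/3} \cdot r^{1/3})) = \Omega(1/r)$, which is exactly the bound needed.
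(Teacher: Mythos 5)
Your overall route matches the paper's: decompose over regions, get the lower bound from the Schur-complement-as-projection identity, get the upper bound by sandwiching the Schur complement between a constant multiple of the identity and $O(r)\cdot\Lup_{1,\calB}$, with the $O(r)$ factor coming from a diameter-based lower bound on $\lambda_{\min}$ of the dual-graph Laplacian. Your lower-bound argument via $\PPi_{\ker(\BB_F)} \pgeq \PPi_U$ is a nice, clean alternative to the paper's direct use of the energy-minimization characterization (Fact~\ref{fact:schur}), and both give the same thing.

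However, there is a genuine gap in your ``short kernel check'' for the image equality $\im\bigl(\schur[\widehat{\Lup_1}^R]_{C_{1,R}}\bigr) = \im(\Lup_{1,\calB_{1,R}})$: you justify it by asserting $H_1(\widehat{\calK}_R) = 0$ ``since $\widehat{\calK}_R$ is a topological ball.'' But $\widehat{\calK}_R$ is the union of the region's interior simplexes with its inner boundary sphere, and the whole point of the $r$-hollowing construction is that regions may contain holes of $\calK$ in their interiors (this is explicit in Figure~\ref{fig:r-hollowing-def}, where Region~(1) contains a hole). Such a region is not a ball, and its first homology can be nonzero --- e.g., if the enclosed void has nontrivial topology. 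The paper's Claim~\ref{clm:image} therefore does not assume this; instead it \emph{extends} $\calX$ to a larger complex $\calX'$ that triangulates a $3$-ball with $\calB$ as its boundary, works inside $\calX'$ where $H_1 = 0$, and then restricts back to $\calX$ using the fact that the added simplexes are disjoint from $\calX$'s triangles. You would need to import this extension argument (or some equivalent) to close the gap. A minor citation error: for the inequality $\schur \pleq \widehat{\Lup_1}^R[C_{1,R},C_{1,R}]$ you cite Fact~\ref{fact:identity}, but the energy-minimization characterization is Fact~\ref{fact:schur}.
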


The first inequality of Lemma \ref{lem:eigS_inner_spheres} follows immediately from the following well-known fact about Schur complements. One can find its proof from Fact 4.6 in \cite{KZ20}.

\begin{fact}
Let $\AA$ be a symmetric and PSD matrix:
\[
\AA = \begin{pmatrix}
\AA[F,F] & \AA[F,C] \\
\AA[C,F] & \AA[C,C]
\end{pmatrix}.
\]
Let $\schur[\AA]_C = \AA[C,C] - \AA[C,F] \AA[F,F]^{\dagger} \AA[F,C]$ be the Schur complement of $\AA$ onto $C$.
Then, for any $\xx \in \mathbb{R}^{\abs{C}}$, 
    \[
    \min_{\yy \in \mathbb{R}^{\abs{F}}}  \begin{pmatrix}
        \yy^\top & \xx^\top 
    \end{pmatrix} \AA \begin{pmatrix}
        \yy \\
        \xx
    \end{pmatrix} = \xx^\top \schur[\AA]_C \xx.
    \]
    \label{fact:schur}
As a corollary, $\schur[\AA]_C$ is symmetric and PSD.
\end{fact}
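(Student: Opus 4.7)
The plan is to prove the identity by completing the square in the quadratic form over $\yy$, with the only subtle point being that $\AA[F,F]$ may be singular and so we must invoke the pseudo-inverse carefully. First, I would expand the quadratic form as
\[
\begin{pmatrix} \yy^\top & \xx^\top \end{pmatrix} \AA \begin{pmatrix} \yy \\ \xx \end{pmatrix} = \yy^\top \AA[F,F] \yy + 2 \yy^\top \AA[F,C] \xx + \xx^\top \AA[C,C] \xx,
\]
which for fixed $\xx$ is a convex quadratic in $\yy$ because $\AA[F,F]$, being a principal submatrix of a PSD matrix, is itself PSD.

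The next step is the candidate minimizer $\yy^\star = -\AA[F,F]^{\dagger} \AA[F,C] \xx$. To verify that $\yy^\star$ actually achieves the infimum (rather than the infimum being $-\infty$ or unattained), the key lemma is the inclusion $\im(\AA[F,C]) \subseteq \im(\AA[F,F])$. I would derive this from PSDness of $\AA$: if $\yy \in \ker(\AA[F,F])$ then $\begin{pmatrix} \yy \\ \veczero \end{pmatrix}^\top \AA \begin{pmatrix} \yy \\ \veczero \end{pmatrix} = 0$, which by the standard PSD-Cauchy-Schwarz argument forces $\AA \begin{pmatrix} \yy \\ \veczero \end{pmatrix} = \veczero$, and in particular $\AA[C,F]\yy = \veczero$. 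Hence $\ker(\AA[F,F]) \subseteq \ker(\AA[F,C]^\top)$, which by taking orthogonal complements yields the claimed inclusion. Consequently $\AA[F,F]\AA[F,F]^{\dagger} \AA[F,C] = \AA[F,C]$, so that $\AA[F,F]\yy^\star = -\AA[F,C]\xx$, i.e.\ $\yy^\star$ is a critical point.

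Substituting $\yy^\star$ back into the quadratic form and using the pseudo-inverse identity $\AA[F,F]^{\dagger} \AA[F,F] \AA[F,F]^{\dagger} = \AA[F,F]^{\dagger}$, the cross term and quadratic term in $\yy^\star$ combine to give $-\xx^\top \AA[C,F] \AA[F,F]^{\dagger} \AA[F,C] \xx$, which together with $\xx^\top \AA[C,C] \xx$ is exactly $\xx^\top \schur[\AA]_C \xx$. Convexity guarantees that this critical value is in fact the minimum. The main (though mild) obstacle is precisely the range-inclusion step above, since without it one cannot replace $\AA[F,F]^{-1}$ by $\AA[F,F]^{\dagger}$ and the candidate minimizer may fail to be a true minimizer.

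For the corollary, symmetry of $\schur[\AA]_C$ is immediate from $\AA[C,F] = \AA[F,C]^\top$ together with symmetry of $\AA[F,F]^{\dagger}$ (which inherits symmetry from $\AA[F,F]$ via the defining properties of the pseudo-inverse). Positive semidefiniteness then follows from the variational characterization just proved: for any $\xx$, $\xx^\top \schur[\AA]_C \xx = \min_{\yy} \begin{pmatrix} \yy^\top & \xx^\top \end{pmatrix} \AA \begin{pmatrix} \yy \\ \xx \end{pmatrix} \geq 0$ since $\AA$ itself is PSD.
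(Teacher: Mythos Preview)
Your proof is correct and self-contained; the paper itself does not give a proof of this fact but simply cites an external reference (Fact~4.6 in \cite{KZ20}). Your completing-the-square argument, together with the range inclusion $\im(\AA[F,C]) \subseteq \im(\AA[F,F])$ derived from PSDness to handle the pseudo-inverse, is the standard route and fills in what the paper omits.
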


In the rest of the section, we prove the second inequality in Lemma \ref{lem:eigS_inner_spheres}.
Since both $\widehat{\Lup_1}$ and $\Lup_{1,\calB_1}$ can be written as a block diagonal matrix where each block corresponds to a region w.r.t. $\calT$ (after proper row and column permutation), it suffices to show the inequality in Lemma \ref{lem:eigS_inner_spheres} holds for each region, restated in the following lemma.

\begin{lemma}
Consider an $r$-hollowing region. Let $\calB$ be the $2$-complex of the inner sphere of the boundary, and let $B$ be the set of edges in $\calB$.
Let $\calX$ be $2$-complex of the union of $\calB$ and the interior simplexes.
Then, $\schur[\Lup_{1,\calX}]_{B} \pleq O\left(r \right) \Lup_{1,\calB}$.
\label{lem:eigS_single}
\end{lemma}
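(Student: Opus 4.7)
The plan is to combine the min characterization of the Schur complement (Fact \ref{fact:schur}) with a spectral gap for $\Lup_{1,\calB}$. Setting $\yy = \veczero$ in Fact \ref{fact:schur} and splitting the sum over triangles in $\Lup_{1,\calX}[B,B]$ according to whether the triangle lies in $\calB$ or is interior, I obtain
\[
\schur[\Lup_{1,\calX}]_B \pleq \Lup_{1,\calX}[B,B] = \Lup_{1,\calB} + \MM,
\]
where $\MM \defeq \sum_{\sigma \in \calX \setminus \calB,\ \sigma \cap B \neq \emptyset} \WW_2[\sigma,\sigma]\,\partial_2[B,\sigma]\,\partial_2[B,\sigma]^\top$ collects the rank-one contributions from interior triangles that share at least one edge with the inner sphere. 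Bounding each rank-one term by $(\partial_2[B,\sigma]^\top \xx)^2 \le 3 \sum_{e \in \sigma \cap B} \xx[e]^2$, invoking the $\Theta(1)$ triangle weights, and using Lemma \ref{lem:max_deg} (each edge lies in $O(1)$ triangles of $\calX$), I would conclude $\MM \pleq C\,\II$ for some absolute constant $C$.

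Next I would verify the kernel inclusion $\ker(\Lup_{1,\calB}) \subseteq \ker(\schur[\Lup_{1,\calX}]_B)$ needed for the PSD inequality to hold on all of $\R^{\abs{B}}$. Since $\calB$ triangulates a $2$-sphere and has vanishing first Betti number, Hodge decomposition gives $\ker(\Lup_{1,\calB}) = \im(\partial_{1,\calB}^\top)$. For any $\xx = \partial_{1,\calB}^\top \zz$ in this kernel, I would extend $\zz$ by zero on interior vertices to $\bar\zz$ and take $\yy = (\partial_{1,\calX}^\top \bar\zz)|_F$. Since every edge of $B$ has both endpoints in $\calB$, the restriction to $B$ equals $\xx$, and $(\yy, \xx) = \partial_{1,\calX}^\top \bar\zz \in \ker(\partial_{2,\calX}^\top)$ because $\partial_{2,\calX}^\top \partial_{1,\calX}^\top = \matzero$. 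Hence the minimum in Fact \ref{fact:schur} is zero, placing $\xx$ in the kernel of the Schur complement.

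The remaining task, and the main obstacle, is that on the complementary subspace $\im(\Lup_{1,\calB})$ we must absorb $C\,\II$ into $O(r)\,\Lup_{1,\calB}$, i.e., establish the spectral gap $\lambda_{\min}^{+}(\Lup_{1,\calB}) \ge \Omega(1/r)$. The complex $\calB$ is a well-shaped triangulation of $S^2$ with $\Theta(r^{2/3})$ stable simplexes and triangle diameter $O(r^{1/3})$. Since $\Lup_{1,\calB}$ and $\Ldown_{2,\calB} = \partial_{2,\calB}^\top \WW_{2,\calB} \partial_{2,\calB}$ share the same nonzero spectrum, the gap is equivalent to a discrete Poincar\'e/filling inequality: every $\phi \perp \ker(\partial_{2,\calB})$ satisfies $\|\partial_{2,\calB}\phi\|_2^2 \le O(r)\, \|\partial_{2,\calB}^\top \partial_{2,\calB}\phi\|_2^2$. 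I would prove this by a constructive filling that uses the $O(r^{1/3})$ diameter of the dual triangle-adjacency graph together with the bounded simplex degree from Lemma \ref{lem:max_deg} to witness any cycle as the boundary of a $2$-chain of controlled norm. The $O(r)$ factor in the statement is deliberately loose (a tighter $O(r^{2/3})$ is attainable), but it is exactly what propagates through the PCG preconditioning estimate in Section \ref{sect:schur_solve}.
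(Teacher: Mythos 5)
Your overall decomposition is sound and mirrors the paper's three-part reduction: an $O(1)$ upper bound on the Schur complement over its image, the kernel inclusion $\ker(\Lup_{1,\calB}) \subseteq \ker(\schur[\Lup_{1,\calX}]_B)$, and the spectral gap $\lambda_{\min}(\Lup_{1,\calB}) = \Omega(1/r)$. Your upper bound via $\schur[\Lup_{1,\calX}]_B \pleq \Lup_{1,\calX}[B,B] = \Lup_{1,\calB} + \MM$ with $\MM \pleq C\,\II$ is correct and is essentially the same estimate the paper obtains by bounding $\lambda_{\max}(\schur[\Lup_{1,\calX}]_B) \le \lambda_{\max}(\Lup_{1,\calX}) = O(1)$ in Claim \ref{clm:eigmax}. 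Your kernel argument---extending $\zz$ by zero to the interior vertices and using the chain-complex identity $\partial_{2,\calX}^\top \partial_{1,\calX}^\top = \matzero$---is in fact more elementary than the paper's Claim \ref{clm:image}, which embeds $\calX$ into a larger complex $\calX'$ triangulating a $3$-ball and invokes the vanishing first Betti numbers of $\calX'$ and $\calB$.

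The genuine gap is the spectral estimate $\lambda_{\min}(\Lup_{1,\calB}) = \Omega(1/r)$, which you correctly flag as ``the main obstacle'' but never prove. Your sketch of a ``constructive filling'' is not a proof, and carrying it out would require a norm-tracking argument that is not supplied. The paper's Claim \ref{clm:eigmin} instead rests on a sharp structural observation that is absent from your proposal: after reorienting all triangles of the closed surface $\calB$ consistently (Claim \ref{clm:orient_triangle} permits this without changing $\Lup_{1,\calB}$), every column of $\partial_{2,\calB}^\top$ has exactly one $+1$ and one $-1$, because each edge of a triangulated two-sphere lies in exactly two triangles with opposite induced orientations. Thus $\partial_{2,\calB}^\top \partial_{2,\calB}$ is \emph{exactly} the graph Laplacian of the dual graph of $\calB$ (triangles as vertices, shared edge as adjacency), which has $O(r^{2/3})$ vertices and diameter $O(r^{1/3})$ by the $r$-hollowing definition. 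Mohar's bound, Theorem \ref{thm:diameter}, then gives $\lambda_{\min}(\partial_{2,\calB}^\top\partial_{2,\calB}) \ge 4/(nD) = \Omega(r^{-1})$ directly; combining this with $\Lup_{1,\calB} \pgeq w_{\min}\,\partial_{2,\calB}\partial_{2,\calB}^\top$ and the common nonzero spectrum of $\partial_{2,\calB}\partial_{2,\calB}^\top$ and $\partial_{2,\calB}^\top\partial_{2,\calB}$ finishes the estimate. Recognizing the surface up-Laplacian as a dual-graph Laplacian is the load-bearing idea here, and it is the piece your proposal is missing.
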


We first show that the images of $\schur[\Lup_{1,\calX}]_C$ and $\Lup_{1,\calB}$ are equal.

\begin{claim}
\label{clm:image}
  $\im(\schur[\Lup_{1,\calX}]_B) = \im(\Lup_{1,\calB})$.
\end{claim}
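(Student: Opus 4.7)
The plan is to prove equality by characterizing each image as an explicit image of a boundary-operator submatrix, and then reducing the claim to a topological fact about the inner sphere $\calB$. First I would invoke Fact \ref{fact:image} to rewrite $\im(\Lup_{1,\calB}) = \im(\partial_{2,\calB})$. Setting $\BB = \partial_{2,\calX}\WW_{2,\calX}^{1/2}$ so that $\Lup_{1,\calX} = \BB\BB^\top$, Fact \ref{fact:schur_project_kernel} gives $\schur[\Lup_{1,\calX}]_B = \BB[B,:]\,\PPi_{\ker(\BB[F,:])}\,\BB[B,:]^\top$. A second application of Fact \ref{fact:image} (using that the orthogonal projection $\PPi_{\ker(\BB[F,:])}$ is idempotent and self-adjoint), together with absorbing the invertible weight $\WW_{2,\calX}^{1/2}$ via a change of variable, leaves the clean description
\[
\im(\schur[\Lup_{1,\calX}]_B) = \{\,\partial_{2,\calX}[B,:]\uu \,:\, \partial_{2,\calX}[F,:]\uu = \veczero\,\}.
\]

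The easy containment $\im(\partial_{2,\calB}) \subseteq \im(\schur[\Lup_{1,\calX}]_B)$ I would handle by zero-extension: given any $\calB$-triangle vector $\vv_\calB$, pad it with zeros on the interior triangles of $\calX$ to obtain a vector $\vv$. Since every triangle of $\calB$ has all three of its edges in $B$, the rows $\partial_{2,\calX}[F,:]$ see no contribution from $\vv$, so $\partial_{2,\calX}[F,:]\vv = \veczero$, while $\partial_{2,\calX}[B,:]\vv$ reproduces $\partial_{2,\calB}\vv_\calB$ exactly.

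The reverse inclusion is where the geometry enters. Given any $\uu$ with $\partial_{2,\calX}[F,:]\uu = \veczero$, the full edge vector $\partial_{2,\calX}\uu$ is supported only on $B$. I would apply the chain identity $\partial_{1,\calX}\partial_{2,\calX} = \veczero$ and observe that every edge of $B$, being an edge of $\calB$, has both its endpoints in $V(\calB)$; hence the restriction of $\partial_{1,\calX}$ to $B$-columns agrees with $\partial_{1,\calB}$ on $V(\calB)$-rows and is identically zero on the remaining interior-vertex rows. The identity therefore collapses to $\partial_{1,\calB}\xx = \veczero$, where $\xx = \partial_{2,\calX}[B,:]\uu$; in other words, $\xx$ is a real $1$-cycle of $\calB$.

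The main obstacle, and the entire topological content of the claim, is upgrading this $1$-cycle to a $1$-boundary. Here I would invoke the $r$-hollowing hypothesis from Definition \ref{def:hollow}: the boundary of each region triangulates a spherical shell, so the inner component $\calB$ is a simplicial triangulation of $S^2$. The standard topological fact $H_1(S^2;\mathbb{R}) = 0$ then yields $\ker(\partial_{1,\calB}) = \im(\partial_{2,\calB})$, so $\xx \in \im(\partial_{2,\calB}) = \im(\Lup_{1,\calB})$, completing the proof. Everything else is routine Schur-complement bookkeeping.
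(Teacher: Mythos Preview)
Your argument is correct, and it takes a cleaner route than the paper's. Both proofs ultimately hinge on the same topological input, namely that $\calB$ triangulates a $2$-sphere and hence has vanishing first Betti number, but they deploy it differently. The paper argues via kernels: it gets $\ker(\schur[\Lup_{1,\calX}]_B) \subseteq \ker(\Lup_{1,\calB})$ from the L\"owner inequality of Fact~\ref{fact:schur}, and for the reverse containment it takes $\xx_B \in \ker(\Lup_{1,\calB}) = \ker(\partial_{2,\calB}^\top)$ and extends it to a vector in $\ker(\partial_{2,\calX}^\top)$. To do this extension the paper passes through an auxiliary complex $\calX'$ that fills $\calX$ in to a triangulated $3$-ball, so that $\ker(\partial_{2,\calX'}^\top) = \im(\partial_{1,\calX'}^\top)$, and then uses the block structure of $\partial_{1,\calX'}^\top$ together with $\xx_B \in \im(\partial_{1,\calB}^\top)$.

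Your approach works directly with images and avoids the $\calX'$ construction entirely: the explicit description of $\im(\schur[\Lup_{1,\calX}]_B)$ via Fact~\ref{fact:schur_project_kernel} and Fact~\ref{fact:image}, followed by the chain identity $\partial_{1,\calX}\partial_{2,\calX} = \veczero$ applied inside $\calX$ itself, lands you immediately on a $1$-cycle of $\calB$, where $H_1(S^2;\R) = 0$ finishes the job. This is shorter and more transparent; the paper's detour through $\calX'$ is not needed for this claim.
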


\begin{proof}
In the proof, we drop the subscript $\calX$ to simplify our notations by letting $\Lup_1 = \Lup_{1,\calX}$ and $\partial_2 = \partial_{2,\calX}$. 
We let $A$ be the set of edges in $\calX$ but not in $B$, and let $V_A$ be the set of vertices in $\calX$ but not in $\calB$ and $V_B$ the set of vertices in $\calB$.

Since both the two matrices $\schur[\Lup_{1}]_B, \Lup_{1,\calB}$ are symmetric and PSD,  the statement in the claim is equivalent to $\ker(\schur[\Lup_{1}]_B) = \ker(\Lup_{1,\calB})$.
By Fact \ref{fact:schur}, 
\[
\Lup_{1,\calB} \pleq \schur[\Lup_1]_B
\implies 
\ker(\schur[\Lup_{1}]_B) \subseteq \ker(\Lup_{1,\calB}).
\]
It remains to show $\ker(\schur[\Lup_{1}]_B) \supseteq \ker(\Lup_{1,\calB})$.

Let $\xx_B$ be an arbitrary vector in $\ker(\Lup_{1,\calB})$. We want to show $\xx_B \in \ker(\schur[\Lup_{1}]_B)$, that is,
$\xx_B^\top \schur[\Lup_{1}]_B \xx_B = 0$.
  By Fact \ref{fact:schur}, it suffices to show that there exists an $\xx = \begin{pmatrix}
      \xx_A \\
      \xx_B
\end{pmatrix}$ such that $\xx^\top \Lup_{1} \xx = 0$.
 This is equivalent to $\xx \in \ker(\partial_{2}^\top)$.
Suppose we add tetrahedrons, triangles, and necessary edges to $\calX$ and get a new simplicial complex $\calX'$ so that $\calB$ is the boundary of $\calX'$ and $\calX'$ triangulates a $3$-ball. 
Since operator $\partial_2^\top$ maps the vector space of edges to the vector space of triangles and no new edges in $\calX'$ can appear in a triangle in $\calX$, it suffices to show there exists an $\xx' = \begin{pmatrix}
    \xx'_A \\
    \xx_B
\end{pmatrix}$ such that 
$\xx' \in \ker(\partial_{2,\calX'}^\top)$ (restricting $\xx'_A$ to $\xx_A$ gives $\xx \in \ker(\partial_{2,\calX}^\top)$).
Since the first Betti number of $\calX'$ is zero, 
 \begin{align}
\xx \in \ker(\partial_{2,\calX'}^\top) \iff \xx \perp \im(\partial_{2,\calX'}) 
\iff \xx \in \im(\partial_{1,\calX'}^\top),
\label{eqn:ker_im}
 \end{align}
Let $A'$ be the set of interior edges in $\calX'$, and $V_{A'}$ the set of interior vertices in $\calX'$.
  We can write 
  \[
    \partial_{1,\calX'}^\top = \begin{pmatrix}
      \partial_{1,\calX'}^\top[A',V_{A'}] & \partial_{1,\calX'}^\top[A',V_B] \\
      {\bf 0} & \partial_{1,\calX'}^\top[B,V_B] 
    \end{pmatrix},
  \]
  where $\partial_{1,\calX'}[V_B,B] = \partial_{1,\calB}$.
  Since $\xx_B \in \ker(\partial_{2,\calB}^\top)$ and the first Betti number of $\calB$ is $0$, by an argument similar to Equation \eqref{eqn:ker_im}, we have 
  $\xx_B \in \im(\partial_{1,\calB}^\top)$, that is, $\xx_B = \partial_{1,\calB}^\top \yy_B$ for some $\yy_B$.
  Setting $\xx_A' = \partial_{1}^\top[A',V_B] \yy_B$, we have $\xx \in \im(\partial_{1,\calX'}^\top)$.
\end{proof}

\begin{claim}
\label{clm:eigmax}
  $\lambda_{\max}(\schur[\Lup_{1,\calX}]_B) = O(w_{\max})$, where $w_{\max}$ is the maximum triangle weight in $\calX$.
\end{claim}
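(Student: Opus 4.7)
The plan is to use the variational characterization of Schur complements from Fact~\ref{fact:schur} to get an upper bound by plugging in a convenient extension of $\xx_B$, then exploit the bounded local geometry guaranteed by stability (Lemma~\ref{lem:max_deg}) to conclude.

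First, I would apply Fact~\ref{fact:schur} with $F = A$ (the edges of $\calX$ not in $B$) and $C = B$ to write
\[
\xx_B^\top \schur[\Lup_{1,\calX}]_B \xx_B = \min_{\yy_A} \begin{pmatrix} \yy_A \\ \xx_B \end{pmatrix}^\top \Lup_{1,\calX} \begin{pmatrix} \yy_A \\ \xx_B \end{pmatrix} \le \begin{pmatrix} \veczero \\ \xx_B \end{pmatrix}^\top \Lup_{1,\calX} \begin{pmatrix} \veczero \\ \xx_B \end{pmatrix},
\]
where the inequality is obtained by choosing $\yy_A = \veczero$. Since $\Lup_{1,\calX} = \partial_{2,\calX} \WW_{2,\calX} \partial_{2,\calX}^\top$, the right-hand side equals $\sum_{\sigma} w_\sigma \bigl( \sum_{e \in \sigma \cap B} \pm \xx_B[e] \bigr)^2$, where the sum is over triangles $\sigma$ of $\calX$ and the signs depend on orientation.

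Next, since each triangle has only three edges, Cauchy--Schwarz gives $\bigl( \sum_{e \in \sigma \cap B} \pm \xx_B[e] \bigr)^2 \le 3 \sum_{e \in \sigma \cap B} \xx_B[e]^2$. Substituting and swapping the order of summation yields
\[
\xx_B^\top \schur[\Lup_{1,\calX}]_B \xx_B \le 3 w_{\max} \sum_{e \in B} \bigl| \{\sigma : e \subset \sigma \in \calX\} \bigr| \cdot \xx_B[e]^2.
\]
The final step is to observe that each edge of $\calX$ lies in $O(1)$ triangles: by Lemma~\ref{lem:max_deg}, each vertex of the underlying stable $3$-complex belongs to $O(1)$ tetrahedra, hence each edge to $O(1)$ tetrahedra, and thus each edge to $O(1)$ triangles. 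Plugging in this constant bound gives $\xx_B^\top \schur[\Lup_{1,\calX}]_B \xx_B = O(w_{\max}) \, \norm{\xx_B}_2^2$, and taking the supremum over unit $\xx_B$ yields $\lambda_{\max}(\schur[\Lup_{1,\calX}]_B) = O(w_{\max})$.

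I do not anticipate a serious obstacle here; the only subtlety is that $\calX$ is described as including interior simplexes, so one must verify that ``stable'' still applies (so that Lemma~\ref{lem:max_deg} is usable) and that $\calB$'s triangulation of a spherical shell does not inflate the per-edge triangle count. Both facts follow directly from the $r$-hollowing hypotheses, so the argument is essentially a one-line energy bound.
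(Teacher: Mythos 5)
Your proposal is correct and follows essentially the same route as the paper: bound the Schur complement quadratic form via Fact~\ref{fact:schur} by plugging in zeros on $A$, expand the resulting energy as a weighted sum of per-triangle contributions, apply Cauchy--Schwarz (each triangle has three edges), and invoke the $O(1)$ per-edge triangle count from Lemma~\ref{lem:max_deg}. The paper just phrases the first step as $\lambda_{\max}(\schur[\Lup_1]_B)\le\lambda_{\max}(\Lup_1)$ before bounding $\lambda_{\max}(\Lup_1)$, but this is an identical estimate.
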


\begin{proof}
We drop the subscript $\calX$ in the proof to simplify our notations.
The Courant-Fischer Minimax Theorem (Ref: Theorem 8.1.2 of \cite{VG96}) states that for any symmetric matrix $\AA$, 
\[
\lambda_{\max}(\AA) = \max_{\xx: \norm{\xx}_2 = 1} \xx^\top \AA \xx.
\]
Apply this theorem and Fact \ref{fact:schur},
  \[
  \lambda_{\max} (\schur[\Lup_{1}]_B) = 
  \max_{\xx: \norm{\xx}_2 = 1} \xx^\top  \schur[\Lup_{1}]_B \xx 
  \le \max_{\xx: \norm{\xx}_2 = 1} \begin{pmatrix}
    {\bf 0}^\top &  \xx^\top
  \end{pmatrix} \Lup_{1} \begin{pmatrix}
    {\bf 0} \\
    \xx
  \end{pmatrix}
  \le \lambda_{\max}(\Lup_{1}).
  \]
Below, we bound $\lambda_{\max}(\Lup_1)$:
\begin{multline*}
\lambda_{\max}(\Lup_1) = \max_{\xx: \norm{\xx}_2 = 1} \xx^\top \Lup_1 \xx
= \max_{\xx: \norm{\xx}_2 = 1} \sum_{\sigma: \text{triangle in }\calX} 
\WW_2[\sigma, \sigma] (\partial_2[:,\sigma]^\top \xx)^2 \\
\le 3 w_{\max} \cdot \max_{\xx: \norm{\xx}_2 = 1} 
\sum_{\substack{\sigma = [i,j,k]: \\ \text{triangle in }\calX}} (\xx[i]^2 + \xx[j]^2 + \xx[k]^2) 
= O(w_{\max}).
\end{multline*}
The last inequality holds since each edge appears in at most $O(1)$ triangles by Lemma \ref{lem:max_deg}.
Thus, $\lambda_{\max}(\schur[\Lup_{1}]_B) = O(w_{\max})$.
\end{proof}

One more piece we need is a lower bound for $\lambda_{\min}(\Lup_{1,\calB})$, which will be established via eigenvalues of graph Laplacian matrices.

\begin{theorem}[Section 4.2 of \cite{M91}]
\label{thm:diameter}
  Let $G$ be an unweighted graph over $n$ vertices with diameter $\diam$, the length of the longest path in $G$. Let $\LL_G$ be the graph Laplacian matrix of $G$.
  Then, 
  \[
  \lambda_{\min} (\LL_G) \ge \frac{4}{n \diam}.  
  \]
\end{theorem}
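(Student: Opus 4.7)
My plan is to combine the Courant--Fischer characterization of $\lambda_{\min}(\LL_G)$ with a telescoping estimate along a shortest path realizing the diameter. I would start from
\[
\lambda_{\min}(\LL_G) \;=\; \min_{\xx \perp \vecone,\ \xx \ne \veczero} \frac{\xx^\top \LL_G \xx}{\xx^\top \xx},
\]
together with the standard identity $\xx^\top \LL_G \xx = \sum_{(i,j) \in E(G)} (\xx[i]-\xx[j])^2$ for the unweighted graph Laplacian. The proof then reduces to lower bounding $\xx^\top \LL_G \xx / \|\xx\|_2^2$ for an arbitrary nonzero $\xx$ satisfying $\sum_i \xx[i] = 0$.

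For such an $\xx$, let $M = \max_i \xx[i]$ attained at some vertex $u$, and $m = \min_i \xx[i]$ attained at some $v$. The first step is to transport the gap $M - m$ across $G$ along a shortest $u$--$v$ path $u = w_0, w_1, \dots, w_k = v$, so $k \le \diam$. Applying Cauchy--Schwarz to the telescoping identity $M - m = \sum_{j=0}^{k-1} (\xx[w_j] - \xx[w_{j+1}])$ yields $(M-m)^2 \le \diam \cdot \xx^\top \LL_G \xx$, because each consecutive pair is an edge whose squared difference is already counted in $\xx^\top \LL_G \xx$. The second step is to lower bound $(M-m)^2 \ge \tfrac{4}{n} \|\xx\|_2^2$, and chaining the two inequalities then gives $\lambda_{\min}(\LL_G) \ge 4/(n \diam)$.

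The main obstacle I anticipate is the second step, where the constant $4$ is sharp: a casual ``extremal coordinate vs.\ the mean'' argument only delivers $(M-m)^2 \ge \|\xx\|_2^2/n$, losing a factor of $4$. To recover the sharp constant I would center the coordinates at the midpoint $(M+m)/2$. Since $m \le \xx[i] \le M$ for every $i$, we have $|\xx[i] - (M+m)/2| \le (M-m)/2$, so $\sum_i (\xx[i] - (M+m)/2)^2 \le n (M-m)^2/4$; expanding the left side and using $\sum_i \xx[i] = 0$ produces
\[
\|\xx\|_2^2 + n \frac{(M+m)^2}{4} \;\le\; n \frac{(M-m)^2}{4},
\]
and dropping the nonnegative middle term yields the desired inequality. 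This midpoint centering is the one non-routine step, and it is precisely where the hypothesis $\xx \perp \vecone$ is used crucially; the rest of the argument then assembles mechanically.
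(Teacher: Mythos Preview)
The paper does not actually prove this theorem; it is quoted as a known result from \cite{M91} and used as a black box in the proof of Claim~\ref{clm:eigmin}. So there is no ``paper's own proof'' to compare against.

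That said, your argument is correct and is essentially the classical proof of this bound (as in Mohar's survey). The two ingredients---telescoping along a shortest path with Cauchy--Schwarz to get $(M-m)^2 \le \diam \cdot \xx^\top \LL_G \xx$, and the midpoint-centering trick to get $(M-m)^2 \ge \tfrac{4}{n}\|\xx\|_2^2$ from $\xx \perp \vecone$---are exactly the standard ones, and your handling of the sharp constant $4$ is right. One minor point worth making explicit: you are implicitly assuming $G$ is connected (so that $\ker(\LL_G) = \Span\{\vecone\}$ and the diameter is finite), which matches how the theorem is applied in the paper.
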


\begin{claim}
  $\lambda_{\min}(\Lup_{1,\calB}) = \Omega( w_{\min}\cdot r^{-1} )$, where $w_{\min}$ is the minimum triangle weight in $\calB$.
  \label{clm:eigmin}
\end{claim}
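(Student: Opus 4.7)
The plan is to reduce the claim to a bound on the minimum nonzero eigenvalue of a genuine graph Laplacian, namely that of the dual graph of $\calB$, and then apply Theorem \ref{thm:diameter}. Let $G^*$ denote the dual graph of $\calB$, whose vertex set is the set of triangles of $\calB$ and whose edge set is the set of edges of $\calB$, where an edge of $\calB$ shared by triangles $\sigma_1, \sigma_2$ becomes the $G^*$-edge $\{\sigma_1, \sigma_2\}$. Because $\calB$ triangulates a $2$-sphere (the inner sphere of the boundary shell) and is therefore a closed orientable surface, we may use Claim \ref{clm:orient_triangle} to assume without loss of generality that all triangles of $\calB$ are consistently oriented. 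With such an orientation, every edge of $\calB$ is shared by exactly two triangles that induce opposite orientations on it, so each row of $\partial_{2,\calB}^\top$ (indexed by triangles, with columns indexed by edges) has exactly one $+1$ and one $-1$. Choosing the orientation of the $G^*$-edge accordingly identifies $\partial_{2,\calB}^\top$ with the signed vertex-edge incidence matrix $B_{G^*}$ of $G^*$, so that
\[
\partial_{2,\calB}^\top\,\partial_{2,\calB} \;=\; B_{G^*}\,B_{G^*}^\top \;=\; \LL_{G^*}.
\]

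Next I use that $\Lup_{1,\calB} = \partial_{2,\calB}\,\WW_{2,\calB}\,\partial_{2,\calB}^\top$ and $\partial_{2,\calB}\partial_{2,\calB}^\top$ share the same kernel (as $\WW_{2,\calB}$ is positive definite), and that $\WW_{2,\calB} \pgeq w_{\min}\II$ gives, via Fact \ref{fact:psd},
\[
\Lup_{1,\calB} \pgeq w_{\min}\,\partial_{2,\calB}\,\partial_{2,\calB}^\top.
\]
Since the nonzero eigenvalues of $\partial_{2,\calB}\partial_{2,\calB}^\top$ and $\partial_{2,\calB}^\top\partial_{2,\calB} = \LL_{G^*}$ coincide, this yields
\[
\lambda_{\min}(\Lup_{1,\calB}) \;\geq\; w_{\min}\cdot \lambda_{\min}(\LL_{G^*}).
\]

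Finally, I bound $\lambda_{\min}(\LL_{G^*})$ via Theorem \ref{thm:diameter}. The vertex count of $G^*$ is the number of triangles in $\calB$, which is $O(r^{2/3})$ since each region has $O(r^{2/3})$ boundary simplexes (Definition \ref{def:hollow}) and since stable simplexes make the number of triangles and edges of $\calB$ comparable (Lemma \ref{lem:max_deg}). The diameter of $G^*$ is, by construction, exactly the triangle diameter of $\calB$, which is $O(r^{1/3})$ by the triangle-diameter bound in Definition \ref{def:hollow}. Theorem \ref{thm:diameter} then gives $\lambda_{\min}(\LL_{G^*}) \geq \tfrac{4}{|V(G^*)|\cdot\mathrm{diam}(G^*)} = \Omega(r^{-1})$, and combining with the previous inequality finishes the proof.

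The main subtlety to handle carefully is the identification of $\partial_{2,\calB}^\top$ with the signed incidence matrix of $G^*$: this genuinely relies on $\calB$ being a closed orientable surface, so that a consistent orientation of all $2$-simplexes exists and forces opposite induced orientations on every shared edge. If this step is skipped, the matrix $\partial_{2,\calB}^\top\partial_{2,\calB}$ is only a signed Laplacian rather than $\LL_{G^*}$, and Theorem \ref{thm:diameter} does not apply directly. Once orientability is invoked (via Claim \ref{clm:orient_triangle}), the remaining computations are essentially routine.
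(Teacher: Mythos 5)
Your proof follows the paper's argument exactly: lower-bound $\Lup_{1,\calB}$ by $w_{\min}\,\partial_{2}\partial_{2}^\top$ using $\WW_{2,\calB}\pgeq w_{\min}\II$, use Claim \ref{clm:orient_triangle} together with orientability of the $2$-sphere to fix a consistent orientation so that $\partial_{2}^\top\partial_{2}$ becomes the Laplacian of the unweighted dual graph of $\calB$, and then apply Theorem \ref{thm:diameter} with vertex count $O(r^{2/3})$ and diameter $O(r^{1/3})$. One wording slip: after fixing the consistent orientation, it is each \emph{column} of $\partial_{2,\calB}^\top$ (indexed by an edge, which is shared by two triangles inducing opposite orientations) that has exactly one $+1$ and one $-1$; a \emph{row}, indexed by a triangle, has three nonzero entries. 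Since your subsequent identification of $\partial_{2,\calB}^\top$ with the signed incidence matrix $B_{G^*}$ and everything downstream is correct, this is purely a typo and not a gap.
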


\begin{proof}
We drop the subscript $\calB$ in the proof to simplify our notations.
Again we decompose the matrix as a sum of rank-1 matrices for each triangle in $\calB'$:
\begin{align*}
\Lup_{1} = \sum_{\sigma: \text{triangle in }\calB} \WW_2[\sigma, \sigma]
\partial_2[:,\sigma] \partial_2[:,\sigma]^\top
\pgeq w_{\min} \cdot  \partial_{2} \partial_{2}^\top.
\end{align*}
So,
\[
\lambda_{\min}(\Lup_1) 
\ge w_{\min} \cdot  \lambda_{\min} (\partial_2 \partial_2^\top) =
 w_{\min} \cdot  \lambda_{\min} (\partial_2^\top \partial_2).
\]
Since $\calB$ triangulates a two-sphere, each edge of $\calB$ appears in exactly two triangles.
Since changing the orientations of the triangles in $\calB$ does not change $\Lup_1$ (by Claim \ref{clm:orient_triangle}), we assume all the triangles in $\calB$ are oriented clockwise.
Then, each column of $\partial_2^\top$ has exactly one entry with value $1$ and one entry $-1$ and all others $0$. 
That is, $\partial_2^\top \partial_2$ is the Laplacian of the dual graph of $\calB$: the vertices are the triangles in $\calB$, and two vertices are adjacent if and only if the corresponding two triangles share a common edge.
The dual graph has $O(r^{2/3})$ vertices and diameter $O(r^{1/3})$. 
By Theorem \ref{thm:diameter}, $\lambda_{\min} (\partial_2^\top \partial_2)
= \Omega(w_{\min} \cdot r^{-1})$.
\end{proof}

Combining all the claims above, we prove Lemma \ref{lem:eigS_single}.

\begin{proof}[Proof of Lemma \ref{lem:eigS_single}]
Let $\PPi$ be the orthogonal projection matrix onto $\im(\Lup_{1,\calB}) = \im(\schur[\Lup_{1,\calX}]_{B})$ (by Claim \ref{clm:image}).
By Claim \ref{clm:eigmax}, $\schur[\Lup_{1,\calX}]_{B} \pleq   O(w_{\max}) \PPi $.
By Claim \ref{clm:eigmin}, $\PPi \pleq O(\frac{r}{w_{\min}} ) \Lup_{1,\calB}$. 
Combining all together, we have 
$\schur[\Lup_{1,\calX}]_{B} \pleq O(rU) \Lup_{1,\calB}$, where $U = \frac{w_{\max}}{w_{\min}} = O(1)$.
\end{proof}

\subsubsection{Proof of Lemma \ref{lem:solve_schur}}
\label{sec:proof_pcg_lemma}
By Theorem \ref{thm:PCG} and Lemma \ref{lem:eigS}, the number of \pcg \ (PCG) iterations is 
\[
\Otil\left( \kappa \left( \schur[\Lup_1]_C, \Lup_{1,\calT} \right)^{1/2}
\right) = \Otil(r^{1/2}).
\]
In each PCG iteration, we solve a system in $\Lup_{1,\calT}$ and multiply $\schur[\Lup_1]_C$ with $O(1)$ vectors.
Recall 
\[
\schur[\Lup_1]_C = \Lup_1[C, C] - \Lup_1[C, F] \Lup_1[F, F]^{\dagger} \Lup_1[F, C].
\]
In our preprocessing, we compute a Cholesky factorization of $\Lup_1[F, F]$ in time $O(r^2 \cdot \frac{n}{r}) = O(nr)$; then, we can multiply $\schur[\Lup_1]_C$ onto a vector in time $O(r^{4/3} \cdot \frac{n}{r}) = O( nr^{1/3})$.
Similarly, we solve a system in $\Lup_{1,\calT}$ by \nd. 
By our construction, $\calT$ has $O(r^{2/3} \cdot \frac{n}{r}) = O(nr^{-1/3})$ triangles.
The Cholesky factorization runs in time $O(n^2 r^{-2/3})$, and each system solve runs in time $O(n^{4/3} r^{-4/9})$.
So, the runtime per each PCG iteration is 
\[
O\left( nr^{1/3} + n^{4/3} r^{-4/9}
\right).
\]
Therefore, the total time is 
\begin{align*}
\Otil \left( r^{1/2} \left( nr^{1/3} + n^{4/3} r^{-4/9} \right)
+ nr + n^2 r^{-2/3}
\right)
= \Otil \left( nr + n^{4/3}r^{5/18} + n^2 r^{-2/3}
\right).
\end{align*}
This completes the proof of Lemma \ref{lem:solve_schur}.

%!TEX root = main.tex

\def\SS{\boldsymbol{\mathit{S}}}
\def\upSolverUnweighted{\textsc{SolveUnweightedUpLap}}
\def\Pit{\widetilde{\PPi}}
\def\down2solve{\textsc{DownLap2Solver}}
\def\up1solve{\textsc{UpLap1Solver}}
\def\unionUpF{\textsc{UnionUpLapFSolver}}

\def\aPf{\widetilde{\PPi}_F}
\def\bbtil{\widetilde{\bb}}
\def\aPf{\widetilde{\PPi}_{\ker(\partial_2^\top[F,:])}}
\def\PimF{\PPi_{\im(\partial_2[:,F])}}

\def\one{{\bf 1}}

\section{Projection onto the Image of Up-Laplacian}
\label{sect:up_proj}

In this section, we will show how to approximately project a vector onto the image of the first up-Laplacian of a well-shaped $3$-complex with a given $r$-hollowing and prove Lemma \ref{lem:up_projection_approx}. 
The following lemma gives an explicit formula for the orthogonal projection onto the up-Laplacian.

\begin{lemma}
Let $\calK$ be a simplicial complex with boundary operator $\partial_2$. For any partition $F \cup C$ of the $2$-simplexes of $\calK$, the orthogonal projection $\Pup_1$ for 
$\calK$ can be decomposed as
\begin{align}
    \Pup_1 = \PPi_{\im(\partial_2[:,F])} 
+ \PkerF \partial_2[:,C] (\schur[\Ldown_2]_C)^{\dagger} \partial_2^\top [C,:] \PkerF.
\label{eqn:up_proj_up_Lap_formula}
\end{align}
In addition, the second matrix on the right-hand side of the above equation is the orthogonal projection matrix onto the image of 
$\PkerF \partial_2[:,C]$.
\label{lem:many_balls_exact_up_proj_restate}
\end{lemma}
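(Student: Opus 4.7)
The plan is to obtain the decomposition by writing $\im(\partial_2)$ as an orthogonal direct sum induced by the column partition $F \cup C$, and then identifying each summand's projection explicitly. First I would use $\Pup_1 = \PPi_{\im(\partial_2)}$ together with
\[
\im(\partial_2) \;=\; \im(\partial_2[:,F]) \;+\; \im(\partial_2[:,C]).
\]
Although this sum need not be direct, I would orthogonalize it by subtracting the component lying in $\im(\partial_2[:,F])$: setting $\PimF = \PPi_{\im(\partial_2[:,F])}$ and $\PimF^{\perp} = \II - \PimF$, every $\partial_2[:,C]\xx_C$ decomposes as $\PimF\partial_2[:,C]\xx_C + \PimF^{\perp}\partial_2[:,C]\xx_C$. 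The first piece is absorbed in $\im(\partial_2[:,F])$, so
\[
\im(\partial_2) \;=\; \im(\partial_2[:,F]) \;\oplus\; \im\!\bigl(\PimF^{\perp}\partial_2[:,C]\bigr),
\]
and this sum is orthogonal by construction. Hence $\Pup_1$ is the sum of the projections onto the two summands.

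Next I would identify $\PimF^{\perp}$ with $\PkerF$. By the fundamental theorem of linear algebra, $\im(\partial_2[:,F])^{\perp} = \ker(\partial_2[:,F]^\top) = \ker(\partial_2^\top[F,:])$, so $\PimF^{\perp} = \PkerF$. The projection onto the second summand is then $\MM(\MM^\top \MM)^{\dagger}\MM^\top$ for $\MM \defeq \PkerF\,\partial_2[:,C]$, using the standard formula $\PPi_{\im(\MM)} = \MM(\MM^\top\MM)^{\dagger}\MM^\top$ together with the symmetry and idempotence of $\PkerF$. Thus
\[
\Pup_1 \;=\; \PimF \;+\; \PkerF\,\partial_2[:,C]\,(\MM^\top\MM)^{\dagger}\,\partial_2^\top[C,:]\,\PkerF,
\]
and the ``in addition'' clause is immediate from this derivation.

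The final step is to rewrite $\MM^\top\MM = \partial_2^\top[C,:]\,\PkerF\,\partial_2[:,C]$ as $\schur[\Ldown_2]_C$. For this I would appeal directly to Fact \ref{fact:schur_project_kernel}: factor $\Ldown_2 = \partial_2^\top\partial_2 = \BB\BB^\top$ with $\BB = \partial_2^\top$, and split the rows of $\BB$ according to the partition $F \cup C$, giving $\BB_F = \partial_2^\top[F,:]$ and $\BB_C = \partial_2^\top[C,:]$. The fact then yields $\schur[\Ldown_2]_C = \BB_C\PPi_{\ker(\BB_F)}\BB_C^\top = \partial_2^\top[C,:]\,\PkerF\,\partial_2[:,C]$, which is exactly $\MM^\top\MM$. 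Substituting back produces the claimed formula \eqref{eqn:up_proj_up_Lap_formula}.

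The main potential obstacle is ensuring that the pseudoinverse step is correctly applied, since $\MM^\top\MM$ can have a nontrivial kernel (coming from vectors $\xx_C$ for which $\partial_2[:,C]\xx_C \in \im(\partial_2[:,F])$). I would handle this by observing that the identity $\MM(\MM^\top\MM)^{\dagger}\MM^\top = \PPi_{\im(\MM)}$ is valid for the Moore--Penrose pseudoinverse regardless of rank deficiencies, so no separate argument about injectivity is required. Aside from this bookkeeping, every step reduces to elementary linear algebra plus the cited Schur--kernel identity.
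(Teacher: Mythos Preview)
Your proof is correct and more direct than the paper's. The paper takes a somewhat roundabout path: it writes down the block-elimination formulas from Lemma~\ref{lem:combine_solvers} (applied to $\Ldown_2$ with $\delta=0$) to produce a vector $\xx$ with $\partial_2^\top\partial_2\xx = \partial_2^\top\bb$, argues that therefore $\partial_2\xx = \Pup_1\bb$, and then expands $\partial_2\xx$ term by term to read off the decomposition. Only at the very end does it invoke Fact~\ref{fact:schur_project_kernel} to identify the middle factor as $\schur[\Ldown_2]_C$ and to recognize the second summand as a projection.

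Your approach bypasses the solver machinery entirely: you go straight to the orthogonal decomposition $\im(\partial_2)=\im(\partial_2[:,F])\oplus\im(\PkerF\partial_2[:,C])$, apply the standard formula $\PPi_{\im(\MM)}=\MM(\MM^\top\MM)^{\dagger}\MM^\top$, and then use Fact~\ref{fact:schur_project_kernel} for the Schur identification. This is cleaner and makes the ``in addition'' clause fall out immediately rather than as an afterthought. The paper's route has the mild advantage of reusing Lemma~\ref{lem:combine_solvers}, which is thematically central to the algorithm, but for proving this purely linear-algebraic lemma your argument is the more natural one.
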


\begin{proof}
For any $\bb \in \R^n$,
we define 
\begin{align}
\begin{split}
& \ff_F = \partial_2^\top[F,:] \bb, ~ \ff_C = \partial_2^\top[C,:] \bb \\
& \hh = \ff_C - \Ldown_2[C,F] (\Ldown_2[F,F])^{\dagger} \ff_F  \\
& \xx_C = (\schur[\Ldown_2]_C)^{\dagger} \hh  \\
& \xx_F = (\Ldown_2[F,F])^{\dagger} \left(
 \ff_F - \Ldown_2[F,C] \xx_C
\right).
\end{split}
\label{eqn:many_balls_x}
\end{align}
Let $\xx = \begin{pmatrix}
\xx_C \\
\xx_F
\end{pmatrix}$. Applying Lemma \ref{lem:combine_solvers} with $\delta = 0$, we have $\partial_2^\top \partial_2 \xx = \partial_2^\top \bb$. 
That is, $\xx$ can be written as $\xx = \xx_1 + \xx_2$ where $\xx_1 = (\partial_2^\top \partial_2)^{\dagger} \partial_2^\top \bb$ and $\xx_2$ is in $\ker(\partial_2)$. Then, $\partial_2 \xx = \partial_2 \xx_1 = \Pup_1 \bb$.

By Equation \eqref{eqn:many_balls_x},
\[
\Pup_1 \bb = 
\partial_2 \xx = \PPi_{\im(\partial_2[:,F])} \bb 
+ \PkerF \partial_2[:,C] (\schur[\Ldown_2]_C)^{\dagger}
\partial_2^\top [C,:] \PkerF \bb.
\]
Thus, the equation in the statement holds.
By Fact \ref{fact:schur_project_kernel},
\begin{align*}
\schur[\Ldown_2]_C = \partial_2^\top [C,:] \PkerF \partial_2 [:,C].
\label{eqn:many_balls_schur_proj_kernel}
\end{align*}
The second matrix on the right-hand side of the equation in the statement is the orthogonal projection onto the image of 
$\PkerF \partial_2[:,C]$.
\end{proof}

To apply $\Pup_1$ to a vector $\bb$, we will need to
(1) project $\bb$ onto $\im(\partial_2[:,F])$, (2) project $\bb$ onto $\ker(\partial_2^\top[F,:])$, 
and (3) solve a system in $\schur[\Ldown_2]_C$.
We let $F$ be the set of all the interior triangles of $\calK$ w.r.t. the given $r$-hollowing, and let $F$ be the set of all the boundary triangles.
Similar to our up-Laplacian solver, we will apply the \nd \ for the ``$F$" part and the \pcg \ for the Schur complement onto $C$.

With a slight modification of the edge separator for a $3$-complex (Corollary \ref{cor:mt90}), we obtain a triangle separator for a $3$-complex as the corollary below.

\begin{corollary}[Triangle separator for a $3$-complex]
Let $\calX$ be a $3$-complex that satisfies the requirements in Theorem \ref{thm:mt90} and has a connected $2$-skeleton.
Let $p$ be the number of triangles in $\calX$.
Then, there exists an algorithm that 
removes $O(t^{2/3} + \bar{v})$ triangle in linear time. The remaining triangles can be divided into two sets, each containing at most $cp$ triangles (where $c < 1$ is a constant), with no shared edges between the two sets.
\label{cor:triangle_separator}
\end{corollary}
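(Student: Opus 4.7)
The plan is to mirror the argument of Corollary \ref{cor:mt90}, replacing edges with triangles at each step. First I would invoke Theorem \ref{thm:mt90} to obtain in linear time a vertex separator $S$ of size $O(t^{2/3}+\bar{v})$ that $4/5$-separates the $1$-skeleton of $\calX$ into two disjoint vertex sets $A, B$ with $|A|,|B|\le 4v/5$. I would then define the removed set of triangles to be $T_S \defeq \{\sigma\in\calX: \sigma \text{ is a triangle containing at least one vertex of } S\}$. By Lemma \ref{lem:max_deg}, every vertex lies in $O(1)$ tetrahedrons, hence in $O(1)$ triangles, so $|T_S|=O(|S|)=O(t^{2/3}+\bar{v})$, which meets the target removal size. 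This set can clearly be computed in linear time by iterating over the vertices of $S$ and listing their incident triangles.

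Next I would define $T_A$ (respectively $T_B$) to be the remaining triangles all of whose three vertices lie in $A$ (respectively $B$). The key observation is that no remaining triangle can have vertices in both $A$ and $B$: such a triangle would contain an edge with one endpoint in $A$ and the other in $B$, contradicting that $S$ separates $A$ from $B$ in the $1$-skeleton. Thus $\{T_A,T_B\}$ is a partition of the non-removed triangles. Moreover, $T_A$ and $T_B$ share no edge, since any edge of a triangle in $T_A$ has both endpoints in $A$ while any edge of a triangle in $T_B$ has both endpoints in $B$, and $A\cap B=\emptyset$.

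The main obstacle, as in the edge case, is establishing the balance bound $|T_A|,|T_B|\le cp$ with $c<1$ strictly. I would argue by a double-counting step analogous to the one used in the proof of Corollary \ref{cor:mt90}. Let $T'_A$ be all triangles of $\calX$ with at least one vertex in $A$. Write $T'_A = T_A \cup T''_A$, where $T''_A$ consists of triangles meeting both $A$ and $S$; again by Lemma \ref{lem:max_deg}, $|T''_A|=O(|S|)=O(t^{2/3}+\bar{v})$. Using connectivity of the $2$-skeleton together with the bounded-degree property from Lemma \ref{lem:max_deg}, every vertex participates in $\Theta(1)$ triangles, so $p=\Theta(v)$ and the vertices in $A$ are incident to $\Omega(|A|)=\Omega(v)$ triangles; counting incidences and using that each triangle has three vertices yields $|T'_A|\ge \Omega(|A|)\ge \Omega(v)=\Omega(p)$. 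Therefore $|T_A|\ge |T'_A|-|T''_A|\ge c' p$ for some constant $c'>0$, and since $T_A$ and $T_B$ are disjoint we conclude $|T_B|\le p-|T_A|\le (1-c')p$. By symmetry $|T_A|\le (1-c')p$, giving the claimed balance with $c=1-c'$.

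The most delicate piece is the final inequality $|T'_A|\ge \Omega(|A|)$, because it relies on the fact that connectivity of the $2$-skeleton (combined with bounded simplex degree) forces each vertex to lie in a constant number of triangles; this is where I would spend the most care writing out the dependency on Lemma \ref{lem:max_deg}. Everything else reduces to bookkeeping on the Miller--Thurston separator, so once this balance step is nailed down the proof is essentially complete.
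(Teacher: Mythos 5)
Your proposal is correct and matches the paper's intent exactly: the paper's proof of Corollary \ref{cor:triangle_separator} is a single sentence stating that one replaces edges with triangles in the proof of Corollary \ref{cor:mt90}, which is precisely the substitution you carry out step by step. Your elaboration of the balance bound (via $T'_A = T_A \cup T''_A$, bounding $|T''_A| = O(|S|)$ with Lemma \ref{lem:max_deg}, and lower-bounding $|T'_A| = \Omega(|A|)$ from connectivity and bounded degree) is the triangle-analogue of the same double-counting argument the paper gives for edges.
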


\begin{proof}
The proof is very similar to that of Corollary \ref{cor:mt90}, by replacing edges with triangles.
\end{proof}

Given a triangle separator algorithm for a $3$-complex, we can obtain a solver for $\Ldown_2[F,F] = \partial_2^\top[F,:] \partial_2[:,F]$ by \nd. 
Note $\Ldown_2[\sigma_1, \sigma_2] \neq 0$ if and only if two triangles $\sigma_1, \sigma_2$ share a common edge.

\begin{lemma}
  Let $\calX$ be a $3$-complex with $O(r)$ simplexes embedded in $\R^3$ such that (1) each tetrahderon of $\calX$ has $O(1)$ aspect ratio and (2) $\calX$ has $O(r^{2/3})$ exterior simplexes.
  Let $F$ be the set of interior triangles of $\calX$, and let $\Ldown_{2,\calX}$ be the second down-Laplacian of $\calX$ and $\MM \defeq \Ldown_{2,\calX}[F,F]$.
  Then, there is a permutation matrix $\PP$ and a lower triangular matrix $\LL$ with $O(r^{4/3})$ nonzeros such that $\MM = \PP \LL \LL^\top \PP^\top$.
  In addition, we can find such $\PP$ and $\LL$ in time $O(r^2)$. Given the above factorization, for any $\bb \in \im(\MM)$, we can compute an $\xx$ such that $\MM \xx = \bb$ in $O(r^{4/3})$ time.
  \label{lem:solve_interior_one_piece_triangle}
\end{lemma}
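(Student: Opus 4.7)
The plan is to mirror the proof of Lemma \ref{lem:solve_interior_one_piece} almost verbatim, with \emph{triangles} playing the role previously played by edges. The nonzero structure of $\MM = \partial_2^\top[F,:] \partial_2[:,F]$ encodes the \emph{triangle adjacency graph} of the interior triangles of $\calX$: two triangles $\sigma_1, \sigma_2$ give rise to a nonzero entry $\MM[\sigma_1,\sigma_2] \ne 0$ iff they share an edge. Thus to apply Nested Dissection (Theorem \ref{thm:LRT79}), I need good separators in this triangle-adjacency graph, and that is exactly what Corollary \ref{cor:triangle_separator} provides. Since each vertex of $\calX$ is incident to $O(1)$ tetrahedrons (Lemma \ref{lem:max_deg}), the number of triangles is $\Theta(r)$, so the separator produced by Corollary \ref{cor:triangle_separator} has size $O(t^{2/3}+\bar v)=O(r^{2/3})$ and $4/5$-separates the triangle-adjacency graph of any sub-complex.

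First I would build the separator family exactly as in the proof of Lemma \ref{lem:solve_interior_one_piece}: at the root, put all $O(r^{2/3})$ exterior triangles into the separator (this only inflates the root separator by a constant factor since it was already $O(r^{2/3})$), and then recursively apply Corollary \ref{cor:triangle_separator} to the remaining triangle sets; in every lower-level call we only need to separate interior triangles of a sub-complex $\calY$, so we may drop the $\bar v$ term because all exterior triangles of $\calY$ are already in an upper-level separator. This gives a separator tree whose levels shrink geometrically, so Theorem \ref{thm:LRT79} (with $\alpha = 2/3$ and $v = \Theta(r)$) yields a permutation $\PP$ and a lower-triangular $\LL$ with $O(r^{4/3})$ nonzeros, computable in $O(r^{3\alpha}) = O(r^2)$ time, such that $\MM = \PP \LL \LL^\top \PP^\top$ — provided $\MM$ were positive definite.

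The one genuine obstacle, as in Lemma \ref{lem:solve_interior_one_piece}, is that $\MM$ is only positive semidefinite. I would handle this in exactly the same way as there: run the symbolic factorization to fix the ordering $\PP$, then run the numeric factorization on $\PP^\top \MM \PP$, skipping any zero pivots that occur (Chapter 4.2.8 of \cite{VG96}). The result is a factor $\LL$ in which only $k = \rank(\MM)$ columns are nonzero; permuting these to the front,
\begin{equation*}
  \PP_1 \LL \PP_2 = \begin{pmatrix} \TT & \veczero \end{pmatrix} = \begin{pmatrix} \HH_1 & \veczero \\ \HH_2 & \veczero \end{pmatrix},
\end{equation*}
with $\HH_1$ a non-singular $k\times k$ lower-triangular matrix. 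To solve $\MM \xx = \bb$ for $\bb \in \im(\MM)$, I would solve $\HH_1 \yy = (\PP_1 \PP^\top \bb)[1:k]$ by forward substitution, then $\HH_1^\top \zz[1:k] = \yy$ by back substitution, set $\zz[k+1:]={\bf 0}$, and output $\xx = \PP \PP_1^\top \zz$; each step costs $O(r^{4/3})$ because $\HH_1$ inherits the sparsity of $\LL$. Consistency ($\bb \in \im(\MM)$) guarantees the last $v-k$ coordinates of $\PP_1 \PP^\top \bb$ play no role, exactly as in the edge case.

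The only thing that is not completely automatic is Corollary \ref{cor:triangle_separator} itself and the hypothesis that the $2$-skeleton of the relevant sub-complex is connected at each recursive call; if it is not, I would split the sub-complex into connected components and recurse on each component independently, which does not affect the $O(r^{2/3})$ separator size bound at the top level or the geometric decrease in sizes. With these pieces in place, the factorization bound of $O(r^{4/3})$ nonzeros, the $O(r^2)$ construction time, and the $O(r^{4/3})$ solve time all follow directly from Theorem \ref{thm:LRT79}.
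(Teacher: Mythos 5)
Your proposal is correct and takes essentially the same route the paper intends: the paper states Lemma \ref{lem:solve_interior_one_piece_triangle} immediately after introducing the triangle separator (Corollary \ref{cor:triangle_separator}) precisely so that the proof of Lemma \ref{lem:solve_interior_one_piece} carries over with triangles in place of edges, which is exactly what you do. Your treatment of the root separator, the recursive separator family, the invocation of Theorem \ref{thm:LRT79} with $\alpha = 2/3$, and the handling of semidefiniteness via skipped pivots and the $\HH_1$ solve all mirror the paper's argument, and the remark about splitting into connected components is a reasonable clarification the paper leaves implicit.
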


\begin{claim}
Given any $1$-chain vector $\bb$, we can compute 
$\PPi_{\im(\partial_2[:,F])} \bb$ and $\PkerF \bb$ in time $O(nr)$.
\label{clm:up_proj_solve}
\end{claim}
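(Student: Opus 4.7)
The plan is to reduce both projections to a single application of $\PimF \defeq \PPi_{\im(\partial_2[:,F])}$, and then compute that projection by exploiting a block-diagonal structure of $\Ldown_2[F,F]$ inherited from the $r$-hollowing, attacking each block via Lemma \ref{lem:solve_interior_one_piece_triangle}.

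First I would invoke the fundamental theorem of linear algebra: $\im(\partial_2[:,F])$ and $\ker(\partial_2^\top[F,:])$ are orthogonal complements in the edge space of $\calK$. Therefore $\PkerF \bb = \bb - \PimF \bb$, so it suffices to compute $\PimF \bb$ within the claimed time. Using the closed form for an orthogonal projection,
\[
\PimF \bb \;=\; \partial_2[:,F] \left(\partial_2^\top[F,:] \partial_2[:,F]\right)^{\dagger} \partial_2^\top[F,:] \bb \;=\; \partial_2[:,F] \left(\Ldown_2[F,F]\right)^{\dagger} \partial_2^\top[F,:] \bb,
\]
so the algorithm has three steps: (i) form $\cc \defeq \partial_2^\top[F,:] \bb$; (ii) solve $\Ldown_2[F,F] \xx = \cc$; (iii) return $\partial_2[:,F] \xx$ (and then $\bb$ minus it for the kernel projection). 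Steps (i) and (iii) each cost $O(n)$ time because $\partial_2$ has $O(n)$ nonzeros. The system in step (ii) is consistent because $\cc \in \im(\partial_2^\top[F,:]) = \im(\Ldown_2[F,F])$ by Fact \ref{fact:image}.

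The real work lives in step (ii), and the key observation is that $\Ldown_2[F,F]$ is block-diagonal once the interior triangles are grouped by their region. Indeed, a nonzero off-diagonal entry $\Ldown_2[\sigma_1,\sigma_2]$ forces $\sigma_1$ and $\sigma_2$ to share an edge, but Definition \ref{def:hollow} prohibits interior simplexes from distinct regions from sharing any subsimplex. Thus $\Ldown_2[F,F]$ splits into $O(n/r)$ diagonal blocks, one per region, each of size $O(r)$. Each block is the $F$-restricted down-$2$-Laplacian of a $3$-complex that satisfies the hypotheses of Lemma \ref{lem:solve_interior_one_piece_triangle} (bounded tetrahedron aspect ratio, $O(r)$ simplexes, and $O(r^{2/3})$ exterior simplexes, the latter coming from the region's boundary simplexes in the $r$-hollowing). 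Applying that lemma gives a sparse Cholesky factorization per region in $O(r^2)$ time and a subsequent solve in $O(r^{4/3})$ time. Summed across all $O(n/r)$ regions, the factorization costs $O(nr)$ and the solves cost $O(nr^{1/3})$, so the total is $O(nr)$ as claimed. The only point that genuinely uses the $r$-hollowing is the block decomposition, which is a direct corollary of Definition \ref{def:hollow}; the rest is bookkeeping plus a black-box invocation of Lemma \ref{lem:solve_interior_one_piece_triangle}.
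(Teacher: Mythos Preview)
Your proposal is correct and follows essentially the same approach as the paper: write $\PimF\bb=\partial_2[:,F]\bigl(\Ldown_2[F,F]\bigr)^{\dagger}\partial_2^\top[F,:]\bb$, exploit the block-diagonal structure of $\Ldown_2[F,F]$ induced by the $r$-hollowing, and apply Lemma~\ref{lem:solve_interior_one_piece_triangle} region by region to get $O(nr)$ factorization time and $O(nr^{1/3})$ solve time. The paper's proof is terser about the block decomposition and the consistency of the linear system, but the argument is the same.
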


\begin{proof}
We have 
\begin{align*}
\PPi_{\im(\partial_2[:,F])}\bb
= \partial_2[:,F] \left( \partial_2^\top[F,:] \partial_2[:,F]
\right)^{\dagger} \partial_2^\top[F,:] \bb.
\end{align*} 
We can compute $\partial_2^\top[F,:] \bb$ in time $O(n)$.
To apply $\left( \partial_2^\top[F,:] \partial_2[:,F]
\right)^{\dagger}$, we compute a Cholesky factorization of  $\partial_2^\top[F,:] \partial_2[:,F]$ in time $O(r^2 \cdot \frac{n}{r}) = O(nr)$ via \nd \ (Lemma \ref{lem:solve_interior_one_piece_triangle}), where the Cholesky factorization has $O(r^{4/3} \cdot \frac{n}{r}) = O(nr^{1/3})$ nonzeros;
given such a Cholesky factorization, we can apply $\left( \partial_2^\top[F,:] \partial_2[:,F]
\right)^{\dagger}$ to a vector in time $O(nr^{1/3})$.
Then, multiplying $\partial_2[:,F]$ with a vector runs in linear time. 
So, the total runtime of computing $\PPi_{\im(\partial_2[:,F])}\bb$ is $O(nr)$.
Since $\PkerF = \II - \PPi_{\im(\partial_2[:,F])}$, we can apply $\PkerF$ to a vector in the same time up to constant.
\end{proof}

It remains to bound the runtime of (approximately) solving a system in the Schur complement $\schur[\Ldown_2]_C$.
We run the \pcg \ 
and precondition the Schur complement by the boundary itself $\partial_2^\top[C,:] \partial_2[:,C]$.

\subsection{Preconditioning the Schur Complement}
\label{sec:up_proj_precondition}

In this section, we prove the following lemma for the relative condition number of the preconditioner for the Schur complement.

\begin{lemma}
$\kappa(\schur[\Ldown_2]_C, \partial_2^\top[C,:] \partial_2[:,C])= O(r)$. 
\label{lem:up_proj_precondition_whole}
\end{lemma}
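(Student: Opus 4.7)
\textbf{Proof plan for Lemma \ref{lem:up_proj_precondition_whole}.}

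My plan is to follow the two-step pattern established in Lemma \ref{lem:eigS}: first establish an upper bound directly, then carefully establish the lower bound on the appropriate image, after reducing to a per-region analysis.

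For the upper bound $\schur[\Ldown_2]_C \pleq \partial_2^\top[C,:] \partial_2[:,C]$, I would invoke Fact \ref{fact:schur_project_kernel} to write
\[
\schur[\Ldown_2]_C = \partial_2^\top[C,:] \PkerF \partial_2[:,C].
\]
Since $\PkerF$ is an orthogonal projection we have $\PkerF \pleq \II$, and applying Fact \ref{fact:psd} with $\VV = \partial_2[:,C]$ yields the desired inequality. This part is immediate.

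For the lower bound, I would first exploit the ``$\triangle$-disjoint'' structure guaranteed by Definition \ref{def:hollow}: interior simplexes from different regions share no subsimplexes, so the columns of $\partial_2[:,F]$ split into groups (one per region) with disjoint row supports. Consequently $\partial_2^\top[F,:]\partial_2[:,F]$ is block-diagonal across regions, and $\PkerF$ decomposes as a direct sum of the per-region projections. A claim analogous to Claim \ref{clm:schur_decompose_approx}, possibly with an auxiliary decomposition of $\calT$ into the inner-sphere layer and the remaining shell (as done for Lemma \ref{lem:eigS}), would then reduce the lower bound to a per-region statement for a single spherical-shell region $\calX$. For that region, mirroring Lemma \ref{lem:eigS_single}, I would: (i) show that $\im(\schur[\Ldown_{2,\calX}]_{C_\calX}) = \im(\partial_{2,\calX}^\top[C_\calX,:] \partial_{2,\calX}[:,C_\calX])$ via a homological argument analogous to Claim \ref{clm:image}, using that a spherical shell has a well-understood $H_2$ generated by its inner and outer 2-spheres; (ii) bound $\lambda_{\max}$ of the preconditioner by $O(1)$ using the bounded degree from Lemma \ref{lem:max_deg}; and (iii) bound $\lambda_{\min}$ of $\schur[\Ldown_{2,\calX}]_{C_\calX}$ from below by $\Omega(1/r)$ via a Poincar\'{e}-type inequality on the shell, exploiting that its dual-like combinatorial structure has $O(r^{2/3})$ elements and hop diameter $O(r^{1/3})$, plus Theorem \ref{thm:diameter}.

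The main obstacle is step (iii). The Schur complement $\schur[\Ldown_2]_C$ does not directly equal a graph Laplacian (as $\Lup_{1,\calB}$ did in Claim \ref{clm:eigmin}), so Theorem \ref{thm:diameter} cannot be invoked immediately. Concretely, I will need to show that for every $\xx \in \R^{C_\calX}$ orthogonal to $\ker(\schur[\Ldown_{2,\calX}]_{C_\calX})$,
\[
\| \partial_{2,\calX}[:,C_\calX] \xx \|_2^2 \le O(r) \cdot \| \PP_{\ker(\partial_{2,\calX}^\top[F_\calX,:])}\, \partial_{2,\calX}[:,C_\calX] \xx \|_2^2.
\]
Equivalently, when we split $\partial_{2,\calX}[:,C_\calX] \xx$ orthogonally into a kernel component and an $\im(\partial_{2,\calX}[:,F_\calX])$ component, the latter must be bounded by $O(\sqrt{r})$ times the former. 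I would attempt this by constructing, for such $\xx$, a witnessing interior 2-chain $\yy$ whose support traverses at most $O(r^{1/3})$ ``layers'' of the shell (using the shell-width and hop-diameter bounds), so that $\partial_{2,\calX}[:,F_\calX] \yy$ cancels the non-kernel part of $\partial_{2,\calX}[:,C_\calX] \xx$ with $\|\yy\|_2^2 = O(r)\|\PP_{\ker}\partial_{2,\calX}[:,C_\calX]\xx\|_2^2$. Establishing the image equality in step (i) is the other subtle point, since a spherical shell has non-trivial $H_2$ (two generators) rather than the zero $H_1$ that powered Claim \ref{clm:image}; I expect the appropriate argument is to work modulo these two harmonic classes and verify they lie in both kernels.
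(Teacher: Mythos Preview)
Your image equality claim in step (i) is false, and this is the central gap. The paper's proof (Lemma~\ref{lem:up_proj_schur_precond} and the surrounding claims) explicitly computes
\[
\calV \defeq \im(\BB_{int}^\top)\cap\im(\BB_{bd}^\top)
= \left\{ \begin{pmatrix}{\bf 0}\\ \BB_3\xx \\ {\bf 0}\end{pmatrix}: \xx\in\ker(\BB_4)\right\}\neq\{\veczero\},
\]
where $\BB_{bd}=\partial_2^\top[C,:]$ and $\BB_{int}=\partial_2^\top[F,:]$ (restricted to a region). Any nonzero $\vv\in\calV$ yields an $\xx\notin\ker(\MM)$ with $\BB_{bd}^\top\xx\in\im(\BB_{int}^\top)$, hence $\xx\in\ker(\SS)$; so $\im(\SS)\subsetneq\im(\MM)$ strictly. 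Your suggestion to ``work modulo the two harmonic $2$-classes and verify they lie in both kernels'' does not repair this: the issue is not $H_2$ of the shell but rather the intersection of column spaces of $\partial_2$ over interior versus boundary triangles, which is nontrivial for combinatorial reasons having nothing to do with closed $2$-cycles. Consequently the three-step template from Section~\ref{sect:condNum} (image equality, $\lambda_{\max}$ of one side, $\lambda_{\min}$ of the other) does not transfer; even dropping (i), a direct bound $\lambda_{\min}(\SS)=\Omega(r^{-1})$ is not what is needed, since you cannot convert it to a comparison with $\MM$ without knowing $\lambda_{\min}(\MM)=\Omega(1)$, which is not established.

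The paper instead bounds the \emph{relative} spectrum directly. Writing $\MM=\BB_{bd}\BB_{bd}^\top$ and $\SS=\BB_{bd}\PPi_{\ker(\BB_{int})}\BB_{bd}^\top$, one has
\[
\SS\;\pgeq\;\lambda_{\min}\!\left(\MM^{\dagger/2}\SS\MM^{\dagger/2}\right)\cdot \PPi_{\im(\SS)}\MM\PPi_{\im(\SS)},
\qquad
\lambda_{\min}\!\left(\MM^{\dagger/2}\SS\MM^{\dagger/2}\right)
= \lambda_{\min}\!\left(\PPi_{\im(\BB_{bd}^\top)}\PPi_{\ker(\BB_{int})}\PPi_{\im(\BB_{bd}^\top)}\right).
\]
The image of this triple projection is $\calU=\{\uu\in\im(\BB_{bd}^\top):\uu\perp\calV\}$ (Claim~\ref{clm:proj_image}). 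The key step is then a block decomposition of $\partial_2$ along the inner sphere of the shell (matrices $\BB_{11},\BB_{12},\BB_2,\BB_3,\BB_4$ in Equation~\eqref{eqn:decompose_partial_2}), after which the Rayleigh quotient over $\calU$ reduces to bounding $\|\BB_3\yy\|^2/\|\BB_4\yy\|^2$ for $\yy$ with $\BB_3^\top\BB_3\yy\perp\ker(\BB_4)$. Because $\BB_4^\top\BB_4$ is a graph Laplacian on $O(r^{2/3})$ vertices with diameter $O(r^{1/3})$, Theorem~\ref{thm:diameter} gives the $\Omega(r^{-1})$ bound. Your concrete inequality later in the proposal is equivalent to this relative bound, but your proposed witness-$2$-chain path construction is not the paper's method and would need substantial work to get right; the paper's route is to identify $\calV$ exactly, pass to the triple-projection formulation, and exploit the inner-sphere block structure to land on a graph Laplacian.
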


We will bound the relative condition number for each region and then combine them to get Lemma \ref{lem:up_proj_precondition_whole}.
Recall that in each region of an $r$-hollowing, the boundary triangulates a spherical shell in $\R^3$.
We call the boundary triangles in a region containing an edge on the inner sphere of the boundary \emph{the boundary layer}.
Since each region boundary has its shell width of at least $5$, boundary layers from different regions are disjoint.
For the $i$th region, let $\SS_i$ be the Schur complement of the interior triangles and the triangles in the boundary layer of the region $i$ onto its boundary layer.
We can write the Schur complement 
\[
\schur[\Ldown_2]_C = \diag(\SS_1, \ldots, \SS_k)
+ \AA,
\]
where $\AA$ is a PSD matrix.
Similarly, let $\MM_i$ be the boundary layer of the region $i$. We write the boundary 
\[
\partial_2^\top[C,:] \partial_2[:,C]
= \diag(\MM_1, \ldots, \MM_k) + \AA.
\]
By an argument similar to Claim \ref{clm:schur_decompose_approx}, proving Lemma \ref{lem:up_proj_precondition_whole} reduces to proving the following lemma.

\begin{lemma}
For each $i$, $\Omega(r^{-1}) \PPi_i \MM_i \PPi_i \pleq \SS_i \pleq \MM_i.$
\label{lem:up_proj_schur_precond}
\end{lemma}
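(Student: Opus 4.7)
The plan is to mirror the three-step argument used in the proof of Lemma \ref{lem:eigS_single} in Section \ref{sect:condNum}. The upper bound $\SS_i \pleq \MM_i$ is immediate once we observe that the $(\text{layer}, \text{layer})$ diagonal block of the source matrix $\Ldown_{2,\calX_i}$ coincides with $\MM_i$: both matrices count signed edge-sharings among layer triangles, and edges outside $\calX_i$ contribute nothing to either sum. The Schur complement then subtracts from this block the PSD correction $\Ldown_{2,\calX_i}[\text{layer},\text{int}](\Ldown_{2,\calX_i}[\text{int},\text{int}])^{\dagger}\Ldown_{2,\calX_i}[\text{int},\text{layer}]$, yielding $\SS_i \pleq \MM_i$.

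For the lower bound, let $\PPi_i$ denote the orthogonal projection onto $\im(\SS_i)$, so that $\ker(\PPi_i) = \ker(\SS_i)$. Since $\PPi_i \MM_i \PPi_i$ already vanishes on $\ker(\SS_i)$, the desired inequality reduces to an eigenvalue comparison on $\im(\SS_i)$. The plan is to establish (a) $\lambda_{\max}(\MM_i) = O(1)$, which follows exactly as in Claim \ref{clm:eigmax} by bounding the row sums of $\MM_i$ and using that each edge of a stable $3$-complex lies in $O(1)$ triangles (Lemma \ref{lem:max_deg}); and (b) $\lambda_{\min}^{+}(\SS_i) = \Omega(r^{-1})$ for the smallest nonzero eigenvalue of $\SS_i$. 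Chaining these two gives
\[
\SS_i \pgeq \lambda_{\min}^{+}(\SS_i)\,\PPi_i \pgeq \frac{\lambda_{\min}^{+}(\SS_i)}{\lambda_{\max}(\MM_i)}\,\PPi_i \MM_i \PPi_i = \Omega(r^{-1})\,\PPi_i \MM_i \PPi_i,
\]
as claimed. For (b), I plan to use the variational characterization
\[
\xx^\top \SS_i \xx = \min_{\yy}\bigl\|\partial_{2,\calX_i}[:,\text{layer}]\,\xx + \partial_{2,\calX_i}[:,\text{int}]\,\yy\bigr\|_2^2
\]
together with the dual graph of the layer (vertices are layer triangles, edges are pairs that share a common edge): by Definition \ref{def:hollow} this dual is a triangulation of $S^2$ with $O(r^{2/3})$ vertices and graph diameter $O(r^{1/3})$, so Theorem \ref{thm:diameter} yields an $\Omega(r^{-1})$ spectral gap for the down-$2$-Laplacian of the layer on its image. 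The shell-width-$\geq 5$ condition is then used to ensure that the interior minimization cannot absorb more than a constant fraction of the layer Dirichlet energy, so this spectral gap propagates to $\SS_i$.

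\textbf{Main obstacle.} The crux is step (b). Lemma \ref{lem:eigS_single} could apply the diameter-based eigenvalue bound directly to the down-$2$-Laplacian of a sphere, but here the Schur complement couples the layer to the interior and, a priori, the minimization $\min_{\yy}$ might collapse the spectral gap. I expect to resolve this via a careful edge Dirichlet-energy decomposition: partition the edges of $\calX_i$ into those internal to the layer, those on the layer/interior interface, and those internal to the interior; use the shell-width-$\geq 5$ condition to show that any $\yy$ significantly cancelling the interface contribution must itself have comparable boundary norm, so the layer-internal part retains the $\Omega(r^{-1})$ spectral gap inherited from the dual-graph Poincare inequality.
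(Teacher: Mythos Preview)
Your overall chaining strategy and the upper bound $\SS_i\pleq\MM_i$ are fine, but step~(b) has a real gap that your ``main obstacle'' paragraph does not close. The boundary layer is \emph{not} a triangulation of $S^2$: by definition it consists of all boundary triangles containing an edge on the inner sphere, so it includes the inner-sphere triangles \emph{plus} the adjacent ``ramp'' triangles pointing outward into the shell. Its dual graph is therefore not a sphere triangulation, and the down-$2$-Laplacian $\MM_i=\BB_{bd}\BB_{bd}^\top$ is not a graph Laplacian of that dual (the off-diagonal signs do not cancel as they do for a closed oriented surface). So Theorem~\ref{thm:diameter} cannot be invoked the way you describe, and your proposed ``energy absorption'' argument for why the interior minimization only costs a constant factor is not substantiated.

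The paper takes a genuinely different route for the lower bound. Rather than separately bounding $\lambda_{\min}^+(\SS_i)$ and $\lambda_{\max}(\MM_i)$, it bounds the \emph{relative} eigenvalue $\lambda_{\min}(\MM_i^{\dagger/2}\SS_i\MM_i^{\dagger/2})$ directly, showing it equals $\lambda_{\min}\bigl(\PPi_{\im(\BB_{bd}^\top)}\PPi_{\ker(\BB_{int})}\PPi_{\im(\BB_{bd}^\top)}\bigr)$, i.e.\ a principal-angle quantity between $\im(\BB_{bd}^\top)$ and $\ker(\BB_{int})$. To bound this, $\partial_2$ is decomposed into a $3\times3$ block matrix by edge type (interior / inner-sphere boundary / other boundary) and triangle type; the inner-sphere triangles are then \emph{discarded} (their columns are linearly dependent on the rest), and orientations are chosen so that the inner-sphere-edge blocks $\BB_2,\BB_3$ are block-diagonal with all-ones rows while the remaining-boundary-edge block $\BB_4$ has $\pm1$ columns. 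Only after this reduction does a graph Laplacian appear --- it is $\BB_4^\top\BB_4$, not the layer's down-Laplacian --- and Theorem~\ref{thm:diameter} is applied there to obtain $\norm{\BB_4\yy}^2\ge\Omega(r^{-1})\norm{\BB_3\yy}^2$ on the relevant subspace. The shell-width-$\ge 5$ condition is used for a topological purpose: it allows shifting the inner sphere outward if necessary so that the subcomplex of interior triangles touching the inner sphere has first Betti number zero, which is what lets the paper identify $\im(\BB_{bd}^\top)\cap\im(\BB_{int}^\top)$ explicitly. It is not used for the energy-absorption mechanism you sketch.
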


In the rest of this subsection, we prove Lemma \ref{lem:up_proj_schur_precond}. We locally use $\partial_2$ for the boundary operator \emph{of the region $i$} and drop the region index $i$ in the rest of this subsection. 
We write 
\[
\partial_2^\top = \begin{pmatrix}
  \BB_{int} \\
  \BB_{bd}
\end{pmatrix}  
\]
where the rows of $\BB_{int}$ correspond to the interior triangles 
and the rows of $\BB_{bd}$ the \emph{boundary layer} triangles.
The Schur complement onto the boundary layer triangles, by Fact \ref{fact:schur_project_kernel},
\[
\SS = \BB_{bd} \PPi_{\ker(\BB_{int})}  \BB_{bd}^\top.
\]
Let $\MM \defeq \BB_{bd} \BB_{bd}^\top$, and let $\PPi$ be the orthogonal projection onto the image of $\SS$. Then, $\SS \pleq \MM$, and 
\begin{align*}
\SS \pgeq \PPi \MM^{1/2} \MM^{\dagger /2} \SS 
\MM^{\dagger / 2} \MM^{1/2} \PPi
\pgeq \lambda_{\min}(\MM^{\dagger / 2} \SS \MM^{\dagger / 2}) \cdot \PPi \MM \PPi.
\end{align*}

\begin{claim}
\[
\lambda_{\min} \left( \MM^{\dagger /2} \SS \MM^{\dagger /2}
\right) = \lambda_{\min} \left( \PPi_{\im(\BB_{bd}^\top)} \PPi_{\ker(\BB_{int})} \PPi_{\im(\BB_{bd}^\top)} \right).
\]
\end{claim}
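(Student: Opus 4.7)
The plan is to rewrite both sides in terms of the singular value decomposition of $\BB_{bd}$ and then invoke the standard fact that for any two matrices $X,Y$ the products $XY$ and $YX$ have the same nonzero eigenvalues (recalling that throughout the paper $\lambda_{\min}$ denotes the \emph{minimum nonzero} eigenvalue).

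First I would fix a thin SVD $\BB_{bd} = \UU \Sigma \VV^\top$, where $\UU$ and $\VV$ have orthonormal columns spanning $\im(\BB_{bd})$ and $\im(\BB_{bd}^\top)$ respectively, and $\Sigma$ is diagonal with strictly positive entries. Then $\MM = \BB_{bd}\BB_{bd}^\top = \UU \Sigma^2 \UU^\top$ and on $\im(\MM)$ we have $\MM^{\dagger/2} = \UU \Sigma^{-1} \UU^\top$, which gives the clean identity
\[
\MM^{\dagger/2} \BB_{bd} \;=\; \UU \Sigma^{-1} \UU^\top \UU \Sigma \VV^\top \;=\; \UU \VV^\top.
\]
Substituting into $\SS = \BB_{bd} \PPi_{\ker(\BB_{int})} \BB_{bd}^\top$, one obtains
\[
\MM^{\dagger/2} \SS \MM^{\dagger/2} \;=\; (\UU \VV^\top)\, \PPi_{\ker(\BB_{int})}\, (\UU\VV^\top)^\top.
\]

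Next I would apply the ``$XY$ and $YX$ share nonzero spectrum'' fact twice. Setting $X = \UU\VV^\top$ and $Y = \PPi_{\ker(\BB_{int})}(\UU\VV^\top)^\top$, the nonzero eigenvalues of $XY$ equal those of $YX = \VV\UU^\top \UU\VV^\top \PPi_{\ker(\BB_{int})}$; but $\UU^\top \UU = \II$ and $\VV\VV^\top = \PPi_{\im(\BB_{bd}^\top)}$, so this equals $\PPi_{\im(\BB_{bd}^\top)}\PPi_{\ker(\BB_{int})}$. For the right-hand side, note that $\PPi_{\im(\BB_{bd}^\top)}\PPi_{\ker(\BB_{int})}\PPi_{\im(\BB_{bd}^\top)}$ and $\PPi_{\im(\BB_{bd}^\top)}\PPi_{\ker(\BB_{int})}$ share their nonzero spectrum by the same cyclic argument (write the first as $(\PPi_{\im(\BB_{bd}^\top)}\PPi_{\ker(\BB_{int})})\cdot \PPi_{\im(\BB_{bd}^\top)}$). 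Chaining these identifications gives equality of the sets of nonzero eigenvalues, hence equality of their minima.

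The only potential obstacle is a bookkeeping one: making sure that what is called $\lambda_{\min}$ really is the minimum \emph{nonzero} eigenvalue on both sides, so that eliminating zero eigenvalues via the cyclic trick does not drop the quantity of interest. This is guaranteed by the paper's convention introduced in Section \ref{sect:preliminaries}, and by the fact that on the relevant subspaces $\MM^{\dagger/2}$ is an isometry between $\im(\MM)$ and itself, so both matrices in the claim are symmetric PSD with the same rank equal to $\dim(\im(\BB_{bd}^\top)\cap \ker(\BB_{int}))^{\perp}$-type count. Once this is checked, the chain of equalities above yields the statement.
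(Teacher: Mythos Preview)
Your argument is correct and is essentially the paper's: both take a thin SVD $\BB_{bd}=\UU\Sigma\VV^\top$ to rewrite $\MM^{\dagger/2}\SS\MM^{\dagger/2}=\UU\VV^\top\PPi_{\ker(\BB_{int})}\VV\UU^\top$ and then identify its nonzero spectrum with that of $\VV\VV^\top\PPi_{\ker(\BB_{int})}\VV\VV^\top=\PPi_{\im(\BB_{bd}^\top)}\PPi_{\ker(\BB_{int})}\PPi_{\im(\BB_{bd}^\top)}$. The only cosmetic difference is that you invoke the standard ``$XY$ and $YX$ share nonzero eigenvalues'' lemma while the paper writes out the eigenvector bijection $\uu\mapsto\VV\uu$, $\ww\mapsto\VV^\top\ww$ explicitly (and note that with the $X,Y$ you named, the product you display is a different cyclic shift than $YX$, but this is a harmless relabeling and the conclusion is unaffected).
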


\begin{proof}
We take the singular value decomposition: $\BB_{bd} = \UU \DD \VV^\top$, where the columns of $\UU$ (resp., $\VV$) form an orthonormal basis of $\im(\BB_{bd})$ (resp., $\im(\BB^\top_{bd})$).
Then, 
\begin{align*}
  \MM^{\dagger /2} \SS \MM^{\dagger /2}
  & = \UU \DD^{-1} \UU^\top \UU \DD \VV^\top \PPi_{\ker(\BB_{int})}
  \VV \DD \UU^\top \UU \DD^{-1} \UU^\top 
   = \UU \VV^\top \PPi_{\ker(\BB_{int})} \VV \UU^\top.
\end{align*}
Since $\UU$'s columns are orthonormal, 
\[
\lambda_{\min} (\MM^{\dagger /2} \SS \MM^{\dagger /2}) = \lambda_{\min}(\VV^\top \PPi_{\ker(\BB_{int})} \VV).
\]
We claim 
\begin{align}
\lambda_{\min}(\VV^\top \PPi_{\ker(\BB_{int})} \VV)
= \lambda_{\min}(\VV \VV^\top \PPi_{\ker(\BB_{int})} \VV \VV^\top),
\label{eqn:orth_proj_eigs}
\end{align}
which implies the claim statement.
Let $\lambda$ be an eigenvalue of $\VV^\top \PPi_{\ker(\BB_{int})} \VV$ with eigenvector $\uu$. Then, 
\begin{align*}
\VV^\top \PPi_{\ker(\BB_{int})} \VV \uu = \lambda \uu
& \implies \VV \VV^\top \PPi_{\ker(\BB_{int})} \VV \VV^\top \VV \uu = \lambda \VV \uu \\
& \implies \PPi_{\im(\BB_{bd}^\top)} \PPi_{\ker(\BB_{int})} \PPi_{\im(\BB_{bd}^\top)} \VV \uu = \lambda \VV \uu.
\end{align*}
That is, $\lambda$ is an eigenvalue of $\PPi_{\im(\BB_{bd}^\top)} \PPi_{\ker(\BB_{int})} \PPi_{\im(\BB_{bd}^\top)} $ with eigenvector $\VV \uu$.
Let $\mu$ be an eigenvector of $\PPi_{\im(\BB_{bd}^\top)} \PPi_{\ker(\BB_{int})} \PPi_{\im(\BB_{bd}^\top)} $ with eigenvector $\ww$. Then, 
\begin{align*}
\PPi_{\im(\BB_{bd}^\top)} \PPi_{\ker(\BB_{int})} \PPi_{\im(\BB_{bd}^\top)} \ww = \mu \ww
& \implies \VV^\top \PPi_{\ker(\BB_{int})} \VV 
\VV^\top \ww = \mu \VV^\top \ww.
\end{align*}
That is, $\mu$ is an eigenvalue of $\VV^\top \PPi_{\ker(\BB_{int})} \VV $ with eigenvector $\VV^\top \ww$. So, Equation \eqref{eqn:orth_proj_eigs} holds.
\end{proof}

\begin{claim}
The image of $\PPi_{\im(\BB_{bd}^\top)} \PPi_{\ker(\BB_{int})} \PPi_{\im(\BB_{bd}^\top)} $ is 
$\calU = \{\uu \in \im(\BB_{bd}^\top): \uu \perp \im(\BB_{bd}^\top) \cap \im(\BB_{int}^\top)\}$.
\label{clm:proj_image}
\end{claim}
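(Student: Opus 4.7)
The plan is to exploit that $M := \PPi_{\im(\BB_{bd}^\top)} \PPi_{\ker(\BB_{int})} \PPi_{\im(\BB_{bd}^\top)}$ is symmetric and PSD, so its image equals the orthogonal complement of its kernel. I will characterize $\ker(M)$ explicitly and then take the orthogonal complement to recover $\calU$.

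First I would compute the kernel of $M$. Writing the quadratic form $\vv^\top M \vv = \norm{\PPi_{\ker(\BB_{int})} \PPi_{\im(\BB_{bd}^\top)} \vv}_2^2$, we see that $\vv \in \ker(M)$ if and only if $\PPi_{\im(\BB_{bd}^\top)} \vv \in \ker(\BB_{int})^\perp = \im(\BB_{int}^\top)$. Decomposing any such $\vv = \vv_1 + \vv_2$ with $\vv_1 \in \im(\BB_{bd}^\top)$ and $\vv_2 \in \im(\BB_{bd}^\top)^\perp$, the condition becomes $\vv_1 \in \im(\BB_{bd}^\top) \cap \im(\BB_{int}^\top)$, while $\vv_2$ is unconstrained. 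This yields the orthogonal direct sum decomposition
\[
\ker(M) = \bigl(\im(\BB_{bd}^\top) \cap \im(\BB_{int}^\top)\bigr) \oplus \im(\BB_{bd}^\top)^\perp,
\]
where orthogonality of the two summands is immediate since the first lies inside $\im(\BB_{bd}^\top)$.

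Finally, I would take the orthogonal complement of this direct sum. A vector orthogonal to $\im(\BB_{bd}^\top)^\perp$ must lie in $\im(\BB_{bd}^\top)$, and within $\im(\BB_{bd}^\top)$ being orthogonal to the remaining summand is exactly the condition defining $\calU$. Hence $\im(M) = \ker(M)^\perp = \calU$, as claimed.

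I do not anticipate a real obstacle here; the only subtlety is being careful that the decomposition of $\ker(M)$ is an \emph{orthogonal} direct sum, so that the complement distributes cleanly. All ingredients are standard facts about orthogonal projectors together with the identity $\ker(\BB_{int})^\perp = \im(\BB_{int}^\top)$, so the argument should be short.
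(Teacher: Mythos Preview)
Your proposal is correct and follows essentially the same route as the paper's proof: both characterize the kernel of the PSD matrix $M$ as the orthogonal direct sum $\bigl(\im(\BB_{bd}^\top)\cap\im(\BB_{int}^\top)\bigr)\oplus\im(\BB_{bd}^\top)^{\perp}$ and then pass to the orthogonal complement. Your use of the quadratic form to justify $\ker(M)=\ker(\PPi_{\ker(\BB_{int})}\PPi_{\im(\BB_{bd}^\top)})$ is exactly the PSD step the paper invokes implicitly.
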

\begin{proof}
We first find an orthogonal basis of the kernel of $\PPi_{\im(\BB_{bd}^\top)} \PPi_{\ker(\BB_{int})} \PPi_{\im(\BB_{bd}^\top)}$, which is 
$\calU' \defeq \{\uu: \PPi_{\im(\BB_{bd}^\top)} \PPi_{\ker(\BB_{int})} \PPi_{\im(\BB_{bd}^\top)} \uu = {\bf 0}\} = \{\uu: \PPi_{\ker(\BB_{int})} \PPi_{\im(\BB_{bd}^\top)} \uu = {\bf 0}\}$.
Clearly, $\ker(\BB_{bd}) \subset \calU'$.
We consider $\uu \in \im(\BB_{bd}^\top)$. In this case, $\PPi_{\ker(\BB_{int})} \PPi_{\im(\BB_{bd}^\top)} \uu =  \PPi_{\ker(\BB_{int})} \uu = {\bf 0}$ if and only if $\uu \in \im(\BB_{int}^\top)$.
So, the claim statement holds.
\end{proof}

\begin{lemma}
$\lambda_{\min} \left( \PPi_{\im(\BB_{bd}^\top)} \PPi_{\ker(\BB_{int})} \PPi_{\im(\BB_{bd}^\top)} \right) = \Omega(r^{-1})$.
\end{lemma}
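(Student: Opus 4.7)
The plan is to reduce the eigenvalue bound to a lower bound on $\lambda_{\min}(\SS)$ restricted to $\im(\SS)$ and then exploit the triangulated $2$-sphere structure of the inner boundary $\calB_{in}$. Combining the identity $\SS \pgeq \lambda_{\min}(\MM^{\dagger /2} \SS \MM^{\dagger /2}) \cdot \PPi \MM \PPi$ established above with $\SS \pleq \MM$ (which is immediate from Fact \ref{fact:schur}), proving the target $\Omega(1/r)$ bound is equivalent to showing $\lambda_{\min}(\SS)|_{\im(\SS)} = \Omega(1/r)$, since $\lambda_{\max}(\MM) = O(1)$ by Lemma \ref{lem:max_deg} applied to the rank-one decomposition of $\MM = \BB_{bd}\BB_{bd}^\top$, exactly as in Claim \ref{clm:eigmax}.

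For the lower bound on $\SS$, I would use the minimization characterization from Fact \ref{fact:schur},
\[
\vv^\top \SS \vv = \min_{\ww} \|\partial_2 (\ww,\vv)^\top\|^2,
\]
and leverage that $\calB_{in}$ triangulates a $2$-sphere with $O(r^{2/3})$ triangles and dual-graph diameter $O(r^{1/3})$. Via the $\partial_2\partial_2^\top$ vs.\ $\partial_2^\top\partial_2$ duality, $\lambda_{\min}(\Ldown_{2,\calB_{in}}) = \lambda_{\min}(\Lup_{1,\calB_{in}}) = \Omega(1/r)$, the latter being exactly the content of Claim \ref{clm:eigmin} applied to the inner sphere. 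This provides the source of the $1/r$ factor for a $2$-down-Laplacian on $T_0 \subseteq T_{bd}$, the subset of boundary-layer triangles that lie on $\calB_{in}$.

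I would then propagate the bound to $\SS$ in two steps. First, take a further Schur complement $\schur[\SS]_{T_0} = \schur[\Ldown_{2,\calX}]_{T_0}$ (by transitivity of Schur complements); show its image equals $\im(\Ldown_{2,\calB_{in}})$ by an argument analogous to Claim \ref{clm:image}, using that a $3$-ball capping $\calB_{in}$ has trivial second homology so the only $2$-cycle class mod boundary is the sphere itself; combine with the upper bound $\schur[\SS]_{T_0} \pleq \Ldown_{2,\calX}[T_0,T_0] = \Ldown_{2,\calB_{in}}$ (Fact \ref{fact:schur}) and an $O(1)$ max-eigenvalue bound to obtain $\schur[\SS]_{T_0} \pgeq \Omega(1/r)\,\PPi_{T_0}$. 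Second, lift this bound to the full boundary layer $T_0 \cup T_1$ by the block factorization of Fact \ref{fact:identity} applied to $\SS$, exploiting that the shell-width-$\ge 5$ assumption structurally separates the $T_1$ block from the $\calB_{in}$ cycle, so the $T_1$-diagonal block contributes only constant-norm corrections.

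The main obstacle I expect is step two: the directions in $\im(\SS)$ that are not captured by $T_0$ need not vanish, and they are controlled by how $T_1$ triangles couple to interior triangles through shared shell edges rather than through inner-sphere edges. Ensuring that no hidden $o(1/r)$ eigenvalue arises from such couplings -- in particular, verifying that the $(T_1,T_1)$-block of $\SS$, after Schur-subtracting the $T_0$ contribution, remains spectrally well-conditioned against the corresponding block of $\MM$ -- will require a careful block analysis tailored to the $r$-hollowing's geometry, and is where most of the work beyond the template of Lemma \ref{lem:eigS_single} lies.
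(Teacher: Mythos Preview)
Your route is genuinely different from the paper's, and it has two gaps worth naming.

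\textbf{Gap 1: the reduction to $\lambda_{\min}(\SS)$.} You claim that, because $\lambda_{\max}(\MM)=O(1)$, proving $\lambda_{\min}(\MM^{\dagger/2}\SS\MM^{\dagger/2})=\Omega(r^{-1})$ is \emph{equivalent} to proving $\lambda_{\min}(\SS)\big|_{\im(\SS)}=\Omega(r^{-1})$. Only one direction is even plausible, and that direction already needs care: the minimum nonzero generalized eigenvalue is $\min \frac{\yy^\top\SS\yy}{\yy^\top\MM\yy}$ over $\yy$ with $\MM\yy\in\im(\SS)$, and such $\yy$ need not lie in $\im(\SS)$. Indeed $\im(\SS)\subsetneq\im(\MM)$ here, since $\ker(\SS)\setminus\ker(\MM)$ corresponds exactly to $\im(\BB_{bd}^\top)\cap\im(\BB_{int}^\top)$, which the paper shows is the nontrivial subspace $\calV$. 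So the bound $\yy^\top\SS\yy\ge\lambda_{\min}(\SS)\norm{\yy}^2$ does not apply directly, and you would need a separate argument controlling the $\ker(\SS)$-component of the minimizer.

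\textbf{Gap 2: the lift from $T_0$ to $T_{bd}$.} You correctly flag this as the hard step. The block factorization of Fact~\ref{fact:identity} does not by itself give a lower bound on $\lambda_{\min}(\SS)$ from $\lambda_{\min}(\schur[\SS]_{T_0})$: you would additionally need $\lambda_{\min}(\SS[T_1,T_1])$ bounded below and the off-diagonal coupling $\SS[T_0,T_1]\SS[T_1,T_1]^{\dagger}$ bounded above, and neither is obviously constant. The shell-width $\ge 5$ assumption gives combinatorial separation, but $\SS$ is a Schur complement through all interior triangles, so $\SS[T_1,T_1]$ is already a dense, non-local object; there is no a priori reason its spectrum is $\Theta(1)$.

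\textbf{What the paper does instead.} The paper avoids $\lambda_{\min}(\SS)$ entirely and works directly in \emph{edge} space. Using Claim~\ref{clm:proj_image} it restricts the Rayleigh quotient to $\calU=\{\uu\in\im(\BB_{bd}^\top):\uu\perp\im(\BB_{bd}^\top)\cap\im(\BB_{int}^\top)\}$. It then block-decomposes $\partial_2$ according to whether edges lie on the inner sphere and whether triangles touch it, obtaining sub-blocks $\BB_2,\BB_3,\BB_4$ with $\BB_2,\BB_3$ of the form $\diag(\vecone^\top,\ldots,\vecone^\top)$ after reorientation. It explicitly identifies $\im(\BB_{bd}^\top)\cap\im(\BB_{int}^\top)$ in these coordinates (this is where the shell-width assumption is used, to force zero first Betti number on the interior side), parametrizes $\calU$ by a single vector $\yy$, and reduces the Rayleigh quotient to $\norm{\BB_4\yy}^2/(\norm{\BB_3\yy}^2+\norm{\BB_4\yy}^2)$. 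Since $\BB_4^\top\BB_4$ is a graph Laplacian on the boundary layer, Theorem~\ref{thm:diameter} gives $\norm{\BB_4\yy}^2\ge\Omega(r^{-1})\norm{\BB_3\yy}^2$ directly. No Schur-complement lifting or image-matching is needed; the block decomposition of $\partial_2$ does all the work.
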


\begin{proof}
By the Courant-Fischer min-max theorem and Claim \ref{clm:proj_image},
\begin{align}
    \lambda_{\min} \left( \PPi_{\im(\BB_{bd}^\top)} \PPi_{\ker(\BB_{int})} \PPi_{\im(\BB_{bd}^\top)} \right)
= \min_{\uu \in \calU } \frac{\uu^\top \PPi_{\im(\BB_{bd}^\top)} \PPi_{\ker(\BB_{int})} \PPi_{\im(\BB_{bd}^\top)} \uu}{\uu^\top \uu},
\label{eqn:lambda_min_proj}
\end{align}
where $\calU = \{\uu \in \im(\BB_{bd}^\top): \uu \perp \im(\BB_{bd}^\top) \cap \im(\BB_{int}^\top) \} \setminus \{{\bf 0}\}$. 

Recall that in each region, the boundary triangulates a spherical shell in $\R^3$.
We further decompose $\partial_2$ according to the inner sphere of the boundary:
\begin{align}
  \partial_2 = \begin{pmatrix} 
    \BB_{int}^\top & | & \BB_{bd}^\top
  \end{pmatrix} = 
    \begin{pmatrix}
    \BB_{11} & \BB_{12} & |  & {\bf 0} \\
    {\bf 0} & \BB_2 & | & \BB_3 \\
    {\bf 0} & {\bf 0} & | & \BB_{4}
  \end{pmatrix}.
  \label{eqn:decompose_partial_2}
\end{align} 
Here, the blocks of the rows from top to bottom correspond to the interior edges, 
the boundary edges on the boundary inner sphere, 
and the other boundary edges, respectively;
the blocks of the columns from left to right correspond to the interior triangles with only 
interior edges, the interior triangles with both interior and boundary edges, and the boundary triangles, respectively.
The column in $\partial_2$ corresponding to a triangle whose edges are all on the boundary inner sphere can be written as a linear combination of the other boundary triangle columns. 
We remove these boundary inner sphere triangles so that every remaining triangle has at most one edge on the boundary inner sphere.
Without loss of generality, we can orient and reorder the edges and the triangles so that 
\begin{align}
  \BB_2 = \diag(\one^\top, \ldots, \one^\top),  
\BB_3 = \diag(\one^\top, \ldots, \one^\top).
\label{eqn:BB2}
\end{align}
In addition, in each triangle with one edges on the boundary inner sphere, we let the other two edges point away from the inner sphere so that 
every column of $\BB_{12}$ and of $\BB_{4}$ has exactly two nonzero entries with values $1$ and $-1$ each.

We want to characterize $\im(\BB_{int}^\top) \cap \im(\BB_{bd}^\top)$. Let 
\begin{align*}
   \calV \defeq \left\{ \begin{pmatrix}
  {\bf 0} \\
  \BB_3 \xx \\
  {\bf 0}
\end{pmatrix}: \xx \in \ker(\BB_{4})
\right\}.
\end{align*}
By Equation \eqref{eqn:decompose_partial_2}, $\im(\BB_{int}^\top) \cap \im(\BB_{bd}^\top) \subseteq \calV$. We will show $\calV \subseteq \im(\BB_{int}^\top) \cap \im(\BB_{bd}^\top)$, that is, $\calV \subseteq \im(\BB_{int}^\top)$.
Since the boundary layer triangulates a spherical shell in $\R^3$, whose first Betti number is zero, $\calV$ is orthogonal to the image of $\partial_1^\top$ of the boundary layer.
Without loss of generality, we can also assume the $2$-complex of the interior triangles touching the boundary inner sphere has the first Betti number being zero. Otherwise, we shift the boundary inner sphere towards the boundary outer sphere and include the boundary layer in the interior part. This can be done since, in the given $r$-hollowing, each region boundary triangulates a spherical shell with a ``hop'' shell width of at least $5$. Under this assumption, 
$\calV \subseteq \im(\left( \begin{array}{c}
     \BB_{12}  \\
     \BB_2 \\
     {\bf 0}
\end{array} \right)) \subseteq \im(\BB_{int}^\top)$.
Thus, $\calV = \im(\BB_{int}^\top) \cap \im(\BB_{bd}^\top)$.
Then, 
\begin{align*}
\calU = \left\{ \begin{pmatrix}
    {\bf 0} \\
    \BB_3 \yy \\
    \BB_{4} \yy
\end{pmatrix} : \BB_3 \yy \perp \BB_3 \xx, ~ \forall \xx \in \ker(\BB_{4})
\right\} \setminus \{{\bf 0}\}.
\end{align*}
By Equation \eqref{eqn:lambda_min_proj} and \eqref{eqn:decompose_partial_2},
\begin{align*}
\lambda_{\min} \left( \PPi_{\im(\BB_{bd}^\top)} \PPi_{\ker(\BB_{int})} \PPi_{\im(\BB_{bd}^\top)} \right)
& \ge \min_{\yy \neq {\bf 0}: \BB_3^\top \BB_3 \yy \perp  \ker(\BB_{4})} \frac{\norm{\BB_{4} \yy}^2}{\norm{\BB_3\yy}^2 + \norm{\BB_{4} \yy}^2} \\
& = \min_{\yy \neq {\bf 0}: \BB_3^\top \BB_3 \yy \perp  \ker(\BB_{4)}} \frac{1}{\norm{\BB_3\yy}^2 / \norm{\BB_{4} \yy}^2 + 1}.
\end{align*}
It suffices to show $\frac{\norm{\BB_3\yy}^2}{\norm{\BB_{4} \yy}^2} = O(r)$.

Without loss of generality, we can assume each diagonal block of $\BB_3$ in Equation \eqref{eqn:BB2} has equal dimensions $c$ by duplicating columns in $\BB_{bd}^\top$ (which does not change its image).
\begin{align*}
  \BB_3^\top \BB_3 = \diag(\JJ, \ldots, \JJ) \defeq c \PPi
\end{align*}
is a multiple of projection matrix where $\JJ$ is the all-one matrix in dimensions $c \times c$.
Since $ \BB_3^\top \BB_3 \yy \perp \ker(\BB_{4})$, we know $ \BB_3^\top \BB_3 \yy = c \PPi \yy \in \im(\BB_{4}^\top)$.
Since $\PPi$ is an orthogonal projection, we have 
$\norm{\PPi\yy} \le \norm{\PPi_{\im(\BB_{4}^\top)} \yy}$.
Note $\BB_{4}^\top \BB_{4}$ can be treated as a graph Laplacian matrix. By the eigenvalue bound in Theorem \ref{thm:diameter},
\begin{align*}
  \norm{\BB_{4} \yy}^2 \ge \Omega(r^{-1}) \norm{\PPi_{\im(\BB_{4}^\top)} \yy}^2
  \ge \Omega(r^{-1}) \norm{\PPi \yy}^2
  =  \Omega(r^{-1}) \cdot \frac{1}{c} \norm{\BB_3 \yy}^2.
\end{align*}
Therefore, 
\[
  \frac{\norm{\BB_3\yy}^2}{\norm{\BB_{4} \yy}^2} = O(r).\qedhere
\]
\end{proof}

Combining all the lemmas and claims above, we prove Lemma \ref{lem:up_proj_schur_precond}.

\subsection{Proof of Lemma \ref{lem:up_projection_approx}}

Given a vector $\bb$, we approximate $\Pup_1 \bb$ by the following steps: 
(1) compute $\bb_1 = \PPi_{\im(\partial_2[:,F])} \bb$;
(2) compute $\bb_2 = \partial_2^\top [C,:] \PkerF \bb$;
(3) approximately solve $\schur[\Ldown_2]_C \bb_3 = \bb_2$ via \pcg \ and get an approximate solution $\widetilde{\bb_3}$ up to error $\delta \le \frac{\eps}{\norm{\Lup_1}}$;
(4) compute $\bb_4 = \PkerF \partial_2 [:,C] \widetilde{\bb_3}$;
(5) compute $\bb_5 = \bb_1 + \bb_4$.
Let $\aPup_1$ be the above operator so that $\bb_5 = \aPup_1 \bb$.

\begin{claim}
$\aPup_1 \bb$ is in the image of $\Lup_1$.
\end{claim}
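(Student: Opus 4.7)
The plan is to show $\aPup_1\bb = \bb_1 + \bb_4$ lies in $\im(\partial_2)$, which by Fact \ref{fact:image} (applied to $\partial_2\WW_2^{1/2}$) coincides with $\im(\partial_2\WW_2\partial_2^\top) = \im(\Lup_1)$. So the whole claim reduces to verifying each summand produced in steps (1)--(4) lies in $\im(\partial_2)$, after which closure under addition finishes the argument.

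The easy summand is $\bb_1 = \PPi_{\im(\partial_2[:,F])}\bb$: by definition this is in $\im(\partial_2[:,F])$, and $\im(\partial_2[:,F]) \subseteq \im(\partial_2)$ since the columns of $\partial_2[:,F]$ are a subset of the columns of $\partial_2$. For $\bb_4 = \PkerF\,\partial_2[:,C]\widetilde{\bb_3}$, the key observation is that $\PkerF = \II - \PPi_{\im(\partial_2[:,F])}$, because $\ker(\partial_2^\top[F,:])$ is precisely the orthogonal complement of $\im(\partial_2[:,F])$. Therefore
\[
\bb_4 \;=\; \partial_2[:,C]\widetilde{\bb_3} \;-\; \PPi_{\im(\partial_2[:,F])}\,\partial_2[:,C]\widetilde{\bb_3}.
\]
The first term is manifestly in $\im(\partial_2[:,C])\subseteq \im(\partial_2)$, and the second term is in $\im(\partial_2[:,F])\subseteq \im(\partial_2)$, so $\bb_4\in\im(\partial_2)$.

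Combining, $\aPup_1\bb = \bb_1 + \bb_4 \in \im(\partial_2) = \im(\Lup_1)$, which is the desired conclusion. There is no real obstacle here, since the approximate solver $\widetilde{\bb_3}$ produced by \pcg\ in step (3) only enters as an arbitrary vector acted on by $\partial_2[:,C]$; the error $\delta$ in that solve plays no role in the image containment, and indeed a version of this claim would hold for \emph{any} vector $\widetilde{\bb_3}$ substituted in place of the true solution. The only non-obvious step is the orthogonal-complement identity $\PkerF = \II - \PPi_{\im(\partial_2[:,F])}$, which is immediate from the fundamental theorem of linear algebra $\R^{|F|\text{-ambient}} = \im(\partial_2[:,F]) \oplus \ker(\partial_2^\top[F,:])$ recalled at the start of Section \ref{sect:linear_algebra}.
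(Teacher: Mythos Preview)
Your proof is correct. Both your argument and the paper's show $\aPup_1\bb = \bb_1 + \bb_4$ with each summand in $\im(\Lup_1)$; the only difference is in how $\bb_4$ is handled. The paper invokes Lemma~\ref{lem:many_balls_exact_up_proj_restate} to say that $\im(\Lup_1)$ decomposes as $\im(\partial_2[:,F]) \oplus \im(\PkerF\,\partial_2[:,C])$, so $\bb_4 \in \im(\PkerF\,\partial_2[:,C])$ is immediately in the right place. You instead expand $\PkerF = \II - \PPi_{\im(\partial_2[:,F])}$ and write $\bb_4$ as a difference of two vectors each in $\im(\partial_2)$, then use Fact~\ref{fact:image} to identify $\im(\partial_2) = \im(\Lup_1)$. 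Your route is slightly more elementary and self-contained, since it avoids appealing to the structural decomposition of Lemma~\ref{lem:many_balls_exact_up_proj_restate}; the paper's route is shorter given that lemma is already in hand. Your observation that the argument works for \emph{any} $\widetilde{\bb_3}$, independent of the PCG error, is a nice point not made explicit in the paper.
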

\begin{proof}
By Lemma \ref{lem:many_balls_exact_up_proj_restate}, the image of $\Lup_1$ can be decomposed as a direct sum of two orthogonal subspace: $\im(\partial_2[:,F])$ and $\im(\PkerF \partial_2 [:,C])$.
Note $\aPup_1 \bb =  \bb_1 + \bb_4$, where $\bb_1 \in \im(\partial_2[:,F])$ and $\bb_4 \in \im(\PkerF \partial_2 [:,C])$. So, $\aPup_1 \bb \in \im(\Lup_1)$.
\end{proof}

We bound the error of $\aPup_1 \bb$:
\begin{align*}
\norm{\aPup_1 \bb - \Pup_1 \bb}
& = \norm{\PkerF \partial_2[:,C] \left( (\schur[\Ldown_2]_C)^{\dagger} - \SS^{\dagger}
\right) \partial_2^\top [C,:] \PkerF \bb} \\
& \le \delta \norm{\partial_2^\top [C,:] \PkerF \bb}
\norm{\partial_2^\top [C,:] \PkerF} \\
& \le \delta \lambda_{\max}(\Lup_1) \norm{\Pup_1 \bb} \\
& \le \eps \norm{\Pup_1 \bb} 
\tag{by the setting of $\delta$}
\end{align*}

Finally, we bound the runtime of our approximate projection algorithm. The proof is similar to that of Lemma \ref{lem:solve_schur}.
By Theorem \ref{thm:PCG} and Lemma \ref{lem:up_proj_precondition_whole}, the number of \pcg \ (PCG) iterations is 
\[
\Otil\left( \kappa(\schur[\Ldown_2]_C, \partial_2^\top [C,:] \partial_2 [:,C])^{1/2}
\right) = \Otil(r^{1/2}).
\]
In each PCG iteration, we solve a system in $\partial_2^\top [C,:] \partial_2 [:,C]$ and multiply $\schur[\Ldown_2]_C$ with $O(1)$ vectors.
Recall 
\[
\schur[\Ldown_2]_C = \partial_2^\top [C,:] \PkerF \partial_2[:,C].
\]
In our preprocessing, we compute a Cholesky factorization of $\partial_2^\top [F,:] \PkerF \partial_2[:,F]$ in time $O(nr)$ (by the proof of Claim \ref{clm:up_proj_solve}); then, we can multiply $\schur[\Ldown_2]_C$ onto a vector in time $O(nr^{1/3})$.
Similarly, we solve a system in $\partial_2^\top [C,:] \partial_2 [:,C]$ by nested dissection. 
By our construction, $C$ has $O(r^{2/3} \cdot \frac{n}{r}) = O(nr^{-1/3})$ triangles.
The Cholesky factorization runs in time $O(n^2 r^{-2/3})$, and each system solve runs in time $O(n^{4/3} r^{-4/9})$.
So, the runtime per each PCG iteration is 
\[
O\left( nr^{1/3} + n^{4/3} r^{-4/9}
\right).
\]
Therefore, the total time is 
\begin{align*}
\Otil \left( r^{1/2} \left( nr^{1/3} + n^{4/3} r^{-4/9} \right)
+ nr + n^2 r^{-2/3}
\right)
= \Otil \left( nr + n^{4/3}r^{5/18} + n^2 r^{-2/3}
\right).
\end{align*}
We finish the proof of Lemma \ref{lem:up_projection_approx}.

\section{Proof of Theorem \ref{thm:main_ball}}
\label{sect:together}

Given all the four operators in Lemma \ref{lem:cohen_project}, \ref{lem:down_lap_solver}, \ref{lem:up_solver}, and \ref{lem:up_projection_approx}, we prove Theorem \ref{thm:main_ball}.

\begin{proof}[Proof of Theorem \ref{thm:main_ball}]
Let $\kappa$ be the maximum of $\kappa(\Ldown_1)$ and $\kappa(\Lup_1)$.
Let $\delta > 0$ be a parameter to be determined later. 
Let $\aPdown_1 = \aPdown_1( \delta ), \aPup_1 = \aPup_1(\delta )$ be defined in Lemma \ref{lem:cohen_project} and \ref{lem:up_projection_approx}, and let $\Zdown_1$ be the operator in Lemma \ref{lem:down_lap_solver} with no error and $\Zup_1$ in Lemma \ref{lem:up_solver} with error $\delta$.
Let 
\begin{align*}
& \abup \defeq \aPup_1 \bb, ~ \abdown \defeq \aPdown_1 \bb, \\
& \axup \defeq \Zup_1 \abup, ~
\axdown \defeq \Zdown_1 \abdown, \\
& \tilde{\xx} \defeq \aPup_1 \axup + \aPdown_1 \axdown.
\end{align*}
Then,
\begin{multline*}
\norm{\LL_1 \tilde{\xx} - \PPi_1 \bb}_2 \\
 \le \norm{\Lup_1 \aPup_1 \axup - \abup}_2
+ \norm{\Ldown_1 \aPdown_1 \axdown - \abdown}_2 + \norm{ \abup + \abdown - \PPi_1 \bb}_2.
\end{multline*}
We will upper bound the three terms on the right-hand side separately.
\begin{itemize}
    \item For the first term,
\begin{align*}
\norm{\Lup_1 \aPup_1 \axup - \abup}_2
& \le \norm{\Lup_1 \aPup_1 \axup - \Lup_1 \Pup_1 \axup}_2
+ \norm{\Lup_1 \axup - \abup}_2 \\
& \le \norm{\Lup_1 \aPup_1 \axup - \Lup_1 \Pup_1 \axup}_2 + \delta \norm{\abup}
\tag{by Lemma \ref{lem:up_solver}}
\end{align*}
By Lemma \ref{lem:up_projection_approx},
\begin{align*}
\norm{\Lup_1 \aPup_1 \axup - \Lup_1 \Pup_1 \axup}_2  
& \le \norm{\Lup_1}_2 \norm{(\aPup_1 - \Pup_1)\Pup_1 \axup}_2 \\
& \le \delta \norm{\Lup_1}_2  \norm{\Pup_1 \axup}_2.
\end{align*}
Let $\yy \defeq (\Lup_1)^{\dagger} \abup$.
By Lemma \ref{lem:up_solver},
\begin{align*}
\norm{\Pup_1 \axup - \yy}_2
\le \norm{(\Lup_1)^{\dagger}}_2 \norm{\Lup_1 \Pup_1 \axup - \abup}_2 
\le \delta \norm{(\Lup_1)^{\dagger}}_2 \norm{\abup}_2.
\end{align*}
By the triangle inequality,
\begin{align*}
\norm{\Pup_1 \axup }_2 
\le \norm{\yy}_2 + \delta \norm{(\Lup_1)^{\dagger}}_2 \norm{\abup}_2
\le (1+\delta) \norm{(\Lup_1)^{\dagger}}_2 \norm{\abup}_2.
\end{align*}
So, 
\[
\norm{\Lup_1 \aPup_1 \axup - \Lup_1 \Pup_1 \axup}_2    
\le \delta (1+\delta) \kappa \norm{\abup}_2,
\]
and 
\[
\norm{\Lup_1 \aPup_1 \axup - \abup}_2
\le 3\delta \kappa \norm{\abup}_2.
\]

\item For the second term, the operator $\Zdown_1$ has no error, which means $\Ldown_1 \axdown = \abdown$. Then,
\begin{align*}
 \norm{\Ldown_1 \aPdown_1 \axdown - \abdown}_2
 & = \norm{\Ldown_1 \aPdown_1 \axdown - \Ldown_1 \axdown}_2\\
 & \le \delta(1+\delta) \kappa \norm{\abdown}_2.
\end{align*}

\item For the third term,
\begin{align*}
\norm{\abup + \abdown - \PPi_1 \bb}_2^2
& = \norm{(\aPup - \Pup) \bb}_2^2 + \norm{(\aPdown - \Pdown) \bb}_2^2 \\
& \le \delta^2 \left( \norm{\Pup \bb}_2^2 + \norm{\Pdown \bb}_2^2
\right) 
\tag{by Lemma \ref{lem:cohen_project}, \ref{lem:up_projection_approx}, Fact \ref{fact:solvers}} \\
& = \delta^2 \norm{\PPi_1 \bb}_2^2.
\end{align*}
\end{itemize}

Combining all the above inequalities,
\begin{align*}
 \norm{\LL_1 \tilde{\xx} - \PPi_1 \bb}_2
& \le 3\delta \kappa \norm{\abup}_2
+ 2\delta \kappa \norm{\abdown}_2 
+ \delta \norm{\PPi_1 \bb}_2 \\
& \le 3\delta \kappa (1+\delta) \norm{\Pup_1 \bb}_2 
+ 2\delta \kappa (1+\delta) \norm{\Pdown_1 \bb}_2 + \delta \norm{\PPi_1 \bb}_2 \\
& \le 11 \delta \kappa \norm{\PPi_1 \bb}_2.
\end{align*}
Choosing $\delta \le \frac{\eps}{11 \kappa} $, we have 
\[
\norm{\LL_1 \tilde{\xx} - \PPi_1 \bb}_2
\le \eps \norm{\PPi_1 \bb}_2.
\]

The runtime bound follows Lemma \ref{lem:cohen_project}, \ref{lem:down_lap_solver}, \ref{lem:up_solver}, and \ref{lem:up_projection_approx}.
\end{proof}
%!TEX root = main.tex

\def\CH{\text{CH}}
\def\vol{\text{vol}}
\def\calH{\mathcal{H}}

\section{Computing an \texorpdfstring{$r$}{Lg}-Hollowing}

\label{sect:hollowing}

In this section, we prove Lemma \ref{lem:rhollowing} by describing a nearly-linear time algorithm (Algorithm \ref{alg:hollow}) that finds an $r$-hollowing of a pure $3$-complex $\calK$ embedded in $\R^3$ with $n$ stable simplexes each of volume $\Theta(1)$; in addition, $\calK$ has a well-shaped boundary structure with parameter $r=o(n)$ defined in Definition \ref{def:hole_requirement}.
Together with the algorithm that achieves Theorem \ref{thm:main_ball}, we prove Theorem \ref{thm:main_ball_rhollowing}.

We describe the ideas of Algorithm \ref{alg:hollow}.
Let $\mathbb{K}$ be the convex hull of the underlying topological space of $\calK$.
First, Algorithm \ref{alg:hollow} finds a \emph{nice bounding box} -- a box that encompasses $\mathbb{K}$ while ensuring its volume and aspect ratio are within constant factors of those of $\mathbb{K}$.
Lemma \ref{lem:nice_bounding_box} provides a linear time algorithm for finding a nice bounding box for $\mathbb{K}$ when the aspect ratio of $\mathbb{K}$ is $O(1)$.
Second, Algorithm \ref{alg:hollow} ``cuts'' the bounding box into $O(n/r)$ smaller boxes of equal volume using $2$-dimensional planes and turns these cutting planes into an $r$-hollowing.
Figure \ref{fig:complex_and_r_hollowing} visually illustrates the process of finding an $r$-hollowing.

We need the following lemma from \cite{BH01} to construct a nice bounding box.

\begin{lemma}[Lemma 3.4 of \cite{BH01}]
  \label{lm:bounding_box}
  Given a set $X$ of points in $\mathbb{R}^3$, we can compute in linear time a bounding box \footnote{A box that encompasses $X$.} $\calB$ with $\vol(\calB) \le 6 \sqrt{6} \vol(\calB^*)$, where $\vol(\cdot)$ is the volume and
  $\calB^*$ is a bounding box of $X$ with the minimum volume.
\end{lemma}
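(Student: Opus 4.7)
The plan is to build an orthonormal frame $(u_1,u_2,u_3)$ whose primary axis approximates the diameter direction of $X$, and to return the axis-aligned bounding box of $X$ in this frame. The linear-time primitive is an approximate diameter: in any fixed coordinate system, compute the three axis-aligned extents $\ell_x,\ell_y,\ell_z$ of $X$; since $\mathrm{diam}(X)^2 \le \ell_x^2 + \ell_y^2 + \ell_z^2$, the largest of these is at least $\mathrm{diam}(X)/\sqrt{3}$. Let $p,q \in X$ realize this largest axis-aligned extent and set $u_1 := (q-p)/\norm{q-p}$.

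Next, I would project every point of $X$ onto the plane $u_1^{\perp}$ to obtain a planar set $X' \subset \mathbb{R}^2$, repeat the axis-aligned extent trick in 2D (losing a factor of $1/\sqrt{2}$ this time) to obtain $u_2 \in u_1^{\perp}$ whose extent for $X'$ is at least $\mathrm{diam}(X')/\sqrt{2}$, set $u_3 := u_1 \times u_2$, and scan $X$ once more to compute the extents $a_1,a_2,a_3$ of $X$ along $u_1,u_2,u_3$. Each step is $O(n)$ time, and the candidate box $\calB$ has volume $a_1 a_2 a_3$.

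For the volume analysis, let $\calB^*$ have principal axes $v_1,v_2,v_3$ and side lengths $l_1 \ge l_2 \ge l_3$, so $\vol(\calB^*) = l_1 l_2 l_3$. Since $\calB^*$ is the \emph{minimum} bounding box, the extent of $X$ along $v_1$ equals $l_1$ (otherwise $\calB^*$ could be shrunk), so $\mathrm{diam}(X) \ge l_1$ and hence $a_1 \ge l_1/\sqrt{3}$. Combined with the trivial upper bound $a_i \le \mathrm{diam}(X) \le \sqrt{3}\,l_1$ coming from $X \subseteq \calB^*$, this pins $a_1$ to $\Theta(l_1)$. An analogous argument after the first projection controls $a_2$ in terms of $l_2$, up to a residual contribution of $l_1$ orthogonal to $u_1$, and a further projection step controls $a_3$ in terms of $l_3$. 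Chaining the three bounds yields $a_1 a_2 a_3 \le 6\sqrt{6}\, l_1 l_2 l_3$.

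The main obstacle is the ``cross-contamination'' when bounding $a_2$ and $a_3$: because $u_1$ is only an \emph{approximate} diameter direction, it may carry a nontrivial component along $v_2$ or $v_3$, and then $u_2$ may in turn leak back into $v_1$. A clean way is to work in coordinates aligned with $(v_1,v_2,v_3)$, express each $u_i$ in these coordinates, and use the lower bound $a_1 \ge l_1/\sqrt{3}$ (together with its 2D analogue after projection) to constrain the coefficients. Maximizing the ratio $a_1 a_2 a_3 /(l_1 l_2 l_3)$ over all feasible frames produces the stated constant $6\sqrt{6}$.
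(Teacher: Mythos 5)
The paper does not prove this statement; it quotes Lemma~3.4 of \cite{BH01} and uses it as a black box, so there is no internal proof to compare to. Your sketch reproduces the plausible algorithmic skeleton (approximate-diameter direction $u_1$, project onto $u_1^\perp$, recurse in 2D, return the axis-aligned box in the frame $(u_1,u_2,u_3)$), and the first-step estimates are correct: if $p,q$ realize the largest axis-aligned extent then $\norm{q-p}_2 \ge \mathrm{diam}(X)/\sqrt{3}$, the minimum box has $\mathrm{diam}(X) \ge l_1$, and $a_1 \le \mathrm{diam}(X) \le \sqrt{3}\,l_1$.

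The proposal has a genuine gap exactly where the quantitative content of the lemma lives: $a_1 a_2 a_3 \le 6\sqrt{6}\, l_1 l_2 l_3$ is asserted, not derived. The claim that ``an analogous argument after the first projection controls $a_2$ in terms of $l_2$'' does not go through as stated: once $\calB^*$ is projected onto $u_1^\perp$ it is no longer a rectangle, and since $u_1$ is only a $1/\sqrt{3}$-approximate diameter direction it may have a large component along $v_2$ or $v_3$, so the planar set $X'$ can have diameter close to $l_1$ rather than to $\sqrt{l_2^2+l_3^2}$. In a thin pancake with $l_1 \approx l_2 \gg l_3$, the only bounds available from your steps, $a_2, a_3 \le \mathrm{diam}(X') \lesssim l_1$, give a volume ratio that grows with $l_1/l_3$, so per-axis chaining cannot yield a constant. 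You correctly flag this as the ``cross-contamination'' obstacle, but your closing sentence --- that maximizing $a_1 a_2 a_3/(l_1 l_2 l_3)$ over all feasible frames produces $6\sqrt{6}$ --- is precisely the optimization that must actually be set up and solved, and it is not. To close the gap one needs either to carry out that extremal calculation explicitly, or (as in \cite{BH01}) to establish an auxiliary lower bound on $\vol(\calB^*)$ in terms of the diameters and widths the algorithm samples and then check that the worst case gives the constant $6\sqrt{6}=6^{3/2}$.
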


\begin{corollary}[Nice bounding box]
Let $\mathbb{K} \subset \R^3$ be a convex body with aspect ratio $O(1)$.
We can compute a nice bounding box of $\mathbb{K}$ in linear time.
\label{lem:nice_bounding_box}
\end{corollary}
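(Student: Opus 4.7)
The plan is to apply Lemma \ref{lm:bounding_box} directly to the vertex set of $\calK$ and then verify that the resulting box satisfies the two conditions for being \emph{nice}, namely bounded volume and bounded aspect ratio, each within constant factors of those of $\mathbb{K}$. Let $V$ denote the set of $0$-simplexes of $\calK$; since $\mathbb{K} = \conv(V)$, any bounding box of $V$ is a bounding box of $\mathbb{K}$ and vice versa. Running Lemma \ref{lm:bounding_box} on $V$ (which has $O(n)$ points) produces, in linear time, a bounding box $\calB$ with $\vol(\calB) \le 6\sqrt{6}\,\vol(\calB^*)$, where $\calB^*$ is the minimum-volume bounding box of $\mathbb{K}$. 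I will assume without loss of generality that $\calB$ is \emph{axis-tight}, meaning each of its sides coincides with the extent of $V$ along the corresponding axis of $\calB$; if the box produced by Lemma \ref{lm:bounding_box} is not already tight, shrinking each side to the relevant extent in a single linear scan only decreases $\vol(\calB)$ and preserves $\mathbb{K} \subseteq \calB$.

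Let $r$ and $R$ denote the radii of the largest inscribed ball and smallest enclosing ball of $\mathbb{K}$, so that $\alpha \defeq R/r$ is the aspect ratio of $\mathbb{K}$, which is $O(1)$ by hypothesis. For the volume comparison, the inscribed ball gives $\vol(\mathbb{K}) \ge \tfrac{4\pi}{3} r^3$, while the enclosing ball yields $\vol(\calB^*) \le (2R)^3$, so
\[
\vol(\calB) \;\le\; 6\sqrt{6}\,(2R)^3 \;=\; 48\sqrt{6}\,\alpha^3\, r^3 \;=\; O(\vol(\mathbb{K})),
\]
which is the desired volume bound.

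For the aspect-ratio comparison, let $a \le b \le c$ denote the side lengths of $\calB$. Since $\mathbb{K}$ fits in a ball of radius $R$, its extent along every direction (including each axis of $\calB$) is at most $2R$, and axis-tightness then gives $c \le 2R$. On the other hand, $\calB$ contains the inscribed ball of $\mathbb{K}$ of radius $r$, so the extent of $\mathbb{K}$ along every direction is at least $2r$, and axis-tightness gives $a \ge 2r$. Hence the aspect ratio of $\calB$ satisfies
\[
\frac{\sqrt{a^2 + b^2 + c^2}}{a} \;\le\; \sqrt{1 + 2(c/a)^2} \;\le\; \sqrt{1 + 2\alpha^2} \;=\; O(1),
\]
so $\calB$ is a nice bounding box. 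The only subtle step is the axis-tightness reduction, but it is genuinely routine: the contraction along each axis neither violates $V \subseteq \calB$ nor worsens the volume guarantee, and it is precisely what converts the approximate-volume guarantee of Lemma \ref{lm:bounding_box} into a usable bound on the longest side. No other step appears to require delicate reasoning.
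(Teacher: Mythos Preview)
Your proof is correct and follows essentially the same route as the paper: apply Lemma~\ref{lm:bounding_box}, then use the inscribed and enclosing balls of $\mathbb{K}$ to control both the volume and the aspect ratio of $\calB$. The only difference is in the aspect-ratio step: the paper observes that since $\calB$ contains the inscribed ball $\calO_2$ (so every side is $\ge 2r$) and $\vol(\calB)=O(\vol(\calO_2))=O(r^3)$, the longest side is forced to be $O(r)$ as well---this avoids your axis-tightening pass entirely, though your tightening argument is equally valid and gives a slightly sharper constant on $c/a$.
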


\begin{proof}
Let $\calB$ be the bounding box computed by Lemma \ref{lm:bounding_box} with $X$ being the set of points on the boundary of $\mathbb{K}$.
We will show that $\calB$ is a nice bounding box of $\mathbb{K}$.
Let $\calO_1$ be the minimum-volume ball containing $\mathbb{K}$, and let $\calO_2$ be the maximum-volume ball contained in $\mathbb{K}$.
Since $\mathbb{K}$ has $O(1)$ aspect ratio, we know
\begin{align}
\vol(\calO_2) \le \vol(\mathbb{K}) \le \vol(\calO_1)
= O (\vol(\calO_2)).
\label{eqn:vol}
\end{align}
Let $\calB^*$ be a bounding box of $\mathbb{K}$ with the minimum volume, and let $\calB_1$ be a bounding box of $\calO_1$ with the minimum volume. Since $\calB_1$ contains $\mathbb{K}$ and $\calO_1$ is a Euclidean ball in $\R^3$, we have
\[
\vol(\calB^*) \le \vol(\calB_1) = O(\vol(\calO_1)).
\]
By Lemma \ref{lm:bounding_box} and Equation \eqref{eqn:vol}, we have 
\[
\vol(\calB) \le 6\sqrt{6} \vol(\calB^*)
= O(\vol(\calO_1)) = O(\vol(\calO_2)) = O(\vol(\mathbb{K})).
\]
The aspect ratio of the box $\calB$ is $O(1)$ times the ratio between the maximum and the minimum lengths of its edges.
Since $\calB$ contains $\calO_2$ and $\vol(\calB) = O (\vol(\calO_2))$, 
the aspect ratio of $\calB$ is $O(1)$.
\end{proof}

\begin{algorithm}
    \caption{\textsc{Hollowing}($\calK, r$)}
    \label{alg:hollow}
    \KwData{A pure $3$-complex $\calK$ embedded in $\R^3$ with $n$ stable simplexes each of volume $\Theta(1)$ and a well-shaped boundary structure with parameter $r = o(n)$}
    \KwResult{An $r$-hollowing $\calT$}

Find the convex hull of $\calK$, denoted by $\mathbb{K}$. Find a nice bounding box $\calB$ for $\mathbb{K}$.

For each pair of parallel faces of $\calB$, find $\lfloor n^{1/3}r^{-1/3} \rfloor$ evenly-spaced $2$-dimensional planes parallel to the face which divide $\calB$ into equal-volume pieces. We can slightly perturb the planes so that no plane intersects with any vertex of $\calK$ (see Figure \ref{fig:complex_and_r_hollowing}(b)). \label{lin:planes}

$\calT \leftarrow $ all the tetrahedrons in $\calK$ within $O(1)$ distance to the largest boundary of $\calK$ that form a spherical shell. \label{lin:hollow_initT}

  \For{each $2$-dimensional plane $P$\label{lin:for}}{
    $\calQ \leftarrow $ all the tetrahedrons of $\calK$ that intersect $P$.
 
    \If{$\calQ$ is not connected (i.e., $P$ intersects some boundary components inside $\mathbb{K}$)}{
    $\calQ \leftarrow \calQ ~\cup~ \bigcup_{\calH \in \text{$P$ intersected boundary components inside $\mathbb{K}$}}$ all the tetrahedrons within $O(1)$ distance to $\calH$ that are on the same side of $P$ and form a half spherical shell (see Figure \ref{fig:complex_and_r_hollowing}(c)).
    } 
    
    $\calT \leftarrow \calT ~ \cup \calQ$.
    \label{lin:plane2}
  }

  Expand $\calT$ such that its width reaches $5$.
  
    \Return{$\calT$}
\end{algorithm}

\begin{figure}[!ht] 
    \centering 
    \includegraphics[width=0.8\textwidth]{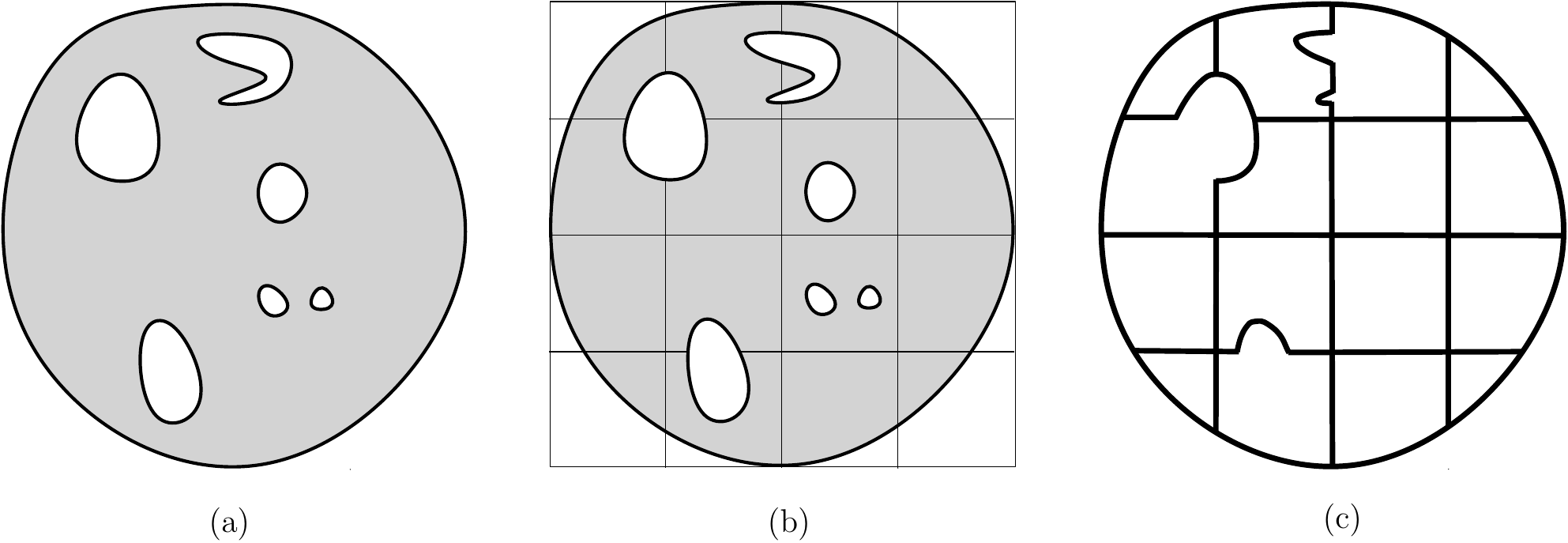} 
    \caption{(a) A crosse-sectional illustration of a 3-complex $\calK$ with several holes inside; (b) A nice bounding box of $\mathbb{K}$ with $\lfloor n^{1/3}r^{-1/3} \rfloor$ evenly-spaced $2$-dimensional planes; (c) An $r$-hollowing $\calT$ generated by Algorithm \ref{alg:hollow} consisting of simplexes that are ``close'' to the two-dimensional planes and parts of the boundary components of the intersecting holes inside with the planes.}
    \label{fig:complex_and_r_hollowing} 
\end{figure}

\begin{proof}[Proof of Lemma \ref{lem:rhollowing}]
The convex hull of $\calK$ can be computed in $O(n \log n)$ time by \cite{chan96}, and the nice bounding box can be computed in $O(n)$ time by Corollary \ref{lem:nice_bounding_box}.
The remaining steps of the algorithm can be implemented in $O(n)$ time. In the rest of the proof, we will show $\calT$ returned by Algorithm \ref{alg:hollow} is an $r$-hollowing of $\calK$.

By Assumption \ref{asp:1}, \ref{asp:2} and \ref{asp:3} and the assumption that each simplex has $\Theta(1)$ volume, the volume of the convex hull $\mathbb{K}$ is $\Theta(n)$; the maximum volume is attained when $\calK$ has $\Theta(n/r)$ boundary components and each corresponds to a ``hole'' of volume $\Theta(r)$.
By Lemma \ref{lem:nice_bounding_box}, we have $\vol(\calB) = \Theta(\mathbb{K}) = \Theta(n)$.
In Algorithm \ref{alg:hollow}, the $2$-dimensional planes divide the box $\calB$ into $O(n/r)$ smaller boxes each of volume $O(r)$ and surface area $O(r^{2/3})$. 
By our construction of $\calT$, each smaller box corresponds to a region; thus, there are $O(n/r)$ regions. 
By Assumption \ref{asp:3}, each region of $\calT$ has $O(r)$ simplexes and $O(r^{2/3})$ boundary simplexes. 
Moreover, the boundary of each region triangulates a spherical shell in $\R^3$ by construction. 
Additionally, the diameter of the underlying topological space of each region is upper bounded by the triangle diameter of the small box plus $\Theta(1)$ times 
the $1$-skeleton diameter of boundary components.
By Assumption \ref{asp:2}, each region has diameter $O(r^{1/3})$.

To conclude, $\calT$ satisfies all the conditions in Definition \ref{def:hollow} and is an $r$-hollowing of $\calK$.
\end{proof}

\section{A Faster Solver for Up-Laplacian}
\label{sect:solver_bdry}

In this section, we introduce a faster solver for linear equations in $\Lup_1$, which approximately solves $\Lup_1 \xx = \bb$ given $\bb$ lies in the image of $\Lup_1$. 
While this solver does not improve the runtime of Theorem \ref{thm:main_ball} (as the primary bottleneck remains the approximate up-projection operator in Lemma \ref{lem:up_projection_approx}), we believe it holds independent interest.
When the first Betti number of the pure $3$-complex $\calK$ is zero, one can approximately project a vector onto the image of $\Lup_1$ in nearly-linear time (as stated in Lemma \ref{lem:cohen_project_up}). In such cases, incorporating the faster solver for linear equations in $\Lup_1$ improves the overall runtime.

\begin{lemma}[Up-projection operator, Lemma 3.2 of \cite{CMFNPW14}]
Let $\calK$ be a $3$-complex with $n$ simplexes.
If the first Betti number of $\calK$ is $0$, there exists an operator $\aPup_1$ such that 
\begin{align*}
    & (1-\eps)\Pup_1 \pleq \aPup_1(\eps) \pleq \Pup_1.
\end{align*}
In addition, for any $\bb \in \mathbb{R}^{n}$, we can compute $\aPup_1 (\eps) \cdot \bb$ in time $\Otil(n \log(1/\eps))$. 
\label{lem:cohen_project_up}
\end{lemma}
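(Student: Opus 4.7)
The plan is to exploit the hypothesis that $H_1(\calK) = 0$, which by definition of the homology group implies $\im(\partial_2) = \ker(\partial_1)$, so $\Pup_1$ coincides with the orthogonal projection $\PPi_{\ker(\partial_1)}$. Combined with the standard orthogonal decomposition $\R^{|\calC_1|} = \im(\partial_1^\top) \oplus \ker(\partial_1)$, this yields the clean identity $\Pup_1 = \II - \Pdown_1$, where $\Pdown_1 = \partial_1^\top(\partial_1\partial_1^\top)^\dagger \partial_1$. Hence it suffices to approximate the down-projection $\Pdown_1$ in nearly-linear time and subtract.

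Next, I would observe that $\partial_1 \WW_0 \partial_1^\top$ is (up to a symmetric similarity) exactly the graph Laplacian $\LL_0$ of the weighted $1$-skeleton of $\calK$, a graph on $O(n)$ vertices with $O(n)$ edges by Lemma \ref{lem:max_deg}. A standard nearly-linear-time, high-accuracy graph Laplacian solver (e.g.\ Spielman--Teng, Koutis--Miller--Peng) produces in time $\Otil(n\log(1/\eps'))$ a symmetric linear operator $\tilde{\LL}_0^\dagger$ with $\im(\tilde{\LL}_0^\dagger) \subseteq \im(\LL_0)$ and $(1-\eps')\LL_0^\dagger \pleq \tilde{\LL}_0^\dagger \pleq (1+\eps')\LL_0^\dagger$. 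Setting $\tilde{\Pdown}_1 \defeq \partial_1^\top \tilde{\LL}_0^\dagger \partial_1$ and invoking Fact \ref{fact:psd} gives a symmetric PSD operator satisfying $(1-\eps')\Pdown_1 \pleq \tilde{\Pdown}_1 \pleq (1+\eps')\Pdown_1$ with $\im(\tilde{\Pdown}_1) \subseteq \im(\partial_1^\top) = \im(\Pdown_1)$; in particular $\tilde{\Pdown}_1$ vanishes on $\ker(\partial_1) = \im(\Pup_1)$.

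The main obstacle is upgrading this two-sided spectral approximation of $\Pdown_1$ into the \emph{one-sided} bound $(1-\eps)\Pup_1 \pleq \aPup_1 \pleq \Pup_1$ claimed in the statement: the naive choice $\aPup_1 = \II - \tilde{\Pdown}_1$ fails because $\tilde{\Pdown}_1$ may slightly exceed $\Pdown_1$ on $\im(\partial_1^\top)$, leaving an indefinite residue there. The standard remedy is to post-process the solver: either apply a constant number of Richardson iterations using $\tilde{\LL}_0^\dagger$ as preconditioner, pushing the residual error out of $\im(\partial_1^\top)$, or symmetrically factor the approximation as $\aPup_1 = (\II - \tilde{\Pdown}_1) \cdot \MM \cdot (\II - \tilde{\Pdown}_1)$ for an appropriately chosen PSD $\MM \pleq \II$; in either construction the image of $\aPup_1$ lies in $\im(\Pup_1)$, which immediately gives the upper bound $\aPup_1 \pleq \Pup_1$ via Fact \ref{fact:schur} (or its consequence that a PSD operator supported on a subspace is dominated by the identity on that subspace).

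Choosing $\eps' = \Theta(\eps)$ and performing $O(1)$ refinements preserves the runtime $\Otil(n\log(1/\eps))$ since each step is a single Laplacian solve plus $O(n)$ sparse matrix-vector multiplications against $\partial_1$ and $\partial_1^\top$. The lower bound $(1-\eps)\Pup_1 \pleq \aPup_1$ follows by combining the spectral bound on $\tilde{\Pdown}_1$ with the identity $\Pup_1 = \II - \Pdown_1$ and expanding $\aPup_1$ blockwise with respect to the decomposition $\im(\partial_1^\top) \oplus \ker(\partial_1)$; the error on the $\ker(\partial_1)$-block is driven to $O(\eps)$ by the refinement, and the $\im(\partial_1^\top)$-block vanishes by construction.
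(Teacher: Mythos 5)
The lemma is cited from \cite{CMFNPW14}; the paper gives no proof of its own, so this is a review of your reconstruction. Your opening reduction is correct: $H_1(\calK)=0$ gives $\ker(\partial_1)=\im(\partial_2)$, so $\Pup_1 = \II - \Pdown_1$, and $\partial_1\partial_1^\top$ is the graph Laplacian of the $1$-skeleton (note that $\Pdown_1=\partial_1^\top(\partial_1\partial_1^\top)^\dagger\partial_1$ is unweighted, so the $\WW_0$ you insert does not belong there, and in the paper's notation $\LL_0$ carries edge weights $\WW_1$, not vertex weights $\WW_0$). The gap is in your repair of the one-sided bound. Writing $\QQ\defeq\partial_1^\top\ZZ\partial_1$ for the approximate down-projection, you assert that $(\II-\QQ)\MM(\II-\QQ)$ (or a Richardson-refined variant) has image contained in $\ker(\partial_1)=\im(\Pup_1)$. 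This is false: $\partial_1(\II-\QQ)=(\II-\LL_0\ZZ)\,\partial_1$ with $\LL_0=\partial_1\partial_1^\top$, which is nonzero whenever $\ZZ$ is not an \emph{exact} pseudo-inverse on $\im(\LL_0)$, and Richardson refinement shrinks this residual but leaves its support unchanged. Once $\aPup_1$ has \emph{any} nonzero restriction to $\im(\partial_1^\top)$, the upper bound $\aPup_1\pleq\Pup_1$ cannot hold, since $\Pup_1$ annihilates $\im(\partial_1^\top)$; thus no construction of the form $\II - (\text{inexact operator with range in }\im(\partial_1^\top))$ can satisfy the stated domination.

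The missing idea is to build $\aPup_1$ in a form whose image lies in $\im(\partial_2)=\ker(\partial_1)$ \emph{by construction}: set $\aPup_1 = \partial_2\,\ZZ'\,\partial_2^\top$ for a symmetric PSD $\ZZ'$ with $(1-\eps)(\partial_2^\top\partial_2)^\dagger \pleq \ZZ' \pleq (\partial_2^\top\partial_2)^\dagger$ (a one-sided bound obtained by rescaling a standard two-sided approximate inverse). Then $\im(\aPup_1)\subseteq\im(\partial_2)$ automatically, and Fact \ref{fact:psd} immediately yields $(1-\eps)\Pup_1\pleq\aPup_1\pleq\Pup_1$. The price is a fast solver for $\Ldown_2=\partial_2^\top\partial_2$, which \cite{CMFNPW14} obtain from their collapsing-sequence machinery (the dimension-two analogue of Lemma \ref{lem:cohen_down_solve}); the graph-Laplacian route you pursue only supplies a solver for $\partial_1\partial_1^\top$, which is the wrong system.
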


The enhanced performance results from the reduced size of $r$-hollowings needed for preconditioning the Schur complement of $\Lup_1$. Specifically, we will use a triangulated sphere, instead of a spherical shell, as a region boundary.
This reduced version of $r$-hollowing is referred to as the \emph{sphere $r$-hollowing} and is defined as follows.

\begin{definition}[Sphere $r$-hollowing]
Let $\calK$ be a $3$-complex with $n$ simplexes, and let $r = o(n)$ be a positive integer.
We divide $\calK$ into $O(n/r)$ regions each of $O(r)$ simplexes and $O(r^{2/3})$ boundary simplexes. Only boundary simplexes can belong to more than one region.
The boundary of each region triangulates a 2D sphere and has triangle diameter $O(r^{1/3})$.
In addition, each region intersects with at most $O(1)$ other regions. If two regions intersect, their intersection is either a triangulated disc or a path on the boundary of the two regions.
The union of all boundary simplexes of each region is referred to as a sphere $r$-hollowing of $\calK$.
\label{def:surface_hollow}
\end{definition}

We remark that if $\calK$ has a well-shaped boundary structure defined in Definition \ref{def:hole_requirement},
we can also find a sphere $r$-hollowing in nearly-linear time with a slight modification to Algorithm \ref{alg:hollow}.

We state the Lemma of the faster up-Laplacian solver with a sphere $r$-hollowing as follows, where we set $r = O(\sqrt{n})$.

\begin{lemma}[Faster up-Laplacian solver with a sphere $r$-hollowing]
Let $\calK$ be a pure $3$-complex embedded in $\R^3$ and composed of $n$ stable simplexes. 
Let $\calT$ be a sphere $r$-hollowing of $\calK$. 
Then for any $\eps > 0$, there exists an operator $\Zup_1$ such that
\[
\forall \bb \in \im(\Lup_1), ~ \norm{\Lup_1 \Zup_1 \bb - \bb}_2 \le \eps \norm{\bb}_2.
\]
In addition, $\Zup_1 \bb$ can be computed in time 
$\Otil \left( nr + n^2 r^{-4/3} + n^3 r^{-3} 
\right)$. In particular, when $r = \Theta(\sqrt{n})$, the runtime is $\Otil \left( n^{3/2}\right)$. 
\label{lem:up_solver_surface}
\end{lemma}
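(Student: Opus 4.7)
The plan is to mirror the structure of the proof of Lemma~\ref{lem:up_solver}, exploiting the fact that the boundary of each region in a sphere $r$-hollowing is a single triangulated $2$-sphere, which is both smaller and more ``planar'' than a spherical shell. As in Algorithm~\ref{alg:solver}, I would partition the edges of $\calK$ into $F$ (the interior edges of the regions with respect to $\calT$) and $C$ (the edges of $\calT$). The analog of Claim~\ref{clm:hollow_interior_disjoint} still holds, because it only uses that no triangle of $\calK$ straddles the boundary between two regions; hence $\Lup_1[F,F]$ is block-diagonal with one block per region. Each block is the restriction of an up-Laplacian of a $3$-complex of $O(r)$ stable simplexes with $O(r^{2/3})$ exterior simplexes, so Lemma~\ref{lem:solve_interior_one_piece} applies verbatim to give, in aggregate, an $O(nr)$-time preprocessing (Cholesky) and an $O(nr^{1/3})$-time solver for the ``$F$'' subproblem. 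I then invoke Lemma~\ref{lem:combine_solvers} to reduce the task to approximately solving the Schur complement $\schur[\Lup_1]_C$ via preconditioned conjugate gradient, with $\Lup_{1,\calT}$ as preconditioner.

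The analytic step is to bound $\kappa(\schur[\Lup_1]_C,\Lup_{1,\calT}) = O(r)$. I will follow the three-claim template of Section~\ref{sect:condNum}: match images (Claim~\ref{clm:image}), upper bound $\lambda_{\max}$ via the bounded triangle degree (Claim~\ref{clm:eigmax}), and lower bound $\lambda_{\min}$ by applying Theorem~\ref{thm:diameter} to the dual graph of a region's boundary $2$-sphere, which has $O(r^{2/3})$ vertices and triangle diameter $O(r^{1/3})$, yielding $\lambda_{\min}(\Lup_{1,\calB}) = \Omega(r^{-1})$ (Claim~\ref{clm:eigmin}). Because the decomposition of Claim~\ref{clm:schur_decompose_approx} requires a shell, I will first show that per region the inequality $\schur[\Lup_{1,\calX}]_B \pleq O(r)\Lup_{1,\calB}$ holds directly for sphere hollowings; Claim~\ref{clm:hollow_interior_disjoint}'s block structure then assembles the per-region bounds into the global one.

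The core technical improvement is an efficient direct solver for $\Lup_{1,\calT}$. Since $\calT$ consists of $O(n/r)$ triangulated $2$-spheres with only $O(1)$ shared discs or paths per sphere (Definition~\ref{def:surface_hollow}), I will proceed in two stages. First, eliminate, sphere by sphere, all boundary edges that are not shared with another sphere. Each sphere has $O(r^{2/3})$ edges, the per-sphere elimination costs at most $O(r^2)$ (dense) or $\Otil(r)$ (via planar separators on the sphere, since it is planar after removing one face), and summing over spheres costs $O(nr)$ in the dense version. Second, the remaining ``interface'' variables—edges living in shared discs/paths—number $O(n/r)$; I factor this Schur complement directly at cost $O((n/r)^3)=O(n^3 r^{-3})$. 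This two-level scheme yields a factorization of $\Lup_{1,\calT}$ of total cost $O(nr + n^3 r^{-3})$, with per-solve cost dominated by applying the two triangular systems, giving $\Otil(nr^{-1/3} + (n/r)^2) = \Otil(n^2 r^{-2})$ per solve.

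Combining the pieces through Theorem~\ref{thm:PCG}: PCG runs for $\Otil(\sqrt{r})$ iterations, each multiplying $\schur[\Lup_1]_C$ by a vector (cost $O(nr^{1/3})$ through the $F$-Cholesky) and solving one system in $\Lup_{1,\calT}$ (cost $\Otil(nr^{-1/3}+n^2 r^{-2})$). The PCG phase costs $\Otil(nr^{5/6}+n^2r^{-3/2})$, the setup phases cost $O(nr)$ and $O(n^3r^{-3})$, and bookkeeping yields the claimed total $\Otil(nr + n^2 r^{-4/3} + n^3 r^{-3})$. I expect the main obstacle to be the image-equality argument in the condition-number bound without shell width: the shift trick in Claim~\ref{clm:image} is unavailable, so I will instead argue that each region's sphere boundary bounds a topological ball (possibly after virtually patching any small holes of $\calK$ that fall inside a single region) and leverage that the first Betti number of a $2$-sphere is zero; care is needed to ensure this virtual patching does not inflate the spectral bounds beyond $O(r)$.
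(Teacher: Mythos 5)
Your proposal correctly mirrors the high-level architecture of Lemma~\ref{lem:up_solver}: partition edges into $F$ (region interiors) and $C$ (hollowing edges), solve $\Lup_1[F,F]$ by block-wise Nested Dissection, and run PCG on the Schur complement with $\Lup_{1,\calT}$ as preconditioner. That much matches the paper. But the crucial step — the claimed $O(n/r)$ interface size in your fast solver for $\Lup_{1,\calT}$ — does not hold, and the paper's proof hinges on an algebraic mechanism your plan is missing.

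The intersection of two adjacent regions in a sphere $r$-hollowing is a triangulated disc (or path) on their shared boundary sphere. A shared disc can contain $\Theta(r^{2/3})$ simplexes, and a shared path $\Theta(r^{1/3})$ edges. Summing over $O(n/r)$ regions, the set of hollowing edges that appear in more than one sphere has size $\Theta(nr^{-1/3})$ (or at best $\Theta(nr^{-2/3})$ in the path-only case), never $O(n/r)$. Directly factoring a dense Schur complement of size $\Theta(nr^{-1/3})$ costs $\Theta(n^3 r^{-1})$, which at $r = \Theta(\sqrt{n})$ is $\Theta(n^{5/2})$, well beyond the claimed $\Otil(n^{3/2})$. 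So the two-stage elimination you describe does not reach the target runtime.

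What the paper actually does is different in kind. It partitions the edges of $\calT$ into $E_1$ (interiors of the discs into which $\calT$ decomposes) and $E_2$ (disc-boundary edges), and exploits two facts: (i) after a consistent orientation, each row of $\partial_{2,\calT}^\top[E_1,:]$ has exactly one $+1$ and one $-1$, so $\Lup_{1,\calT}[E_1,E_1]$ has the structure of a first \emph{down}-Laplacian and is solvable \emph{exactly} in linear time via Lemma~\ref{lem:down_lap_solver} — no dense per-sphere elimination needed; and (ii) for $e_1,e_2 \in E_2$ with the same incident-disc set $D_{e_1} = D_{e_2}$, the row of $\partial_{2,\calT}^\top$ for $e_2$ is a linear combination of the row for $e_1$ and rows from $E_1$ (Claim~\ref{clm:linear_comb}), and each region contributes only $O(1)$ distinct incident-disc sets. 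This lets the paper replace the full set $E_2$ by a representative set $\widehat E_2$ of size $O(n/r)$ \emph{without changing the system being solved} (Claim~\ref{clm:precondition_small_system}), and only \emph{then} factor a dense matrix of size $O(n/r)$. This is a rank argument about the interface block, not a sparsity or separator argument; neither dense per-sphere elimination nor planar nested dissection (note the nonzero pattern of $\Lup_{1,\calT}$ on edges is the triangle-adjacency graph, not the planar $1$-skeleton) performs this compression. This is the missing ingredient.

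A secondary but real point: you correctly flag that the condition-number argument of Claim~\ref{clm:image} uses the shell's width (shifting the inner sphere), which is unavailable here, and the paper merely asserts the sphere-hollowing analogue ``is similar'' without supplying it. Your virtual-patching idea is a plausible repair but would need to show the patch does not distort the spectrum by more than $O(1)$; as written it is a sketch rather than a proof. That gap is shared with the paper and is not the main defect of your proposal; the defect is the unsupported $O(n/r)$ interface count.
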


Compared to the runtime of $\Otil(n^{8/5})$ in Lemma \ref{lem:up_solver} when $r = \Theta(\sqrt{n})$, the faster up-Laplacian solver achieves an acceleration of $\Otil(n^{1/10})$. 
The solver's outline aligns with Algorithm \ref{alg:solver}, albeit employing a different form of $r$-hollowing. 
The relative condition number of the Schur complement and the sphere $r$-hollowing boundary is $O(r)$ (whose proof is similar to Section \ref{sect:condNum} and we omit it).
The improved runtime stems from a faster solver designed specifically for the preconditioner $\Lup_{1,\calT}$ operating on the sphere $r$-hollowing of $\calT$. 
Further details regarding this accelerated solver will be presented in the rest of the section.

\begin{lemma}[Solver for the surface preconditioner]
Let $\calK$ be a pure $3$-complex embedded in $\R^3$ and composed of $n$ stable simplexes. 
Let $\calT$ be a sphere $r$-hollowing of $\calK$. 
With a pre-processing in time $O \left( n^2 r^{-4/3} + n^3 r^{-3} \right)$,
given any vector $\bb \in \im(\Lup_{1})$, we can find an $\xx$ such that $\Lup_{1,\calT} \xx = \bb$ in time $O \left( n r^{-1/3} + n^2 r^{-2} \right)$.
\label{lem:solveS}
\end{lemma}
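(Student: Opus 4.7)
The plan is to construct an exact direct solver for $\Lup_{1,\calT}$ via a two-level nested dissection tailored to the sphere $r$-hollowing: each region is a triangulated $2$-sphere of $O(r^{2/3})$ simplexes with diameter $O(r^{1/3})$, while the region adjacency graph has $O(n/r)$ vertices of bounded degree with pairwise intersections that are discs or paths (Definition~\ref{def:surface_hollow}). I partition the edges of $\calT$ into $F$, the edges interior to a single sphere region, and $C$, the edges that lie on an intersection shared by two or more regions. This partition is well-defined, and $\Lup_{1,\calT}[F,F]$ is block diagonal with one block per region.

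The fine stage eliminates $F$ by running nested dissection inside each region, analogous to Lemma~\ref{lem:solve_interior_one_piece}. Since the $1$-skeleton of a triangulated $2$-sphere with $O(r^{2/3})$ simplexes admits $O(r^{1/3})$ vertex separators, each block gets a Cholesky factorization with $\Otil(r^{2/3})$ fill-in and $O(r)$ preprocessing time. Summed over the $O(n/r)$ regions, this stage costs $O(n)$ preprocessing, and a back-solve through these factorizations runs in $O(nr^{-1/3})$, giving the first term of the claimed solve time.

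The coarse stage forms and factors $\schur[\Lup_{1,\calT}]_C$. This Schur complement decomposes as a sum of per-region dense contributions on each region's $O(r^{2/3})$ boundary edges, which I compute explicitly from the interior factorizations. I then assemble $\schur[\Lup_{1,\calT}]_C$ as a block-sparse matrix whose sparsity pattern follows the region adjacency graph, and I run a second-level nested dissection on the coarse $O(n/r)$-vertex adjacency graph, treating each region as a super-node carrying its dense boundary block. Building the per-region boundary blocks and populating $\schur[\Lup_{1,\calT}]_C$ contributes the $O(n^2 r^{-4/3})$ preprocessing term, while the dense block elimination on the $O(n/r)$ super-nodes produces the $O(n^3 r^{-3})$ term; a coarse back-solve then costs $O(n^2 r^{-2})$, which matches the second term of the claimed solve time.

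The main obstacle is arranging the coarse elimination so that the dense work localizes to an $O(n/r)$-dimensional super-node system rather than ballooning over the full $|C|$ boundary variables, where a naive dense Cholesky could cost $\Omega(|C|^3)$. The key geometric inputs that enable this localization are that the adjacency graph is sparse (each region meets only $O(1)$ others, by Definition~\ref{def:surface_hollow}) and that intersections between adjacent regions are discs or paths of $O(r^{1/3})$ complexity, so the fill introduced by coarse elimination remains confined within $O(r^{2/3})$-size blocks and the total dense work on the super-nodes is $O((n/r)^3)$. Combining this with the fine-stage back-solve then yields the overall solve time $O(nr^{-1/3} + n^2 r^{-2})$ and the preprocessing time $O(n^2 r^{-4/3} + n^3 r^{-3})$.
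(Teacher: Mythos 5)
Your fine stage is a valid alternative to the paper's approach: you run per-region nested dissection on each boundary $2$-sphere, whereas the paper observes that the interior-disc rows of $\partial_{2,\calT}^\top$ each have exactly one $+1$ and one $-1$ (after Claim~\ref{clm:orient_triangle}), so $\Lup_1$ restricted to disc interiors is a down-Laplacian and is solvable exactly in linear time via Lemma~\ref{lem:down_lap_solver}. Either route handles the $F$/$E_1$ block within budget.

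The coarse stage has a genuine gap. You assert that ``the fill introduced by coarse elimination remains confined within $O(r^{2/3})$-size blocks and the total dense work on the super-nodes is $O((n/r)^3)$,'' but the super-nodes in your scheme each carry $O(r^{2/3})$ boundary variables, not $O(1)$, and your $C$ has $\Theta(nr^{-1/3})$ edges (the intersections are full $2$-dimensional discs, not just their boundary cycles). A nested dissection on an $O(n/r)$-vertex adjacency graph with a $O(\sqrt{n/r})$ top separator of super-nodes then produces a dense top block of dimension $O(\sqrt{n/r}\cdot r^{2/3})$, whose dense factorization alone costs $O((n/r)^{3/2}r^{2})=O(n^{3/2}r^{1/2})$. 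At $r=\Theta(\sqrt{n})$ this is $\Theta(n^{7/4})$, which overshoots the claimed $O(n^3r^{-3})=O(n^{3/2})$ preprocessing bound by $\Theta(n^{1/4})$. The ingredient you are missing is the paper's rank compression (Claim~\ref{clm:linear_comb}): after eliminating the disc-interior rows, any two disc-boundary rows $\BB_2[e_1,:]$, $\BB_2[e_2,:]$ with the same incident-disc set $D_{e_1}=D_{e_2}$ are linearly dependent modulo $\BB_1$, so the effective coarse system has only $|\widehat{E}_2|=O(1)$ representatives per region, i.e.\ $O(n/r)$ total, and a direct pseudoinverse on that $O(n/r)\times O(n/r)$ Schur complement (Claim~\ref{clm:schur_inverse}) gives the $O(m_1m_2+m_2^3)=O(n^2r^{-4/3}+n^3r^{-3})$ preprocessing and $O(m_1+m_2^2)$ solve time. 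Without that structural collapse of the boundary variables, neither a super-node nested dissection nor a direct dense inverse on $C$ achieves the stated bounds, so the heart of the lemma is not established by your argument.
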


We describe our high-level idea for solving a system in $\Lup_{1,\calT}$.
Given our definition of sphere $r$-hollowing, 
$\calT$ is a union of discs where two discs only share boundary edges.
We notice that the first up-Laplacian operator applied to the interior of a disc can be treated as a first down-Laplacian operator. This observation allows us to solve linear equations in the interior of a disc in linear time using the down-Laplacian solver described in Lemma \ref{lem:down_lap_solver}.
With this in mind, we apply Lemma \ref{lem:combine_solvers} to $\Lup_{1,\calT}$ with $F$ being all the interior edges of the discs and $C$ the remaining edges. 
To further reduce the runtime of solving the Schur complement onto $C$, we observe that many of the boundary edges of the discs are linearly dependent.
Consequently, we can eliminate these linearly dependent edges and reduce the system to a smaller, more manageable one.

\subsection{Reducing to a Smaller System}

We partition the edges in $\calT$ into $E_1 \cup E_2$, where $E_1$ contains all the interior edges of the discs and $E_2$ the boundary edges.
For each $e \in E_2$, let $D_e$ be the set of \emph{discs} that contain $e$.

\begin{claim}
Let $R$ be an arbitrary region of $\calK$. 
Among all the boundary edges in $E_2$ of $R$, there are $O(1)$ distinct values for all $D_e$.
\end{claim}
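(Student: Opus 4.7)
The plan is to leverage the bounded-intersection structure of sphere $r$-hollowings from Definition~\ref{def:surface_hollow}. I would fix a region $R$ with boundary sphere $S_R$. By the definition, $R$ intersects at most $k = O(1)$ other regions $R_1, \ldots, R_k$, and each intersection $I_j = R \cap R_j$ is either a triangulated disc or a triangulated path, both lying on $S_R$ (and also on $S_{R_j}$).

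First I would argue that the $O(1)$ intersection features $I_1, \ldots, I_k$ induce a cell decomposition of the sphere $S_R$ into $O(1)$ patches: each disc-type intersection is itself a patch, while the path-type intersections together with the boundary curves of the disc-type intersections cut the remainder of $S_R$ into a bounded number of topological discs. This can be made rigorous via Euler's formula on $S_R$, since the features contribute only $O(1)$ vertices (endpoints and feature-feature crossings) and $O(1)$ one-dimensional arcs, forcing $O(1)$ two-cells as well. An identical argument applied to each neighboring sphere $S_{R_j}$ shows $S_{R_j}$ is decomposed into $O(1)$ patches. Since the boundary spheres only meet along intersection features, each disc in the decomposition of $\calT$ lies within a single sphere, so any disc of $\calT$ containing an edge on $S_R$ must be a patch of $S_R$ or a patch of some $S_{R_j}$. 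Let $\calD_R$ denote the collection of all such candidate discs; then $|\calD_R| = O(1)$.

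Next I would observe that for every boundary edge $e \in E_2$ of $R$, the set $D_e$ is a subset of $\calD_R$, so the number of distinct values of $D_e$ is at most $2^{|\calD_R|} = 2^{O(1)} = O(1)$.

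The main obstacle is that the text describes $\calT$ as a union of discs without pinning down a canonical decomposition. The hard part will be arguing that (without loss of generality) the decomposition can be chosen to respect the intersection features of Definition~\ref{def:surface_hollow}; any finer decomposition can only change constants. A minor subtlety is that edges at endpoints of path-type intersections or at pairwise crossings of features may yield $D_e$ values different from those on the ``bulk'' of a single feature, but such special locations are $O(1)$ in number and so contribute only $O(1)$ additional distinct values, preserving the final $O(1)$ bound.
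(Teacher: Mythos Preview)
Your proposal is correct and follows essentially the same approach as the paper: both arguments use the $O(1)$ neighbor bound from Definition~\ref{def:surface_hollow} to show that only $O(1)$ discs can contain an edge on the boundary of $R$, and then conclude that $D_e$ ranges over at most $2^{O(1)}$ subsets. The paper's version is terser---it simply asserts that the boundary of $R$ (and of each neighbor) is divided into $O(1)$ discs and immediately takes the power-set bound---whereas you are more careful in handling the path-type intersections and in invoking Euler's formula to justify the $O(1)$ patch count; your extra care about the canonical choice of disc decomposition and about crossing points is unnecessary for the paper's level of rigor but does no harm.
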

\begin{proof}
By our definition of sphere $r$-hollowing in Definition \ref{def:surface_hollow}, $R$ intersects with $O(1)$ regions; each intersection is a disc on the boundary of $R$, and two discs only intersect on their boundaries. So, the boundary of $R$ is divided into $O(1)$ discs. This also holds for any other region.
Let $\calD_1, \ldots, \calD_l$, where $l = O(1)$, be all the discs belonging to the regions that intersect with $R$.
For any boundary edge $e$ of $R$, $D_e$ is a subset of $\{\calD_1, \ldots, \calD_l\}$. Since $l = O(1)$, there are $O(1)$ subsets of $\{\calD_1, \ldots, \calD_l\}$.
\end{proof}

Let $\BB_1 = \partial_{2,\calT}[E_1, :]$ and $\BB_2 = \partial_{2,\calT}[E_2, :]$. 
Without loss of generality (by Claim \ref{clm:orient_triangle}), we assume all the triangles in the same disc of $\calT$ have the same orientation. Under this assumption, each row of $\BB_1$ has exactly one entry being $1$ and one being $-1$ and all the others $0$.

\begin{claim}
Let $e_1, e_2 \in E_2$ such that $D_{e_1} = D_{e_2}$.
Then, the row $\BB_2[e_2,:]$ is a linear combination of the rows in $\BB_1$ and $\BB_2[e_1,:]$. 
\label{clm:linear_comb}
\end{claim}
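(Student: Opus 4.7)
The plan is to prove the claim by a disc-by-disc argument: write $\BB_2[e_2,:]$ as a local combination on each disc of $\calT$, and then glue these local identities into the desired global one. First, I observe that each row $\BB_1[e,:]$ with $e$ in the interior of a disc $\calD$ is supported only on triangles of $\calD$, and each row $\BB_2[e,:]$ with $e\in E_2$ is supported only on triangles of the discs in $D_e$. Hence it suffices to match $\BB_2[e_2,:]$ disc by disc. For any disc $\calD\notin D_{e_1}=D_{e_2}$, both $\BB_2[e_1,:]$ and $\BB_2[e_2,:]$ already vanish on triangles of $\calD$, so the identity holds on $\calD$ with all interior-edge coefficients set to zero.

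Next, fix $\calD\in D_{e_1}$. By Claim~\ref{clm:orient_triangle} we may orient all triangles of $\calD$ consistently, so that the restriction of $\BB_2[e_i,:]$ to $\calD$ equals $\pm$ the indicator vector of the unique triangle $\sigma_i^\calD$ of $\calD$ containing $e_i$ (for $i\in\{1,2\}$), while each interior edge of $\calD$ contributes a row equal to $\pm$ the signed difference of the indicators of the two triangles it separates. Choosing any path from $\sigma_1^\calD$ to $\sigma_2^\calD$ in the triangle-adjacency (dual) graph of $\calD$ and telescoping produces a combination of interior-edge rows equal to the difference of those two triangle indicators. This yields an identity of the form
\[
\BB_2[e_2,:]\big|_{\calD}=\alpha_\calD\,\BB_2[e_1,:]\big|_{\calD}+\sum_{e\in E_1\cap\mathrm{int}(\calD)}\beta_e^{(\calD)}\,\BB_1[e,:]\big|_{\calD},
\]
where $\alpha_\calD\in\{\pm 1\}$ is determined by how $\partial\calD$ is oriented across $e_1$ and $e_2$.

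The hard part is showing that the sign $\alpha_\calD$ is the \emph{same} for every disc $\calD\in D_{e_1}$, so that the per-disc identities can be added with a single coefficient multiplying $\BB_2[e_1,:]$. To address this, the plan is to exploit Definition~\ref{def:surface_hollow}: the discs arise as pieces of the 2-sphere boundaries of the regions of the $r$-hollowing, and any two discs that meet do so along a triangulated path. Fixing a coherent orientation of each region's boundary sphere and restricting to each disc, any two discs sharing a path traverse that path in opposite directions, so the boundary sign of every shared edge flips between them. Because $e_1$ and $e_2$ lie on precisely the same set of shared paths (the content of $D_{e_1}=D_{e_2}$), both signs flip in tandem as $\calD$ varies over $D_{e_1}$, which leaves the ratio $\alpha_\calD$ invariant. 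Once $\alpha$ is common, summing the per-disc identities over $\calD\in D_{e_1}$, and using that interior edges of distinct discs are disjoint so that the $\beta_e^{(\calD)}$ assemble into a single coefficient vector indexed by $E_1$, produces the claimed global linear combination.
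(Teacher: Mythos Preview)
Your proof follows the same route as the paper's: restrict to each disc $\calD\in D_{e_1}=D_{e_2}$, choose a triangle path from the triangle containing $e_1$ to the one containing $e_2$, telescope along interior-edge rows of $\BB_1$, and then glue the per-disc identities using that interior edges of distinct discs are disjoint. The paper is in fact terser than you on the sign point: it simply asserts ``since all the triangles in the same disc of $\calT$ have the same orientation, there exists $s\in\{\pm1\}$'' working for all discs simultaneously, and proceeds. You correctly isolate this as the delicate step and sketch a justification via coherent sphere orientations. One small imprecision in your sketch: $D_{e_1}=D_{e_2}$ records the set of \emph{discs}, not boundary paths, so ``$e_1$ and $e_2$ lie on precisely the same set of shared paths'' is an additional (implicit) claim rather than the literal content of $D_{e_1}=D_{e_2}$. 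The clean observation is that along any fixed boundary path $P$, each adjacent disc $\calD$ induces a constant direction on $P$, so for every edge $e\in P$ the sign vector $(\BB_2[e,\triangle^\calD])_{\calD}$ equals a fixed $\pm1$-pattern times a scalar depending only on $e$'s orientation; hence $\alpha_\calD$ is the same for any two edges on the same path, which is the situation the paper has in mind.
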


\begin{proof}
Suppose $D_{e_1} = D_{e_2} = \{\calD_1, \ldots, \calD_l \}$ where $l = O(1)$.
Suppose $e_1$ belongs to triangles $\triangle_{1} \in \calD_{1}, \ldots, \triangle_{l} \in \calD_{l}$  and $e_2$ belongs to $\triangle_{1}' \in \calD_{1}, \ldots, \triangle_{l}' \in \calD_{l}$.
Since all the triangles in the same disc of $\calT$ have the same orientation, there exists $s \in \{\pm 1\}$ such that 
\[
\BB_2[e_1, \triangle_{1}] = s \BB_2[e_2, \triangle_{1}'],
\ldots,
\BB_2[e_1, \triangle_{l}] = s \BB_2[e_2, \triangle_{l}'].
\]
For any $1 \le i \le l$, let $T_{i}$ be the set of triangles in $\calD_{i}$ and $E_{1,i}$ be the set of interior edges in $\calD_{i}$. We claim there exists a vector $\aa_i$ such that 
\begin{align}
\BB_2[e_2, T_i] = s\BB_2[e_1, T_i] + \aa_i^\top \BB_1[E_{1,i},T_i].
\label{eqn:path_goal}
\end{align}
Since $\BB_2[e_2,:] = \begin{pmatrix}
\BB_2[e_2, T_1] & \cdots & \BB_2[e_2, T_l] & {\bf 0}
\end{pmatrix}$ and $E_{1,1}, \ldots, E_{1,l}$ are disjoint, Equation \eqref{eqn:path_goal} implies  there exists a vector $\aa$ such that 
\[
\BB_2[e_2, :] = s\BB_2[e_1, :] + \aa^\top \BB_1.
\]

It remains to prove Equation \eqref{eqn:path_goal}.
Since both $\triangle_i, \triangle_i' \in \calD_i$, we can find a sequence of triangles in $\calD_{i}$: $\widehat{\triangle}_0 = \triangle_{i}, \widehat{\triangle}_{1}, \widehat{\triangle}_2, \ldots,  \widehat{\triangle}_j = \triangle_{i}'$ such that 
for any $1 \le h \le j$, the two triangles $\widehat{\triangle}_{h-1}$ and $\widehat{\triangle}_{h}$ share an interior edge of $\calD_{l}$, denoted by $\widehat{e}_{h}$.
For every $1 \le h \le j$, the row $\BB_1[\widehat{e}_{h}, T_{i}]$ has exactly one entry being $1$ and one entry $-1$ indexed at $\widehat{\triangle}_{h-1}, \widehat{\triangle}_{h}$ and all others $0$. 
Denote $t = s\BB_2[e_1, \triangle_i] \in \{\pm 1\}$.
Thus, there exists $s_1 \in \{\pm 1\}$ such that $s \BB_2[e_1, T_i] + s_1 \BB_1[\widehat{e}_1, T_i] = \begin{pmatrix}
0 & \cdots & 0 & t & 0 & \cdots & 0
\end{pmatrix} $ where the entry $t$ has index $\widehat{\triangle}_1$.
By induction on $h$, there exist $s_1, \ldots, s_j \in \{\pm 1 \}$ such that 
\begin{align*}
s \BB_2[e_1, T_i] + s_1 \BB_1[\widehat{e}_1, T_i] + \cdots + s_j \BB_1[\widehat{e}_j, T_i]
= \begin{pmatrix}
0 & \cdots & 0 & t & 0 & \cdots & 0    
\end{pmatrix},
% = \BB_2[e_2, T_i].
\end{align*}
where the index of entry $t$ is $\triangle'_i$.
This proves Equation \eqref{eqn:path_goal}.
\end{proof}

Thanks to the linear dependence in Claim \ref{clm:linear_comb}, we can eliminate most rows in $\BB_2$.
Among the edges with the same value of $D_e$, we arbitrarily select one and include it in a set denoted as $\widehat{E}_2$. Then, we define 
\[
\BBhat_2 \defeq \BB_2[\widehat{E}_2, :],~ 
\BB \defeq \begin{pmatrix}
\BB_1 \\
\BBhat_2
\end{pmatrix}, \text{ and }
\LupOneT \defeq \BB \WW_{2,\calT} \BB^\top.
\]

\begin{claim}
Let $\bb \in \im(\partial_{2,\calT})$ and let $\bb_1 = \bb[E_1 \cup \widehat{E}_2]$. 
If $\xx_1$ satisfies $\LupOneT \xx_1 = \bb_1$, then $\xx = \begin{pmatrix}
    \xx_1 \\
    {\bf 0}
\end{pmatrix}$ satisfies $\Lup_{1,\calT} \xx = \bb $.
\label{clm:precondition_small_system}
\end{claim}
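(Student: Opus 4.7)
The plan is to verify $\Lup_{1,\calT}\xx = \bb$ coordinate by coordinate, splitting the edges of $\calT$ into two groups: those in $E_1 \cup \widehat{E}_2$, where the hypothesis $\LupOneT \xx_1 = \bb_1$ applies directly, and those in $E_2 \setminus \widehat{E}_2$, where the reduction from $\partial_{2,\calT}$ to $\BB$ discarded rows that must be recovered. The starting observation is that because $\xx$ vanishes on $E_2 \setminus \widehat{E}_2$, the product $\partial_{2,\calT}^\top \xx$ collapses to $\BB^\top \xx_1$, and consequently for every edge $e$, $(\Lup_{1,\calT}\xx)[e] = \partial_{2,\calT}[e,:]\,\WW_{2,\calT}\,\BB^\top \xx_1$. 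When $e \in E_1 \cup \widehat{E}_2$, the row $\partial_{2,\calT}[e,:]$ is precisely the row $\BB[e,:]$, so the right-hand side equals $(\LupOneT \xx_1)[e] = \bb_1[e] = \bb[e]$, settling these coordinates immediately.

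The substantive case is $e \in E_2 \setminus \widehat{E}_2$. Let $e' \in \widehat{E}_2$ be the chosen representative with $D_{e'} = D_e$. By Claim \ref{clm:linear_comb}, $\partial_{2,\calT}[e,:] = s\,\partial_{2,\calT}[e',:] + \aa^\top \BB_1$ for some sign $s \in \{\pm 1\}$ and coefficient vector $\aa$. Substituting this into the formula above and using $\LupOneT \xx_1 = \bb_1$ yields $(\Lup_{1,\calT}\xx)[e] = s\,\bb[e'] + \aa^\top \bb[E_1]$. Here the hypothesis $\bb \in \im(\partial_{2,\calT})$ becomes essential: writing $\bb = \partial_{2,\calT}\yy$ for some triangle vector $\yy$, the same linear combination transports from matrix rows to right-hand side entries, giving $(\Lup_{1,\calT}\xx)[e] = (s\,\partial_{2,\calT}[e',:] + \aa^\top \BB_1)\yy = \partial_{2,\calT}[e,:]\yy = \bb[e]$.

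The main obstacle is precisely this final step. The row reduction from $\partial_{2,\calT}$ to $\BB$ discards one equation per equivalence class of boundary edges, and both pieces of structure are needed to verify that these discarded equations still hold at $\xx$: Claim \ref{clm:linear_comb} to express each discarded row as a linear combination of retained rows on the matrix side, and the image hypothesis to ensure the \emph{same} linear combination holds on the $\bb$ side. Without $\bb \in \im(\partial_{2,\calT})$ the reduced system could be solved while the original system fails on the discarded rows, so the argument is really an assertion that the linear dependencies among rows of $\partial_{2,\calT}$ used to form $\BB$ are exactly matched by the corresponding linear dependencies among the entries of any $\bb$ in the image.
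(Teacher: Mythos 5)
Your proof is correct and follows essentially the same approach as the paper: you use Claim~\ref{clm:linear_comb} to express the discarded rows of $\partial_{2,\calT}$ as linear combinations of the retained rows, and the hypothesis $\bb \in \im(\partial_{2,\calT})$ to transport those same dependencies to the right-hand side. The paper packages the dependency into a matrix $\HH$ with $\BB_2[E_2 \setminus \widehat{E}_2,:] = \HH\BB$ and deduces $\bb_2 = \HH\bb_1$, whereas you carry out the identical argument coordinate-by-coordinate, but the substance is the same.
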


\begin{proof}
By Claim \ref{clm:linear_comb}, there exists a matrix $\HH$ such that $\BB_2[E_2 \setminus \widehat{E}_2, :] = \HH \BB$. Thus, $\partial_{2,\calT}$ can be written as $\begin{pmatrix}
    \BB \\
    \HH \BB
\end{pmatrix}$ up to row permutation.
Since $\bb \in \im(\partial_{2,\calT})$, we know $\bb_1 \in \im(\BB)$ and thus $\LupOneT \xx_1 = \bb_1$ is feasible and $\bb_2 = \HH \bb_1$.
Then,
\begin{align*}
\Lup_{1,\calT} \xx 
& = \begin{pmatrix}
     \BB \\
     \HH \BB
\end{pmatrix}      \WW_{2,\calT} 
\begin{pmatrix}
\BB^\top & \BB^\top \HH^\top
\end{pmatrix} \begin{pmatrix}
     \xx_1 \\
     \veczero
\end{pmatrix} 
= \begin{pmatrix}
     \BB \WW_{2,\calT} \BB^\top \xx_1 \\
     \HH \BB \WW_{2,\calT} \BB^\top \xx_1
\end{pmatrix} 
= \begin{pmatrix}
     \bb_1 \\
     \bb_2
\end{pmatrix}.
\end{align*}
\end{proof}

\subsubsection{Solver for the Smaller System}

Let $\MM_{11} \defeq \BB_1 \WW_{2,\calT} \BB_1^\top, \MM_{12} \defeq \BB_1 \WW_{2,\calT} \BBhat_2^\top, 
\MM_{22} \defeq \BBhat_2 \WW_{2,\calT} \BBhat_2^\top$.
Then, we can write $\MM$ as a block matrix:
\[
\MM = \begin{pmatrix}
\MM_{11} & \MM_{12} \\
\MM_{12}^\top & \MM_{22}
\end{pmatrix}.
\]
By Lemma \ref{lem:combine_solvers}, it suffices to design efficient solvers for systems in $\MM_{11}$ and systems in the Schur complement:
\begin{align}
    \schur[\LupOneT]_{\widehat{E}_2}
    = \MM_{22} - \MM_{12}^\top \MM_{11}^{\dagger} \MM_{12}.
     \label{eqn:schur1S}
\end{align}
Let $m_1 = \abs{E_1}, m_2 = \abs{\widehat{E}_2}$.

\begin{claim}
For any $\bb \in \im(\MM_{11})$, we can compute $\xx$ satisfying $\MM_{11} \xx = \bb$ in time $O(m_1)$.
\label{clm:solve_m11}
\end{claim}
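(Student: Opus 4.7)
The plan is to recognize $\MM_{11}$ as the down-$1$-Laplacian of an auxiliary weighted graph and invoke Lemma~\ref{lem:down_lap_solver}. Recall that we may assume (by Claim~\ref{clm:orient_triangle}) that all triangles lying in a common disc of $\calT$ have been consistently oriented. For any interior edge $e \in E_1$, the edge $e$ is shared by exactly two triangles of the same disc, so the row $\BB_1[e,:] = \partial_{2,\calT}[e,:]$ has exactly one entry equal to $+1$, exactly one entry equal to $-1$, and zero elsewhere. Hence $\BB_1^{\top}$ has the form of a signed vertex-edge incidence matrix of an oriented graph $G$, whose vertex set is the set of triangles of $\calT$ and whose edge set is $E_1$, with two triangle-vertices connected precisely when they share an interior edge of some disc. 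Under this identification,
\[
\MM_{11} = \BB_1 \WW_{2,\calT} \BB_1^{\top} = \partial_{1,G}^{\top} \WW_G \partial_{1,G},
\]
where $\WW_G$ is the diagonal vertex-weighting on $G$ induced by $\WW_{2,\calT}$. This is exactly the weighted down-$1$-Laplacian $\Ldown_{1,G}$ of $G$ viewed as a $1$-dimensional weighted simplicial complex.

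Next I would observe that $G$ splits as a disjoint union, one connected component per disc: interior edges belonging to distinct discs do not share a triangle, so $\MM_{11}$ is block diagonal with one block per disc, and the system $\MM_{11}\xx = \bb$ decomposes accordingly. The hypothesis $\bb \in \im(\MM_{11})$ restricts to $\bb \in \im(\Ldown_{1,G})$ on each component, so Lemma~\ref{lem:down_lap_solver} applies component-wise and returns an exact solution $\xx$ in time linear in the number of simplexes (vertices plus edges) of each component. Summing over all discs, the total runtime is $O(t_2 + m_1)$, where $t_2$ is the number of triangles of $\calT$.

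Finally, I would bound $t_2 = O(m_1)$. Each triangle of $\calT$ has exactly three edges, which lie in $E_1 \cup E_2$, so $3t_2 \le 2m_1 + \lvert E_2 \rvert \cdot O(1)$; since each disc has only $O(r^{1/3})$ boundary edges while containing $\Theta(r^{2/3})$ triangles and interior edges, summing over the $O(n/r)$ discs yields $\lvert E_2\rvert = O(m_1)$, and thus $t_2 = O(m_1)$. The total runtime is therefore $O(m_1)$, as claimed. The only non-routine step is the reinterpretation of $\MM_{11}$ as a down-Laplacian via the consistent orientation within each disc; everything else is a direct application of Lemma~\ref{lem:down_lap_solver} plus a counting bound.
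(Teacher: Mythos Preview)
Your proof is correct and follows essentially the same approach as the paper: recognize, via the consistent orientation within each disc (Claim~\ref{clm:orient_triangle}), that $\MM_{11} = \BB_1 \WW_{2,\calT} \BB_1^\top$ is a weighted first down-Laplacian, and invoke Lemma~\ref{lem:down_lap_solver}. The paper's proof is terser, simply asserting this identification and citing the lemma; your version adds the explicit auxiliary-graph description and the counting argument $t_2 = O(m_1)$ to justify the $O(m_1)$ runtime, which the paper leaves implicit.
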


\begin{proof}
Without loss of generality (by Claim \ref{clm:orient_triangle}), we assume all the triangles in the same disc of $\calT$ have the same orientation.
Each row of $\BB_1$ has exactly one entry with value $1$ and one with value $-1$ and all others $0$.
Thus, $\MM_{11} =  \BB_1 \WW_2 \BB_1^\top$ can be viewed as a first down-Laplacian operator.
By Lemma \ref{lem:down_lap_solver}, we can solve a system in $\MM_{11}$ in linear time.    
\end{proof}

\begin{claim}
We can compute $\left( \schur[\LupOneT]_{\widehat{E}_2} \right)^{\dagger}$ in time $O(m_1m_2 + m_2^3)$.
\label{clm:schur_inverse}
\end{claim}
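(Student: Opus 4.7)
The plan is to explicitly form the Schur complement $\schur[\LupOneT]_{\widehat{E}_2}$ as a dense $m_2 \times m_2$ matrix, and then compute its pseudoinverse via standard dense linear algebra. Using the formula \eqref{eqn:schur1S}, this reduces to computing $\MM_{11}^{\dagger}\MM_{12}$, then left-multiplying by $\MM_{12}^\top$, then subtracting from $\MM_{22}$, and finally inverting.

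First, I would compute $\MM_{11}^{\dagger}\MM_{12}$ column by column. There are $m_2$ columns in $\MM_{12}$, one per edge $e \in \widehat{E}_2$. Each column $\MM_{12}[:,e]$ lies in $\im(\BB_1) = \im(\MM_{11})$, so by Claim \ref{clm:solve_m11} a single column of $\MM_{11}^{\dagger}\MM_{12}$ can be computed in $O(m_1)$ time. The cost of this step is $O(m_1 m_2)$, resulting in a (generally dense) $m_1 \times m_2$ matrix.

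Second, I would left-multiply by $\MM_{12}^\top$, exploiting the sparsity of $\MM_{12}=\BB_1 \WW_{2,\calT}\BBhat_2^\top$. For each $e \in \widehat{E}_2$, the only interior edges appearing with a nonzero coefficient in column $\MM_{12}[:,e]$ are those sharing a triangle with $e$; by Lemma \ref{lem:max_deg} each edge belongs to $O(1)$ triangles, so each column of $\MM_{12}$ has $O(1)$ nonzeros and $\nnz(\MM_{12}) = O(m_2)$. Multiplying the $O(m_2)$-nonzero matrix $\MM_{12}^\top$ by the dense $m_1 \times m_2$ matrix $\MM_{11}^{\dagger}\MM_{12}$ then costs $O(m_2^2)$. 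Subtracting this product from $\MM_{22}$, which is itself $O(m_2)$-sparse by the same triangle-count argument, takes another $O(m_2^2)$ time and yields $\schur[\LupOneT]_{\widehat{E}_2}$ as a dense $m_2 \times m_2$ matrix.

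Finally, I would compute the pseudoinverse of this $m_2 \times m_2$ matrix in $O(m_2^3)$ time using a standard rank-revealing factorization (e.g., SVD, or reduction to a full-rank block followed by ordinary inversion). Summing the three stages gives $O(m_1 m_2 + m_2^2 + m_2^3) = O(m_1 m_2 + m_2^3)$, as claimed. There is no significant obstacle here: the argument is entirely routine once the $O(1)$-column sparsity of $\MM_{12}$ (a direct consequence of Lemma \ref{lem:max_deg}) is in hand, so the only thing to be careful about is that we are computing a \emph{true} pseudoinverse of a possibly rank-deficient symmetric PSD matrix, which is handled by any standard dense PSD-pseudoinverse routine in $O(m_2^3)$ time.
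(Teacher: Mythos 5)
Your overall structure and cost accounting match the paper's: form the Schur complement using the fast $\MM_{11}$ solver column-by-column, exploit the $O(1)$-nonzero-per-column sparsity of $\MM_{12}$ (and the sparsity of $\MM_{22}$), then apply an $O(m_2^3)$ dense pseudoinverse routine. Your treatment of the $\MM_{12}^\top$ multiplication (counting nonzeros of $\MM_{12}^\top$ directly) is a slightly slicker way to get the same $O(m_1 m_2 + m_2^2)$ bound that the paper obtains by factoring through $\BB_1^\top$ and $\BBhat_2$.

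There is, however, one genuine gap in your first step. Claim~\ref{clm:solve_m11} (via the down-Laplacian solver of Lemma~\ref{lem:down_lap_solver}) produces \emph{some} vector $\xx$ with $\MM_{11}\xx = \bb$; it does \emph{not} promise the minimum-norm solution $\MM_{11}^\dagger \bb$. So applying the solver column by column yields an $\XX$ with $\MM_{11}\XX = \MM_{12}$, but $\XX$ is not literally $\MM_{11}^\dagger\MM_{12}$ — its columns may differ from those of $\MM_{11}^\dagger\MM_{12}$ by arbitrary vectors in $\ker(\MM_{11})$. As written, your claim ``a single column of $\MM_{11}^\dagger\MM_{12}$ can be computed in $O(m_1)$ time'' is not what Claim~\ref{clm:solve_m11} gives you, and the subsequent formula for $\schur[\LupOneT]_{\widehat{E}_2}$ would not obviously be correct.

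The fix is short but must be stated: writing $\XX = \XX_1 + \XX_2$ with $\XX_1 = \MM_{11}^\dagger\MM_{12}$ and the columns of $\XX_2$ lying in $\ker(\MM_{11}) = \im(\BB_1)^\perp = \ker(\BB_1^\top)$, one checks that
$\MM_{12}^\top \XX_2 = \BBhat_2 \WW_{2,\calT} \BB_1^\top \XX_2 = \matzero$,
so $\MM_{12}^\top \XX = \MM_{12}^\top \MM_{11}^\dagger \MM_{12}$ even though $\XX \neq \MM_{11}^\dagger\MM_{12}$. The paper makes exactly this observation. Once you add it, your proposal is essentially identical to the paper's proof.
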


\begin{proof}
We first compute the Schur complement $\schur[\LupOneT]_{\widehat{E}_2}$ defined in Equation \eqref{eqn:schur1S}.
Note that $\MM_{22} = \BBhat_2 \WW_{2,\calT} \BBhat_2^\top$ can be written as a weighted sum of the outer products of each nonzero column of $\BBhat_2$. Since $\BBhat_2$ has $O(m_2)$ nonzero columns and each nonzero column of $\BBhat_2$ has at most $3$ nonzero entries, we can compute $\MM_{22}$ in time $O(m_2)$.
Similarly, we can compute $\MM_{12} =  \BB_1 \WW_{2,\calT} \BBhat_2^\top$ in time $O(m_1 + m_2)$.
Then, by Lemma \ref{clm:solve_m11}, we can find a matrix $\XX \in \mathbb{R}^{m_1 \times m_2}$ such that $\MM_{11} \XX = \MM_{12}$ in time $O(m_1 m_2)$.
We can write $\XX = \XX_1 + \XX_2$ where $\XX_1 = \MM_{11}^{\dagger} \MM_{12}$ and each column of $\XX_2$ is orthogonal to $\im(\MM_{11}) = \im(\BB_1)$.
Then, 
\[
\MM_{12}^\top \XX = \BBhat_2 \WW_{2,\calT} \BB_1^\top \XX_1 = \MM_{12}^\top \MM_{11}^{\dagger} \MM_{12}.
\]
Multiplying $\MM_{12}^\top \XX$ runs in time $O(m_1m_2 + m_2^2)$ since $\BB_1$ has $O(m_1)$ nonzeros and $\widehat{\BB}_2$ has $O(m_2)$ nonzeros.
Thus, the total time of computing the Schur complement
is $O(m_1m_2 + m_2^2)$.

We then show how to compute the pseudo-inverse of the Schur complement in time $O(m_2^3)$.
We can decompose 
\[
 \schur[\LupOneT]_{\widehat{E}_2} = \QQ \begin{pmatrix}
    \TT & {\bf 0} \\
    {\bf 0} & {\bf 0}
\end{pmatrix} \ZZ,
\]
where $\QQ, \ZZ \in \mathbb{R}^{m_2 \times m_2}$ are orthogonal matrices and $\TT \in \mathbb{R}^{k \times k}$ is a full-rank matrix and $k = \rank( \schur[\LupOneT]_{\widehat{E}_2})$. Such a decomposition can be obtained in time $O(m_2^3)$ by Householder QR factorization with column pivoting (Chapter 5.4 of \cite{VG96}).
Then, 
\begin{align*}
\left(  \schur[\LupOneT]_{\widehat{E}_2} \right)^{\dagger} = \ZZ^\top \begin{pmatrix}
    \TT^{-1} & {\bf 0} \\
    {\bf 0} & {\bf 0}
\end{pmatrix} \QQ^\top.
\end{align*}
The inverse of $\TT$ can be computed in time $O(m_2^3)$ by matrix multiplication (Theorem 28.2 of \cite{CLRS22}). Thus, the total runtime is $O(m_1m_2 + m_2^3)$.
\end{proof}

Combining Claim \ref{clm:solve_m11} and \ref{clm:schur_inverse} and Lemma \ref{lem:combine_solvers} yields the following claim.

\begin{claim} 
With a pre-processing time $O(m_1m_2+ m_2^3)$, given any $\bb \in \im(\LupOneT)$, we can find $\xx$ satisfying $\LupOneT \xx = \bb$ in time $O(m_1 + m_2^2)$.
\label{lem:solveM}
\end{claim}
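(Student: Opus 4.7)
The plan is to apply Lemma \ref{lem:combine_solvers} (reading $\LupOneT$ for $\Lup_1$ and $\WW_{2,\calT}$ for $\WW_2$) to the block partition $F = E_1$, $C = \widehat{E}_2$ of $\MM = \LupOneT$, plugging in the two ingredients already constructed in the preceding two claims. The key observation is that the pre-processing cost is dominated by forming $(\schur[\LupOneT]_{\widehat{E}_2})^{\dagger}$ once, after which every query reduces to a constant number of solves against $\MM_{11}$, a constant number of multiplies against $\MM_{12}$ and $\MM_{12}^\top$, and a single dense multiply against the pre-computed inverse.

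In the pre-processing phase I would invoke Claim \ref{clm:schur_inverse} to form the dense $m_2 \times m_2$ matrix $(\schur[\LupOneT]_{\widehat{E}_2})^{\dagger}$ explicitly in time $O(m_1 m_2 + m_2^3)$; the solver for $\MM_{11}$ from Claim \ref{clm:solve_m11} needs no extra pre-processing beyond the linear-time set-up inside Lemma \ref{lem:down_lap_solver}. At query time, I execute the recipe in Equation \eqref{eqn:xtilde}: each call to $\upFsolve$ runs in $O(m_1)$ by Claim \ref{clm:solve_m11}, and $\schursolve(\hh, 0)$ is implemented as the single dense matrix-vector product $(\schur[\LupOneT]_{\widehat{E}_2})^{\dagger} \hh$, which costs $O(m_2^2)$. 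It remains to account for the products with $\MM_{12}$ and $\MM_{12}^\top$. Writing $\MM_{12} = \BB_1 \WW_{2,\calT} \BBhat_2^\top$ and applying the three factors in sequence lets us avoid ever forming $\MM_{12}$ explicitly; since $\BB_1$ has exactly two nonzeros per row and $\BBhat_2$ has $O(1)$ nonzeros per row by Lemma \ref{lem:max_deg}, each such product costs $O(m_1 + m_2)$. Summing through Lemma \ref{lem:combine_solvers} gives per-query cost $O(m_1) + O(m_2^2) + O(m_1 + m_2) = O(m_1 + m_2^2)$, as claimed.

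The one subtlety I anticipate is that Lemma \ref{lem:combine_solvers} is formally stated for $\delta > 0$, whereas here the Schur-complement solve is exact and I want $\delta = 0$. I would handle this by noting that when $\bb \in \im(\LupOneT)$, the intermediate vector $\hh$ constructed in Equation \eqref{eqn:xtilde} lies in $\im(\schur[\LupOneT]_{\widehat{E}_2})$ (this is immediate from the block factorization in Fact \ref{fact:identity}), so $\schur[\LupOneT]_{\widehat{E}_2} (\schur[\LupOneT]_{\widehat{E}_2})^{\dagger} \hh = \hh$ exactly. The error bound in Lemma \ref{lem:combine_solvers} is linear in $\delta$, so the conclusion specializes to $\LupOneT \tilde{\xx} = \bb$ exactly in this limit — the same manipulation already used implicitly in Lemma \ref{lem:many_balls_exact_up_proj_restate} where $\delta = 0$ is taken. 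Apart from this minor bookkeeping point, the proof is a direct assembly of Claims \ref{clm:solve_m11} and \ref{clm:schur_inverse} through Lemma \ref{lem:combine_solvers}, and I do not expect any further obstacles.
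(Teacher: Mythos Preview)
Your proposal is correct and follows essentially the same approach as the paper: precompute $(\schur[\LupOneT]_{\widehat{E}_2})^{\dagger}$ via Claim~\ref{clm:schur_inverse}, then at query time combine the $O(m_1)$ solver for $\MM_{11}$ from Claim~\ref{clm:solve_m11} with an $O(m_2^2)$ dense multiply for the Schur step through Lemma~\ref{lem:combine_solvers}. The paper's proof is simply a terser version of exactly this; your extra remarks on factoring the $\MM_{12}$ multiplies and on the $\delta=0$ limit are sound elaborations but not points the paper felt compelled to spell out.
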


\begin{proof}
In the pre-processing, we compute the pseudo-inverse of $\schur[\LupOneT]_{\widehat{E}_2}$.
Then, we can solve the system in the Schur complement by multiplying $\left(\schur[\LupOneT]_{\widehat{E}_2}
\right)^{\dagger}$ with the right-hand side vector.
\end{proof}

We are ready to prove Lemma \ref{lem:solveS}, which bounds the runtime of solving a system in $\Lup_{1,\calT}$.

\begin{proof}[Proof of Lemma \ref{lem:solveS}]
Recall that $\calT$ is a sphere $r$-hollowing of $\calK$ with $n$ simplexes. So,
\[
m_1 = O \left( \frac{n}{r} \cdot r^{2/3} \right) = O(n r^{-1/3}), 
~ m_2 = O(n r^{-1}).
\]
By Claim \ref{clm:precondition_small_system}, it suffices to solve the system in $\MM$. By Claim \ref{lem:solveM}, we can solve a system in $\MM$ in time 
$O \left( n r^{-1/3} + n^2 r^{-2} \right)$ with a pre-processing time 
$O \left( n^2 r^{-4/3} + n^3 r^{-3} \right)$.
\end{proof}

\begin{proof}[Proof of Lemma \ref{lem:up_solver_surface}]
The pre-processing time is $O(nr + n^2 r^{-4/3} + n^3 r^{-3})$.
The runtime of solving the ``$F$" part is $O(nr^{1/3})$. The runtime of solving the system in the Schur complement is $\Otil(\sqrt{r}(n r^{-1/3} + n^2 r^{-2}))$. By Lemma \ref{lem:combine_solvers}, the total runtime is 
\begin{align*}
\Otil \left( nr + n^2 r^{-4/3} + n^3 r^{-3} 
\right).
\end{align*}
When $r = \Theta(\sqrt{n})$, the above runtime is $\Otil(n^{3/2})$.
\end{proof}

%!TEX root = main.tex

\section{Union of Pure \texorpdfstring{$3$}{Lg}-Complexes}
\label{sect:many_balls}

In this section, we consider a union $\calU$ of $h$ pure $3$-complex chunks $\calK_1, \ldots, \calK_h$ that are composed of stable simplexes, and these chunks are glued together by identifying certain subsets of their exterior simplexes. 
Let $k$ be the number of simplexes shared by more than one chunk.
We remark that $\calU$ may \emph{not} be embeddable in $\R^3$ and the first and second Betti numbers of $\calU$ are \emph{no longer zero}.
This makes designing efficient solvers for $1$-Laplacian systems of $\calU$ much harder.
Edelsbrunner and Parsa \cite{EP14} showed that computing the first Betti number of a
simplicial complex linearly embedded in $\R^4$ with $m$ simplexes is as hard as computing the rank of a $0$-$1$ matrix with $\Theta(m)$ nonzeros; 
Ding, Kyng, Probst Gutenberg, and Zhang \cite{DKPZ22} showed that (approximately) solving $1$-Laplacian systems for simplicial complexes in $\R^4$ is as hard as (approximately) solving general sparse systems of linear equations.

We design a $1$-Laplacian solver for $\calU$ whose runtime is comparable to that of the $1$-Laplacians solver for a single pure $3$-complex when both $h$ and $k$ are small and prove Theorem \ref{thm:main}.
The approximate down-projection operator for in Lemma \ref{lem:cohen_project}
and the approximate solver for systems in the down-Laplacian in Lemma \ref{lem:down_lap_solver} hold for any simplicial complexes. 
Thus, it suffices to generalize the approximate up-projection operator in Lemma \ref{lem:up_projection_approx} and the approximate solver for systems in the up-Laplacian in Lemma \ref{lem:up_solver} from a single chunk to a union of chunks.

\begin{lemma}[Up-Laplacian solver]
Let $\calU$ be a union of pure $3$-complexes glued together by identifying certain subsets of their exterior simplexes. Each $3$-complex chunk is embedded in $\R^3$, comprises $n_i$ stable simplexes, and has a known $\Theta (n_i^{3/5})$-hollowing.
For any $\bb \in \im(\Lup_1)$ and $\eps > 0$, we can compute an $\tilde{\xx}$ such that $\norm{\Lup_1 \tilde{\xx} - \bb}_2 \le \eps \norm{\bb}_2$ in time
$\Otil \left( n^{8/5} + n^{3/10} k^2 + k^3 \right)$ where $n$ is the number of simplexes in $\calU$ and $k$ is the number of simplexes shared by more than one chunk. 
\label{lem:many_balls_up_solver}
\end{lemma}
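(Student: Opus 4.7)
The plan is to extend the single-chunk up-Laplacian solver of Lemma~\ref{lem:up_solver} by applying Lemma~\ref{lem:combine_solvers} with a partition that isolates the shared edges. I will let $F$ be the set of edges of $\calU$ belonging to exactly one chunk, and $C$ the set of edges shared by at least two chunks, so $|C|\le k$. A triangle of $\calU$ that results from identifying triangles across chunks must have all three edges shared, so any triangle coupling two edges in $F$ lies entirely in a single chunk, and hence $\Lup_1[F,F]$ is block-diagonal by chunk with the $i$th block equal to the principal submatrix of $\Lup_{1,\calK_i}$ indexed by its non-shared edges. I will apply Lemma~\ref{lem:up_solver} chunk-by-chunk using the $\Theta(n_i^{3/5})$-hollowings; the restriction to non-shared edges is compatible with the proofs of Lemmas~\ref{lem:solve_interior} and~\ref{lem:solve_schur}, since the nested elimination never processes a column that ends up in $C$. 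Using $\sum_i n_i^{8/5}\le n^{3/5}\sum_i n_i = n^{8/5}$, this costs $\Otil(n^{8/5})$ to pre-process and $\Otil(n^{3/2})$ per solve.

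For the $k\times k$ Schur complement $\schur[\Lup_1]_C$, the block-diagonality of $\Lup_1[F,F]$ gives
\[
\schur[\Lup_1]_C \;=\; \Lup_1[C,C] \;-\; \sum_{i=1}^{h}\Lup_1[C,F_i]\,\Lup_1[F_i,F_i]^{\dagger}\,\Lup_1[F_i,C],
\]
and the $i$th summand equals $\schur[\Lup_{1,\calK_i}]_{C^{(i)}}$ padded with zeros, where $C^{(i)}$ is the set of shared edges incident to chunk $\calK_i$ and $k_i := |C^{(i)}|$; because each shared edge belongs to $O(1)$ chunks one has $\sum_i k_i = O(k)$, $\sum_i k_i^2 = O(k^2)$, and $\sum_i k_i^3 = O(k^3)$. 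I will compute each $\schur[\Lup_{1,\calK_i}]_{C^{(i)}}$ explicitly in time $\Otil(n_i^{8/5} + n_i^{3/10}k_i^2 + k_i^3)$ by a batched variant of Lemma~\ref{lem:up_solver}: reusing the sparse Cholesky factorization of $\Lup_1[F_i,F_i]$ and the preconditioner $\Lup_{1,\calT_i}$ that Lemma~\ref{lem:up_solver} already prepares for chunk $\calK_i$, the algorithm will run $\Otil(\sqrt{r_i}) = \Otil(n_i^{3/10})$ PCG iterations on the $k_i$ columns of $\Lup_1[F_i,C^{(i)}]$ simultaneously, applying the preconditioner block-wise and accumulating the Schur-complement estimate as a dense $k_i\times k_i$ matrix at $O(k_i^2)$ cost per iteration. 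Summing across chunks then assembles $\schur[\Lup_1]_C$ within the target time, and a dense Cholesky in $O(k^3)$ supplies the \schursolve\ routine required by Lemma~\ref{lem:combine_solvers}.

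The hard part will be the batched assembly of $\schur[\Lup_{1,\calK_i}]_{C^{(i)}}$: a naive per-column computation would solve $\Lup_1[F_i,F_i]$ against $k_i$ right-hand-sides at $\Otil(n_i^{3/2})$ each and miss the target whenever $k_i \gg n_i^{1/10}$. The amortization I plan to use keeps every Krylov direction only in its compact representation restricted to $C^{(i)}$, so that the dense $k_i\times k_i$ update per PCG iteration costs $O(k_i^2)$ while the block preconditioner solve inside each iteration costs $\Otil(n_i^{4/3} r_i^{-4/9}\cdot k_i + n_i^{6/5})$ amortized. The iteration count $\Otil(\sqrt{r_i})$ is inherited from Lemma~\ref{lem:eigS} via the transitivity of Schur complements and Fact~\ref{fact:psd}. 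Once $\schur[\Lup_1]_C$ and its dense factorization are in hand, Lemma~\ref{lem:combine_solvers} combines the per-chunk $F$-solver with the dense \schursolve\ to produce an $\tilde x$ satisfying $\norm{\Lup_1 \tilde x - \bb}_2 \le \eps \norm{\bb}_2$ in total time $\Otil(n^{8/5} + n^{3/10} k^2 + k^3)$.
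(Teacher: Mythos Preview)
Your partition is different from the paper's, and the difference is where the runtime gap in your plan comes from. The paper does \emph{not} take $C$ to be the shared edges. It takes $F$ to be the hollowing-interior edges of every chunk and $C$ to be the full hollowing boundary $\calT_U=\bigcup_i\calT_i$; it then runs PCG on $\schur[\Lup_1]_C$ with preconditioner $\Lup_{1,\calT_U}$ (relative condition number $O(\max_i r_i)$ by Lemma~\ref{lem:eigS}), and only \emph{inside the preconditioner solve} does it peel off the shared edges: $\Lup_{1,\calT_U}[C_2,C_2]$ is block-diagonal by chunk and handled by nested dissection, and the $k\times k$ Schur complement onto the shared edges $C_1$ is formed and inverted once. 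The point is that the $k$-column elimination happens against the small preconditioner $\calT_U$ (size $\sum_i n_i r_i^{-1/3}$) rather than against the full $F$-block, so it costs $O(k\cdot n^{16/15})=O(n^{8/5}+k^3)$.

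Your plan instead tries to form $\schur[\Lup_1]_C$ with $C=$ shared edges directly, which means solving $\Lup_1[F_i,F_i]$ against $k_i$ right-hand sides. The batched-PCG accounting does not give what you claim: in every PCG iteration you must multiply $\schur[\Lup_{1,\calK_i}]_{B_i}$ by each of the $k_i$ current Krylov vectors, and those vectors live in $\mathbb{R}^{|B_i|}$, not in $\mathbb{R}^{k_i}$. They become dense after one preconditioner application, so there is no ``compact representation restricted to $C^{(i)}$'' that lets you skip the $O(n_i r_i^{1/3})$ work per vector. That yields $\Otil(k_i\, n_i^{3/2})$ per chunk, hence $\Otil(k\, n^{3/2})$ overall, which already exceeds the target when $k=\Theta(n^{1/2})$ ($k n^{3/2}=n^{2}$ versus $n^{8/5}+n^{3/10}k^2+k^3=O(n^{8/5})$). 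Two smaller issues: Lemma~\ref{lem:combine_solvers} as stated needs an \emph{exact} $F$-solver, while Lemma~\ref{lem:up_solver} is only approximate; and ``each shared edge belongs to $O(1)$ chunks'' is not assumed, so $\sum_i k_i=O(k)$ is not available. Both are fixable, but the runtime gap above is structural: you need to push the shared-edge elimination down to the hollowing boundary, as the paper does.
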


\begin{lemma}[Up-projection operator]
Let $\calU$ be a $3$-complex satisfying the conditions in Lemma \ref{lem:up_project}. 
For any $\eps > 0$, 
there exists an operator $\aPup_1$ such that
\[
\forall \bb, \norm{\aPup_1 \bb - \Pup_1 \bb}_2 \le \eps \norm{\Pup_1\bb}_2.
\]
In addition, $\aPup_1 \bb$ can be computed in time 
$\Otil \left( n^{8/5} + n^{3/10} k^2 + k^3 \right)$,
where $n$ is the number of exterior simplexes of $\calU$ and 
$k$ is the number of simplexes shared by more than one chunk. 
\label{lem:up_project}
\end{lemma}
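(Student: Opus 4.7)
The plan is to wrap one extra round of Schur-complement elimination around the single-chunk algorithm of Lemma \ref{lem:up_projection_approx}, with the outer round peeling off precisely the $2$-simplexes glued between chunks. Let $C$ consist of every $2$-simplex shared between two or more chunks together with every $2$-simplex incident to a shared $1$-simplex; by Lemma \ref{lem:max_deg}, $|C| = O(k)$. With $F$ the complementary set, each $2$-simplex in $F$ has all of its $1$-faces in a single chunk, so $\partial_2^\top[F,:]\partial_2[:,F] = \bigoplus_{i=1}^{h} \partial_2^\top[F_i,:]\partial_2[:,F_i]$ is block diagonal across chunks, and Lemma \ref{lem:many_balls_exact_up_proj_restate} gives
\[
\Pup_1 = \PPi_{\im(\partial_2[:,F])} + \PkerF\,\partial_2[:,C]\,(\schur[\Ldown_2]_C)^\dagger\,\partial_2^\top[C,:]\,\PkerF.
\]
This reduces the job to (i) applying $\PPi_{\im(\partial_2[:,F])}$ and $\PkerF$ to $\bb$, and (ii) approximately applying the $O(k)\times O(k)$ operator $(\schur[\Ldown_2]_C)^\dagger$ to a length-$O(k)$ vector.

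For task (i), the block structure on $F$ decouples both projections into $h$ independent chunk-level subproblems; on each chunk $i$ I would run the single-chunk algorithm of Lemma \ref{lem:up_projection_approx} adapted to compute the projection onto $\im(\partial_2[:,F_i])$ (rather than onto all of $\im(\partial_2)$), using the chunk's given $\Theta(n_i^{3/5})$-hollowing. The cost is $\Otil(n_i^{8/5})$ per chunk, and by superadditivity of $x\mapsto x^{8/5}$ on nonnegatives, $\sum_i n_i^{8/5} \le n^{8/5}$, so task (i) runs in $\Otil(n^{8/5})$ total.

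For task (ii), I would assemble the Schur complement explicitly via
\[
\schur[\Ldown_2]_C = \partial_2^\top[C,:]\partial_2[:,C] - \sum_{i=1}^{h} \partial_2^\top[C,:]\partial_2[:,F_i]\,(\partial_2^\top[F_i,:]\partial_2[:,F_i])^\dagger\,\partial_2^\top[F_i,:]\partial_2[:,C],
\]
and then take its pseudo-inverse in $O(k^3)$ by Householder QR with column pivoting, exactly as in Claim \ref{clm:schur_inverse}. The factor $\partial_2^\top[F_i,:]\partial_2[:,C]$ has only $k_i = O(k)$ nonzero columns with $\sum_i k_i = O(k)$, so the correction for chunk $i$ is determined by $k_i$ right-hand-side solves against the block $\partial_2^\top[F_i,:]\partial_2[:,F_i]$. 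My plan is to perform these solves by running PCG on the chunk's hollowing-boundary Schur complement, in the style of Lemma \ref{lem:solve_schur}, for $\Otil(\sqrt{r_i}) = \Otil(n_i^{3/10})$ iterations, maintaining only the $O(k)$ output coordinates that $\partial_2^\top[C,:]$ will ever read; this gives $\Otil(k \cdot k_i)$ work per iteration and $\Otil(n^{3/10} k^2)$ aggregate assembly cost.

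The hard part is justifying that restricted-support PCG inside each chunk actually runs in $\Otil(k)$ per iteration per right-hand side: a naive per-RHS invocation of the chunk solver costs roughly $\Otil(n_i^{6/5})$ and would blow the budget up to $\Otil(k \cdot n^{6/5})$. What is needed is a partial back-substitution through the chunk's Nested Dissection Cholesky factor that traverses only the fill-in paths connecting the $O(k)$ shared-edge columns, so that each iteration touches $\Otil(k)$ entries rather than $\Otil(n_i^{6/5})$. Once this restricted-support primitive is in place, propagating the approximation errors (from within-chunk PCG, from the assembled $\schur[\Ldown_2]_C$, and from the QR-based pseudo-inverse) to the final $\eps\,\|\Pup_1 \bb\|_2$ guarantee follows routinely by setting all internal tolerances to $\Otil(\eps/\kappa(\Lup_1))$ and invoking Fact \ref{fact:solvers}, mirroring the proof of Theorem \ref{thm:main_ball}.
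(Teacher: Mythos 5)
Your plan departs from the paper's in one structural choice, and that choice is where the proof breaks down.

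You set $C$ to be only the $O(k)$ shared (plus adjacent) triangles, so that the outer Schur complement $\schur[\Ldown_2]_C$ is a dense $O(k)\times O(k)$ matrix you can invert in $O(k^3)$. The paper instead keeps the same $F/C$ split as the single-chunk case -- $F$ = hollowing-\emph{interior} triangles per chunk, $C$ = hollowing-\emph{boundary} triangles of every chunk -- and then pushes the glued structure one level down: the preconditioner for $\schur[\Ldown_2]_C$ (the union of the chunk hollowing boundaries) is itself solved by a second Schur complement in which the shared simplexes $C_1\subset C$ form the dense bottom block. In the paper's route, the $O(k^2)$ dense work appears once per PCG iteration (giving $\Otil(n^{3/10}k^2)$) and there is no need to solve $O(k)$ large right-hand sides against the bulk of a chunk.

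Your route, by contrast, \emph{does} need to apply $(\partial_2^\top[F_i,:]\partial_2[:,F_i])^{\dagger}$ to $k_i$ right-hand sides per chunk with $F_i$ comprising almost all of chunk $i$. You correctly notice that doing this naively with the single-chunk machinery costs roughly $\Otil(k_i\, n_i^{3/2})$, which blows the budget whenever $k\gtrsim n^{1/10}$, and you propose to fix it with a ``restricted-support PCG / partial back-substitution touching $\Otil(k)$ entries per iteration.'' This is the gap, and I do not think it can be closed as stated. PCG is not a local method: each iteration forms a full residual and a full search direction of dimension $|F_i|$, and the iterate at step $t$ depends on matrix-vector products over the entire space, not just the $O(k)$ coordinates you eventually read off. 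Truncating the iterate to $O(k)$ coordinates changes the Krylov recurrence and forfeits the $\Otil(\sqrt{r_i})$ convergence guarantee. Even the exact nested-dissection alternative does not help: the Cholesky factor of $\Ldown_2[F_i,F_i]$ has $\Theta(n_i r_i^{1/3})$ nonzeros concentrated in the separator columns near the root, and a right-hand side supported on $C$-adjacent simplexes lies \emph{in} the top separator, so forward/back-substitution necessarily touches the dense $\Theta(r_i^{2/3})\times\Theta(r_i^{2/3})$ root block -- there is no ``$\Otil(k)$ fill-in path'' to exploit in general. Without a concrete argument bounding the touched entries, the claimed $\Otil(n^{3/10}k^2)$ assembly cost is unsupported, and the route as written does not establish the stated runtime. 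To repair it you would essentially be driven back to the paper's nesting: make $C$ the union of hollowing boundaries so that the $F$-solve is cheap and exact, and confine the $O(k)$ dense block to the preconditioner's innermost level.

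The remaining pieces of your argument -- $|C|=O(k)$ via Lemma \ref{lem:max_deg}, block-diagonality of $\partial_2^\top[F,:]\partial_2[:,F]$ across chunks, use of Lemma \ref{lem:many_balls_exact_up_proj_restate}, and $\sum_i n_i^{8/5}\le n^{8/5}$ -- are all fine.
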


Our approaches align with those for a single chunk.
We partition the simplexes of $\calK$ into $F \cup C$, then deal with the ``$F$" part and the Schur complement separately.
Given the definition of $r$-hollowing, the exterior simplexes of each chunk $\calK_i$ must belong to the boundary of some region.
We let $C$ be the union of the hollowing boundary of each chunk and let $F$ be the remaining simplexes.
Then, $F$ is a union of disjoint subcomplexes each embedded in $\R^3$ and can be handled by \nd.
We precondition the Schur complement by the union of the hollowing boundaries. 
However, systems in this preconditioner cannot be approximately solved \nd \ directly since it may not allow an embedding in $\R^3$. We will need a slightly more careful treatment.

\begin{proof}[Proof of Lemma \ref{lem:many_balls_up_solver}]

We let $C$ be the union of the hollowing boundary edges in each chunk and let $F$ be the union of the hollowing interior edges.
Suppose $\calU$ has $h$ pure $3$-complex chunks.
By our definition of $r$-hollowing, we can write $\Lup_1[F,F] = \diag(\Lup_1[F_1, F_1], \ldots, \Lup_1[F_h, F_h])$ where $F_i$ contains all the hollowing interior edges in the $i$th chunk.
Let $r_i = n_i^{3/5}$ be the hollowing parameter for the $i$th chunk.
By Lemma \ref{lem:solve_interior}, with a pre-processing time 
\[
O \left( \sum_{i=1}^h n_i r_i
\right) = 
O \left( \sum_{i=1}^h n_i^{8/5} \right) = O\left(n^{8/5} \right),
\]
for any $\bb \in \im(\Lup_1[F,F])$, we can find $\xx$ such that $\Lup_1[F,F] \xx = \bb$ in time 
\[
O\left( \sum_{i=1}^h n_i r_i^{1/3} \right) = 
O\left( \sum_{i=1}^h n_i^{6/5} \right)
= O \left( n^{6/5} \right).
\]

To solve the system in the Schur complement $\schur[\Lup_1]_C$, we precondition it by the union of the hollowing boundaries of each chunk, denoted by $\calT_U$.
By Claim \ref{clm:schur_decompose_approx} and Lemma \ref{lem:eigS_inner_spheres}, the relative condition number is $O(\max_i n_i^{3/5})$.
Let $C_1 \subset C$ contain the edges shared by more than one chunk and $C_2 = C \setminus C_1$.
Then, the submatrix $\Lup_{1,\calT_U}[C_2, C_2]$ is a block diagonal matrix where each block corresponds to a chunk.
We solve $\Lup_{1,\calT_U}$ by Lemma \ref{lem:combine_solvers}: We solve a system in $\Lup_{1,\calT_U}[C_2, C_2]$ by \nd \ and solve the Schur complement onto $C_1$ by directly inverting the Schur complement.
With a pre-processing time 
\[
O \left( \sum_{i=1}^h \left( \frac{n_i}{r_i} \cdot r_i^{2/3}
\right)^2 + k^3
\right) = 
O \left( \sum_{i=1}^h n_i^{8/5} + k^3
\right) = O(n^{8/5} + k^3),
\]
we can solve a system in $\Lup_{1,\calT_U}$ in time 
\[
O\left( \sum_{i=1}^h \left( \frac{n_i}{r_i} \cdot r_i^{2/3}
\right)^{4/3} + k^2 \right) = 
O\left( \sum_{i=1}^h n_i^{16/15} + k^2
\right) = O(n^{16/15} + k^2).
\]
Therefore, the total runtime of approximately solving a system in $\Lup_1$ is 
$\Otil \left( n^{8/5} + n^{3/10} k^2 + k^3 \right)$.
\end{proof}

We can prove Lemma \ref{lem:up_project} by a similar argument.

\vspace{10pt}
\paragraph{Acknowledgements}
We thank Rasmus Kyng for the valuable discussions and the anonymous reviewers for their insightful comments.

\newpage
\bibliographystyle{alpha}
\bibliography{ref}

\appendix
%!TEX root = main.tex

\section{Missing Linear Algebra Proofs}

\subsection{Missing Proofs in Section \ref{sect:linear_algebra}}

\label{sect:proof_linear_algebra}

\begin{proof}[Proof of Fact \ref{fact:image}]
By the definition of image, $\im(\AA) \supseteq \im(\AA \AA^\top)$.
It suffices to show $\im(\AA) \subseteq \im(\AA \AA^\top)$.
Let $\xx$ be an arbitrary vector in $\im(\AA)$. Then, $\xx = \AA \yy$ for some $\yy \in \mathbb{R}^{n}$. Write $\yy = \yy_1 + \yy_2$ such that $\yy_1 \in \im(\AA^\top)$ and $\yy_2 \in \ker(\AA)$.
Let $\zz$ satisfy $\AA^\top \zz = \yy_1$. Then, 
\[
\AA \AA^\top \zz = \AA \yy_1 = \AA \yy = \xx.
\]
Thus, $\xx \in \im(\AA \AA^\top)$.
\end{proof}

\begin{proof}[Proof of Fact \ref{fact:identity}]
We multiply the right-hand side:
\begin{align*}
& \begin{pmatrix}
    \II & \\
    \AA[C,F] \AA[F, F]^{\dagger} & \II 
  \end{pmatrix} 
  \begin{pmatrix}
    \AA[F, F] & \\
    & \schur[\AA]_C
  \end{pmatrix}
  \begin{pmatrix}
    \II & \AA[F, F]^{\dagger} \AA[F, C] \\
    & \II
  \end{pmatrix} \\
= & \begin{pmatrix}
    \II & \\
    \AA[C,F] \AA[F, F]^{\dagger} & \II 
  \end{pmatrix}  \begin{pmatrix}
\AA[F,F] & \AA[F,C] \\
 & \schur[\AA]_C
  \end{pmatrix} \\
= & \begin{pmatrix}
\AA[F,F] & \AA[F,C] \\
\AA[C,F] & \AA[C,F] \AA[F,F]^{\dagger} \AA[F,C] + \schur[\AA]_C 
\end{pmatrix} \\
= & \begin{pmatrix}
\AA[F,F] & \AA[F,C] \\
\AA[C,F] & \AA[C,C] 
\end{pmatrix} = \AA.\qedhere
\end{align*}
\end{proof}

\begin{proof}[Proof of Fact \ref{fact:schur_project_kernel}]
By the definition of the Schur complement,
\begin{multline*}
\schur[\AA]_C
 = \BB_C \BB_C^\top - \BB_C \BB_F^\top \left(
\BB_F \BB_F^\top
\right)^{\dagger} \BB_F \BB_C^\top \\
 = \BB_C \left( \II - \BB_F^\top \left(
\BB_F \BB_F^\top
\right)^{\dagger} \BB_F \right) \BB_C^\top
= \BB_C \PPi_{\ker(\BB_F)} \BB_C^\top.
\end{multline*}
\end{proof}

\begin{proof}[Proof of Fact \ref{fact:solvers}]
For the first statement,
\begin{align*}
\norm{\AA \ZZ \bb - \bb}_2
& = \norm{\AA^{1/2} \left( \AA^{1/2} \ZZ \AA^{1/2} - \PPi_{\im(\AA)} \right) \AA^{\dagger /2} \bb}_2 \\
& \le \sqrt{\kappa(\AA)} \norm{  \AA^{1/2} \ZZ \AA^{1/2} - \PPi_{\im(\AA)} }_2 \norm{ \bb}_2.
\end{align*}
By the assumption, $(1-\eps) \PPi_{\im(\AA)} \pleq \AA^{1/2} \ZZ \AA^{1/2} \pleq (1+\eps) \PPi_{\im(\AA)} $. Thus, 
\[
\norm{\AA \ZZ \bb - \bb}_2 \le \eps \sqrt{\kappa(\AA)} \norm{\bb}_2.
\]

For the second statement, we know $\norm{\AA \PPi_{\im(\AA)} \ZZ \PPi_{\im(\AA)} - \PPi_{\im(\AA)}}_2 \le \eps$.
So, 
\[
1-\eps \le \lambda_{\min} (\AA \PPi_{\im(\AA)} \ZZ \PPi_{\im(\AA)}) \le 
\lambda_{\max} (\AA \PPi_{\im(\AA)} \ZZ \PPi_{\im(\AA)}) \le 1+ \eps.
\]
Since matrices $\AA \PPi _{\im(\AA)}\ZZ \PPi_{\im(\AA)}$ and $\AA^{1/2} \PPi_{\im(\AA)} \ZZ \PPi_{\im(\AA)} \AA^{1/2}$ have the same eigenvalues, we have $(1-\eps) \AA^{\dagger} \pleq \PPi_{\im(\AA)} \ZZ \PPi_{\im(\AA)} \pleq (1+\eps) \AA^{\dagger} $.
\end{proof}

\subsection{Missing Proofs in Section \ref{sect:alg_overview_new}}

\label{sect:proof_algo_overview}

\begin{proof}[Proof of Lemma \ref{lem:combine_solvers}]
Since $\bb \in \im(\Lup_1)$, we know $\bb_F \in \im(\Lup_1[F,F])$ and $\bb_F - \Lup_1[F,C]\Tilde{\xx}_C \in \im(\Lup_1[F,F])$. 
We can apply the solver \upFsolve \  to these two vectors.
By the statement assumption,
we an write $ \upFsolve (\bb_F) = \Lup_1[F,F]^{\dagger} \bb_F + \yy$, where $\partial_2[F,:]^{\top} \yy = {\bf 0}$. Then, 
\[
\hh = \bb_C - \Lup_1[C,F] \cdot \upFsolve(\bb_F)
= \bb_C - \Lup_1[C,F] \cdot \Lup_1[F,F]^{\dagger} \bb_F.
\]
We first show that $\hh \in \im(\schur[\Lup_1]_C)$ so that we can apply the solver $\schursolve$ to $\hh$ and obtain a vector $\Tilde{\xx}_C$ satisfying $\norm{\schur[\Lup_1]_C \tilde{\xx}_C - \hh}_2 \le \delta \norm{\hh}_2$. 
Since $\bb \in \im(\Lup_1)$, there exists $\xx = \begin{pmatrix}
\xx_F \\
\xx_C
\end{pmatrix}$ such that 
\begin{align}
\begin{split}
\begin{pmatrix}
\bb_F \\
\bb_C
\end{pmatrix} & = 
 \begin{pmatrix}
    \II & \\
    \Lup_1[C, F] \Lup_1[F, F]^{\dagger} & \II 
  \end{pmatrix} 
\begin{pmatrix}
\Lup_1[F, F] & \\
& \schur[\Lup_1]_C
\end{pmatrix}
\begin{pmatrix}
\II & \Lup_1[F, F]^{\dagger} \Lup_1[F, C] \\
& \II
\end{pmatrix}   \begin{pmatrix}
\xx_F \\
\xx_C
\end{pmatrix} \\
& = \begin{pmatrix}
    \II & \\
    \Lup_1[C, F] \Lup_1[F, F]^{\dagger} & \II 
  \end{pmatrix} \begin{pmatrix}
 \Lup_1[F,F] & \\
 & \schur[\Lup_1]_C
  \end{pmatrix} \begin{pmatrix}
\xx_F + \Lup_1[F,F]^{\dagger} \Lup_1[F,C] \xx_C \\
\xx_C
  \end{pmatrix} \\
& = \begin{pmatrix}
    \II & \\
    \Lup_1[C, F] \Lup_1[F, F]^{\dagger} & \II 
  \end{pmatrix} \begin{pmatrix}
\Lup_1[F,F] \xx_F + \Lup_1[F,C] \xx_C \\
\schur[\Lup_1]_C \xx_C
  \end{pmatrix} \\
& = \begin{pmatrix}
\Lup_1[F,F] \xx_F + \Lup_1[F,C] \xx_C \\
\Lup_1[C,F] \xx_F + \Lup_1[C, F] \Lup_1[F, F]^{\dagger} \Lup_1[F,C]\xx_C 
+ \schur[\Lup_1]_C \xx_C
\end{pmatrix}.
\end{split}
\label{eqn:lx}
\end{align}
Here, the third equality holds since 
\[
\Lup_1[F,F] \Lup_1[F,F]^{\dagger} \Lup_1[F,C] = \PPi_{\im(\partial_2[F,:])} \Lup_1[F,C] = \Lup_1[F,C],
\]
and the fourth equality holds similarly by using symmetry.
Thus, 
\begin{align*}
\bb_C & = \Lup_1[C,F] \xx_F + \Lup_1[C, F] \Lup_1[F, F]^{\dagger} \left( \bb_F - \Lup_1[F,F] \xx_F
\right)
+ \schur[\Lup_1]_C \xx_C \\
& = \Lup_1[C,F] \xx_F + \Lup_1[C,F] \Lup_1[F,F]^{\dagger} \bb_F
- \Lup_1[C,F] \xx_F + \schur[\Lup_1]_C \xx_C \\
& = \Lup_1[C,F] \Lup_1[F,F]^{\dagger} \bb_F
+ \schur[\Lup_1]_C \xx_C.
\end{align*}
That is, 
$\hh  \in \im(\schur[\Lup_1]_C)$.

Next we look at $\Lup_1 \tilde{\xx}$.
Let $\ddelta = \schur[\Lup_1]_C \tilde{\xx}_C - \hh$.
We replace $\xx_F$ and $\xx_C$ in Equation \eqref{eqn:lx} with $\tilde{\xx}_F$ and $\tilde{\xx}_C$: 
\begin{align*}
\Lup_1 \tilde{\xx}
& = \begin{pmatrix}
    \II & \\
    \Lup_1[C, F] \Lup_1[F, F]^{\dagger} & \II 
  \end{pmatrix} \begin{pmatrix}
\Lup_1[F,F] \tilde{\xx}_F + \Lup_1[F,C] \tilde{\xx}_C \\
\schur[\Lup_1]_C \tilde{\xx}_C
  \end{pmatrix}
  \\
& = \begin{pmatrix}
\bb_F  \\
\Lup_1[C,F]\Lup_1[F,F]^{\dagger} \bb_F + \schur[\Lup_1]_C \tilde{\xx}_C
\end{pmatrix} \\
& = \begin{pmatrix}
\bb_F \\
\bb_C + \ddelta
\end{pmatrix}.
\end{align*}
Then,
\begin{multline*}
\norm{\Lup_1 \tilde{\xx} - \bb}_2 
= \norm{\ddelta}_2 
\le \delta \norm{\hh}_2 
\le \delta \left( \norm{\bb_C}_2 + \norm{\Lup_1[C,F] \Lup_1[F,F]^{\dagger}}_2 \norm{\bb_F}_2
\right) \\
\le \delta \left( 1 + \norm{\Lup_1[C,F] \Lup_1[F,F]^{\dagger}}_2
\right) \norm{\bb}_2
\le \eps \norm{\bb}_2,
\end{multline*}
where the last inequality is by our setting of $\delta$.

We compute $\tilde{\xx}$ by two calls of $\upFsolve$ and one call of $\schursolve$ and $O(1)$ matrix-vector multiplications and vector-vector additions. Thus, 
the total runtime is $O(t_1(m_F) + t_2(m_C) + 
\text{the number of nonzeros in } \Lup_1)$.
\end{proof}

\end{document}